\numberwithin{equation}{section}
\newtheorem{lem}{Lemma}[section]
\newtheorem{thm}{Theorem}[section]
\newtheorem{ass}{Assumption}
\newtheorem{ex}{Example}
\renewcommand{\citep}[1]{\citeauthor{#1}, \citeyear{#1}}
\newcommand{\convP}{\stackrel{p}{\longrightarrow}}
\newcommand{\convD}{\rightsquigarrow}
\newcommand{\N}{\mathcal{N}}
\newcommand{\eps}{\varepsilon}
\renewcommand{\epsilon}{\varepsilon}
\DeclareMathOperator*{\argmax}{arg\,max}
\DeclareMathOperator*{\argmin}{arg\,min}
\DeclareMathOperator*{\var}{var}
\newcommand*{\rom}
[1]{\expandafter\@slowromancap\romannumeral #1@}
\title{Improving Estimation Efficiency via Regression-Adjustment in Covariate-Adaptive Randomizations with Imperfect Compliance\thanks{ Yichong Zhang acknowledges the financial support from Singapore Ministry of Education Tier 2 grant under grant MOE2018-T2-2-169 and the Lee Kong Chian fellowship. Any and all errors are our own.\vspace{1.3mm}} 
	\\ \vspace{2mm}
}
\author{Liang Jiang\thanks{Fudan University.\ E-mail~address: jiangliang@fudan.edu.cn.} 
\and Oliver B. Linton\thanks{University of Cambridge.\ E-mail~address: obl20@cam.ac.uk.}
\and Haihan Tang\thanks{Fudan University.\ E-mail~address: hhtang@fudan.edu.cn.} 
\and Yichong Zhang\thanks{%
Singapore Management University.\ E-mail~address: yczhang@smu.edu.sg.}
\date{\today}
}
\begin{document}
\maketitle

\def \baselinestretch{1.05}

\begin{abstract}
	We investigate how to improve efficiency using regression adjustments with covariates in covariate-adaptive randomizations (CARs) with imperfect subject compliance. Our regression-adjusted estimators, which are based on the doubly robust moment for local average treatment effects, are consistent and asymptotically normal even with heterogeneous probability of assignment and misspecified regression adjustments. We propose an optimal but potentially misspecified linear adjustment and its further improvement via a nonlinear adjustment, both of which lead to more efficient estimators than the one without adjustments. We also provide conditions for nonparametric and regularized adjustments to achieve the semiparametric efficiency bound under CARs.

	\bigskip
	
	\noindent\textbf{Keywords:} Randomized experiment, Covariate-adaptive
	randomization, High-dimensional data, Local average treatment effects,
	Regression adjustment.
	
	\medskip\noindent\textbf{JEL codes:} C14, C21, I21
	
\end{abstract}

% \newpage

\section{Introduction}

\label{sec:intro} 
Randomized experiments have become increasingly popular in economic research. One commonly used randomization method employed by economists to ensure balance between treatment and control is covariate-adaptive randomization (CAR) (\citep{B09}), in which subjects are randomly assigned to treatment and control within strata formed by a few key pretreatment variables. However, subject compliance with the random
assignment is usually imperfect. We survey all publications using randomized experiments in eight leading economics journals from 2015 to 2022 and identify eleven papers that used CARs with imperfect compliance.\footnote{See Section \ref{sec:survey} for more details.}

%for example, \cite{hirshleifer2016}, \cite{Blattman2018}, \cite{dupasetal2018}, and \cite{banerjee2018}.

When subjects do not comply with the assignment in CARs, researchers usually estimate the local average treatment effects (LATEs) for the compliers using the two-stage least squares (TSLS) method with treatment assignment as an instrumental variable and covariates and strata fixed effects as exogenous controls. Actually, all eleven papers mentioned above estimate the LATE in this way. We simply denote this estimator as TSLS. Recently, \cite{anseletal2018} proposed an S estimator (denoted as S) which aggregates IV estimators for each stratum. \cite{BG21} proposed a fully saturated estimator with strata dummies, which we call the unadjusted estimator (NA) as it does not use covariates. The standard theory for the consistency of TSLS requires both correct specification of the conditional mean model and homogeneous treatment effect. In contrast, both S and NA estimators are consistent under CARs without requiring correct specifications, homogeneous treatment effect, or identical treatment assignment probability across strata. \cite{anseletal2018}  further show the S estimator is the most efficient among all the estimators discussed in their paper (Proposition 7). 

The existing literature lacks a systematic study and comparison of various LATE estimators under CARs. TSLS and S estimators impose different linear conditional mean models, which can be viewed as different types of linear regression adjustments. Then, under what conditions the TSLS estimator, like the S estimator, is consistent even when the regression adjustments are misspecified? How is the efficiency comparison among TSLS, S, and NA estimators when all of them are consistent? Is the S estimator the most efficient among all linearly adjusted LATE estimators? Can other potentially misspecified  \textit{nonlinear} regression adjustments lead to more efficient LATE estimators? Last, what is the semiparametric efficiency bound (SEB) for LATE estimation under (CARs) and how can we achieve it? 

In this paper, we provide answers to all these questions. Specifically, we follow the framework that was recently established by \cite{BCS17} to study causal inference under CARs, which allows for heterogeneous assignment probabilities and treatment effects. We first show that (1) TSLS with both strata dummies and covariates as exogenous controls is inconsistent if both the assignment probabilities and treatment effects are heterogeneous across strata; (2) even when TSLS is consistent (especially when the treatment assignment probabilities are homogeneous), its usual heteroskedasticity robust standard error is conservative due to the cross-sectional dependence introduced by CARs;\footnote{This point is consistent with the result in \cite{anseletal2018} for their estimator $\hat \beta_2$. However, $\hat \beta_2$ is computed by TSLS with only strata dummies under the assumption of homogeneous assignment probabilities, but no covariates as exogenous control variables.} (3) the correct asymptotic variance of the TSLS estimator may be greater than that of the NA estimator, which defeats the purpose of using covariates in the regression.

We then propose a general adjusted estimator using the doubly robust moment for LATE with a consistent estimator of the assignment probability and potentially misspecified regression adjustments based on covariates. The doubly robust moment for LATE has been derived by \cite{F07late} and used for estimating LATE by \cite{SUW22} and \cite{H22}. But we are the first to apply it under CARs and investigate the potential efficiency improvements when the regression adjustments are misspecified. We show that our inference method (1) achieves the exact asymptotic size under the null despite the cross-sectional dependence introduced by CARs, (2) is robust to
adjustment misspecification, and (3) achieves the SEB when the adjustments are correctly specified. The SEB for LATE under CARs is also new to the literature and complements those bounds derived by \cite{F07late} and \cite{A22}.\footnote{\cite{F07late} derived the SEB for LATE assuming i.i.d. data. However, CARs can introduce cross-sectional dependence, and thus, violate the independence assumption. \cite{A22} derived the SEB for average treatment effect under CARs but without covariates. The SEB for LATE under CARs but without covariates is a byproduct of our result by letting our covariates be an empty set.} 

Finally, we compare the efficiency of our LATE estimators with three specific forms of regression adjustments: (1) the optimal linear adjustment (denoted as L), which yields the most efficient estimator among all linearly adjusted estimators, (2) the nonlinear logistic adjustment (denoted as NL), and (3) a combination of linear and nonlinear adjustments (denoted as F) which is more efficient than both linear and nonlinear adjustments and new to the literature. We also extend \cite{anseletal2018} by showing that their S estimator is asymptotically equivalent to our estimator L, thus is optimal among the linearly adjusted estimators but less efficient than estimator F. We further give conditions under which estimators with nonparametric (denoted as NP) and regularized (denoted as R) regression adjustments achieve the SEB. Figure \ref{fig:order} visualizes the partial order of efficiency of these estimators.

\begin{figure}[t]
	\label{fig:order}  \centering
	\begin{tikzpicture}[->,>=stealth',shorten >=1pt,auto,node distance=2.8cm,
		semithick]
		\tikzstyle{state}=[circle,fill=white,draw=black,text=black]
		\tikzstyle{state1}=[circle,fill=white,draw=black,text=black,dashed]
		
		\node[state,minimum size=1.5cm]         (A)                    {NP,R};
		\node[state,minimum size=1.5cm]         (B) [right of=A] {F};
		\node[state,minimum size=1.5cm]         (C) [right of=B] {L=S};
		\node[state,minimum size=1.5cm]         (D) [right of=C] {NA};
		\node[state,minimum size=1.5cm]         (E) [below right of=B] {NL};
		\node[state1,minimum size=1.5cm]         (F) [below right of=C] {TSLS};
		% \node[state,minimum size=1.5cm]         (G) [above right of=C] {S};
		\path (A) edge              node {} (B);
		\path (B) edge              node {} (C);
		\path (C) edge              node {} (D);
		\path (B) edge              node {} (E);
		\path (C) edge              node {} (F);
		% \path (G) edge              node {} (C);
		% \path (C) edge              node {} (G);
		
		% edge              node {} (C)
		% (C) edge              node {} (D)
		% edge              node {} (E);
	\end{tikzpicture}
	\caption{Efficiency of Various LATE Estimators (from the most efficient to the
		least)\\
		\footnotesize{Note: The dashed circle around the TSLS indicates that it is not always consistent. There are no arrows between NA and TSLS because TSLS can be less efficient than NA even when it is consistent. \citeauthor{anseletal2018}'s (\citeyear{anseletal2018}) S estimator is asymptotically equivalent to our estimator L with the optimal linear adjustment. Since both NA and TSLS (when TSLS is consistent) have linear adjustments (NA has a linear adjustment with zero coefficient), they are less efficient than S and L. There is no clear winner between NL and L because even the optimal linear adjustment can be misspecified and thus potentially less efficient than some nonlinear adjustments. Theoretically, the logistic regression adjustment can be even less efficient than NA depending on how severe the misspecification is. However, the F estimator is guaranteed to be more efficient than both L and NL by construction. Last, as NP and R achieve the SEB, they are more efficient than F. Notice that all the comparisons, except for those with the TSLS, are made under the same set of assumptions (Assumptions \ref{ass:assignment1} and \ref{ass:Delta} later). As for those with TSLS, the comparisons are made when TSLS is consistent.}}%
	
\end{figure}

% {\color{red} NEED TO REWRITE OR JUST DELETE: The final contribution of the paper is to provide simulation evidence and
	% empirical support for the theoretical efficiency gains achieved by our
	% regression-adjusted LATE estimator. We compare it with both the one without
	% any adjustment and the one obtained by the commonly used TSLS and confirm the
	% sizable efficiency gains that can be achieved by regression adjustments. In
	% the empirical application, we revisit the experiment with a CAR design in
	% \cite{dupasetal2018}. We find that by only using the same two covariates
	% adopted in that paper, over nine outcome variables, the standard errors of our
	% adjusted LATE estimators are on average around 7\% lower than those without
	% adjustments. For some outcome variables, regression adjustments can reduce
	% standard errors by about 15\%. Compared to the TSLS estimators, the standard
	% errors of our estimators are generally smaller as well, although by a smaller margin.}

Our paper is related to several lines of research.
\cite{HH12,MHZ15,MQLH18,O21,SY13,ZZ20,Y18,YS20} studied inference of either the average treatment effect (ATE) or quantile treatment effect (QTE) under CARs without considering covariates.
\cite{BCS17,BCS18,BL16,F18,L13,L16,LD20,LiD20,LTM20,LY20,NW20,SYZ10,YYS20,ZD20}
studied the estimation and inference of ATEs using a variety of regression
methods under various randomization schemes. \cite{jiang2021b} examine
regression-adjusted estimation and inference of QTEs under CARs. Based on
pilot experiments, \cite{T18} and \cite{B19} devise optimal randomization
designs that may produce an ATE estimator with the lowest variance. \cite{BG21} further examine the optimal design  with imperfect compliance. All the
above works, except \cite{BG21}, assume perfect compliance, while we contribute to the literature by
studying the LATE estimators in the context of CARs and regression adjustment,
which allows imperfect compliance. \cite{renliu2021} study the
regression-adjusted LATE estimator in completely randomized experiments for a
\textit{binary} outcome using finite population asymptotics. We differ
from their work by considering the regression-adjusted estimator in
\textit{covariate-adaptive} randomizations for a \textit{general} outcome
using the \textit{superpopulation} asymptotics. Finally, our paper also connects to a
vast literature on estimation and inference in randomized experiments,
including \cite{HHK11, athey2017, abadie2018, T18, BRS19, B19, JL20}, among
many others.

\textbf{Acronyms}. In this paper, we refer to the optimally linearly adjusted, nonlinearly (logistic) adjusted, and nonparametrically adjusted estimators as L, NL, and NP, respectively. We also use NA and S to denote the fully saturated and S estimators proposed by \cite{BG21} and \cite{anseletal2018}, respectively. F denotes the estimator with adjustments that improve upon both optimal linear and nonlinear adjustments, while R denotes the estimator with regularized adjustments. We will provide more details about these estimators below. 

\section{Setting and Empirical Practice}
\subsection{Setup}
\label{sec setup}
%Potential outcomes for treated and control groups are denoted by $Y(1)$ and $Y(0)$, respectively. Treatment assignment is is denoted by $A$, with $A=1$ indicating being in the treatment group and $A=0$ in the control group. The actual treatment is denoted by $D$, with $D=1$ indicating treated and $D=0$ untreated. In the treatment group, not all individuals are treated due to noncompliance. But in the control group, all individuals are untreated. We further write $D_i = D_i(1)A_i + D_i(0)(1-A_i)$. The stratum indicator is denoted by $S$, based on which the researcher implements the covariate-adaptive randomization. The support of $S$ is denoted by $\mathcal{S}$, a finite set. After randomization, the researcher can observe the data $\{Y_i,S_i,A_i,D_i,X_i\}_{i=1}^n$ where $Y_i = Y_i(1)D_i + Y_i(0)(1-D_i)$ is the observed outcome and $X_i$ contains extra covariates besides $S_i$ in the dataset.

Let $Y_{i}$ denote the observed outcome of interest for individual $i$; write $Y_{i} = Y_{i}(1)D_{i} + Y_{i}(0)(1-D_{i})$, where $Y_{i}(1)$ and $Y_{i}(0)$ are the potential treated and untreated outcomes for the individual $i$, respectively, and $D_{i}$ is a binary random variable indicating whether the individual $i$ received treatment ($D_{i}=1$) or not ($D_{i}=0$) in the actual study. One could link $D_{i}$ to the treatment assignment $A_{i}$ in the following way: $D_{i} = D_{i}(1)A_{i} + D_{i}(0)(1-A_{i})$, where $D_{i}(a)$ is the individual $i$'s treatment outcome upon receiving treatment
status $A_{i}=a$ for $a=0,1$; $D_{i}(a)$ is a binary random variable. Define
$Y_{i}(D_{i}(a)): = Y_{i}(1)D_{i}(a) + Y_{i}(0)(1-D_{i}(a))$, so we can write
$Y_{i}=Y_{i}(D_{i}(1))A_{i}+Y_{i}(D_{i}(0))(1-A_{i})$. %We use $S_i$ and $X_i$ to denote the stratum the individual belongs to and the vector of covariates, respectively. The vector $X_i$ does not include the constant term. 
Individual $i$ belongs
to stratum $S_{i}$ and possesses covariate vector $X_{i}$, where $X_{i}$
does not include the constant term. The support of
the vectors $\{X_{i}\}_{i=1}^{n}$ is denoted $\text{Supp}(X),$ while the
support of $\{S_{i}\}_{i=1}^{n}$ is $\mathcal{S}$, which is a finite set.
% Consider a CAR framework with $n$ individuals; that is, 

A researcher can
observe the data $\{Y_{i},D_{i},A_{i},S_{i},X_{i}\}_{i=1}^{n}$. 
Define $[n]:=\{1,2,...n\}$, $p(s):=\mathbb{P}(S_{i}=s)$, $n(s):=\sum
_{i\in\lbrack n]}1\{S_{i}=s\}$, $n_{1}(s):=\sum_{i\in\lbrack n]}A_{i}%
1\{S_{i}=s\}$, $n_{0}(s):=n(s)-n_{1}(s)$, $S^{(n)}:=(S_{1},\ldots,S_{n})$,
$X^{(n)}:=(X_{1},\ldots,X_{n})$, and $A^{(n)}:=(A_{1},\ldots,A_{n})$. We make
the following assumptions on the data generating process (DGP) and the
treatment assignment rule.

\begin{ass}
	\begin{enumerate}
		[label=(\roman*)]
		
		\item $\{Y_{i}(1),Y_{i}(0),D_{i}(0),D_{i}(1),S_{i},X_{i}\}_{i=1}^{n}$ is
		i.i.d. over $i$. For each $i$, we allow $X_{i}$ and $S_{i}$ to be dependent.
		
		\item $\{Y_{i}(1),Y_{i}(0),D_{i}(0),D_{i}(1),X_{i}\}^{n}_{i=1} \perp
		\!\!\!\perp A^{(n)}|S^{(n)}$.
		
		\item Suppose that $p(s)$ is fixed with respect to $n$ and positive for every $s
		\in\mathcal{S}$.
		
		\item Let $\pi(s)$ denote the propensity score for stratum $s$ (i.e., the
		targeted assignment probability for stratum $s$). Then, $c<\min_{s
			\in\mathcal{S}}\pi(s) \leq\max_{s \in\mathcal{S}}\pi(s)<1-c$ for some constant
		$c \in(0,0.5)$ and $\frac{B_{n}(s)}{n(s)} = o_{p}(1)$ for $s \in\mathcal{S}$,
		where $B_{n}(s) := \sum_{i =1}^{n} (A_{i}-\pi(s))1\{S_{i} = s\}$.
		
		\item Suppose $\mathbb{P}(D(1)=0, D(0)=1)=0$.
		
		\item $\max_{a=0,1,s\in\mathcal{S}}\mathbb{E}(|Y_{i}(a)|^{q}|S_{i}=s)\leq
		C<\infty$ for some $q\geq4$.
		
		%		\item Let $\pi(s)$ denote the fact target fraction of treated units in stratum $s$. Then, $c<\min_{s \in \mathcal{S}}\pi(s) \leq \max_{s \in \mathcal{S}}\pi<1-c$ for some constant $c>0$ and $\left\{\left\{ \frac{D_n(s)}{\sqrt{n}}\right\}_{s \in \mathcal{S}}\biggl|\{S_i\}^{n}_{i=1}  \right\} \convD N(0,\Sigma_D)$ a.s., where
		%		\begin{align*}
			%		D_n(s) = \sum_{i =1}^n (A_i-\pi(s))1\{S_i = s\}\quad \text{and} \quad	\Sigma_D = \diag\{p(s)\gamma(s):s \in \mathcal{S}\}
			%		\end{align*}
		%		with $0 \leq \gamma(s) \leq \pi(s)(1-\pi(s))$.
		%		\item $\frac{D_n(s)}{n(s)} = o_p(1)$ for $s \in \mathcal{S}$.
		
	\end{enumerate}
	
	\label{ass:assignment1}
\end{ass}

Several remarks are in order. First, Assumption \ref{ass:assignment1}(i) allows for the treatment assignment $A^{(n)}$, and thus, the observed outcome $\{Y_i\}_{i \in [n]}$ to be cross-sectionally dependent, which is usually the case for CARs. Second, Assumption \ref{ass:assignment1}(ii)
implies that the treatment assignment $A^{(n)}$ are generated only based on
strata indicators. Third, Assumption \ref{ass:assignment1}(iii) imposes that
the strata sizes are roughly balanced. Fourth, \cite{BCS17} show that
Assumption \ref{ass:assignment1}(iv) holds under several covariate-adaptive
treatment assignment rules such as simple random sampling (SRS), biased-coin
design (BCD), adaptive biased-coin design (WEI) and stratified block
randomization (SBR).\footnote{For completeness, we briefly repeat their descriptions in Appendix \ref{sec:car_descri}.} Note that we only require $B_{n}(s)/n(s)=o_{p}(1)$, which is weaker than the assumption imposed by \cite{BCS17} but the same as that imposed by \cite{BCS18} and \cite{ZZ20}. Fifth, Assumption \ref{ass:assignment1}(v) implies there are no defiers. Last, Assumption \ref{ass:assignment1}(vi) is a standard moment condition.

Throughout the paper, we are interested in estimating the \textit{local
	average treatment effect} (LATE), which is denoted by $\tau$ and defined as
\[
\tau:=\mathbb{E}\sbr[1]{Y(1) - Y(0)|D(1)>D(0)};
\]
that is, we are interested in the ATE for the compliers
(\citep{angristimbens1994}).

\subsection{Examples of Economics Datasets}
\label{sec:examples}

To motivate our work, we give three examples of prominent economic datasets that use CARs and have imperfect compliance. 
\begin{ex}
	\cite{atkin2017} conducted a randomized experiment with a CAR design to identify the impact of exporting on firm performance.\footnote{The dataset can be found at https://doi.org/10.7910/DVN/QOGMVI.} They had two samples of firms. In sample 1, they randomized firms into treatment or control with a target probability of 0.5 in each of the strata named: Goublan, Tups and Duble. In sample 2, they randomly select firms for the treatment group with a target probability of 0.25 in stratum Duble. They then combined the two samples together, which makes the probabilities of treatment assignment ($\pi(s)$) in their joint sample heterogeneous across strata. Firms with treatment assignment were offered an initial opportunity to sell to high-income markets, but only 62.16\% of them managed to secure large and lasting orders. 
\end{ex}

\begin{ex}
	\cite{dupasetal2018} studied how rural households benefit from free bank accounts.\footnote{The dataset is available at https://www.openicpsr.org/openicpsr/project/116346/version/V1/view.} They randomly assigned 2,160 households to treatment or control groups within each of the 41 strata. The targeted assignment probability for each stratum is 0.5. Households with treatment assignment received vouchers to open accounts, but only 41.87\% of them did so and deposited money within 2 years.     
\end{ex} 

\begin{ex}
	\cite{jha2019} examined how financial market participation affects political views and voting behavior.\footnote{The dataset can be found at https://onlinelibrary.wiley.com/doi/abs/10.3982/ECTA16385.} They used CAR to randomly assign 1345 participants to treatment or control groups within each stratum, with a target probability of 0.75. Participants with treatment assignment were offered to trade assets. But only 81.08\% of them made a trade.  
\end{ex}
\defcitealias{royer2015}{Royer~et~al.~(2015)}
\defcitealias{himmler2019}{Himmler~et~al.~(2019)}
\defcitealias{bolhaar2019}{Bolhaar~et~al.~(2019)}
\defcitealias{angrist2021}{Angrist~et~al.~(2021)}

\subsection{Survey of Empirical Practice}
\label{sec:survey}
\begin{table}[H]
	\caption{ Empirical Papers Using CARs with Imperfect Compliance }
	\centering
	\label{tab:survey}
	\begin{tabular}{lcccc}
		\toprule
		& Journal & Method & Covariates & Strata fixed effects \\ 
		\midrule
		\citetalias{royer2015} & AEJ: Applied & TSLS & Yes & Yes \\ 
		\cite{atkin2017} & QJE & TSLS & Yes & Yes \\ 
		\cite{dupasetal2018} & AEJ: Applied & TSLS & Yes & Yes \\
		\cite{marx2019} & AEJ: Applied & TSLS & Yes & Yes \\ 
		\cite{jha2019} & Ecnometrica & TSLS & Yes & Yes \\ 
		\citetalias{himmler2019} & AEJ: Applied & TSLS & Yes & Yes \\ 
		\citetalias{bolhaar2019} & AEJ: Applied & TSLS & Yes & Yes \\ 
		\cite{davis2020} & ReStat & TSLS & Yes & Yes \\ 
		\cite{beam2021} & ReStat & TSLS & Yes & Yes \\ 
		\citetalias{angrist2021} & AEJ: Applied & TSLS & Yes & Yes \\ 
		\cite{okunogbe2022} & AEJ: Policy & TSLS & Yes & Yes \\ 
		\bottomrule
	\end{tabular}
\end{table}

We survey the common practice for analyzing experiments in the empirical economics literature. Our survey is limited to articles that contain the term ``experiment" in their title or abstract and are published between January 2015 and December 2022 in eight journals: the \textit{American Economic Journal: Applied Economics} (AEJ: Applied), \textit{American Economic Journal: Economic Policy} (AEJ: Policy), \textit{American Economic Review}, \textit{Econometrica}, \textit{Journal of Political Economy}, \textit{Quarterly Journal of Economics} (QJE), \textit{Review of Economics and Statistics} (ReStat), and \textit{Review of Economic Studies}. We then manually select the articles that use CARs and report imperfect compliance. Table \ref{tab:survey} tabulates the articles found in our survey. It shows that all the papers in our sample use TSLS with covariates and strata fixed effects to estimate the LATE. This finding motivates us to study the statistical properties of this commonly used TSLS estimator in Section \ref{sec:TSLS} before proposing our new estimator.

\subsection{TSLS with Covariates and Strata Fixed Effects}
\label{sec:TSLS}

Our survey shows that empirical researchers using CARs usually estimate LATE via TSLS regressions with strata dummies and covariates. The first and second stages of the TSLS regression can be formed as
\begin{align}
	&  D_{i} \sim \gamma A_{i} + \sum_{s \in \mathcal{S}}a_{s}1\{S_i=s\} + X_{i}^{\top}\theta, \quad Y_{i} \sim \tau D_{i} + \sum_{s \in \mathcal{S}}\alpha_{s}1\{S_i=s\} + X_{i}^{\top}\delta, \label{eq:TSLS}
\end{align}
where $\{a_s\}_{s \in \mathcal S}$ and $\{\alpha_{s}\}_{s\in\mathcal{S}}$ are the strata fixed effects.

Denote the TSLS estimator of $\tau$ by $\hat{\tau}_{TSLS}$. To study the asymptotic properties of $\hat{\tau}_{TSLS}$, we follow \cite{BCS17} and \cite{anseletal2018} and make the following additional assumption on the
treatment assignment mechanism. 
\begin{ass}
	\label{ass:1iv} Suppose $\pi(s)\in(0,1)$ and
	\begin{align*}
		\left\{  \left\{ \frac{B_{n}(s)}{\sqrt{n}} \right\}_{s \in\mathcal{S}}
		\bigg|\{S_{i}\}_{i \in[n]} \right\}  \rightsquigarrow\mathcal{N}(0,\Sigma_{B}),
	\end{align*}
	where $B_{n}(s) = \sum_{i =1}^{n} (A_{i}-\pi(s))1\{S_{i} = s\}$, $\Sigma_{B} =
	\text{diag}(p(s)\gamma(s):s\in\mathcal{S})$, and $0 \leq\gamma(s) \leq
	\pi(s)(1-\pi(s))$.
\end{ass}

Three remarks are in order. First, Assumption \ref{ass:1iv} is used to analyze the TSLS estimator only and is not needed for all the analyses in later sections in the paper. Second, it implies Assumption \ref{ass:assignment1}(iv). Third, we have $\gamma(s) = \pi(s)(1-\pi(s))$ for SRS and $\gamma(s) < \pi(s)(1-\pi(s))$ for the other three randomization designs mentioned after Assumption \ref{ass:assignment1}. Specifically, for BCD and SBR, we have $\gamma(s) = 0$, which means the assignment rules achieve the strong balance. 

% Third, for Examples \ref{ex:srs} and \ref{ex:sbr}, the treatment probabilities can be heterogeneous across strata while for the assignment rules in Examples \ref{ex:wei} and \ref{ex:bcd}, $\pi(s) = 1/2$. In addition, in Examples \ref{ex:srs}--\ref{ex:sbr}, we have $\gamma(s) = \pi(s)(1-\pi(s))$, $\frac{1}{4}(1-4f'(0))^{-1}$, $0$, and $0$, respectively. When $\gamma(s) = 0$, it is said that the corresponding assignment rule achieves the strong balance. 

Following empirical researchers, we also consider the usual IV heteroskedasticity-robust standard error estimator for TSLS estimator $\hat{\tau}_{TSLS}$, which is denoted as $\hat{\sigma}_{TSLS,naive}/\sqrt{n}$.\footnote{The detailed definition of $\hat{\sigma}_{TSLS,naive}$ can be found in the proof of Theorem \ref{thm:TSLS}.} We compare $\hat \tau_{TSLS}$ with \citeauthor{BG21}'s (\citeyear{BG21}) fully saturated estimator (denoted as $\hat \tau_{NA}$) for $\tau$ under CAR, which does not use any covariates $X_i$. The asymptotic variance of $\hat \tau_{NA}$ is then denoted as $\sigma_{NA}^2$, which is given in \cite{BG21}. In Section \ref{sec:estimation}, we further show that $\hat \tau_{NA}$ is a special case of our general estimator whose asymptotic variance is derived in the proof of Theorem \ref{thm:est}. 
% Under Assumptions \ref{ass:assignment1} and \ref{ass:1iv}, and when $\pi(s)$
% is homogeneous across $s \in\mathcal{S}$, we can state the following result.

% \footnote{It can be shown that $\hat \tau_{NA}$ is just $\hat \tau$ defined in \eqref{eq:tau0} with $\hat \mu^b(a,s,x) = 0$ for $b \in \{D,Y\}$, $a =0,1$ and $(s,x) \times \mathcal{SX}$, where $\mathcal{SX}$ is the joint support of $(S,X)$.} 

\begin{thm}
	Suppose Assumption \ref{ass:assignment1} holds. Then, we have
	\begin{align*}
		\hat{\tau}_{TSLS}  &  \overset{p}{\longrightarrow} \frac{\mathbb{E}\left(\pi(S_{i})(1-\pi(S_{i}))\left[ \mathbb{E}(Y_{i}(D_{i}(1))|S_{i})-\mathbb{E}(Y_{i}(D_{i}(0))|S_{i}) \right]  \right)}{\mathbb{E}\left(\pi(S_{i})(1-\pi(S_{i}))\left[
			\mathbb{E}(D_{i}(1)|S_{i})-\mathbb{E}(D_{i}(0)|S_{i}) \right]\right) },
	\end{align*}
	If $\pi(s)$ or $\frac{\mathbb{E}(Y_{i}(D_{i}(1))|S_{i}=s)-\mathbb{E}(Y_{i}(D_{i}(0))|S_{i}=s)}{
		\mathbb{E}(D_{i}(1)|S_{i}=s)-\mathbb{E}(D_{i}(0)|S_{i}=s)}$ is the same across $s \in\mathcal{S}$, then $\hat \tau_{TSLS} \convP \tau$. If $\pi(s) = \pi$ for all $s \in \mathcal{S}$ and Assumptions \ref{ass:assignment1} and
	\ref{ass:1iv} hold, then
	\begin{align*}
		\sqrt{n}(\hat{\tau}_{TSLS}-\tau) \rightsquigarrow\mathcal{N}(0,\sigma
		_{TSLS}^{2}) \quad \text{and} \quad \hat \sigma_{TSLS,naive}^2 \convP \sigma_{TSLS,naive}^{2},
	\end{align*}
	where the definitions of $\sigma_{TSLS}^{2}$ and $\sigma_{TSLS,naive}^{2}$ can be found in the proof, $\sigma_{TSLS}^{2} \leq \sigma_{TSLS,naive}^{2}$, and the inequality is strict if $\gamma(s) < \pi(1-\pi)$. Last, it is possible to have $\sigma_{TSLS}^2>\sigma_{NA}^2$.
	\label{thm:TSLS}
\end{thm}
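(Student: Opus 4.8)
# Proof Proposal for Theorem \ref{thm:TSLS}

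The plan is to treat the TSLS estimator as a method-of-moments estimator and apply the Frisch–Waugh–Lovell theorem to partial out the strata dummies and covariates, reducing $\hat\tau_{TSLS}$ to a ratio of sample averages whose probability limits I can compute under Assumption \ref{ass:assignment1}.

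\textbf{Step 1: Probability limit.} After partialling out $\{1\{S_i=s\}\}_{s\in\mathcal S}$ and $X_i$ from $Y_i$, $D_i$, and $A_i$, the TSLS estimator equals the ratio of the sample covariance between the residualized $Y_i$ and residualized $A_i$ to that between residualized $D_i$ and residualized $A_i$. Since the assignment $A^{(n)}$ depends only on $S^{(n)}$ (Assumption \ref{ass:assignment1}(ii)), the covariate $X_i$ residual of $A_i$ only involves the within-stratum demeaning, and by a law of large numbers argument (using the moment bound in Assumption \ref{ass:assignment1}(vi) and $n(s)/n\to p(s)$, $n_1(s)/n(s)\to\pi(s)$, which follow from parts (iii)–(iv)), the numerator converges to $\mathbb E(\pi(S_i)(1-\pi(S_i))[\mathbb E(Y_i(D_i(1))|S_i)-\mathbb E(Y_i(D_i(0))|S_i)])$ and the denominator to the analogous expression with $D$ in place of $Y$. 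The key computation is that the within-stratum sample covariance of $A_i$ with any variable $W_i$ converges to $\pi(s)(1-\pi(s))[\mathbb E(W_i|S_i=s,A_i=1)-\mathbb E(W_i|S_i=s,A_i=0)]$, and the cross-stratum terms vanish because the residualized $A_i$ has zero within-stratum mean. The stated consistency condition then follows by inspection: if $\pi(s)\equiv\pi$, the weights $\pi(1-\pi)$ factor out of numerator and denominator; if the per-stratum Wald ratio is constant, the numerator is that common ratio times the denominator.

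\textbf{Step 2: Asymptotic normality when $\pi(s)\equiv\pi$.} Here I would write $\sqrt n(\hat\tau_{TSLS}-\tau)$ as a smooth function of sample averages minus their limits, and invoke a central limit theorem for the CAR design. The relevant CLT is of the type established in \cite{BCS17}: decompose each average into (a) an i.i.d.-like term conditional on $S^{(n)}$, (b) a term driven by $B_n(s)/\sqrt n$, which is asymptotically $\mathcal N(0,\Sigma_B)$ by Assumption \ref{ass:1iv}, and (c) a term in the stratum frequencies $n(s)/n - p(s)$. Collecting these via the delta method gives $\sqrt n(\hat\tau_{TSLS}-\tau)\rightsquigarrow\mathcal N(0,\sigma_{TSLS}^2)$, where $\sigma_{TSLS}^2$ has the usual three-part structure: a "conditional-on-strata" variance, a $\gamma(s)$-weighted design term, and a between-strata term. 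For the naive variance, $\hat\sigma_{TSLS,naive}^2$ is the standard Eicker–White IV sandwich; under i.i.d.-within-stratum sampling it consistently estimates a quantity $\sigma_{TSLS,naive}^2$ that equals $\sigma_{TSLS}^2$ but with $\gamma(s)$ replaced by $\pi(1-\pi)$ in the design term — because the naive formula implicitly treats $A_i$ as i.i.d.\ Bernoulli$(\pi)$, ignoring the cross-sectional dependence the CAR induces. Since $0\le\gamma(s)\le\pi(1-\pi)$ with equality only under SRS, this yields $\sigma_{TSLS}^2\le\sigma_{TSLS,naive}^2$ with strictness whenever $\gamma(s)<\pi(1-\pi)$.

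\textbf{Step 3: $\sigma_{TSLS}^2>\sigma_{NA}^2$ is possible.} For this last claim it suffices to exhibit one DGP (with $\pi(s)\equiv\pi$ so TSLS is consistent) in which the inequality reverses. I would take a simple two-strata example with a scalar covariate $X_i$ whose within-stratum relationship with the residualized outcome is "unhelpful" (e.g., the linear adjustment fits noise rather than signal, or the first-stage compliance is weakly related to $X$), compute both $\sigma_{TSLS}^2$ (from the formula in Step 2) and $\sigma_{NA}^2$ (from \cite{BG21}), and verify numerically that $\sigma_{TSLS}^2>\sigma_{NA}^2$. This is a construction, not a proof of a general inequality, so a single worked example discharges it.

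\textbf{Main obstacle.} The hard part is bookkeeping in Step 2: correctly tracking how the three sources of randomness (within-stratum sampling, the assignment fluctuation $B_n(s)$, and the stratum-frequency fluctuation) propagate through the FWL-residualized ratio, and in particular pinning down the exact form of $\sigma_{TSLS}^2$ and showing the naive sandwich limit $\sigma_{TSLS,naive}^2$ differs from it precisely in the $\gamma(s)\mapsto\pi(1-\pi)$ substitution. Getting the IV-specific algebra right — the TSLS estimator is a ratio, so the influence function involves both the reduced-form and first-stage residuals — is where care is needed, but the structure parallels \cite{BCS17} and \cite{anseletal2018} closely enough that the derivation is mechanical once set up.
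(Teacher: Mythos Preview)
Your proposal is correct and follows essentially the same route as the paper: FWL residualization to write $\hat\tau_{TSLS}$ as a ratio involving $\tilde A_i$, then showing $\tilde A_i = A_i - \pi(S_i) + o_p(1)$ for the consistency part, and a BCS17-style three-term decomposition (within-stratum, $B_n(s)$, and stratum-frequency) for the CLT, with the naive sandwich limit differing from $\sigma_{TSLS}^2$ exactly by the $\gamma(s)\mapsto\pi(1-\pi)$ substitution in the design term. Two small points: in Step~1 your claim that ``the covariate $X_i$ residual of $A_i$ only involves the within-stratum demeaning'' is not literally true in finite samples---the paper makes this precise by showing the $X$-coefficient in the auxiliary regression of $A_i$ on $(\{1\{S_i=s\}\},X_i)$ converges to zero, so that $\tilde A_i - (A_i-\pi(S_i)) = o_p(1)$; and for Step~3 the paper does not construct a numerical example but instead derives an explicit inequality equivalent to $\sigma_{NA}^2<\sigma_{TSLS}^2$, from which the possibility is read off directly.
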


% This is expected because the treatment assignment is randomized and conditionally independent of $X_i$ given $S^{(n)}$.

Theorem \ref{thm:TSLS} highlights one advantage and three limitations of the commonly used TSLS estimator under CARs. The advantage is that the TSLS estimator can consistently estimate the LATE under certain conditions without assuming the linear regression in \eqref{eq:TSLS} being correctly specified. So the reason for incorporating covariates in the regression is to improve estimation efficiency. The first limitation is that the TSLS estimator is inconsistent when both the treatment effect and the probabilities of treatment assignment vary across strata. To ensure its consistency, economists should thus keep the target assignment probability ($\pi(s)$) equal across all strata in the experimental design stage, which may not be satisfied in reality (see, for example, the first dataset in Section \ref{sec:examples}). The second limitation is that the heteroskedasticity-robust standard error reported by standard software such as STATA is conservative and inconsistent unless $\gamma(s) = \pi(1-\pi)$. However, this condition is violated when treatment is not assigned independently, such as BCD and SBR, which are widely used in RCTs. With the cross-sectional dependence among treatment assignments, it is expected that the usual heteroskedasticity-robust standard error is inconsistent. The third limitation is that the asymptotic variance $\sigma^2_{TSLS}$ may not be smaller than that of the unadjusted estimator, which goes against the purpose of using covariates in the regression. In this paper, we develop estimators that have the same advantage but avoid all these limitations. Specifically, our proposed LATE estimators are (1) consistent even under misspecification of regression models, (2) consistent even when the probabilities of treatment assignment are heterogeneous across strata, and (3) guaranteed to be weakly more efficient than the unadjusted estimator. We also provide consistent estimators of the asymptotic variances for our LATE estimators.

% Specifically, one sufficient condition for \eqref{eq:na<TSLS} to hold is that $(Y_i(D_i(1)) - D_i(1)\tau,Y_i(D_i(0)) - D_i(0)\tau)$ are negatively correlated with $\mathbb X_i^\top \lambda^*$ within each stratum. \cite{BG21} show their unadjusted estimator $\hat \tau_{NA}$  is consistent and asymptotically normal even when $\pi(s)$ is heterogeneous across strata and provides a consistent estimator of the asymptotic variance. But it does not use the information of covariates $X$. 

% To better understand \eqref{eq:na<TSLS}, we consider the case that the randomization scheme achieves the strong balance (such as BCD and SBR) so that $\gamma(s) = 0$.

% Under Assumptions
% \ref{ass:assignment1} and \ref{ass:1iv} in the Online Supplement, we derive
% the probability limit of $\hat{\tau}_{TSLS}$ in Theorem \ref{thm:TSLS} in
% Section \ref{sec:aux_TSLS} of the Online Supplement and show it equals $\tau$
% when $\pi(s)$ is homogeneous across $s\in\mathcal{S}$. In this case, we
% further establish its asymptotic normality with an asymptotic variance
% $\sigma_{TSLS}^{2}$.

\section{The General Estimator and its Asymptotic Properties}

\label{sec:estimation}
In this section, we propose a general regression-adjusted LATE estimator for
$\tau$. Define $\mu^{D}(a,s,x) := \mathbb{E}\sbr[1]{D(a)|S = s, X=x}$ and $\mu
^{Y}(a,s,x) := \mathbb{E}\sbr[1]{Y(D(a))|S=s, X=x}$ for $a=0,1$ as the true
specifications. In practice, these are unknown and empirical
researchers employ working models $\overline{\mu}^{D}(a,s,x)$ and
$\overline{\mu}^{Y}(a,s,x)$, which may differ from the true specifications. We then proceed to estimate the working models
with estimators $\hat{\mu}^{D}(a,s,x)$ and $\hat{\mu}^{Y}(a,s,x)$. As
the working models are potentially misspecified, their estimators are
potentially inconsistent for the true specifications. 

To further differentiate $\mu^b(\cdot)$, $\overline{\mu}^b(\cdot)$, and $\hat \mu^b(\cdot)$ for $b \in \{D,Y\}$, we consider an example that $\mu^D(a,s,x)$ follows a probit model, i.e., $\mu^D(a,s,x) = F_N(\tilde \alpha_{a,s}+x^\top\tilde \beta_{a,s})$, where $F_N(\cdot)$ is the standard normal CDF, and $\tilde \alpha_{a,s}$ and $\tilde \beta_{a,s}$ are the regression coefficients which are allowed to depend on assignment $a$ and stratum $s$. However, the researcher does not know the correct specification and instead uses a logit model $\overline{\mu}^D(a,s,x) = \lambda(\alpha_{a,s}+x^\top \beta_{a,s})$  as the working model, where $\lambda(\cdot)$ is the logistic CDF. Then $(\alpha_{a,s},\beta_{a,s})$ are the pseudo true values that depend on how they are estimated and can be defined as the probability limits of the chosen estimator $(\hat \alpha_{a,s},\hat \beta_{a,s})$. For instance, we can estimate the regression coefficients in the logistic model via logistic quasi MLE or nonlinear least squares. As the logistic model is misspecified, the two estimation methods lead to two different pseudo true values.  Suppose we estimate $(\alpha_{a,s},\beta_{a,s})$ by quasi MLE and denote their estimators as $(\hat \alpha_{a,s},\hat \beta_{a,s})$.  The estimator of the working model is then $\hat \mu^D(a,s,x) = \lambda(\hat \alpha_{a,s}+x^\top\hat \beta_{a,s})$.
%This example shows that we do not need the working model to be correctly specified. The estimator $\hat \mu^b(\cdot)$ is estimated from the working model, not the true model.

% Suppose
% \begin{align}
	% \overline{\mu}^{Y}(a,S_{i},X_{i}) = \sum_{s \in\mathcal{S}} 1\{S_{i}=s\}
	% \Lambda_{a,s}^{Y}(X_{i},\theta_{a,s})\quad\text{and} \quad\overline{\mu}%
	% ^{D}(a,S_{i},X_{i}) = \sum_{s \in\mathcal{S}}1\{S_{i}=s\} \Lambda_{a,s}%
	% ^{D}(X_{i},\beta_{a,s}),\label{eq:overlinemu_par}%
	% \end{align}
% where $\Lambda_{a,s}^{b}(\cdot)$ for $(a,b,s) \in\{0,1\} \times\{D,Y\}\times
% \mathcal{S}$ is a known function of $X_{i}$ up to some finite-dimensional
% parameter (i.e., $\theta_{a,s}$ and $\beta_{a,s}$). The researchers have the
% freedom to choose the functional forms of $\Lambda_{a,s}^{b}(\cdot)$, the
% parameter values of $(\theta_{a,s},\beta_{a,s})$, and the ways they are
% estimated. In fact, as the parametric models are potentially misspecified,
% different estimation methods of the same model can lead to distinctive pseudo
% true values.

% \footnote{This is because
	% \begin{align*}
		% \hat{\pi}(s)-\pi(s) = \frac{1}{n(s)}\sum_{i =1}^{n} (A_{i}-\pi(s))1\{S_{i} =
		% s\}=\frac{B_{n}(s)}{n(s)} = o_{p}(1).
		% \end{align*}
	% }

In CAR, the targeted assignment probability for stratum $s$, $\pi(s)$, is usually known or can be consistently estimated
by $\hat{\pi}(s) := \frac{n_{1}(s)}{n(s)}$. Then our proposed estimator of LATE based on the doubly robust moments\footnote{For reference of doubly robust moments, see
	\cite{RobinsRotnitzkyZhao1994}, \cite{RR95},
	\cite{ScharfsteinRotnitzkyRobins1999}, \cite{RobinsRotnitzkyvanderLaan2000},
	\cite{hiranoimbens2001}, \cite{F07late}, \cite{wooldridge2007},
	\cite{rothefirpo2019} etc; see \cite{sloczynskiwooldridge2018} and
	\cite{seamanvansteelandt2018} for recent reviews.} is%
\begin{align}
	\hat{\tau}  &  := \del[3]{\frac{1}{n}\sum_{i \in [n]}\Xi_{H,i}}^{-1}%
	\del[3]{\frac{1}{n}\sum_{i \in [n]}\Xi_{G,i}}, \quad\text{where}%
	\label{eq:tau0}\\
	\Xi_{H,i}  & :=\frac{A_{i}(D_{i} - \hat{\mu}^{D}(1,S_{i},X_{i}))}{\hat{\pi
		}(S_{i})} - \frac{(1-A_{i})(D_{i}-\hat{\mu}^{D}(0,S_{i},X_{i}))}{1-\hat{\pi
		}(S_{i})} + \hat{\mu}^{D}(1,S_{i},X_{i})-\hat{\mu}^{D}(0,S_{i},X_{i}%
	),\label{eq:phi_H}\\
	\Xi_{G,i}  &  :=\frac{A_{i}(Y_{i} - \hat{\mu}^{Y}(1,S_{i},X_{i}))}{\hat{\pi
		}(S_{i})} - \frac{(1-A_{i})(Y_{i}-\hat{\mu}^{Y}(0,S_{i},X_{i}))}{1-\hat{\pi
		}(S_{i})} + \hat{\mu}^{Y}(1,S_{i},X_{i})-\hat{\mu}^{Y}(0,S_{i},X_{i}%
	).\label{eq:phi_G}%
\end{align}
%
%It is worth noting that, as $\hat{\pi}$ is the consistent estimator for the propensity score, our proposed LATE estimator $\hat{\tau}$ is consistent for $\tau$ even when the the working models $\overline{\mu}^D(a,s,x)$ and $\overline{\mu}^Y(a,s,x)$ are misspecified.

% This estimator takes the form of doubly robust moments 

Given the double robustness and the consistency of $\hat \pi(s)$, our estimator $\hat \tau$ is consistent even when the working models $(\hat \mu^D(\cdot),\hat \mu^Y(\cdot))$ are misspecified. Our analysis also takes into account the cross-sectional dependence of
the treatment statuses caused by the randomization and is therefore different
from the double robustness literature that mostly focuses on the observational
data with independent treatment statuses. Furthermore, our general adjusted estimator is numerically invariant to the stratum-specific location shift because 
\begin{align*}
	\sum_{i=1}^{n} \left( \frac{A_{i}}{\hat{\pi}(S_{i})}-1\right)  1\{S_{i}=s\} =
	0 \quad\text{and} \quad\sum_{i=1}^{n} \left( \frac{1-A_{i}}{1-\hat{\pi}%
		(S_{i})}-1\right)  1\{S_{i}=s\} = 0.
\end{align*}
Therefore, using adjustments $\hat \mu^b(a,S_i,X_i)$ and $\hat \mu^b(a,S_i,X_i) - \mathbb{E}(\mu^b(a,S_i,X_i)|S_i)$ for $b \in \{D,Y\}$ are numerically equivalent.

\begin{ass}
	\begin{enumerate}
		[label=(\roman*)]
		
		\item For $a =0,1$ and $s \in\mathcal{S}$, define $I_{a}(s) :=\cbr[1]{i\in [n]: A_i=a, S_i=s}$, 
		\begin{align*}
			\Delta^{Y}(a, s, X_{i})  & := \hat{\mu}^{Y}(a, s, X_{i})-\overline{\mu}^{Y}(a,
			s, X_{i}), \quad\text{and}\\
			\Delta^{D}(a, s, X_{i}) & := \hat{\mu}^{D}(a, s,
			X_{i})-\overline{\mu}^{D}(a, s, X_{i}).
		\end{align*}
		Then, for $a=0,1$, $b=D,Y$, we have
		\begin{align*}
			\max_{s\in\mathcal{S}}%
			\envert[3]{\frac{\sum_{i\in I_1(s)}\Delta^b(a,s,X_i)}{n_1(s)}-\frac{\sum_{i\in I_0(s)}\Delta^b(a,s,X_i)}{n_0(s)}}=o_{p}%
			(n^{-1/2}).
		\end{align*}

		\item For $a =0,1$ and $b = D,Y$, $\frac{1}{n}\sum_{i=1}^{n} (\Delta
		^{b}(a,S_{i},X_{i}))^2 = o_{p}(1)$.
		
		\item Suppose $\max_{a=0,1,s\in\mathcal{S}}\mathbb{E}([\overline{\mu}%
		^{b}(a,S_{i},X_{i})]^2|S_{i}=s) \leq C<\infty$ for $b = D,Y$ and some constant
		$C$.
	\end{enumerate}
	
	\label{ass:Delta}
\end{ass}

Assumption \ref{ass:Delta} requires $\hat \mu^b(\cdot)$ to be a consistent estimator of $\overline{\mu}^b(\cdot)$ for $b=D,Y$. For instance, we can consider a linear working model
$\overline{\mu}^{Y}(a,s,X_{i})=X_{i}^{\top}\beta_{a,s}$, where the pseudo true value 
$\beta_{a,s}$ is defined as the probability limit of the OLS estimator $\hat \beta_{a,s}$ from regressing $Y_i$ on $X_i$ using observations in $I_a(s)$. Then, the estimator $\hat{\mu}^{Y}(a,s,X_{i})$ can be written as $X_{i}^{\top}\hat{\beta}_{a,s},$
and Assumption
\ref{ass:Delta}(i) reduces to 
\begin{align}
	\max_{s \in \mathcal{S}, a=0,1} \envert[3]{ \del[3]{\frac{1}{n_1(s)}\sum_{i \in I_1(s)}X_i - \frac{1}{n_0(s)}\sum_{i \in I_0(s)}X_i}^\top (\hat{\beta}_{a,s} - \beta_{a,s}) } = o_p(n^{-1/2}),
	\label{eq:ex}
\end{align}
which holds automatically because by definition, $\hat{\beta}_{a,s}\overset{p}{\longrightarrow}\beta
_{a,s}$, and we will assume $\mathbb EX_i^2 <\infty$. This example shows that we do not need to assume the working model $\overline{\mu}^{Y}(a,s,X_{i})=X_{i}^{\top}\beta_{a,s}$ is correctly specified. A similar remark applies to Assumption \ref{ass:Delta}(ii) and nonlinear working models such as the logistic regression mentioned earlier. We verify Assumption \ref{ass:Delta} for general parametric adjustments in Section \ref{sec:par} below. 

%Then Theorem \ref{thm:est} below shows that $\hat{\tau}$ is asymptotically normal with the asymptotic variance

To state our first main result below, we need to introduce extra notation. Let $\mathcal{D}_{i}: = \{Y_{i}(1), Y_{i}(0), D_{i}(1), D_{i}(0), X_{i}\}$, $W_{i}  := Y_{i}(D_{i}(1)),$ $Z_{i} :=Y_{i}(D_{i}(0))$, $\tilde{W}_{i} :=W_{i}-\mathbb{E}[W_{i}|S_{i}],$ $\tilde{Z}_{i}:=Z_{i}-\mathbb{E}[Z_{i}|S_{i}],$ $\tilde{X}_{i}:=X_{i}-\mathbb{E}[X_{i}|S_{i}]$, $\tilde{D}_{i}(a) :=D_{i}(a)-\mathbb{E}[D_{i}(a)|S_{i}]$ for $a=0,1,$ and 
\begin{align}
	\tilde{\mu}^{b}(a, S_{i}, X_{i}) &  :=\overline{\mu}^{b}(a, S_{i},
	X_{i})-\mathbb{E}\sbr[1]{\overline{\mu}^Y(a, S_i, X_i)|S_i}, \quad b \in \{D,Y\}. \label{eq:mutilde}%
\end{align}

\begin{thm}
	\begin{enumerate}
		[label=(\roman*)]
		
		\item Suppose Assumptions \ref{ass:assignment1} and \ref{ass:Delta} hold,
		then
		\begin{align}
			\sqrt{n}(\hat{\tau} - \tau) \rightsquigarrow\mathcal{N}(0,\sigma^{2}),%\quad\text{and} \quad\hat{\sigma}^{2} \overset{p}{\longrightarrow} \sigma^{2}, 
			\quad \text{where} \quad \sigma^{2} := \frac{\sigma_{1}^{2} + \sigma_{0}^{2} + \sigma_{2}^{2}%
			}{\mathbb{P}(D(1)>D(0))^{2}},\label{eq:sigma}%
		\end{align} 
		\begin{align*}
			\sigma_{1}^{2} := \mathbb{E}\sbr[1]{\pi(S_i)\Xi_1^2(\mathcal{D}_i,S_i)},
			\quad\sigma_{0}^{2} := \mathbb{E}%
			\sbr[1]{(1-\pi(S_i))\Xi_0^2(\mathcal{D}_i,S_i)}, \quad\sigma_{2}^{2} :=
			\mathbb{E}\sbr[1]{\Xi_{2}^2(S_i)},
		\end{align*}
		and $\Xi_{1}(\mathcal{D}_{i},S_{i})$, $\Xi_{0}(\mathcal{D}_{i},S_{i})$, and
		$\Xi_{2}(S_{i})$ are defined as
		\begin{align}
			& \Xi_{1}(\mathcal{D}_{i}, S_{i})  := \left[ \left( 1- \frac{1}{\pi(S_{i})}
			\right) \tilde{\mu}^{Y}(1,S_{i},X_{i}) - \tilde{\mu}^{Y}(0,S_{i},X_{i}) +
			\frac{\tilde{W}_{i}}{\pi(S_{i})}\right] \nonumber\\
			& \qquad- \tau\left[ \left( 1- \frac{1}{\pi(S_{i})} \right) \tilde{\mu}%
			^{D}(1,S_{i},X_{i}) - \tilde{\mu}^{D}(0,S_{i},X_{i}) + \frac{ \tilde{D}%
				_{i}(1)}{\pi(S_{i})}\right] ,\label{eq:Xi1}\\
			& \Xi_{0}(\mathcal{D}_{i}, S_{i})  := \left[ \left( \frac{1}{1-\pi(S_{i})}-1
			\right) \tilde{\mu}^{Y}(0,S_{i},X_{i}) + \tilde{\mu}^{Y}(1,S_{i},X_{i}) -
			\frac{\tilde{Z}_{i}}{1-\pi(S_{i})}\right] \nonumber\\
			& \qquad- \tau\left[ \left( \frac{1}{1-\pi(S_{i})}-1 \right) \tilde{\mu}%
			^{D}(0,S_{i},X_{i}) + \tilde{\mu}^{D}(1,S_{i},X_{i}) - \frac{\tilde{D}_{i}%
				(0)}{1-\pi(S_{i}) }\right] ,\label{eq:Xi0}\\
			& \Xi_{2}(S_{i})  := \del[1]{\mathbb{E}[W_i-Z_i|S_i]-\mathbb{E}[W_i-Z_i]} -
			\tau\left( \mathbb{E}[D_{i}(1)-D_{i}(0)|S_{i}]-\mathbb{E}[D_{i}(1)-D_{i}(0)]
			\right) .\label{eq:Xi2}%
		\end{align}
		
		\item Next, we define $\hat{\sigma}^{2}$ as
		\[
		\hat{\sigma}^{2} = \frac{\frac{1}{n}\sum_{i=1}^{n}\left[ A_{i}\hat{\Xi}%
			_{1}^{2}(\mathcal{D}_{i},S_{i}) + (1-A_{i})\hat{\Xi}_{0}^{2}(\mathcal{D}%
			_{i},S_{i})+\hat{\Xi}_{2}^{2}(S_{i})\right] }{\left( \frac{1}{n}\sum_{i=1}^{n}
			\Xi_{H,i}\right) ^{2}}, \quad \text{where}
		\]
		\begin{align*}
			\hat{\Xi}_{1}(\mathcal{D}_{i},s)  &  := \tilde{\Xi}_{1}(\mathcal{D}_{i},s) -
			\frac{1}{n_{1}(s)}\sum_{j \in I_{1}(s)} \tilde{\Xi}_{1}(\mathcal{D}_{j},s),\\
			\hat{\Xi}_{0}(\mathcal{D}_{i},s)  &  := \tilde{\Xi}_{0}(\mathcal{D}_{i},s) -
			\frac{1}{n_{0}(s)}\sum_{j \in I_{0}(s)} \tilde{\Xi}_{0}(\mathcal{D}_{j},s),\\
			\hat{\Xi}_{2}(s)  &  :=
			\del[3]{\frac{1}{n_1(s)}\sum_{i \in I_1(s)}(Y_i - \hat{\tau} D_i)} -
			\del[3]{\frac{1}{n_0(s)}\sum_{i \in I_0(s)}(Y_i - \hat{\tau} D_i)},\\
			\tilde{\Xi}_{1}(\mathcal{D}_{i},s)  &  :=
			\sbr[3]{\del[3]{1- \frac{1}{\hat{\pi}(s)} }\hat{\mu}^Y(1,s,X_i) - \hat{\mu}^Y(0,s,X_i) + \frac{Y_i}{\hat{\pi}(s)}}\\
			&  \qquad- \hat{\tau}
			\sbr[3]{\del[3]{1- \frac{1}{\hat{\pi}(s)} }\hat{\mu}^D(1,s,X_i) - \hat{\mu}^D(0,s,X_i) + \frac{D_i}{\hat{\pi}(s)}},
			\quad\text{and}\\
			\tilde{\Xi}_{0}(\mathcal{D}_{i}, s)  &  :=
			\sbr[3]{\del[3]{\frac{1}{1-\hat{\pi}(s)}-1 }\hat{\mu}^Y(0,s,X_i) + \hat{\mu}^Y(1,s,X_i) - \frac{Y_i}{1-\hat{\pi}(s)}}\\
			& \qquad- \hat{\tau}%
			\sbr[3]{\del[3]{\frac{1}{1-\hat{\pi}(s)}-1}\hat{\mu}^D(0,s,X_i) + \hat{\mu}^D(1,s,X_i) - \frac{D_i}{1-\hat{\pi}(s) }}.
		\end{align*}
		Then, we have $\hat \sigma^2 \convP \sigma^2.$
		
		\item If the working models are correctly specified, i.e.,
		$\overline{\mu}^{b}(a,s,x) = \mu^{b}(a,s,x)$ for all $(a,b,s,x) \in\{0,1\}
		\times\{D,Y\} \times\mathcal{SX}$, where $\mathcal{SX}$ is the joint support
		of $(S,X)$, then the asymptotic variance $\sigma^{2}$ achieves the
		SEB.
	\end{enumerate}
	
	\label{thm:est}
\end{thm}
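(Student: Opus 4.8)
The plan is to treat $\hat\tau$ as a ratio: write $\sqrt n(\hat\tau-\tau)=\big(\tfrac1n\sum_i\Xi_{H,i}\big)^{-1}\tfrac1{\sqrt n}\sum_i(\Xi_{G,i}-\tau\Xi_{H,i})$, and establish separately (a) $\tfrac1n\sum_i\Xi_{H,i}\convP \mathbb P(D(1)>D(0))=:H$ and (b) $\tfrac1{\sqrt n}\sum_i(\Xi_{G,i}-\tau\Xi_{H,i})\convD\N(0,\sigma_1^2+\sigma_0^2+\sigma_2^2)$; part (i) and $\hat\tau\convP\tau$ then follow by Slutsky. For (a), using $A_iD_i=A_iD_i(1)$ and $(1-A_i)D_i=(1-A_i)D_i(0)$, I would replace $\hat\mu^D$ by the pseudo-true $\overline\mu^D$ (legitimate since Assumption~\ref{ass:Delta}(ii) gives $\tfrac1n\sum_i|\Delta^D(a,S_i,X_i)|=o_p(1)$) and $\hat\pi(s)$ by $\pi(s)$ (as $\hat\pi(s)=\pi(s)+B_n(s)/n(s)=\pi(s)+o_p(1)$ by Assumption~\ref{ass:assignment1}(iv)), then apply a within-stratum law of large numbers under CAR; the $\overline\mu^D$-terms cancel---this is the double robustness---and the no-defiers Assumption~\ref{ass:assignment1}(v) yields the limit $\mathbb E[D(1)]-\mathbb E[D(0)]=H$.

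For the CLT in (b), the central observation is that plugging in $\hat\pi(s)=n_1(s)/n(s)$ makes the estimator \emph{fully saturated}: $\tfrac1n\sum_{i:S_i=s}\tfrac{A_i}{\hat\pi(s)}h_i=\hat p(s)\tfrac1{n_1(s)}\sum_{i\in I_1(s)}h_i$, so the moment is a finite sum over strata of within-treated, within-control and within-stratum sample averages, and $B_n(s)$ only ever multiplies within-stratum-centered coefficients---which is exactly why Assumption~\ref{ass:1iv} (and the design parameter $\gamma(s)$) is not needed here. First I would swap $\hat\mu^b$ for $\overline\mu^b$: the error is a sum over the finitely many strata of $\hat p(s)\sqrt n\big[\tfrac1{n_1(s)}\sum_{I_1(s)}\Delta^b-\tfrac1{n_0(s)}\sum_{I_0(s)}\Delta^b\big]$-type terms, which is $o_p(1)$ by Assumption~\ref{ass:Delta}(i)---precisely the role of that (nonstandard) balance condition. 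Next, with $A_iY_i=A_iW_i$ etc., I would decompose each within-stratum average into its population conditional mean plus an i.i.d.-type fluctuation. Setting $q^{(1)}_i:=(W_i-\tau D_i(1))-(\overline\mu^Y(1,S_i,X_i)-\tau\overline\mu^D(1,S_i,X_i))$ and $\varepsilon^{(1)}_i:=q^{(1)}_i-\mathbb E[q^{(1)}_i|S_i]$ (and analogously $q^{(0)}_i,\varepsilon^{(0)}_i$), the conditional-mean pieces collect---using the LATE identity $\mathbb E[W-Z]=\tau\mathbb E[D(1)-D(0)]$---into $\tfrac1{\sqrt n}\sum_i\Xi_2(S_i)$, while the fluctuations together with the stand-alone adjustment term combine (after replacing $\hat\pi$ by $\pi$ in the remainder, at cost $o_p(1)\!\cdot\!O_p(n^{-1/2})\!\cdot\!\sqrt n$) into $\tfrac1{\sqrt n}\sum_i\big[\tfrac{A_i}{\pi(S_i)}\varepsilon^{(1)}_i-\tfrac{1-A_i}{1-\pi(S_i)}\varepsilon^{(0)}_i+\Delta m_i\big]=\tfrac1{\sqrt n}\sum_i\big(A_i\Xi_1(\mathcal D_i,S_i)+(1-A_i)\Xi_0(\mathcal D_i,S_i)\big)$, where $\Delta m_i:=(\tilde\mu^Y(1,S_i,X_i)-\tilde\mu^Y(0,S_i,X_i))-\tau(\tilde\mu^D(1,S_i,X_i)-\tilde\mu^D(0,S_i,X_i))$ and the last equality is the algebraic identity $A_i\Xi_1+(1-A_i)\Xi_0=\Delta m_i+\tfrac{A_i}{\pi}\varepsilon^{(1)}_i-\tfrac{1-A_i}{1-\pi}\varepsilon^{(0)}_i$. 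I would then invoke the conditional-on-$S^{(n)}$ CLT machinery of \citep{BCS17} and \citep{BCS18}: given $(\mathcal D_i,S_i)_i$ the $A$-randomness has per-unit mean $\pi(S_i)\Xi_{1,i}+(1-\pi(S_i))\Xi_{0,i}+\Xi_{2,i}$ and conditional variance $\tfrac1n\sum_i\pi(S_i)(1-\pi(S_i))(\Xi_{1,i}-\Xi_{0,i})^2$, and these two asymptotically independent components sum to $\sigma_1^2+\sigma_0^2+\sigma_2^2$ via a short computation using $\Xi_1-\Xi_0=\varepsilon^{(1)}/\pi+\varepsilon^{(0)}/(1-\pi)$ and $\pi\Xi_1+(1-\pi)\Xi_0=\Delta m+\varepsilon^{(1)}-\varepsilon^{(0)}$, the cross term involving $\Xi_2(S_i)$ vanishing because $\mathbb E[\Xi_1|S_i]=\mathbb E[\Xi_0|S_i]=0$.

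Part (ii) is then a plug-in argument: $\hat\Xi_1,\hat\Xi_0,\hat\Xi_2$ are constructed so that for a treated $i$ in stratum $s$ one has $\hat\Xi_1(\mathcal D_i,s)\convP\Xi_1(\mathcal D_i,S_i)$ (and $\hat\Xi_0$ for controls, $\hat\Xi_2(s)$ for all units), upon replacing $\hat\tau\to\tau$, $\hat\mu^b\to\overline\mu^b$ (Assumption~\ref{ass:Delta}(ii)), $\hat\pi(s)\to\pi(s)$, and the within-stratum sample centering by its population limit; note that $\hat\Xi_2(s)\convP\Xi_2(s)$ rather than $\mathbb E[W-Z|S=s]-\tau\mathbb E[D(1)-D(0)|S=s]$ precisely because the difference $\mathbb E[W-Z]-\tau\mathbb E[D(1)-D(0)]$ is zero. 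Then $\tfrac1n\sum_iA_i\hat\Xi_{1,i}^2\convP\mathbb E[\pi(S)\Xi_1^2]=\sigma_1^2$, and similarly for $\sigma_0^2$ and $\sigma_2^2$, by a within-stratum LLN (the moment bound $q\geq4$ in Assumption~\ref{ass:assignment1}(vi) makes the squares, including the $Y_i/\hat\pi(s)$ pieces, integrable), while the denominator converges to $H^2$ by (a); hence $\hat\sigma^2\convP\sigma^2$.

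For part (iii), under $\overline\mu^b=\mu^b$ we get $\tilde\mu^Y(a,S_i,X_i)=\mathbb E[Y(D(a))|S_i,X_i]-\mathbb E[Y(D(a))|S_i]$ (and likewise for $\mu^D$), so $\varepsilon^{(a)}_i$ becomes the within-$(S,X)$ residual of $R(a):=Y(D(a))-\tau D(a)$, orthogonal to every function of $(S_i,X_i)$. Substituting into \eqref{eq:Xi1}--\eqref{eq:Xi2} and applying the law of total variance, all cross terms vanish and $\sigma^2$ collapses to
\[
\sigma^2=\frac{1}{\mathbb P(D(1)>D(0))^2}\,\mathbb E\!\left[\frac{\var\big(R(1)\,\big|\,S,X\big)}{\pi(S)}+\frac{\var\big(R(0)\,\big|\,S,X\big)}{1-\pi(S)}+\big(\mathbb E\big[R(1)-R(0)\,\big|\,S,X\big]\big)^2\right].
\]
It remains to identify this with the SEB for $\tau$ under the CAR model. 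I would derive that bound by the usual tangent-space/convolution argument adapted to covariate-adaptive sampling (in the spirit of \citep{A22} for the ATE): conditioning on $S^{(n)}$ and using Assumption~\ref{ass:assignment1}(i)--(ii)---so that $A^{(n)}\indep\{Y_i(\cdot),D_i(\cdot),X_i\}_i\mid S^{(n)}$ with fixed propensity $\pi(s)$---the information calculation factors into independent per-stratum ``randomized experiment with known propensity'' problems (plus the sampling of $S$), whose efficient influence functions are the doubly robust LATE moments of \citep{F07late}; aggregating these (the between-stratum variation of $\mathbb E[R(1)-R(0)|S]$ being unavoidable and already subsumed in the displayed $\big(\mathbb E[R(1)-R(0)|S,X]\big)^2$ term, since $\mathbb E[(\mathbb E[\cdot|S,X])^2]=\var(\mathbb E[\cdot|S])+\mathbb E[\var(\mathbb E[\cdot|S,X]\mid S)]$) reproduces exactly the displayed $\sigma^2$, so the bound is attained.

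The main obstacle is the CLT in (b): one must simultaneously absorb the adjustment-estimation error at the $\sqrt n$ rate---the precise reason Assumption~\ref{ass:Delta}(i) is stated as a within-stratum balance condition---absorb the $\hat\pi(s)$-estimation error without assuming $B_n(s)/\sqrt n=O_p(1)$ (which succeeds only because of the fully-saturated structure), and control the cross-sectional dependence of $A^{(n)}$ for a general CAR design, for which the conditioning-on-$S^{(n)}$ arguments of \citep{BCS17,BCS18} are the right tool. The SEB identification in (iii) is a secondary difficulty, resolved again by conditioning on $S^{(n)}$ to reduce to a product of per-stratum efficiency problems.
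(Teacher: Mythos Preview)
Your high-level strategy---ratio expansion, swap $\hat\mu\to\overline\mu$ via Assumption~\ref{ass:Delta}(i), swap $\hat\pi\to\pi$ via the fully-saturated structure, then CLT---matches the paper's exactly; the paper packages the linear expansion as Lemma~\ref{lem:G} and the CLT as Lemma~\ref{lem:clt}. Parts (ii) and (iii) are also handled essentially as you describe (the SEB is derived separately in Theorem~\ref{thm:eff} along the lines you sketch, conditioning on $S^{(n)}$ and using \cite{A22}).

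The one place your description diverges substantively is the CLT. You propose to condition on $(\mathcal D_i,S_i)_i$ and treat the $A$-randomness, claiming ``per-unit mean $\pi(S_i)\Xi_{1,i}+(1-\pi(S_i))\Xi_{0,i}$'' and ``conditional variance $\tfrac1n\sum_i\pi(S_i)(1-\pi(S_i))(\Xi_{1,i}-\Xi_{0,i})^2$''. This conditioning direction is precisely what fails for a general CAR: the $A_i$ are cross-sectionally dependent given $S^{(n)}$ (under SBR they are perfectly negatively correlated within a stratum), so neither $\mathbb E[A_i\mid S^{(n)}]=\pi(S_i)$ nor conditional independence of the $A_i$ holds, and the conditional variance you write down is not the right object. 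Your algebra showing that the two variance pieces sum to $\sigma_1^2+\sigma_0^2+\sigma_2^2$ is correct, but it does not by itself deliver the CLT under only $B_n(s)/n(s)=o_p(1)$.

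The \cite{BCS17} machinery you cite conditions the \emph{other} way---on $(A^{(n)},S^{(n)})$---and this is exactly what the paper does in Lemma~\ref{lem:clt}: given $(A^{(n)},S^{(n)})$, reorder units by stratum and then by $A_i=1$ first, so that $\tfrac1{\sqrt n}\sum_i A_i\Xi_1(\mathcal D_i,S_i)$ becomes a partial sum of an auxiliary i.i.d.\ sequence $\{\mathcal D_i^s\}$ between random indices $N(s)$ and $N(s)+n_1(s)$; Donsker-type stochastic equicontinuity plus $(N(s)/n,n_1(s)/n)\convP(F(s),\pi(s)p(s))$ then gives the limit. This argument also yields the asymptotic independence of the three pieces $\sum A_i\Xi_1$, $\sum(1-A_i)\Xi_0$, $\sum\Xi_2$ for free (they use disjoint blocks of the auxiliary i.i.d.\ sequence, and $\Xi_2$ depends only on $S^{(n)}$), whereas your two-term Hajek-type decomposition would still need a separate joint-convergence argument. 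Flipping the conditioning to $(A^{(n)},S^{(n)})$ fixes your sketch; once you do that, the key fact $\mathbb E[\Xi_1-\Xi_0\mid S_i]=0$ (which you note) is exactly what makes the design-specific covariances of $A^{(n)}$ wash out and explains why $\gamma(s)$ never appears.
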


Several remarks are in order. First, Theorem \ref{thm:est}(i) establishes the limiting distribution of our adjusted LATE estimator, which also implies its consistency. Our estimator inherits the advantage of the TSLS estimator because it remains consistent even when the adjustment $\overline{\mu}^b(\cdot)$ is misspecified, but avoids its limitation because our estimator remains consistent when $\pi(s)$ varies across strata. Additionally, the terms $\sigma_0^2$, $\sigma_1^2$, and $\sigma_2^2$ in the asymptotic variance of our regression-adjusted LATE estimator represent the sampling variations from the control units within each stratum, the treatment units within each stratum, and the strata itself, respectively.

Second, Theorem \ref{thm:est}(ii) gives a consistent estimator of this asymptotic variance, which depends on the working model $\overline{\mu}^{b}(a,s,x)$ for
$(a,b)\in\{0,1\}\times\{D,Y\}$. Different working models lead to different estimation efficiencies.  

Third, Theorem \ref{thm:est}(iii) further shows that
our general regression-adjusted estimator achieves the semiparametric
efficiency bound $\underline{\sigma}^{2}$ derived in Theorem \ref{thm:eff} below 
when the working models are correctly specified.

Fourth, when there are no adjustments so that $\overline{\mu}^{Y}(\cdot)$ and
$\overline{\mu}^{D}(\cdot)$ are zero, we obtain
\begin{align*}
	\sigma^{2} = \frac{\sum_{s \in S} \frac{p(s)}{\pi(s)}Var(W - \tau D(1)|S=s) +
		\sum_{s \in S}\frac{p(s)}{1-\pi(s)}Var(Z - \tau D(0)|S=s) + \sigma_{2}^{2}%
	}{\mathbb{P}(D(1)>D(0))^{2}}.
\end{align*}
In this case, our estimator coincides numerically with \citeauthor{BG21}'s
(\citeyear{BG21}) fully saturated estimator (i.e., NA). Indeed, we can verify that $\sigma^{2}$ defined above is the same as the
asymptotic variance of the fully saturated estimator derived by \cite{BG21}.

\section{Semiparametric Efficiency Bound}
\label{sec:bound}
% The following theorem establishes the SEB for LATE under CAR. 

\begin{thm} \label{thm:eff}
	Suppose that Assumption \ref{ass:assignment1} and the regularity conditions in Assumption \ref{ass:E} in the Online
	Supplement hold. For $a=0,1$, define $\underline{\Xi}_{1}(\mathcal{D}%
	_{i},S_{i})$, $\underline{\Xi}_{0}(\mathcal{D}_{i},S_{i})$ and $\underline{\Xi
	}_{2}(S_{i})$ as $\Xi_{1}(\mathcal{D}_{i},S_{i})$, $\Xi_{0}(\mathcal{D}%
	_{i},S_{i})$ and $\Xi_{2}(S_{i})$ in \eqref{eq:Xi1}--\eqref{eq:Xi2},
	respectively, with the researcher-specified working model $\overline{\mu}%
	^{b}(a,s,x)$ equal to the true specification $\mu^{b}(a,s,x)$ for all
	$(a,b,s,x) \in\{0,1\} \times\{D,Y\} \times\mathcal{SX}$, where $\mathcal{SX}$
	is the joint support of $(S,X)$. Then the SEB for
	$\tau$ is $\underline{\sigma}^{2} := \frac{\underline{\sigma}_{1}^{2} + \underline{\sigma
		}_{0}^{2} + \underline{\sigma}_{2}^{2}}{\mathbb{P}(D(1)>D(0))^{2}}$, where $\underline{\sigma}_{1}^{2} := \mathbb{E}%
	\sbr[1]{\pi(S_i)\underline{\Xi}_1^2(\mathcal{D}_i,S_i)}$, $\underline{\sigma}_{0}^{2} := \mathbb{E}%
	\sbr[1]{(1-\pi(S_i))\underline{\Xi}_0^2(\mathcal{D}_i,S_i)}$, and $\underline{\sigma}_{2}^{2} := \mathbb{E}\underline{\Xi}_{2}^{2}(S_{i})$. 
	%Let $\hat{\tau}_n$ be any sequence of estimators computed based on data. For any loss function that is subconvex, we have
	%\begin{align*}
	%\sup_{\mathcal{H}} \limsup_{n \rightarrow \infty} \sup_{h \in \mathcal{H}}\mathbb{E}_{\theta^* + h/\sqrt{n}}L(\sqrt{n}(\hat{\tau}_n - \tau(\theta^*+h/\sqrt{n}))) \geq \mathbb{E}_{T \sim \N(0,\underline{\sigma}^2)}L(T),
	%\end{align*}%
	% \begin{align*}
		% \underline{\sigma}^{2} := \frac{\underline{\sigma}_{1}^{2} + \underline{\sigma
				% }_{0}^{2} + \underline{\sigma}_{2}^{2}}{\mathbb{P}(D(1)>D(0))^{2}}, \quad \text{where}
		% \end{align*}
	% \begin{align*}
		% \underline{\sigma}_{1}^{2} := \mathbb{E}%
		% \sbr[1]{\pi(S_i)\underline{\Xi}_1^2(\mathcal{D}_i,S_i)}, \quad
		% \underline{\sigma}_{0}^{2} := \mathbb{E}%
		% \sbr[1]{(1-\pi(S_i))\underline{\Xi}_0^2(\mathcal{D}_i,S_i)}, \quad
		% \underline{\sigma}_{2}^{2} := \mathbb{E}\underline{\Xi}_{2}^{2}(S_{i}).
		% \end{align*}
\end{thm}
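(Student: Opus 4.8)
The achievability side is already settled: Theorem~\ref{thm:est}(iii) exhibits a regular estimator whose $\sqrt{n}$-asymptotic variance equals $\underline{\sigma}^{2}$, so what remains is the lower bound, namely that $\underline{\sigma}^{2}$ is the variance of the efficient influence function and that a convolution theorem applies. The plan is to compute the bound through smooth parametric submodels $\theta\mapsto P_{\theta}$ of the i.i.d.\ law $P$ of $V_{i}:=(Y_{i}(1),Y_{i}(0),D_{i}(1),D_{i}(0),X_{i},S_{i})$, with $P_{\theta_{0}}=P$, subject to the maintained restrictions (i.i.d.\ sampling, no defiers $\mathbb{P}(D(1)<D(0))=0$, the moment bounds in Assumption~\ref{ass:assignment1}(vi), and the regularity conditions of Assumption~\ref{ass:E}), writing the density of $V_{i}$ as $f_{\theta}(u\mid s)\,p_{\theta}(s)$; the assignment mechanism $\mathbb{P}(A^{(n)}\mid S^{(n)})$ is held fixed and free of $\theta$, which places us in the general experimental-design efficiency framework of \cite{A22}.

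The first step is the likelihood factorization. Under Assumption~\ref{ass:assignment1}(ii), a unit with $A_{i}=a$, $S_{i}=s$ reveals exactly $(X_{i},D_{i}(a),Y_{i}(D_{i}(a)))$, so the $\theta$-dependent part of the observed log-likelihood is $\sum_{i\in[n]}\bigl[\log g_{\theta}^{(A_{i})}(Y_{i},D_{i},X_{i}\mid S_{i})+\log p_{\theta}(S_{i})\bigr]$, where $g_{\theta}^{(a)}(\cdot\mid s)$ is the conditional density of $(Y_{i}(D_{i}(a)),D_{i}(a),X_{i})$ given $S_{i}=s$. The per-unit score is $S_{\theta,i}=\psi^{(A_{i})}(Y_{i},D_{i},X_{i},S_{i})+s^{p}(S_{i})$, with $\psi^{(a)}=\partial_{\theta}\log g_{\theta}^{(a)}$ and $s^{p}=\partial_{\theta}\log p_{\theta}$ at $\theta_{0}$, and the elementary but crucial fact is $\mathbb{E}[\psi^{(a)}(Y_{i},D_{i},X_{i},S_{i})\mid S_{i}]=0$ for every fixed $a$. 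Conditioning on $(S^{(n)},A^{(n)})$, under which the summands are independent, and using this zero-conditional-mean property, every off-diagonal term in $\mathbb{E}\bigl[(\sum_{i}S_{\theta,i})(\sum_{j}S_{\theta,j})\bigr]$ and the within-unit cross moment $\mathbb{E}[\psi^{(A_{i})}s^{p}(S_{i})]$ vanish, so the Fisher information is $n\,\mathcal{I}_{1}$ with $\mathcal{I}_{1}=\mathbb{E}[\psi^{(A_{1})}(\cdot)^{2}]+\mathbb{E}[s^{p}(S_{1})^{2}]$ -- exactly the information of the i.i.d.\ experiment with $A_{1}\mid S_{1}\sim\mathrm{Bernoulli}(\pi(S_{1}))$, and in particular independent of the fine structure of the design within the class of Assumption~\ref{ass:assignment1}. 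I would then characterize the tangent set as the $L^{2}$-closure of these scores, which decomposes into orthogonal directions for the conditional laws of the treated outcome $(D(1),Y(D(1)))\mid X,S$, the control outcome $(D(0),Y(D(0)))\mid X,S$, of $X\mid S$, and of $S$, with the joint laws of $(D(1),D(0))$ and $(Y(1),Y(0))$ left unconstrained -- the familiar partial-identification feature of LATE.

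Next comes the efficient influence function. Writing $\tau(\theta)=\mathbb{E}_{\theta}[W-Z]/\mathbb{E}_{\theta}[D(1)-D(0)]$ (with no defiers, $\mathbb{E}[W-Z]=\tau\,\mathbb{P}(D(1)>D(0))$ and $\mathbb{E}[D(1)-D(0)]=\mathbb{P}(D(1)>D(0))$), I would differentiate the numerator and denominator functionals along an arbitrary submodel and apply the quotient rule to obtain a gradient $\phi_{i}=\{\phi_{i}^{Y}-\tau\,\phi_{i}^{D}\}/\mathbb{P}(D(1)>D(0))$, where each $\phi_{i}^{b}$, $b\in\{D,Y\}$, is the doubly robust (augmented inverse-propensity) expression with the stratum-specific demeaning of \eqref{eq:mutilde} already folded in; rearranging its inverse-propensity, regression, and stratum pieces shows that $\phi_{i}$ equals $\{1\{A_{i}=1\}\,\underline{\Xi}_{1}(\mathcal{D}_{i},S_{i})+1\{A_{i}=0\}\,\underline{\Xi}_{0}(\mathcal{D}_{i},S_{i})+\underline{\Xi}_{2}(S_{i})\}/\mathbb{P}(D(1)>D(0))$ for the functions of \eqref{eq:Xi1}--\eqref{eq:Xi2} evaluated at the true specifications. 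I would then verify $\mathbb{E}[\phi_{i}]=0$ and that $\phi_{i}$ lies in the tangent-set closure -- checking term by term that the inverse-propensity residuals sit in the treated and control outcome directions, the $\mu^{b}(1,\cdot)-\mu^{b}(0,\cdot)$ terms in the $X\mid S$ direction, and $\underline{\Xi}_{2}$ in the $S$ direction -- and note in passing that $\phi_{i}$ is orthogonal to the instrument-propensity-score directions, so the partial knowledge $\mathbb{P}(A=1\mid S,X)=\pi(S)$ does not tighten the bound. Since $\phi_{i}$ is a gradient lying in the tangent set, it is the efficient influence function and the SEB equals $\mathbb{E}[\phi_{i}^{2}]$; a direct calculation using $A_{i}^{2}=A_{i}$, $A_{i}(1-A_{i})=0$, $\mathbb{E}[1\{A_{i}=1\}\mid\mathcal{D}_{i},S_{i}]=\pi(S_{i})$ (a consequence of Assumption~\ref{ass:assignment1}(ii)), and $\mathbb{E}[\underline{\Xi}_{a}(\mathcal{D}_{i},S_{i})\mid S_{i}]=0$ for $a=0,1$ (which also kills the cross term between the $A_{i}$-weighted parts and $\underline{\Xi}_{2}(S_{i})$) then yields $\mathbb{E}[\phi_{i}^{2}]=\{\mathbb{E}[\pi(S_{i})\underline{\Xi}_{1}^{2}]+\mathbb{E}[(1-\pi(S_{i}))\underline{\Xi}_{0}^{2}]+\mathbb{E}[\underline{\Xi}_{2}^{2}]\}/\mathbb{P}(D(1)>D(0))^{2}=\underline{\sigma}^{2}$.

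The last step upgrades this to a genuine bound despite the cross-sectional dependence of $A^{(n)}$. I would establish the local asymptotic normality (LAN) expansion of the observed-data likelihood ratio along paths $\theta_{0}+h/\sqrt{n}$: the linear term is $(h/\sqrt{n})\sum_{i}S_{\theta_{0},i}$, which converges to $\mathcal{N}(0,\mathcal{I}_{1})$ by the central limit theorem for covariate-adaptive sums (as in \cite{BCS17}, using that within each stratum-treatment cell the summands are i.i.d.\ and that $B_{n}(s)/n(s)=o_{p}(1)$; the design-dependent term of that CLT carries the factor $\mathbb{E}[\psi^{(a)}\mid S]=0$ and therefore drops out), while the quadratic term converges to $\tfrac{1}{2}h^{2}\mathcal{I}_{1}$ by the corresponding law of large numbers. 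With LAN in hand, the H\'ajek--Le~Cam convolution theorem applied to the least-favorable submodel through $\phi_{i}$ shows that the limit distribution of $\sqrt{n}(\hat{\tau}-\tau)$ for any regular estimator $\hat{\tau}$ is $\mathcal{N}(0,\underline{\sigma}^{2})$ convolved with some law, hence has asymptotic variance at least $\underline{\sigma}^{2}$; combined with Theorem~\ref{thm:est}(iii) the SEB is $\underline{\sigma}^{2}$. I expect this last step to be the main obstacle -- carrying convolution-theorem machinery through a non-i.i.d.\ design, most cleanly by specializing the covariate-adaptive-design efficiency results of \cite{A22} to the present coarsened LATE model, and making transparent that strong-balance designs such as BCD and SBR cannot lower the bound because they only shrink the variability of $n_{1}(s)/n(s)$, which is first-order irrelevant once $\underline{\Xi}_{1}$ and $\underline{\Xi}_{0}$ have zero within-stratum conditional mean. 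A secondary technical point is choosing submodels rich enough to span every tangent direction while respecting the one-sided no-defier constraint $\mathbb{P}(D(1)<D(0))=0$ when perturbing the law of $(D(1),D(0))$.
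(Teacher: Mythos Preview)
Your proposal is correct and follows essentially the same route as the paper: both establish LAN for the observed-data likelihood under CAR by invoking the framework of \cite{A22}, identify the tangent set, exhibit the candidate efficient influence function $\tilde{\Psi}=(\underline{\Xi}_{2}+A\,\underline{\Xi}_{1}+(1-A)\,\underline{\Xi}_{0})/\mathbb{P}(D(1)>D(0))$, and verify that it is the gradient of $\tau$ with respect to the tangent set, with the design-dependent remainder killed by $\mathbb{E}|B_{n}(s)|/n=o(1)$. The only substantive difference is presentational: where you decompose the tangent set into four orthogonal pieces (treated outcome $|X,S$; control outcome $|X,S$; $X|S$; $S$), the paper instead writes the score as $g_{s}(S)+A\,g_{1}(\mathcal{Y}(1)|S)+(1-A)\,g_{0}(\mathcal{Y}(0)|S)$ and derives the single constraint $\mathbb{E}[g_{1}(\mathcal{Y}(1)|S)\mid X,S]=\mathbb{E}[g_{0}(\mathcal{Y}(0)|S)\mid X,S]$ (because both marginals share the same $X\mid S$ law), then uses this constraint explicitly to annihilate a cross term $\mathbb{E}[\Gamma(X,S)(g_{1}-g_{0})]$ in the pathwise-derivative calculation---this is exactly your ``$\mu^{b}(1,\cdot)-\mu^{b}(0,\cdot)$ sits in the $X\mid S$ direction'' check, carried out from the other side. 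One small imprecision: your claim that the finite-$n$ Fisher information is exactly $n\mathcal{I}_{1}$ is not literally true under general CAR (e.g.\ BCD), since $\mathbb{P}(A_{i}=1\mid S^{(n)})$ need not equal $\pi(S_{i})$ for each $i$; what holds, and what the paper uses via \cite{A22}, is that the \emph{averaged} information $n^{-1}\sum_{i}\mathbb{E}[S_{\theta,i}^{2}]$ converges to $\mathcal{I}_{1}$ because the discrepancy is controlled by $\mathbb{E}|B_{n}(s)|/n\to 0$.
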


Several remarks are in order. First, Theorem \ref{thm:eff} suggests that the asymptotic variance of any regular root-$n$ consistent and asymptotically normal semiparametric estimator of LATE is bounded from below
by $\underline{\sigma}^{2}$. Second, the proof of Theorem \ref{thm:eff}
follows the arguments of \cite{A22}, who accounted for the cross-sectional dependence of
$\{A_{i}\}_{i\in\lbrack n]}$. Third, the
efficiency bound here differs slightly from the one derived by \cite{F07late} under unconfoundedness for observational data because here the covariates $X_{i}$ only affect the conditional mean models (i.e.,
$\mu^{b}(a,s,x)$ for $a=0,1$, $b=\{D,Y\}$) but not the \textquotedblleft
propensity score" $\pi(\cdot)$. Fourth, Theorem \ref{thm:eff} implies that
various CARs (with or without achieving strong balance) lead to the same SEB for LATE estimation. Such a result is consistent with what \cite{A22}
found for ATE under general randomization schemes.

% Fifth,
% \citeauthor{anseletal2018}'s (\citeyear{anseletal2018}) S estimator with
% asymptotic variance defined in \eqref{eq:ansel_sigma} in the Online Supplement
% may not achieve this efficiency bound because the linear adjustments may be
% misspecified. 

%In Section \ref{sec:estimation}, we propose a general regression-adjusted estimator that allows for nonlinear, nonparametric, and regularized adjustments, and can potentially achieve the efficiency bound.

\section{Specific Adjustment Frameworks}

\subsection{Parametric Working Model}
\label{sec:par} In this section, we consider estimating $\overline{\mu}%
^{b}(a,s,x)$ for $a=0,1$, $s \in\mathcal{S}$, and $b = D,Y$ via parametric
regressions. Note that we do not require $\overline{\mu}^{b}(a,s,x)$ to be
correctly specified. Suppose that
\begin{align}
	\overline{\mu}^{Y}(a,S_{i},X_{i}) = \sum_{s \in\mathcal{S}} 1\{S_{i}=s\}
	\Lambda_{a,s}^{Y}(X_{i},\theta_{a,s})\quad\text{and} \quad\overline{\mu}%
	^{D}(a,S_{i},X_{i}) = \sum_{s \in\mathcal{S}}1\{S_{i}=s\} \Lambda_{a,s}%
	^{D}(X_{i},\beta_{a,s}),\label{eq:overlinemu_par}%
\end{align}
where $\Lambda_{a,s}^{b}(\cdot)$ for $(a,b,s) \in\{0,1\} \times\{D,Y\}\times
\mathcal{S}$ is a known function of $X_{i}$ up to some finite-dimensional
parameter (i.e., $\theta_{a,s}$ and $\beta_{a,s}$). The researchers have the
freedom to choose the functional forms of $\Lambda_{a,s}^{b}(\cdot)$, the
parameter values of $(\theta_{a,s},\beta_{a,s})$, and the methods of estimation. As mentioned above, because the parametric models are potentially misspecified,
different estimation methods of the same model can lead to distinctive pseudo
true values. We will discuss several detailed examples in Sections
\ref{sec:linear}, \ref{sec:OLS_MLE}, and \ref{sec:imp} below. Here, we first
focus on the general setup.

Define the estimators of $(\theta_{a,s},\beta_{a,s})$ as $(\hat{\theta}%
_{a,s},\hat{\beta}_{a,s})$, and hence the corresponding feasible parametric
regression adjustments as
\begin{align}
	\hat{\mu}^{Y}(a,s,X_{i}) = \Lambda_{a,s}^{Y}(X_{i},\hat{\theta}_{a,s})
	\quad\text{and} \quad\hat{\mu}^{D}(a,s,X_{i}) = \Lambda_{a,s}^{D}(X_{i}%
	,\hat{\beta}_{a,s}).\label{eq:hatmu_par}%
\end{align}

\bigskip

\begin{ass}
	\begin{enumerate}
		[label=(\roman*)]
		
		\item Suppose that $\max_{a=0,1,s\in\mathcal{S}}||\hat{\theta}_{a,s}%
		-\theta_{a,s}||_{2}\overset{p}{\longrightarrow}0$ and $\max_{a=0,1,s\in
			\mathcal{S}}||\hat{\beta}_{a,s}-\beta_{a,s}||_{2}\overset{p}{\longrightarrow
		}0$, where $\|\cdot\|_2$ is the Euclidean norm.
		
		\item There exist a positive random variable $L_{i}$ and a positive constant
		$C>0$ such that for all $a = 0,1$ and $s \in\mathcal{S}$,
		\begin{align*}
			&
			\enVert[3]{\frac{\partial\Lambda_{a,s}^Y(X_i,\theta_{a,s})}{\partial \theta_{a,s}}}_{2}
			\leq L_{i}, \quad||\Lambda_{a,s}^{Y}(X_{i},\theta_{a,s})||_{2} \leq L_{i}\\
			&
			\enVert[3]{\frac{\partial \Lambda_{a,s}^D(X_i,\beta_{a,s})}{\partial \beta_{a,s}}}_{2}
			\leq L_{i}, \quad||\Lambda_{a,s}^{D}(X_{i},\beta_{a,s})||_{2} \leq L_{i},
		\end{align*}
		almost surely and $\mathbb{E}(L_{i}^{q}|S_{i}=s) \leq C$ for some $q > 2$.
	\end{enumerate}
	
	\label{ass:par}
\end{ass}

Assumption \ref{ass:par}(i) means that $(\hat{\theta}%
_{a,s},\hat{\beta}_{a,s})$ are consistent estimators for $(\theta_{a,s},\beta_{a,s})$.
Assumption \ref{ass:par}(ii) means that the parametric models are smooth in
their parameters, which is true for many widely used regression models such as
linear, logit, and probit regressions. This restriction can be further relaxed
to allow for non-smoothness under less intuitive entropy conditions.

\bigskip

\begin{thm}
	Suppose Assumption \ref{ass:par} hold. Then $\overline{\mu}^{b}(a,s,X_{i})$
	and $\hat{\mu}^{b}(a,s,X_{i})$ defined in \eqref{eq:overlinemu_par} and
	\eqref{eq:hatmu_par}, respectively, satisfy Assumption \ref{ass:Delta}.
	\label{thm:par}
\end{thm}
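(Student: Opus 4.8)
The plan is to verify the three parts of Assumption~\ref{ass:Delta} in turn for the parametric models \eqref{eq:overlinemu_par}--\eqref{eq:hatmu_par}, with essentially all of the work on part~(i); the design restrictions of Assumption~\ref{ass:assignment1} are maintained throughout, and I state everything for $b=D$ (the $b=Y$ case is identical with $\theta$ for $\beta$ and $\Lambda^{Y}$ for $\Lambda^{D}$). Two facts from Assumption~\ref{ass:par} are used constantly: the uniform consistency $\max_{a,s}\|\hat{\beta}_{a,s}-\beta_{a,s}\|_{2}\convP 0$, and the domination $\|\nabla_{\beta}\Lambda_{a,s}^{D}(X_{i},\cdot)\|_{2}\le L_{i}$, $\|\Lambda_{a,s}^{D}(X_{i},\cdot)\|_{2}\le L_{i}$ on a fixed neighborhood of the pseudo-true value with $\mathbb{E}(L_{i}^{q}\mid S_{i}=s)\le C$ for some $q>2$ (reading the bound on a neighborhood rather than at a point is legitimate once consistency is in hand). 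I also use two standard consequences of Assumption~\ref{ass:assignment1}: $n_{1}(s)/n\convP p(s)\pi(s)$ and $n_{0}(s)/n\convP p(s)(1-\pi(s))$, so both within-arm sample sizes diverge; and, since $\{X_{i}\}\indep A^{(n)}\mid S^{(n)}$, within each arm $m\in\{0,1\}$ the covariates $\{X_{i}\}_{i\in I_{m}(s)}$ are i.i.d.\ draws from the law of $X_{i}$ given $S_{i}=s$, whence $\tfrac{1}{n_{m}(s)}\sum_{i\in I_{m}(s)}g(X_{i})\convP\mathbb{E}(g(X_{i})\mid S_{i}=s)$ whenever $\mathbb{E}(|g(X_{i})|\mid S_{i}=s)<\infty$, with the centered version of order $\Op(n^{-1/2})$.

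Parts (iii) and (ii) are immediate. For (iii), $\overline{\mu}^{D}(a,s,X_{i})=\Lambda_{a,s}^{D}(X_{i},\beta_{a,s})$, so $\mathbb{E}([\overline{\mu}^{D}(a,s,X_{i})]^{2}\mid S_{i}=s)\le\mathbb{E}(L_{i}^{2}\mid S_{i}=s)\le C^{2/q}$ by Jensen. For (ii), on the probability-tending-to-one event that all $\hat{\beta}_{a,s}$ lie in the neighborhood, a mean-value bound gives $|\Delta^{D}(a,S_{i},X_{i})|\le L_{i}\max_{a,s}\|\hat{\beta}_{a,s}-\beta_{a,s}\|_{2}$, so $\tfrac{1}{n}\sum_{i}(\Delta^{D}(a,S_{i},X_{i}))^{2}\le(\max_{a,s}\|\hat{\beta}_{a,s}-\beta_{a,s}\|_{2}^{2})\cdot\tfrac{1}{n}\sum_{i}L_{i}^{2}=\op(1)\cdot\Op(1)$, since $\tfrac{1}{n}\sum_{i}L_{i}^{2}\convP\mathbb{E}L_{i}^{2}<\infty$.

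The substance is part~(i). Fix $a,s$ and set $D_{n}(\beta):=\tfrac{1}{n_{1}(s)}\sum_{i\in I_{1}(s)}\Lambda_{a,s}^{D}(X_{i},\beta)-\tfrac{1}{n_{0}(s)}\sum_{i\in I_{0}(s)}\Lambda_{a,s}^{D}(X_{i},\beta)$, so that the quantity inside the $\max_{s}$ in Assumption~\ref{ass:Delta}(i) is exactly $D_{n}(\hat{\beta}_{a,s})-D_{n}(\beta_{a,s})$. Since the covariates in $I_{1}(s)$ and in $I_{0}(s)$ have the same conditional law given $S_{i}=s$, the common deterministic drift $\psi(\beta):=\mathbb{E}(\Lambda_{a,s}^{D}(X_{i},\beta)\mid S_{i}=s)$ cancels, and $\sqrt{n}\,D_{n}(\beta)$ equals the difference of the two centered within-arm empirical processes $\sqrt{n}\big(\tfrac{1}{n_{m}(s)}\sum_{i\in I_{m}(s)}\Lambda_{a,s}^{D}(X_{i},\beta)-\psi(\beta)\big)$, $m=0,1$. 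Thus it suffices to show $\sqrt{n}\,(D_{n}(\hat{\beta}_{a,s})-D_{n}(\beta_{a,s}))=\op(1)$, and this follows from stochastic equicontinuity of the process $\beta\mapsto\sqrt{n}\,D_{n}(\beta)$ at $\beta_{a,s}$ together with $\hat{\beta}_{a,s}\convP\beta_{a,s}$. The equicontinuity holds because the indexing class $\{\Lambda_{a,s}^{D}(\cdot,\beta):\beta\text{ near }\beta_{a,s}\}$ is Lipschitz in the finite-dimensional parameter with $L_{i}$-envelope (by the smoothness and domination in Assumption~\ref{ass:par}(ii)) and hence Donsker, with $q>2$ furnishing the uniform moment / maximal-inequality control that carries the argument through with the random within-arm sample sizes $n_{m}(s)$ under the CAR. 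In the linear case this collapses to \eqref{eq:ex}, since $\sqrt{n}\,D_{n}(\beta)$ is then linear in $\beta$ and $\sqrt{n}\,(D_{n}(\hat{\beta}_{a,s})-D_{n}(\beta_{a,s}))=\sqrt{n}\big(\tfrac{1}{n_{1}(s)}\sum_{i\in I_{1}(s)}X_{i}-\tfrac{1}{n_{0}(s)}\sum_{i\in I_{0}(s)}X_{i}\big)^{\top}(\hat{\beta}_{a,s}-\beta_{a,s})=\Op(1)\cdot\op(1)$. Since $\mathcal{S}$ and the set of pairs $(a,b)$ are finite, taking maxima preserves the $\op(n^{-1/2})$ rate, which completes part~(i).

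The step I expect to be the genuine obstacle is the stochastic equicontinuity of $\sqrt{n}\,D_{n}(\cdot)$. Crude Lipschitz bounds give only $D_{n}(\hat{\beta}_{a,s})-D_{n}(\beta_{a,s})=\Op(\|\hat{\beta}_{a,s}-\beta_{a,s}\|_{2})=\op(1)$, which after the $\sqrt{n}$ scaling is far too weak, so one genuinely needs the Donsker / maximal-inequality control of the within-arm empirical process indexed by the parameter, rather than any rate for $\hat{\beta}_{a,s}$ beyond the consistency granted by Assumption~\ref{ass:par}(i). I would import the relevant maximal-inequality and weak-convergence statements for within-stratum, within-arm subsample sums under CAR from the apparatus already developed for Assumption~\ref{ass:assignment1}; and I note that the smoothness of $\Lambda_{a,s}^{b}$ can, at this step, be traded for a bracketing-entropy condition on $\{\Lambda_{a,s}^{b}(\cdot,\beta)\}$, as flagged in the text after Assumption~\ref{ass:par}, with the rest of the argument unchanged.
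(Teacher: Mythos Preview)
Your proposal is correct and follows essentially the same route as the paper: center each within-arm average by the common conditional mean $\psi(\beta)=\mathbb{E}[\Lambda_{a,s}^{b}(X_i,\beta)\mid S_i=s]$, then control $\sqrt{n}\,(D_n(\hat\beta_{a,s})-D_n(\beta_{a,s}))$ via an empirical-process/maximal-inequality argument over the Lipschitz-in-parameter class, with parts~(ii)--(iii) handled by the same envelope bound you give. The paper makes your equicontinuity step explicit by conditioning on $\{A_i,S_i\}_{i\in[n]}$, restricting to the event $\{\|\hat\beta_{a,s}-\beta_{a,s}\|_2\le\varepsilon,\ n_m(s)\ge n\varepsilon\}$, invoking \citet[Corollary~5.1]{CCK14} with $\sigma^2\le C\varepsilon^2$ and envelope $L_i$ (so that the bound is of order $\varepsilon^{1/2}\log^{1/2}(1/\varepsilon)+n^{1/q-1/2}\varepsilon^{1/q-1}\log(1/\varepsilon)$), and then letting $n\to\infty$ followed by $\varepsilon\to 0$.
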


\bigskip

Theorem \ref{thm:par} generalizes the intuition in \eqref{eq:ex} and shows that Assumption \ref{ass:Delta} holds for general 
parametric models as long as the parameters are consistently estimated.

\subsubsection{Optimal Linear Adjustments}
\label{sec:linear} 
In this section, we consider working models that are linear in $\Psi_{i,s}$ where $\Psi_{i,s}=\Psi_{s}(X_{i})$ is a function of $X_{i}$ and its functional form can vary across $s\in\mathcal{S}$. Specifically, suppose, for $a=0,1$ and $s\in\mathcal{S}$, that
$\overline{\mu}^{Y}(a,s,X)=\Psi_{i,s}^{\top}t_{a,s}$ and $\overline{\mu}%
^{D}(a,s,X)=\Psi_{i,s}^{\top}b_{a,s}$, where $t_{a,s}$ and $b_{a,s}$ are the regression coefficients whose values are freely chosen by the researchers. The
restriction that the function $\Psi_{s}(\cdot)$ does not depend on $a=0,1$ is
innocuous as, if it does, we can stack them up and denote $\Psi_{i,s}%
=(\Psi_{1,s}^{\top}(X_{i}),\Psi_{0,s}^{\top}(X_{i}))^{\top}$. Similarly, it is
also innocuous to impose that the function $\Psi_{s}(\cdot)$ is the same for
modeling $\overline{\mu}^{Y}(a,s,X)$ and $\overline{\mu}^{D}(a,s,X)$.

Given that all values of $t_{a,s}$ and $b_{a,s}$ lead to consistent estimators of LATE, a natural question to ask is what values give the most precise estimator. Let the asymptotic variance of the adjusted LATE estimator $\hat{\tau}$ be
as $\sigma^{2}$, which depends on $(\overline{\mu}^{Y}(a,s,X) ,\overline{\mu
}^{D}(a,s,X) )$, and thus, $(t_{a,s},b_{a,s})$. Let $\Theta^*$ be the collection of optimal linear coefficients that minimize the
asymptotic variance of $\hat{\tau}$ over all possible $(t_{a,s},b_{a,s})$, i.e., 
\begin{align*}
	\Theta^{*} :=
	\begin{pmatrix}
		& (\theta_{a,s}^{*},\beta_{a,s}^{*})_{a=0,1,s\in\mathcal{S}}:\\
		& (\theta_{a,s}^{*},\beta_{a,s}^{*})_{a=0,1,s\in\mathcal{S}}\in
		\argmin_{(t_{a,s},b_{a,s})_{a=0,1,s\in\mathcal{S}}} \sigma^{2}((t_{a,s}%
		,b_{a,s})_{a=0,1,s\in\mathcal{S}}).
	\end{pmatrix}
\end{align*}

\begin{ass}
	Suppose that $\mathbb{E}(||\Psi_{i,s}||_{2}^{q}|S_{i}=s)\leq C<\infty$ for
	constants $C$ and $q>2$. Denote $\tilde{\Psi}_{i,s} := \Psi_{i,s} -
	\mathbb{E}(\Psi_{i,s}|S_{i}=s)$ for $s \in\mathcal{S}$. Then there exist
	constants $0<c<C<\infty$ such that $c< \lambda_{\min}(\mathbb{E}(\tilde{\Psi}_{i,s}\tilde{\Psi}_{i,s}^{\top}))
	\leq\lambda_{\max}(\mathbb{E}(\tilde{\Psi}_{i,s}\tilde{\Psi}_{i,s}^{\top}))
	\leq C,$ 
	% \begin{align*}
		% c< \lambda_{\min}(\mathbb{E}(\tilde{\Psi}_{i,s}\tilde{\Psi}_{i,s}^{\top}))
		% \leq\lambda_{\max}(\mathbb{E}(\tilde{\Psi}_{i,s}\tilde{\Psi}_{i,s}^{\top}))
		% \leq C,
		% \end{align*}
	where for a generic symmetric matrix $A$, $\lambda_{\min}(A)$ and
	$\lambda_{\max}(A)$ denote the minimum and maximum eigenvalues of $A$,
	respectively. \label{ass:psi}
\end{ass}

Assumption \ref{ass:psi} requires that the regressor $\Psi_{i,s}$ does not contain a constant term. In fact, \eqref{eq:phi_H} and \eqref{eq:phi_G} imply that
our estimator is numerically invariant to a stratum-specific location shift.
% because by definition,
% \begin{align*}
	% \sum_{i=1}^{n} \left( \frac{A_{i}}{\hat{\pi}(S_{i})}-1\right)  1\{S_{i}=s\} =
	% 0 \quad\text{and} \quad\sum_{i=1}^{n} \left( \frac{1-A_{i}}{1-\hat{\pi}%
		% (S_{i})}-1\right)  1\{S_{i}=s\} = 0.
	% \end{align*}
The following theorem characterizes the set of optimal linear coefficients. 
\begin{thm}
	Suppose that Assumptions \ref{ass:assignment1} and \ref{ass:psi} hold. Then, we
	have
	\begin{align*}
		\Theta^{*} =
		\begin{pmatrix}
			& (\theta_{a,s}^{*},\beta_{a,s}^{*})_{a=0,1,s\in\mathcal{S}}:\\
			& \sqrt{\frac{1-\pi(s)}{\pi(s)}} (\theta_{1,s}^{*} -\tau\beta_{1,s}^{*}) +
			\sqrt{\frac{\pi(s)}{1-\pi(s)}}(\theta_{0,s}^{*} -\tau\beta_{0,s}^{*})\\
			& = \sqrt{\frac{1-\pi(s)}{\pi(s)}} (\theta_{1,s}^{L} -\tau\beta_{1,s}^{L}) +
			\sqrt{\frac{\pi(s)}{1-\pi(s)}}(\theta_{0,s}^{L} -\tau\beta_{0,s}^{L}).
		\end{pmatrix}
		, \quad \text{where}
	\end{align*}
	\begin{align}
		&  \theta_{a,s}^{L} = [\mathbb{E}(\tilde{\Psi}_{i,s}\tilde{\Psi}_{i,s}^{\top
		}|S_{i}=s)]^{-1}[\mathbb{E}(\tilde{\Psi}_{i,s}Y_{i}(D_{i}(a))|S_{i}%
		=s)]\nonumber\\
		&  \beta_{a,s}^{L} = [\mathbb{E}(\tilde{\Psi}_{i,s}\tilde{\Psi}_{i,s}^{\top
		}|S_{i}=s)]^{-1}[\mathbb{E}(\tilde{\Psi}_{i,s}D_{i}(a)|S_{i}%
		=s)].\label{eq:optimallinear}%
	\end{align}
	\label{thm:linear}
\end{thm}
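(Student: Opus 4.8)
The plan is to characterize the minimizers of $\sigma^2$ as a function of the linear coefficients $(t_{a,s},b_{a,s})_{a=0,1,s\in\mathcal{S}}$ by writing the objective explicitly and exploiting its separability across strata. Recall from Theorem \ref{thm:est}(i) that $\sigma^2 = (\sigma_1^2 + \sigma_0^2 + \sigma_2^2)/\mathbb{P}(D(1)>D(0))^2$, and that $\sigma_2^2 = \mathbb{E}\Xi_2^2(S_i)$ does not depend on the adjustments at all. So the optimization reduces to minimizing $\sigma_1^2 + \sigma_0^2 = \mathbb{E}[\pi(S_i)\Xi_1^2(\mathcal{D}_i,S_i)] + \mathbb{E}[(1-\pi(S_i))\Xi_0^2(\mathcal{D}_i,S_i)]$, and conditioning on $S_i = s$ this splits into a sum over $s$ of terms weighted by $p(s)$. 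Within a fixed stratum $s$, the linear working models contribute $\tilde\mu^Y(a,s,X_i) = \tilde\Psi_{i,s}^\top t_{a,s}$ and $\tilde\mu^D(a,s,X_i) = \tilde\Psi_{i,s}^\top b_{a,s}$ (the centering in \eqref{eq:mutilde} turns $\Psi_{i,s}$ into $\tilde\Psi_{i,s}$), so $\Xi_1$ and $\Xi_0$ become affine in the vector $t_{a,s} - \tau b_{a,s}$ for $a=0,1$.

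The key computational step is to collect, for each $s$, the dependence of $\sigma_1^2 + \sigma_0^2$ on the four coefficient vectors. Substituting the linear forms into \eqref{eq:Xi1}--\eqref{eq:Xi0} and expanding the conditional second moments, one finds that $\pi(s)\mathbb{E}[\Xi_1^2|S_i=s] + (1-\pi(s))\mathbb{E}[\Xi_0^2|S_i=s]$, viewed as a function of $(t_{a,s},b_{a,s})_{a=0,1}$, depends on these coefficients only through the single combination $v_s := \sqrt{\tfrac{1-\pi(s)}{\pi(s)}}(t_{1,s}-\tau b_{1,s}) + \sqrt{\tfrac{\pi(s)}{1-\pi(s)}}(t_{0,s}-\tau b_{0,s})$ — this is the crucial structural fact that explains the form of $\Theta^*$ in the statement. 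Concretely, the cross terms between the $a=1$ and $a=0$ pieces combine so that the quadratic form in the coefficients is $\mathbb{E}[(v_s^\top\tilde\Psi_{i,s})^2 \mid S_i=s]$ plus a linear term $-2\,\mathbb{E}[(v_s^\top\tilde\Psi_{i,s})(\tilde W_i - \tau\tilde D_i(1))\cdots \mid S_i=s]$ arising from the noise terms $\tilde W_i/\pi(S_i)$, $\tilde Z_i/(1-\pi(S_i))$, $\tilde D_i(a)$; keeping careful track of signs one identifies the linear coefficient as proportional to $\mathbb{E}[\tilde\Psi_{i,s}(Y_i(D_i(1)) - Y_i(D_i(0)) - \tau(D_i(1)-D_i(0)))\mid S_i=s]$. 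Completing the square in $v_s$ using Assumption \ref{ass:psi} (which guarantees $\mathbb{E}(\tilde\Psi_{i,s}\tilde\Psi_{i,s}^\top \mid S_i=s)$ is invertible, with eigenvalues bounded away from $0$ and $\infty$) yields a strictly convex problem in $v_s$ with unique minimizer $v_s^* = [\mathbb{E}(\tilde\Psi_{i,s}\tilde\Psi_{i,s}^\top\mid S_i=s)]^{-1}\mathbb{E}(\tilde\Psi_{i,s}(Y_i(D_i(1))-Y_i(D_i(0)))\mid S_i=s) - \tau[\cdots]\mathbb{E}(\tilde\Psi_{i,s}(D_i(1)-D_i(0))\mid S_i=s) = (\theta_{1,s}^L - \tau\beta_{1,s}^L) + (\theta_{0,s}^L - \tau\beta_{0,s}^L)$ times the appropriate $\pi(s)$-factors; matching this against the definition of $v_s$ with $(t,b)$ replaced by $(\theta^L,\beta^L)$ gives exactly the defining equation of $\Theta^*$.

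The main obstacle will be the bookkeeping in the second step: expanding $\pi(s)\Xi_1^2 + (1-\pi(s))\Xi_0^2$ and verifying that all dependence on the eight scalars-worth of coefficients collapses into the single vector $v_s$, in particular that the $1/\pi(s)$ and $1/(1-\pi(s))$ weights and the $(1-1/\pi(s))$, $(1/(1-\pi(s))-1)$ prefactors in \eqref{eq:Xi1}--\eqref{eq:Xi0} conspire so that the coefficient of the pure quadratic term is $\pi(s)\cdot\tfrac{1-\pi(s)}{\pi(s)} + (1-\pi(s))\cdot\tfrac{\pi(s)}{1-\pi(s)} = 1$ after accounting for cross terms — i.e. the weighting $\sqrt{(1-\pi(s))/\pi(s)}$ versus $\sqrt{\pi(s)/(1-\pi(s))}$ is precisely what makes this work. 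Once that identity is established, the rest is a standard completion-of-square / first-order-condition argument, and the non-uniqueness of $\Theta^*$ (it is an affine subspace, not a point) is explained by the fact that only the combination $v_s$, not the individual $(t_{a,s},b_{a,s})$, is pinned down. I would also note briefly that $\sigma_2^2$'s independence of the adjustments is immediate from \eqref{eq:Xi2}, so no argument is needed there.
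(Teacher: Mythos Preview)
Your plan is essentially the paper's own argument: reduce to a stratum-wise least-squares problem in the single combination $v_s$ (the paper calls it $\gamma_s$) and solve the first-order condition. The paper organizes the algebra slightly differently---it first adds and subtracts the true centered conditional means $\nu^Y(a,s,X_i)=\mathbb{E}[Y(D(a))\mid S,X]-\mathbb{E}[Y(D(a))\mid S]$ and $\nu^D$, which cleanly splits $\sigma_1^2+\sigma_0^2+\sigma_2^2$ into a part $\sigma_*^2$ independent of the working models plus a nonnegative remainder already in least-squares form $V=\sum_s p(s)\,\mathbb{E}[(\bar y_{i,s}-\tilde\Psi_{i,s}^\top\gamma_s)^2\mid S_i=s]$---but your direct expansion reaches the same minimization.

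One correction: the linear term is \emph{not} proportional to $\mathbb{E}[\tilde\Psi_{i,s}(Y_i(D_i(1))-Y_i(D_i(0))-\tau(D_i(1)-D_i(0)))\mid S_i=s]$. Carrying out the cross term between the coefficient part and the noise part of $\pi(s)\Xi_1^2+(1-\pi(s))\Xi_0^2$ gives
\[
-2\,v_s^\top\Bigl[\sqrt{\tfrac{1-\pi(s)}{\pi(s)}}\,\mathbb{E}\bigl[\tilde\Psi_{i,s}(Y(D(1))-\tau D(1))\mid S_i=s\bigr]+\sqrt{\tfrac{\pi(s)}{1-\pi(s)}}\,\mathbb{E}\bigl[\tilde\Psi_{i,s}(Y(D(0))-\tau D(0))\mid S_i=s\bigr]\Bigr],
\]
i.e.\ a $\pi$-weighted \emph{sum} of the $a=1$ and $a=0$ moments, not an unweighted difference. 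This is precisely what makes the minimizer equal $\sqrt{\tfrac{1-\pi(s)}{\pi(s)}}(\theta_{1,s}^L-\tau\beta_{1,s}^L)+\sqrt{\tfrac{\pi(s)}{1-\pi(s)}}(\theta_{0,s}^L-\tau\beta_{0,s}^L)$ and match the statement; your intermediate formula as written would not.
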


The optimality result in Theorem \ref{thm:linear} relies on two key
restrictions: (1) the regressor $\Psi_{i,s}$ is the same for treated and
control units and (2) both the adjustments $\overline{\mu}^{Y}(a,s,X)$ and
$\overline{\mu}^{D}(a,s,X)$ are linear. It is possible to have nonlinear adjustments
that are more efficient. We will come back to this point in Sections
\ref{sec:OLS_MLE}, \ref{sec:imp}, and \ref{sec:np}.

In view of Theorem \ref{thm:linear}, the optimal linear coefficients are not
unique. In order to achieve the optimality, we only need to consistently
estimate one point in $\Theta^{*}$. For the rest of the section, we choose
$(\theta_{a,s}^{L},\beta_{a,s}^{L})$ with the corresponding optimal linear
adjustments
\begin{align}
	\overline{\mu}^{Y}(a,s,X_{i}) = \Psi_{i,s}^{\top}\theta_{a,s}^{L}
	\quad\text{and} \quad\overline{\mu}^{D}(a,s,X_{i}) = \Psi_{i,s}^{\top}%
	\beta_{a,s}^{L}.\label{eq:muoverline_LP}%
\end{align}
We estimate $(\theta_{a,s}^{L},\beta_{a,s}^{L})$ by $(\hat{\theta}%
_{a,s}^{L},\hat{\beta}_{a,s}^{L})$, where
\begin{align}
	\dot{\Psi}_{i,a,s}  &  := \Psi_{i,s} - \frac{1}{n_{a}(s)}\sum_{j\in I_{a}(s)}\Psi_{j,s}\nonumber\\
	\hat{\theta}_{a,s}^{L}  &  :=
	\del[3]{\frac{1}{n_a(s)}\sum_{i \in I_a(s)}\dot{\Psi}_{i,a,s}\dot{\Psi}_{i,a,s}^\top }^{-1}%
	\del[3]{\frac{1}{n_a(s)}\sum_{i \in I_a(s)}\dot{\Psi}_{i,a,s}Y_i }\nonumber\\
	\hat{\beta}_{a,s}^{L}  &  :=
	\del[3]{\frac{1}{n_a(s)}\sum_{i \in I_a(s)}\dot{\Psi}_{i,a,s}\dot{\Psi}_{i,a,s}^\top }^{-1}%
	\del[3]{\frac{1}{n_a(s)}\sum_{i \in I_a(s)}\dot{\Psi}_{i,a,s}D_i}.\label{rr1}%
\end{align}
Then, the feasible linear adjustments can be defined as
\begin{align}
	\hat{\mu}^{Y}(a,s,X_{i}) = \Psi_{i,s}^{\top}\hat{\theta}_{a,s}^{L}
	\quad\text{and} \quad\hat{\mu}^{D}(a,s,X_{i}) = \Psi_{i,s}^{\top}\hat{\beta
	}_{a,s}^{L}.\label{eq:muhat_LP}%
\end{align}

Suppose that $\mathcal{S}=\{1,\ldots,S\}$ for some integer $S>0$. It is clear
that $\hat{\theta}_{a,s}^{L}$ and $\hat{\beta}_{a,s}^{L}$ are the OLS-estimated
slopes of the following two linear regressions using observations in $I_{a}(s)$:
\begin{align}
	&  Y_{i} \sim \gamma_{a,s}^{Y}+\Psi_{i,s}^{\top}\theta
	_{a,s} \quad \text{and} \quad D_{i}\sim \gamma_{a,s}^{D}+\Psi_{i,s}^{\top}\beta
	_{a,s}.\label{eq:optimal_Y}
\end{align}

%\begin{align}
%Y_i & \sim \Psi_{i,1}1\{S_i=1\}1\{A_i=1\}+\cdots+\Psi_{i,S} 1\{S_i=S\}1\{A_i=1\} \notag \\
%& + \Psi_{i,1}1\{S_i=1\}1\{A_i=0\}+\cdots+\Psi_{i,S}1\{S_i=S\}1\{A_i=0\}  \notag \\
%& + 1\{S_i=1\}1\{A_i=1\} + \cdots + 1\{S_i=S\}1\{A_i=1\}  \notag \\
%& + 1\{S_i=1\}1\{A_i=0\} + \cdots + 1\{S_i=S\}1\{A_i=0\} \label{eq:optimal_Y} \\
%D_i & \sim \Psi_{i,1}1\{S_i=1\}1\{A_i=1\}+\cdots+\Psi_{i,S} 1\{S_i=S\}1\{A_i=1\}   \notag \\
%& + \Psi_{i,1}1\{S_i=1\}1\{A_i=0\}+\cdots+\Psi_{i,S}1\{S_i=S\}1\{A_i=0\}  \notag \\
%& + 1\{S_i=1\}1\{A_i=1\} + \cdots + 1\{S_i=S\}1\{A_i=1\}  \notag \\
%& + 1\{S_i=1\}1\{A_i=0\} + \cdots + 1\{S_i=S\}1\{A_i=0\}. \label{eq:optimal_D}
%\end{align}

\begin{thm}
	Suppose that Assumptions \ref{ass:assignment1} and \ref{ass:psi} hold. Then,
	\[
	\{\overline{\mu}^{b}(a,s,X_{i})\}_{b=D,Y,a=0,1,s\in\mathcal{S}}\quad
	\text{and}\quad\{\hat{\mu}^{b}(a,s,X_{i})\}_{b=D,Y,a=0,1,s\in\mathcal{S}}%
	\]
	defined in \eqref{eq:muoverline_LP} and \eqref{eq:muhat_LP}, respectively,
	satisfy Assumption \ref{ass:Delta}. Denote the adjusted LATE estimator with
	adjustment $\{\overline{\mu}^{b}(a,s,X_{i})\}_{b=D,Y,a=0,1,s\in\mathcal{S}}$
	defined in \eqref{eq:muhat_LP} as $\hat{\tau}_{L}$. Then, all the results in
	Theorem \ref{thm:est}(i)-(ii) hold for $\hat{\tau}_{L}$. In addition, $\hat{\tau}_{L}$ is the most efficient among all linearly adjusted LATE estimators, and
	in particular, weakly more efficient than the LATE estimator with no
	adjustments. In the special case that $\pi(s)$ is homogeneous across strata and $\Psi_{i,s} = X_i$ so that the TSLS estimator $\hat \tau_{TSLS}$ is consistent, $\hat \tau_{L}$ is also weakly more efficient than $\hat \tau_{TSLS}$.  
	\label{thm:linear2}
\end{thm}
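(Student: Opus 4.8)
The plan is to verify the three claims of Theorem~\ref{thm:linear2} in order, leveraging Theorems~\ref{thm:par}, \ref{thm:est}, and \ref{thm:linear} as much as possible. First I would check that the linear working models in \eqref{eq:muoverline_LP} with OLS-type estimators \eqref{rr1} fall under the umbrella of Assumption~\ref{ass:par}: the map $t \mapsto \Psi_{i,s}^\top t$ is linear, hence smooth, with derivative $\Psi_{i,s}$, so Assumption~\ref{ass:par}(ii) reduces to the moment condition $\mathbb{E}(\|\Psi_{i,s}\|_2^q|S_i=s)\le C$ for some $q>2$, which is exactly Assumption~\ref{ass:psi}. Assumption~\ref{ass:par}(i)---consistency of $(\hat\theta_{a,s}^L,\hat\beta_{a,s}^L)$ for $(\theta_{a,s}^L,\beta_{a,s}^L)$---follows from a within-stratum, within-arm law of large numbers: conditional on $(S^{(n)},A^{(n)})$, the observations in $I_a(s)$ are i.i.d. draws from the distribution of $(Y(D(a)),D(a),\Psi_s(X))$ given $S=s$, and $n_a(s)/n \convP p(s)\pi(s)^a(1-\pi(s))^{1-a}>0$ by Assumption~\ref{ass:assignment1}(iii)--(iv); the demeaning by $\frac{1}{n_a(s)}\sum_{j\in I_a(s)}\Psi_{j,s}$ converges to $\mathbb{E}(\Psi_{i,s}|S_i=s)$, and the eigenvalue bound in Assumption~\ref{ass:psi} guarantees the sample Gram matrix is invertible with probability approaching one. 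Then Theorem~\ref{thm:par} gives Assumption~\ref{ass:Delta}, and Theorem~\ref{thm:est}(i)--(ii) applies verbatim to $\hat\tau_L$.

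Second, for the optimality claim I would appeal directly to Theorem~\ref{thm:linear}: any linearly adjusted estimator corresponds to some choice of coefficients $(t_{a,s},b_{a,s})$, its asymptotic variance is $\sigma^2((t_{a,s},b_{a,s}))$, and Theorem~\ref{thm:linear} shows the minimizers are exactly the set $\Theta^*$, which is nonempty and contains the point $(\theta_{a,s}^L,\beta_{a,s}^L)$ defined in \eqref{eq:optimallinear}. Since $(\theta_{a,s}^L,\beta_{a,s}^L)$ is the probability limit of $(\hat\theta_{a,s}^L,\hat\beta_{a,s}^L)$, the estimator $\hat\tau_L$ attains the minimal asymptotic variance over all linear adjustments. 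The "weakly more efficient than no adjustment" statement is then immediate because the no-adjustment estimator (NA) uses $\overline\mu^b\equiv 0$, i.e. $t_{a,s}=b_{a,s}=0$, which is one admissible point in the feasible set over which $\sigma^2$ is minimized; hence $\sigma^2_L\le\sigma^2_{NA}$.

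Third, for the comparison with TSLS under homogeneous $\pi(s)=\pi$ and $\Psi_{i,s}=X_i$: by Theorem~\ref{thm:TSLS}, $\hat\tau_{TSLS}$ is consistent and $\sqrt{n}(\hat\tau_{TSLS}-\tau)\convD\mathcal{N}(0,\sigma_{TSLS}^2)$. I would argue that $\hat\tau_{TSLS}$ is asymptotically equivalent to a member of the class of linearly adjusted estimators---specifically, the TSLS regression \eqref{eq:TSLS} imposes a common slope on $X_i$ across strata and across treatment arms, which corresponds to a (generally non-optimal) choice of $(t_{a,s},b_{a,s})$ in our framework after accounting for the strata fixed effects and the numerical location-shift invariance noted after \eqref{eq:muhat_LP}. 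Once $\sigma^2_{TSLS}$ is identified as $\sigma^2((t_{a,s},b_{a,s}))$ for such a choice, the optimality of $\hat\tau_L$ gives $\sigma^2_L\le\sigma^2_{TSLS}$. I expect this last step to be the main obstacle: the TSLS estimator is not literally written in the doubly robust influence-function form \eqref{eq:tau0}, so one must carefully linearize the TSLS moment conditions, identify the implied pseudo-true slope coefficients and their relationship to strata fixed effects, and confirm that under $\pi(s)=\pi$ the extra location terms in $\sigma^2$ vanish appropriately so that $\sigma^2_{TSLS}$ is exactly the asymptotic variance produced by plugging those coefficients into \eqref{eq:sigma}. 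The homogeneity of $\pi$ is what makes this alignment possible; without it, the residual $\Xi_2$-type terms introduced by TSLS's failure to saturate would break the comparison (indeed TSLS need not be consistent, as Theorem~\ref{thm:TSLS} shows).
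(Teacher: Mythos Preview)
Your verification of Assumption~\ref{ass:Delta} via Theorem~\ref{thm:par}, and your appeal to Theorem~\ref{thm:linear} for the optimality over linear adjustments (and hence over the no-adjustment estimator), are correct and match the paper's argument essentially line by line.

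The gap is in the TSLS comparison. You propose to show that $\sigma_{TSLS}^2$ is \emph{exactly} $\sigma^2((t_{a,s},b_{a,s}))$ for some specific choice of coefficients, and then invoke the optimality of $\hat\tau_L$. That equality does not hold in general. From the proof of Theorem~\ref{thm:TSLS}, when $\pi(s)\equiv\pi$ one has
\[
\sigma_{TSLS}^2=\frac{\sigma_{TSLS,0}^2+\sigma_{TSLS,1}^2+\sigma_{TSLS,2}^2+\sigma_{TSLS,3}^2}{(\mathbb{E}[D(1)-D(0)])^2},
\]
and the extra piece $\sigma_{TSLS,3}^2$ depends on $\gamma(S_i)$ from Assumption~\ref{ass:1iv}; it is strictly positive unless the design achieves strong balance ($\gamma(s)=0$). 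This term arises because TSLS does not use $\hat\pi(S_i)$, so the assignment imbalance $B_n(s)$ enters its limiting variance; no estimator in the doubly robust class \eqref{eq:tau0} carries such a term. Consequently $\hat\tau_{TSLS}$ is \emph{not}, in general, asymptotically equivalent to any member of that class, and your planned identification step would fail.

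The paper's fix is simple: show only that $\sigma_{TSLS,0}^2+\sigma_{TSLS,1}^2$ can be written as $\mathbb{E}[\pi\,\Xi_1^2+(1-\pi)\Xi_0^2]$ for the restricted choice $\theta_{1s}=\theta_{0s}$, $\beta_{1s}=\beta_{0s}$ with $\theta_{1s}-\tau\beta_{1s}=\lambda_x^*$ (the first $d_x$ coordinates of $\lambda^*$). Theorem~\ref{thm:linear} then yields $\sigma_0^2+\sigma_1^2\le\sigma_{TSLS,0}^2+\sigma_{TSLS,1}^2$; combining this with $\sigma_2^2=\sigma_{TSLS,2}^2$ and $\sigma_{TSLS,3}^2\ge 0$ gives $\sigma_L^2\le\sigma_{TSLS}^2$. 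So rather than trying to absorb the fourth term, you just drop it.
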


The asymptotic variance of the LATE estimator with the optimal linear
adjustments ($\hat{\tau}_{L}$) takes the form of \eqref{eq:sigma} with
$\{\overline{\mu}^{b}(a,s,X_{i})\}_{b = D,Y, a=0,1, s \in\mathcal{S}}$ in
\eqref{eq:Xi1}--\eqref{eq:Xi2} defined in \eqref{eq:muoverline_LP}. It is also guaranteed to be weakly smaller than that of both $\hat \tau_{NA}$ and $\hat \tau_{TSLS}$, which addresses the Freedman's critique (\citeauthor{F08b}, \citeyear{F08b}, \citeyear{F081}). When
$\Psi_{i,s} = X_{i}$, this asymptotic variance is the same as that of \cite{anseletal2018}'s  S estimator, derived in Section \ref{sec:s} of the Online Supplement. This implies the S estimator is the most efficient LATE estimator adjusted by linear
functions of $X_{i}$, and thus, more efficient than $\hat{\tau}_{TSLS}$ and $\hat \tau_{NA}$.

\subsubsection{Linear and Logistic Regressions}

\label{sec:OLS_MLE}

It is also common to consider a linear model for $\overline{\mu}%
^{Y}(a,s,X_{i})$ and a logistic model for $\overline{\mu}^{D}(a,s,X_{i})$,
i.e.,
\[
\overline{\mu}^{Y}(a,s,X_{i})=\mathring{\Psi}_{i,s}^{\top}t_{a,s}%
\quad\text{and}\quad\overline{\mu}^{D}(a,s,X_{i})=\lambda(\mathring{\Psi
}_{i,s}^{\top}b_{a,s}),
\]
where $\mathring{\Psi}_{i,s}=(1,\Psi_{i,s}^{\top})^{\top}$, $\Psi_{i,s}=\Psi_{s}(X_{i})$ and
$\lambda(u)=\exp(u)/(1+\exp(u))$ is the logistic CDF. As the model for
$\overline{\mu}^{D}(a,s,X_{i})$ is non-linear, the optimality result established in the previous section does not apply. We can consider fitting the
linear and logistic models by OLS and (quasi) MLE, respectively, and call this method
the nonlinear (logistic) adjustment. Specifically, define
\begin{align}
	\hat{\mu}^Y(a,s,X_i) = \mathring{\Psi}_{i,s}^{\top} \hat{\theta}_{a,s}^{OLS}     \quad \text{and} \quad \hat{\mu}^D(a,s,X_i) = \lambda(\mathring{\Psi}_{i,s}^{\top} \hat{\beta}_{a,s}^{MLE}),
	\label{eq:hatmu_OLS_MLE}
\end{align}
where
\begin{align}
	&  \hat{\theta}_{a,s}^{OLS}%
	=\del[3]{\frac{1}{n_a(s)} \sum_{i \in I_a(s)} \mathring{\Psi}_{i,s} \mathring{\Psi}_{i,s}^{\top}}^{-1}%
	\del[3]{\frac{1}{n_a(s)} \sum_{i \in I_a(s)} \mathring{\Psi}_{i,s} Y_i}\quad
	\text{and}\nonumber\\
	&  \hat{\beta}_{a,s}^{MLE}=\argmax_{b}\frac{1}{n_{a}(s)}\sum_{i\in I_{a}%
		(s)}\left[  D_{i}\log(\lambda(\mathring{\Psi}_{i,s}^{\top}b))+(1-D_{i}%
	)\log(1-\lambda(\mathring{\Psi}_{i,s}^{\top}b))\right]  .\label{eq:nonlinear (logistic)}%
\end{align}

It is clear that $\hat{\theta}_{a,s}^{OLS}$ and $\hat{\beta}_{a,s}^{MLE}$ are
the OLS and ML estimates of the following two stratum-specific (logistic)
regressions using observations in $I_{a}(s)$:
\begin{align}
	&  Y_{i} \sim \mathring{\Psi}_{i,s}^{\top}\theta_{a,s} \quad \text{and} \quad D_{i} \sim \lambda^{-1}(\mathring{\Psi}_{i,s}^{\top}\beta_{a,s}). 
\end{align}

% where $e_{0,i}^{D}$ and $e_{1,i}^{D}$ are logistic errors.\footnote{Note we do
	% not need the regression models to be correctly specified.}

%\begin{align*}
%Y_i & \sim \mathring{\Psi}_{i,1}1\{S_i=1\}1\{A_i=1\}+\cdots+\mathring{\Psi}_{i,S} 1\{S_i=S\}1\{A_i=1\} \\
%& + \mathring{\Psi}_{i,1}1\{S_i=1\}1\{A_i=0\}+\cdots+\mathring{\Psi}_{i,S}1\{S_i=S\}1\{A_i=0\} \\
%D_i & \sim \mathring{\Psi}_{i,1}1\{S_i=1\}1\{A_i=1\}+\cdots+\mathring{\Psi}_{i,S} 1\{S_i=S\}1\{A_i=1\} \\
%& + \mathring{\Psi}_{i,1}1\{S_i=1\}1\{A_i=0\}+\cdots+\mathring{\Psi}_{i,S}1\{S_i=S\}1\{A_i=0\}.
%\end{align*}
In the logistic regression, we do allow the regressor $\mathring{\Psi}_{i,s}$
to contain the constant term. Suppose $\hat{\theta}_{a,s}^{OLS}=(\hat{h}%
_{a,s}^{OLS},\hat{\underline{\theta}}_{a,s}^{OLS,\top})^{\top}$, where
$\hat{h}_{a,s}^{OLS}$ is the intercept. Then, because our adjusted LATE estimator is
invariant to the stratum-specific location shift of the adjustment term, using
$\hat{\mu}^{Y}(a,s,X_{i})=\mathring{\Psi}_{i,s}^{\top}\hat{\theta}_{a,s}%
^{OLS}=\hat{h}_{a,s}^{OLS}+\Psi_{i,s}^{\top}\hat{\underline{\theta}}%
_{a,s}^{OLS}$ and $\hat{\mu}^{Y}(a,s,X_{i})=\Psi_{i,s}^{\top}\hat
{\underline{\theta}}_{a,s}^{OLS}$ produce the exact same LATE estimator. In
addition,  we have $\hat
{\underline{\theta}}_{a,s}^{OLS}=\hat{\theta}_{a,s}^{L}$ by construction. This means $\hat{\mu}^{Y}(a,s,X_{i})$ used here is the same as that for the optimal linear adjustment. In contrast,
because the logistic regression is nonlinear, the non-intercept part of
$\hat{\beta}_{a,s}^{MLE}$ does not equal $\hat{\beta}_{a,s}^{L}$. The limits of $\hat{\theta}_{a,s}^{OLS}$ and $\hat{\beta}_{a,s}^{MLE}$ are
defined as
\begin{align*}
	&  \theta_{a,s}^{OLS} = \left(  \mathbb{E}(\mathring{\Psi}_{i,s}
	\mathring{\Psi}_{i,s}^{\top} |S_{i}=s)\right) ^{-1}\left( \mathbb{E}%
	(\mathring{\Psi}_{i,s} Y_{i}(D_{i}(a))|S_{i}=s)\right)  \quad\text{and}\\
	&  \beta_{a,s}^{MLE} = \argmax_{b} \mathbb{E}\left( \left[ D_{i}(a)
	\log(\lambda(\mathring{\Psi}_{i,s}^{\top} b)) + (1-D_{i}(a)) \log
	(1-\lambda(\mathring{\Psi}_{i,s}^{\top} b)) \right] |S_{i}=s\right) ,
\end{align*}
which imply that the working models are 
\begin{align}
	\overline{\mu}^{Y}(a,s,X_{i}) = \mathring{\Psi}_{i,s}^{\top} \theta
	_{a,s}^{OLS} \quad\text{and} \quad\overline{\mu}^{D}(a,s,X_{i}) =
	\lambda(\mathring{\Psi}_{i,s}^{\top} \beta_{a,s}^{MLE}%
	).\label{eq:overlinemu_OLS_MLE}%
\end{align}

\begin{ass}
	\begin{enumerate}
		[label=(\roman*)]
		
		\item For $a=0,1$ and $s\in\mathcal{S}$, suppose $\mathbb{E}(\mathring{\Psi
		}_{i,s} \mathring{\Psi}_{i,s}^{\top}|S_{i}=s)$ is invertible and
		\[
		\mathbb{E}\left( \left[ D_{i}(a) \log(\lambda(\mathring{\Psi}_{i,s}^{\top} b))
		+ (1-D_{i}(a)) \log(1-\lambda(\mathring{\Psi}_{i,s}^{\top} b)) \right]
		|S_{i}=s\right)
		\]
		has $\beta_{a,s}^{MLE}$ as its unique maximizer.
		
		\item There exists a constant $C<\infty$ such that $\max_{a = 0,1,s
			\in\mathcal{S}}\mathbb{E}||\mathring{\Psi}_{i,s}||_{2}^{q} \leq C<\infty$ for
		some $q> 2$.
	\end{enumerate}
	
	\label{ass:OLS_MLE}
\end{ass}

\begin{thm} \label{thm:OLS_MLE}
	Suppose Assumptions \ref{ass:assignment1} and \ref{ass:OLS_MLE} hold. Then,
	\[
	\{\overline{\mu}^{b}(a,s,X_{i})\}_{b = D,Y, a=0,1, s \in\mathcal{S}}
	\quad\text{and} \quad\{\hat{\mu}^{b}(a,s,X_{i})\}_{b = D,Y, a=0,1, s
		\in\mathcal{S}}%
	\]
	defined in \eqref{eq:overlinemu_OLS_MLE} and \eqref{eq:hatmu_OLS_MLE},
	respectively, satisfy Assumption \ref{ass:Delta}. Denote the adjusted LATE
	estimator with adjustment $\{\hat{\mu}^{b}(a,s,X_{i})\}_{b = D,Y, a=0,1, s
		\in\mathcal{S}}$ defined in \eqref{eq:hatmu_OLS_MLE} as $\hat{\tau}_{NL}$.
	Then, all the results in Theorem \ref{thm:est}(i)-(ii) hold for $\hat{\tau}_{NL}$.
\end{thm}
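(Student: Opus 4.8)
The plan is to recognize that \eqref{eq:overlinemu_OLS_MLE} and \eqref{eq:hatmu_OLS_MLE} are a special case of the parametric framework of Section~\ref{sec:par}, with $\Lambda^{Y}_{a,s}(X_i,\theta)=\mathring{\Psi}_{i,s}^{\top}\theta$ (a linear function, so that $\theta_{a,s}=\theta^{OLS}_{a,s}$ plays the role of the finite-dimensional parameter) and $\Lambda^{D}_{a,s}(X_i,\beta)=\lambda(\mathring{\Psi}_{i,s}^{\top}\beta)$ with $\beta_{a,s}=\beta^{MLE}_{a,s}$. It therefore suffices to verify Assumption~\ref{ass:par} for these choices: Theorem~\ref{thm:par} then yields Assumption~\ref{ass:Delta}, and Theorem~\ref{thm:est}(i)--(ii), which only require Assumptions~\ref{ass:assignment1} and \ref{ass:Delta}, apply verbatim to $\hat{\tau}_{NL}$. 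Note first that $\theta^{OLS}_{a,s}$ and $\beta^{MLE}_{a,s}$ are well defined under Assumption~\ref{ass:OLS_MLE}(i) (invertibility of $\mathbb{E}(\mathring{\Psi}_{i,s}\mathring{\Psi}_{i,s}^{\top}\mid S_i=s)$ and the unique-maximizer condition); the outcome part is, in fact, the optimal linear adjustment of Section~\ref{sec:linear} up to a stratum-specific location shift to which $\hat{\tau}$ is invariant, so for $b=Y$ Assumption~\ref{ass:Delta} already follows from Theorem~\ref{thm:linear2}.

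The first substantive step is Assumption~\ref{ass:par}(i): $\max_{a,s}\|\hat{\theta}^{OLS}_{a,s}-\theta^{OLS}_{a,s}\|_2\convP 0$ and $\max_{a,s}\|\hat{\beta}^{MLE}_{a,s}-\beta^{MLE}_{a,s}\|_2\convP 0$. Conditional on $(S^{(n)},A^{(n)})$, Assumption~\ref{ass:assignment1}(i)--(ii) makes $\{D_i(a),X_i\}_{i\in I_a(s)}$ an i.i.d.\ sample from the conditional law given $S_i=s$, and Assumption~\ref{ass:assignment1}(iii)--(iv) force $n_a(s)\to\infty$ with $n_a(s)/n(s)$ bounded away from $0$ and $1$; one argues convergence conditionally and then integrates out. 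For $\hat{\theta}^{OLS}_{a,s}$ this is a weak law of large numbers applied to its closed form, using the moment bound in Assumption~\ref{ass:OLS_MLE}(ii) and the invertibility in Assumption~\ref{ass:OLS_MLE}(i). For $\hat{\beta}^{MLE}_{a,s}$ I would invoke the consistency argument for concave extremum estimators: the (conditional) sample log-likelihood defining $\hat{\beta}^{MLE}_{a,s}$ converges pointwise in $b$ to the population criterion defining $\beta^{MLE}_{a,s}$, which is concave and has $\beta^{MLE}_{a,s}$ as its unique maximizer by Assumption~\ref{ass:OLS_MLE}(i); pointwise convergence of concave functions upgrades to uniform convergence on compacta, so the argmax is consistent, with no compactness assumption on the parameter space required.

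Next I would verify the smoothness/dominance condition, Assumption~\ref{ass:par}(ii). For the outcome model $\partial\Lambda^{Y}_{a,s}(X_i,\theta)/\partial\theta=\mathring{\Psi}_{i,s}$ and $|\Lambda^{Y}_{a,s}(X_i,\theta^{OLS}_{a,s})|\le\|\mathring{\Psi}_{i,s}\|_2\,\|\theta^{OLS}_{a,s}\|_2$; for the treatment model $|\Lambda^{D}_{a,s}|\le 1$ and $\|\partial\Lambda^{D}_{a,s}(X_i,\beta)/\partial\beta\|_2=|\lambda'(\mathring{\Psi}_{i,s}^{\top}\beta)|\,\|\mathring{\Psi}_{i,s}\|_2\le\tfrac14\|\mathring{\Psi}_{i,s}\|_2$, using $0\le\lambda'\le\tfrac14$ globally. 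Since $\|\theta^{OLS}_{a,s}\|_2$ and $\|\beta^{MLE}_{a,s}\|_2$ are finite and $\mathcal{S}$ is finite, all four quantities are dominated by $L_i:=C\bigl(1+\max_{s\in\mathcal{S}}\|\mathring{\Psi}_{i,s}\|_2\bigr)$ for a sufficiently large constant $C$, and $\mathbb{E}(L_i^{q}\mid S_i=s)\le C$ for some $q>2$ follows from Assumption~\ref{ass:OLS_MLE}(ii) together with $\min_s p(s)>0$. This verifies Assumption~\ref{ass:par}, and the conclusion follows from Theorems~\ref{thm:par} and \ref{thm:est}.

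The main obstacle is the consistency of the logistic (quasi-)MLE $\hat{\beta}^{MLE}_{a,s}$: unlike in the optimal linear adjustment there is no closed form, so one must run an extremum-estimator argument over the \emph{random} subsample $I_a(s)$, whose membership is determined by the cross-sectionally dependent assignment $A^{(n)}$. The device that makes this routine is conditioning on $(S^{(n)},A^{(n)})$, which reduces the within-cell analysis to a standard i.i.d.\ M-estimation problem; combined with the global concavity of the logistic log-likelihood, this removes any need for auxiliary compactness or smoothness conditions beyond those already in Assumption~\ref{ass:OLS_MLE}. Everything else—the OLS step and the dominance bounds—is elementary.
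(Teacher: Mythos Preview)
Your proposal is correct and follows essentially the same route as the paper's proof: both recognize the OLS/logistic adjustments as an instance of the parametric framework in Section~\ref{sec:par}, verify Assumption~\ref{ass:par}(i) by reducing the within-cell sample criterion (via conditioning/distributional equivalence on $(S^{(n)},A^{(n)})$) to an i.i.d.\ problem and then exploiting concavity of the logistic log-likelihood to upgrade pointwise to uniform convergence, and verify Assumption~\ref{ass:par}(ii) from the moment bound in Assumption~\ref{ass:OLS_MLE}(ii). Your explicit construction of the envelope $L_i$ and the use of $\min_s p(s)>0$ to pass from unconditional to conditional moments are details the paper suppresses, but the architecture is identical.
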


Several remarks are in order. First, the nonlinear (logistic) adjustment is not optimal in
the sense that it does not necessarily minimize the asymptotic variance of the
corresponding LATE estimator over the class of linear/logistic adjustments. Second, the nonlinear (logistic) adjustment is not
necessarily less efficient than the optimal linear adjustment studied in
Section \ref{sec:linear} as $\mu^{D}(a,s,X_{i})$ could be nonlinear. In fact,
as Theorem \ref{thm:est} shows, if the adjustments are correctly specified,
then $\hat{\tau}_{NL}$ can achieve the semiparametric efficiency
bound. Compared with the linear probability model considered in Section
\ref{sec:linear}, the logistic model is expected to be less misspecified,
especially when the regressor $\Psi_{i,s}$ contains nonlinear transformations of $X_{i}$
such as interactions and quadratic terms. Third, we will further justify the
intuition above in Section \ref{sec:np}, in which we let $\Psi_{i,s}$ be
the sieve basis functions with an increasing dimension and show that the
nonlinear (logistic) method can consistently estimate the correct specification under some
regularity conditions. Fourth, one theoretical shortcoming of the nonlinear (logistic)
adjustment is that, unlike the optimal linear adjustment, it is not guaranteed
to be more efficient than no adjustment. We address this issue in Section
\ref{sec:imp} below.

\subsubsection{Further Efficiency Improvement}
\label{sec:imp} 

Following the lead of \cite{CF21}, we can treat the nonlinear (logistic) adjustments as regressors and obtain the optimal linear coefficients as proposed in Section \ref{sec:linear}. Let $\theta_{a,s}^{OLS} = (h_{a,s}^{OLS},\underline{\theta
}_{a,s}^{OLS})$ be the probability limit of $\hat \theta_{a,s}^{OLS}$ defined in \eqref{eq:nonlinear (logistic)}. If $\beta_{a,s}%
^{MLE}$ were known, the nonlinear (logistic) adjustment can be viewed as a linear
adjustment. Specifically, denote
\begin{align}
	&  \Phi_{i,s} := (\Psi_{i,s}^{\top}, \lambda(\mathring{\Psi}_{i,s}^{\top}
	\beta_{1,s}^{MLE}),\lambda(\mathring{\Psi}_{i,s}^{\top} \beta_{0,s}%
	^{MLE}))^{\top}\label{eq:Psi_new}\\
	&  t_{a,s}^{NL} := a
	\begin{pmatrix}
		\underline{\theta}_{1,s}^{OLS}\\
		0\\
		0
	\end{pmatrix}
	+ (1-a)%
	\begin{pmatrix}
		\underline{\theta}_{0,s}^{OLS}\\
		0\\
		0
	\end{pmatrix}
	, \quad b_{a,s}^{NL} := a
	\begin{pmatrix}
		0_{d_{\Psi}}\\
		1\\
		0
	\end{pmatrix}
	+ (1-a)%
	\begin{pmatrix}
		0_{d_{\Psi}}\\
		0\\
		1
	\end{pmatrix}
	,\nonumber
\end{align}
where $d_{\Psi}$ is the dimension of $\Psi_{i,s}$. Then, the nonlinear (logistic)
adjustment can be written as
\begin{align*}
	\overline{\mu}^{Y}(a,s,X_{i}) = \Phi_{i,s}^{\top}t_{a,s}^{NL} \quad\text{and}
	\quad\overline{\mu}^{D}(a,s,X_{i}) = \Phi_{i,s}^{\top}b_{a,s}^{NL}.
\end{align*}

% \begin{align*}
	% \overline{\mu}^{Y}(a,s,X_{i}) = \Phi_{i,s}^{\top}t_{a,s}^{NL} + h_{a,s}^{OLS}
	% \quad\text{and} \quad\overline{\mu}^{D}(a,s,X_{i}) = \Phi_{i,s}^{\top}%
	% b_{a,s}^{NL}.
	% \end{align*}
% Because our estimator is invariant to stratum-level location shift of
% adjustments, the nonlinear (logistic) adjustments and the linear adjustments
% \begin{align*}
	% \overline{\mu}^{Y}(a,s,X_{i}) = \Phi_{i,s}^{\top}t_{a,s}^{NL} \quad\text{and}
	% \quad\overline{\mu}^{D}(a,s,X_{i}) = \Phi_{i,s}^{\top}b_{a,s}^{NL}%
	% \end{align*}
% produce the same estimator. 
Similarly, we can replicate no adjustments and the
optimal linear adjustments with $\Phi_{i,s}$ defined in \eqref{eq:Psi_new} as
regressors by letting
\begin{align*}
	\overline{\mu}^{Y}(a,s,X_{i}) = \Phi_{i,s}^{\top}t_{a,s} \quad\text{and}
	\quad\overline{\mu}^{D}(a,s,X_{i}) = \Phi_{i,s}^{\top}b_{a,s}%
\end{align*}
with $(t_{a,s},b_{a,s}) = 0$ and $(t_{a,s},b_{a,s}) = (t_{a,s}^{L}%
,b_{a,s}^{L})$, respectively, where
\begin{align*}
	&  t_{a,s}^{L} := a
	\begin{pmatrix}
		\theta_{1,s}^{L}\\
		0\\
		0
	\end{pmatrix}
	+ (1-a)%
	\begin{pmatrix}
		\theta_{0,s}^{L}\\
		0\\
		0
	\end{pmatrix}
	, \quad b_{a,s}^{L} := a
	\begin{pmatrix}
		\beta_{1,s}^{L}\\
		0\\
		0
	\end{pmatrix}
	+ (1-a)%
	\begin{pmatrix}
		\beta_{0,s}^{L}\\
		0\\
		0
	\end{pmatrix}
	.
\end{align*}

%\begin{align}
%& \Psi_{i,s} := (\Phi_{i,1,s}^{Y,\top} \theta_{1,s}^{OLS}, \Phi_{i,0,s}^{Y,\top} \theta_{0,s}^{OLS}, \lambda(\Phi_{i,1,s}^{D, \top} \beta_{1,s}^{MLE}),\lambda(\Phi_{i,0,s}^{D, \top} \beta_{0,s}^{MLE}))^\top \label{eq:Psi_new}\\
%& t_{a,s} := a e_1 + (1-a)e_2, \quad b_{a,s} := a e_3 + (1-a)e_4, \notag
%\end{align}
%where $e_l$ is a $4 \times 1$ vector with its $l$th element being one and the rest of the elements being zero. Then, the nonlinear (logistic) adjustment can be written as
%\begin{align*}
%\overline{\mu}^Y(a,s,X_i) =    \Psi_{i,s}^\top t_{a,s} \quad \text{and} \quad \overline{\mu}^D(a,s,X_i) =    \Psi_{i,s}^\top b_{a,s},
%\end{align*}
%while no adjustment means both $t_{a,s}$ and $b_{a,s}$ are zero.

Based on Theorem \ref{thm:linear}, we can further improve all three types of
adjustments by setting the linear coefficients of $\Phi_{i,s}$ as
\begin{align*}
	&  \theta_{a,s}^{F} : =\left(  \mathbb{E} [\tilde{\Phi}_{i,s}\tilde{\Phi
	}_{i,s}^{\top}|S_{i}=s]\right) ^{-1}\left( [ \mathbb{E} \tilde{\Phi}%
	_{i,s}Y_{i}(D_{i}(a))|S_{i}=s]\right) ,\\
	&  \beta_{a,s}^{F} : = \left(  \mathbb{E}[ \tilde{\Phi}_{i,s}\tilde{\Phi
	}_{i,s}^{\top}|S_{i}=s]\right) ^{-1}\left(  [\mathbb{E} \tilde{\Phi}%
	_{i,s}D_{i}(a)|S_{i}=s]\right) ,
\end{align*}
where $\tilde{\Phi}_{i,s} = \Phi_{i,s} - \mathbb{E}(\Phi_{i,s}|S_{i}=s)$. The
final linear adjustments with $\theta_{a,s}^{F}$ and $\beta_{a,s}^{F}$ are
\begin{align}
	\overline{\mu}^{Y}(a,s,X_{i}) = \Phi_{i,s}^{\top}\theta_{a,s}^{F}
	\quad\text{and} \quad\overline{\mu}^{D}(a,s,X_{i}) = \Phi_{i,s}^{\top}%
	\beta_{a,s}^{F}.\label{eq:adj_OLS_MLE_2}%
\end{align}

Because $\beta_{a,s}^{MLE}$ is unknown, we can replace it by its estimate
proposed in Section \ref{sec:OLS_MLE}, i.e., define
\begin{align*}
	\hat{\Phi}_{i,s} := (\Psi_{i,s}, \lambda(\mathring{\Psi}_{i,s}^{\top}
	\hat{\beta}_{1,s}^{MLE}),\lambda(\mathring{\Psi}_{i,s}^{\top} \hat{\beta
	}_{0,s}^{MLE}))^{\top}\quad\text{and} \quad\breve{\Phi}_{i,a,s}: = \hat{\Phi
	}_{i,s} - \frac{1}{n_{a}(s)}\sum_{j\in I_{a}(s)}\hat{\Phi}_{j,s}.
\end{align*}
Then, we define the estimators of $\theta_{a,s}^{F}$ and $\beta_{a,s}^{F}$ as
\begin{align}
	&  \hat{\theta}_{a,s}^{F} : =
	\del[3]{\frac{1}{n_a(s)} \sum_{i \in I_a(s)}\breve{\Phi}_{i,a,s}\breve{\Phi}_{i,a,s}^\top}^{-1}%
	\del[3]{\frac{1}{n_a(s)} \sum_{i \in I_a(s)}\breve{\Phi}_{i,a,s}Y_i},\nonumber\\
	&  \hat{\beta}_{a,s}^{F} : =
	\del[3]{\frac{1}{n_a(s)} \sum_{i \in I_a(s)}\breve{\Phi}_{i,a,s}\breve{\Phi}_{i,a,s}^\top}^{-1}%
	\del[3]{\frac{1}{n_a(s)} \sum_{i \in I_a(s)}\breve{\Phi}_{i,a,s}D_i}.\label{rr2}%
\end{align}
The corresponding feasible adjustments are
\begin{align}
	\hat{\mu}^{Y}(a,s,X_{i}) = \hat{\Phi}_{i,s}^{\top}\hat{\theta}_{a,s}^{F}
	\quad\text{and} \quad\hat{\mu}^{D}(a,s,X_{i}) = \hat{\Phi}_{i,s}^{\top}%
	\hat{\beta}_{a,s}^{F}.\label{eq:adj_OLS_MLE_2_hat}%
\end{align}

\begin{ass}
	Suppose Assumption \ref{ass:psi} holds for $\Phi_{i,s}$ defined in
	\eqref{eq:Psi_new}. \label{ass:psi_new}
\end{ass}
\begin{thm}\label{thm:linear3}
	Suppose that Assumptions \ref{ass:assignment1}, \ref{ass:OLS_MLE}, and
	\ref{ass:psi_new} hold. Then,
	\[
	\{\overline{\mu}^{b}(a,s,X_{i})\}_{b=D,Y,a=0,1,s\in\mathcal{S}}\quad
	\text{and}\quad\{\hat{\mu}^{b}(a,s,X_{i})\}_{b=D,Y,a=0,1,s\in\mathcal{S}}%
	\]
	defined in \eqref{eq:adj_OLS_MLE_2} and \eqref{eq:adj_OLS_MLE_2_hat},
	respectively, satisfy Assumption \ref{ass:Delta}. Denote the LATE estimator
	with regression adjustments $\{\hat{\mu}^{b}(a,s,X_{i})\}_{b=D,Y,a=0,1,s\in
		\mathcal{S}}$ defined in \eqref{eq:adj_OLS_MLE_2_hat} as $\hat{\tau}_{F}$.
	Then, all the results in Theorem \ref{thm:est}(i)-(ii) hold for $\hat{\tau}_{F}$.
	In addition, $\hat{\tau}_{F}$ is weakly more efficient than $\hat{\tau}_{L}$,
	$\hat{\tau}_{NL}$ and $\hat{\tau}_{NA}$.
\end{thm}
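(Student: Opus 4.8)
The plan is to recognize $\hat{\tau}_{F}$ as exactly the optimal-linear-adjustment estimator of Section~\ref{sec:linear}, but built from the \emph{augmented} regressor $\Phi_{i,s}$ of \eqref{eq:Psi_new} in place of $\Psi_{i,s}$, and then reduce the three assertions to Theorems~\ref{thm:est}, \ref{thm:linear}, \ref{thm:linear2}, and \ref{thm:par}. Indeed, comparing \eqref{rr2} with \eqref{rr1} and \eqref{eq:adj_OLS_MLE_2} with \eqref{eq:optimallinear}, the quantities $(\hat{\theta}_{a,s}^{F},\hat{\beta}_{a,s}^{F})$ and $(\theta_{a,s}^{F},\beta_{a,s}^{F})$ are the sample and population ``optimal linear'' objects associated with $\Phi_{i,s}$. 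The only genuinely new feature is that $\Phi_{i,s}$ is not observed: its last two coordinates are $\lambda(\mathring{\Psi}_{i,s}^{\top}\beta_{a,s}^{MLE})$ for $a=0,1$, so $\hat{\tau}_{F}$ actually uses the generated regressor $\hat{\Phi}_{i,s}$ formed from the first-step quasi-MLE $\hat{\beta}_{a,s}^{MLE}$, and controlling this generated regressor is where the work concentrates.

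For the first assertion (that the pair in \eqref{eq:adj_OLS_MLE_2}--\eqref{eq:adj_OLS_MLE_2_hat} satisfies Assumption~\ref{ass:Delta}), I would package the whole construction as a single smooth parametric working model in the sense of Section~\ref{sec:par}: take the parameter to be $(\theta_{a,s}^{F},\beta_{a,s}^{F},\beta_{1,s}^{MLE},\beta_{0,s}^{MLE})$ and write, with slight abuse of notation, $\Lambda_{a,s}^{Y}(X_{i},\cdot)=\Phi_{i,s}(\beta^{MLE})^{\top}\theta_{a,s}^{F}$ and $\Lambda_{a,s}^{D}(X_{i},\cdot)=\Phi_{i,s}(\beta^{MLE})^{\top}\beta_{a,s}^{F}$, where $\Phi_{i,s}(\cdot)$ exhibits the dependence of the generated coordinates on $\beta^{MLE}$. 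Then Assumption~\ref{ass:par}(ii) (smoothness plus an $L^{q}$ envelope) follows from the smoothness and boundedness of $\lambda$ and $\lambda'$ together with the moment bounds $\mathbb{E}(\|\Psi_{i,s}\|_{2}^{q}\mid S_{i}=s)\le C$ and $\mathbb{E}(\|\mathring{\Psi}_{i,s}\|_{2}^{q}\mid S_{i}=s)\le C$ inherited from Assumptions~\ref{ass:psi_new} and \ref{ass:OLS_MLE}; while for Assumption~\ref{ass:par}(i), $\hat{\beta}_{a,s}^{MLE}\convP\beta_{a,s}^{MLE}$ is already in Theorem~\ref{thm:OLS_MLE}, and a mean-value expansion of $\lambda(\mathring{\Psi}_{i,s}^{\top}\hat{\beta}_{a,s}^{MLE})-\lambda(\mathring{\Psi}_{i,s}^{\top}\beta_{a,s}^{MLE})$ bounded via $\sup|\lambda'|<\infty$ and the $L^{q}$ bound on $\mathring{\Psi}_{i,s}$ shows the sample Gram matrices $\frac{1}{n_{a}(s)}\sum_{i\in I_{a}(s)}\breve{\Phi}_{i,a,s}\breve{\Phi}_{i,a,s}^{\top}$ converge to $\mathbb{E}(\tilde{\Phi}_{i,s}\tilde{\Phi}_{i,s}^{\top}\mid S_{i}=s)$ (invertible by Assumption~\ref{ass:psi_new}) and the cross-moment vectors to $\mathbb{E}(\tilde{\Phi}_{i,s}Y_{i}\mid S_{i}=s)$ and $\mathbb{E}(\tilde{\Phi}_{i,s}D_{i}\mid S_{i}=s)$, whence $\hat{\theta}_{a,s}^{F}\convP\theta_{a,s}^{F}$ and $\hat{\beta}_{a,s}^{F}\convP\beta_{a,s}^{F}$. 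With Assumption~\ref{ass:par} in force, Theorem~\ref{thm:par} delivers Assumption~\ref{ass:Delta} for exactly the pair in \eqref{eq:adj_OLS_MLE_2}--\eqref{eq:adj_OLS_MLE_2_hat}.

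The second assertion is then immediate: Assumption~\ref{ass:assignment1} holds by hypothesis and Assumption~\ref{ass:Delta} has just been verified, so Theorem~\ref{thm:est}(i)--(ii) applies verbatim to $\hat{\tau}_{F}$, giving $\sqrt{n}(\hat{\tau}_{F}-\tau)\rightsquigarrow\mathcal{N}(0,\sigma^{2})$ with $\sigma^{2}$ evaluated at the working limits $\overline{\mu}^{Y}(a,s,X_{i})=\Phi_{i,s}^{\top}\theta_{a,s}^{F}$, $\overline{\mu}^{D}(a,s,X_{i})=\Phi_{i,s}^{\top}\beta_{a,s}^{F}$, plus a consistent variance estimator. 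For the third assertion, Assumption~\ref{ass:psi_new} is exactly Assumption~\ref{ass:psi} for $\Phi_{i,s}$, so Theorem~\ref{thm:linear} applies with $\Phi_{i,s}$ as the regressor and identifies $(\theta_{a,s}^{F},\beta_{a,s}^{F})$ as a representative element of the optimal set; rereading the proof of Theorem~\ref{thm:linear2} with $\Phi_{i,s}$ in place of $\Psi_{i,s}$ then shows $\hat{\tau}_{F}$ is weakly more efficient than \emph{every} LATE estimator whose adjustments are linear in $\Phi_{i,s}$. It remains only to invoke the identities recorded in the displays preceding the theorem: $\hat{\tau}_{NA}$, $\hat{\tau}_{L}$, and $\hat{\tau}_{NL}$ have adjustments linear in $\Phi_{i,s}$, corresponding respectively to coefficient pairs $0$, $(t_{a,s}^{L},b_{a,s}^{L})$, and $(t_{a,s}^{NL},b_{a,s}^{NL})$, and for each the induced working limit $\Phi_{i,s}^{\top}t_{a,s}$ (resp.\ $\Phi_{i,s}^{\top}b_{a,s}$) coincides with the working limit that defines that estimator, using for $\hat{\tau}_{NL}$ that the estimator is invariant to stratum-specific location shifts so the OLS intercept $h_{a,s}^{OLS}$ drops out. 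Since $\sigma^{2}$ depends on the adjustments only through these limits, $\mathrm{AVar}(\hat{\tau}_{NA})$, $\mathrm{AVar}(\hat{\tau}_{L})$, and $\mathrm{AVar}(\hat{\tau}_{NL})$ all equal $\sigma^{2}$ at feasible linear-in-$\Phi_{i,s}$ adjustments and are therefore each at least $\mathrm{AVar}(\hat{\tau}_{F})$.

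The main obstacle is the first assertion — specifically verifying Assumption~\ref{ass:Delta}(i), the $o_{p}(n^{-1/2})$ bound on the between-group difference of within-stratum averages of $\Delta^{b}$, in the presence of the generated regressor $\hat{\Phi}_{i,s}$. Expanding $\hat{\mu}^{b}-\overline{\mu}^{b}$ produces products of $o_{p}(1)$ estimation errors (in $\hat{\theta}_{a,s}^{F}$, $\hat{\beta}_{a,s}^{F}$, and, through $\hat{\Phi}_{i,s}$, in $\hat{\beta}_{a,s}^{MLE}$) with between-group average differences of fixed transformations of $\Psi_{i,s}$ and its $\lambda$-images, and one must argue those average differences are $O_{p}(n^{-1/2})$ so that each product is $o_{p}(n^{-1/2})$; embedding the $\beta^{MLE}$-dependence inside a single smooth parametric model, so that Theorem~\ref{thm:par} absorbs precisely this bookkeeping, is the cleanest way to discharge it, and is the step I would write out in full.
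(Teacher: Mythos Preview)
Your proposal is correct and follows essentially the same route as the paper: package the adjustments as a single smooth parametric model in the parameters $(\beta_{1,s}^{MLE},\beta_{0,s}^{MLE},\theta_{a,s}^{F})$ and $(\beta_{1,s}^{MLE},\beta_{0,s}^{MLE},\beta_{a,s}^{F})$, reduce Assumption~\ref{ass:Delta} to consistency of $(\hat{\theta}_{a,s}^{F},\hat{\beta}_{a,s}^{F})$ via Theorem~\ref{thm:par}, establish that consistency by a mean-value expansion of $\lambda$ to show the Gram matrices built from $\breve{\Phi}_{i,a,s}$ differ from those built from $\dot{\Phi}_{i,a,s}$ by $o_{p}(1)$, and deduce the efficiency comparison from Theorem~\ref{thm:linear} applied to $\Phi_{i,s}$ together with the linear-in-$\Phi_{i,s}$ representations of $\hat{\tau}_{NA}$, $\hat{\tau}_{L}$, $\hat{\tau}_{NL}$. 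The paper's proof carries out exactly these steps, with the generated-regressor bound (your identified ``main obstacle'') written out explicitly via the inequalities $\|\breve{\Phi}_{i,a,s}-\dot{\Phi}_{i,a,s}\|_{2}\le(2+\|\Psi_{i,s}\|_{2}+n_{a}(s)^{-1}\sum_{j\in I_{a}(s)}\|\Psi_{j,s}\|_{2})\sum_{a'}\|\hat{\beta}_{a',s}^{MLE}-\beta_{a',s}^{MLE}\|_{2}$.
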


Theorem \ref{thm:linear3} shows that by refitting nonlinear (logistic) adjustment in a
linear regression with optimal linear coefficients, we can further improve the
efficiency of the adjusted LATE estimator. As a by-product, $\hat{\tau}_{F}$ is
guaranteed to be weakly more efficient than the LATE estimator without any adjustments ($\hat{\tau}_{NA}$).

\subsection{Nonparametric Adjustments}

\label{sec:np}

In this section, we consider the nonparametric regression as the adjustments
for our LATE estimator. Specifically, we use linear and logistic sieve
regressions to estimate the true specifications $\mu^{Y}(a,s,X_{i})$ and
$\mu^{D}(a,s,X_{i})$, respectively. For implementation, the nonparametric
adjustment is exactly the same as nonlinear (logistic) adjustment studied in Section
\ref{sec:OLS_MLE}. Theoretically, we will let the regressors $\mathring{\Psi
}_{i,s}$ in \eqref{eq:hatmu_OLS_MLE} be sieve basis functions whose dimensions
will diverge to infinity as the sample size increases. For notational
simplicity, we suppress the subscript $s$ and denote the sieve regressors as
$\mathring{\Psi}_{i,n}\in\Re^{h_{n}}$, where the dimension $h_{n}$ can diverge
with the sample size.
%and denote the corresponding sieve estimators as $\hat{\theta}_{a,s}^{NP}$ and $\hat{\beta}_{a,s}^{NP}$ as \eqref{eq:nonlinear (logistic)} where $\mathring{\Psi}_{i,s}$ is replaced by $\Psi_{i,n}$.
The corresponding feasible regression adjustments are
\begin{align}
	\hat{\mu}^Y(a,s,X_i) = \mathring{\Psi}_{i,n}^\top \hat{\theta}_{a,s}^{NP} \quad \text{and} \quad \hat{\mu}^D(a,s,X_i) = \lambda(\mathring{\Psi}_{i,n}^\top \hat{\beta}_{a,s}^{NP}), 
	\label{eq:muhat_np}
\end{align}
where
\begin{align*}
	&  \hat{\theta}_{a,s}^{NP}%
	=\del[3]{\frac{1}{n_a(s)} \sum_{i \in I_a(s)} \mathring{\Psi}_{i,n} \mathring{\Psi}_{i,n}^{\top}}^{-1}%
	\del[3]{\frac{1}{n_a(s)} \sum_{i \in I_a(s)} \mathring{\Psi}_{i,n} Y_i}\quad
	\text{and}\\
	&  \hat{\beta}_{a,s}^{NP}=\argmax_{b}\frac{1}{n_{a}(s)}\sum_{i\in I_{a}%
		(s)}\left[  D_{i}\log(\lambda(\mathring{\Psi}_{i,n}^{\top}b))+(1-D_{i}%
	)\log(1-\lambda(\mathring{\Psi}_{i,n}^{\top}b))\right]  .
\end{align*}
We finally denote the corresponding adjusted LATE estimator as $\hat{\tau}_{NP}$.

\begin{ass} \label{ass:np}
	\begin{enumerate}
		[label=(\roman*)]
		
		\item There exist constants $0<c<C < \infty$ such that with probability
		approaching one,
		\[
		c\leq\lambda_{\min}%
		\del[3]{ \frac{1}{n_{a}(s)}\sum_{i \in I_{a}(s)}\mathring{\Psi}_{i,n}\mathring{\Psi}_{i,n}^\top}\leq
		\lambda_{\max}%
		\del[3]{\frac{1}{n_{a}(s)}\sum_{i \in I_{a}(s)}\mathring{\Psi}_{i,n}\mathring{\Psi}_{i,n}^\top}\leq
		C \quad \text{and}
		\]
		\[
		c\leq\lambda_{\min}\left(  \mathbb{E}[\mathring{\Psi}_{i,n}\mathring{\Psi
		}_{i,n}^{\top}|S_{i}=s]\right) \leq\lambda_{\max}\left(  \mathbb{E}%
		[\mathring{\Psi}_{i,n}\mathring{\Psi}_{i,n}^{\top}|S_{i}=s]\right) \leq C.
		\]

		\item For $a=0,1$, there exist $h_{n} \times1$ vectors $\theta_{a,s}^{NP}$ and
		$\beta_{a,s}^{NP}$ such that for
		\begin{align*}
			R^{Y}(a,s,x)  &  := \mathbb{\mathbb{E}}\sbr[1]{Y_i(D_i(a))|S_i=s,X_i=x} -
			\mathring{\Psi}_{i,n}^{\top}\theta_{a,s}^{NP} \quad\text{and}\\
			R^{D}(a,s,x)  &  := \mathbb{P}\del[1]{D_i(a)=1|S_i=s,X_i=x} - \lambda
			(\mathring{\Psi}_{i,n}^{\top}\beta_{a,s}^{NP}),
		\end{align*}
		we have $\sup_{a = 0,1, b \in\{D,Y\}, s \in\mathcal{S}, x \in\text{Supp}%
			(X)}|R^{b}(a,s,x)| = o_{p}(1)$,
		\begin{align*}
			\sup_{a = 0,1, b \in\{D,Y\}, s \in\mathcal{S}, x \in\text{Supp}(X)}\frac
			{1}{n_{a}(s)}\sum_{i \in I_{a}(s)}\del[1]{R^{b}(a,s,X_i)}^{2} = O_{p}\left(
			\frac{h_{n} \log n}{n}\right) , \quad \text{and}
		\end{align*}
		\begin{align*}
			\sup_{a=0,1,b \in\{D,Y\}, s\in\mathcal{S}} \mathbb{E}%
			\sbr[2]{\del[1]{R^{b}(a,s,X_i)}^2|S_i=s} = O\left( \frac{h_{n} \log n}%
			{n}\right) .
		\end{align*}

		\item For $a=0,1$, there exists a constant $c \in(0,0.5)$ such that
		\begin{align*}
			c\leq &  \inf_{a = 0,1, s \in\mathcal{S}, x \in\text{Supp}(X)}\mathbb{P}%
			\del[1]{D_i(a)=1|S_i=s,X_i=x}\\
			\qquad\leq &  \sup_{a = 0,1, s \in\mathcal{S}, x \in\text{Supp}(X)}%
			\mathbb{P}\del[1]{D_i(a)=1|S_i=s,X_i=x} \leq1-c.
		\end{align*}

		\item Suppose that $\mathbb{E}[\mathring{\Psi}_{i,n,k}^{2}|S_{i}=s]\leq
		C$ for some constant $C>0$, where $\mathring{\Psi}_{i,n,k}$ denotes the
		$k$th element of $\mathring{\Psi}_{i,n}$. $\max_{i\in\lbrack n]}%
		||\mathring{\Psi}_{i,n}||_{2}\leq\zeta(h_{n})$ a.s., where $\zeta(\cdot)$ is a
		deterministic increasing function satisfying $\zeta^{2}(h_{n})h_{n}\log
		n=o(n)$. Also $h_{n}^{2}\log^{2}n=o(n)$. 
	\end{enumerate}
\end{ass}

Assumption \ref{ass:np} is standard for linear and logistic sieve regressions.
We refer to \cite{HIR03} and \cite{c07} for more discussions. The quantity
$\zeta(h_{n})$ in Assumption \ref{ass:np}(iv) depends on the choice of basis
functions. For example, $\zeta(h_{n}) = O(h_{n}^{1/2})$ for splines and
$\zeta(h_{n}) = O(h_{n})$ for power series.

\begin{thm} \label{thm:np}
	Suppose Assumptions \ref{ass:assignment1} and \ref{ass:np} hold. Then
	$\{\hat{\mu}^{b}(a,s,X_{i})\}_{b = D,Y, a=0,1, s \in\mathcal{S}}$ defined in
	\eqref{eq:muhat_np} with $\overline{\mu}^{b}(a,s,X) = \mu^{b}(a,s,X)$ satisfy
	Assumption \ref{ass:Delta}. All the results in Theorem \ref{thm:est}(i)-(ii) hold for $\hat{\tau}_{NP}$. In addition, $\hat{\tau}_{NP}$ achieves the SEB. 
\end{thm}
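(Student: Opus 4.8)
The plan is to reduce the theorem to an application of Theorem~\ref{thm:est}: I will verify that the linear and logistic sieve adjustments $\{\hat\mu^b(a,s,X_i)\}$ in \eqref{eq:muhat_np}, paired with the correctly specified working models $\overline\mu^b(a,s,x)=\mu^b(a,s,x)$, satisfy Assumption~\ref{ass:Delta}. Once this is done, Theorem~\ref{thm:est}(i)--(ii) apply verbatim to $\hat\tau_{NP}$, delivering $\sqrt n(\hat\tau_{NP}-\tau)\rightsquigarrow\mathcal N(0,\sigma^2)$ with the $\sigma^2$ of \eqref{eq:sigma} and $\hat\sigma^2\convP\sigma^2$; and since $\overline\mu^b=\mu^b$, Theorem~\ref{thm:est}(iii) together with Theorem~\ref{thm:eff} identifies $\sigma^2$ with the SEB $\underline\sigma^2$, which is the last claim. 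Assumption~\ref{ass:Delta}(iii) is immediate, since $\mathbb E[(\mu^D(a,S_i,X_i))^2|S_i=s]\le1$ and, by conditional Jensen and Assumption~\ref{ass:assignment1}(vi), $\mathbb E[(\mu^Y(a,S_i,X_i))^2|S_i=s]\le\mathbb E[Y_i(D_i(a))^2|S_i=s]<\infty$. So the work is in parts (i) and (ii).

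The first ingredient is a uniform rate for the sieve coefficients, $\max_{a,s}\|\hat\theta_{a,s}^{NP}-\theta_{a,s}^{NP}\|_2=O_p(\sqrt{h_n\log n/n})$ and $\max_{a,s}\|\hat\beta_{a,s}^{NP}-\beta_{a,s}^{NP}\|_2=O_p(\sqrt{h_n\log n/n})$, with $\theta_{a,s}^{NP},\beta_{a,s}^{NP}$ the vectors in Assumption~\ref{ass:np}(ii). For the linear outcome regression this follows from the normal equations, the eigenvalue bounds of Assumption~\ref{ass:np}(i), the approximation-error bound of Assumption~\ref{ass:np}(ii), and a matrix-concentration/maximal-inequality control of $\frac1{n_a(s)}\sum_{i\in I_a(s)}\mathring\Psi_{i,n}\varepsilon_i$ with $\varepsilon_i:=Y_i(D_i(a))-\mathring\Psi_{i,n}^\top\theta_{a,s}^{NP}$, using the envelope $\max_i\|\mathring\Psi_{i,n}\|_2\le\zeta(h_n)$, the moment bound of Assumption~\ref{ass:assignment1}(vi), and $\zeta^2(h_n)h_n\log n=o(n)$, exactly as in standard series-regression arguments (\cite{HIR03}, \cite{c07}). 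For the logistic regression, note first that $\lambda(\mathring\Psi_{i,n}^\top\beta_{a,s}^{NP})=\mathbb P(D_i(a)=1|S_i=s,X_i)-R^D(a,s,X_i)\in[c/2,1-c/2]$ with probability approaching one by Assumptions~\ref{ass:np}(ii)--(iii), so $|\mathring\Psi_{i,n}^\top\beta_{a,s}^{NP}|$ is uniformly bounded and $\lambda'$ is bounded below along the relevant index; then concavity of the logistic quasi-log-likelihood, the eigenvalue bound, and the growth conditions in Assumption~\ref{ass:np}(iv) yield the rate. Throughout, the covariate-adaptive structure enters only through the representation that, conditionally on $S^{(n)}$ and $\{n_1(s)\}_{s\in\mathcal S}$, the index set $I_a(s)$ is an exchangeable subsample of $\{i:S_i=s\}$ under SRS, BCD, WEI and SBR, so within-stratum averages obey i.i.d.\ variance bounds.

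The core step is Assumption~\ref{ass:Delta}(i). Using $\mu^Y(a,s,X_i)=\mathring\Psi_{i,n}^\top\theta_{a,s}^{NP}+R^Y(a,s,X_i)$, write $\Delta^Y(a,s,X_i)=\mathring\Psi_{i,n}^\top(\hat\theta_{a,s}^{NP}-\theta_{a,s}^{NP})-R^Y(a,s,X_i)$, and, by a second-order Taylor expansion of $\lambda$ around the nonrandom $\beta_{a,s}^{NP}$, $\Delta^D(a,s,X_i)=\lambda'(\mathring\Psi_{i,n}^\top\beta_{a,s}^{NP})\,\mathring\Psi_{i,n}^\top(\hat\beta_{a,s}^{NP}-\beta_{a,s}^{NP})+\tfrac12\lambda''(\xi_i)\big(\mathring\Psi_{i,n}^\top(\hat\beta_{a,s}^{NP}-\beta_{a,s}^{NP})\big)^2-R^D(a,s,X_i)$. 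The between-arm difference in Assumption~\ref{ass:Delta}(i) splits into three kinds of terms. The ``linear-in-coefficient-error'' terms are $(\bar\Psi_{1,s}-\bar\Psi_{0,s})^\top(\hat\theta_{a,s}^{NP}-\theta_{a,s}^{NP})$ and $(\bar v_{1,s}-\bar v_{0,s})^\top(\hat\beta_{a,s}^{NP}-\beta_{a,s}^{NP})$, where $\bar\Psi_{a,s}$ and $\bar v_{a,s}$ are the arm-$a$ within-stratum averages of $\mathring\Psi_{i,n}$ and of $\lambda'(\mathring\Psi_{i,n}^\top\beta_{a,s}^{NP})\mathring\Psi_{i,n}$; these are averages of fixed functions of i.i.d.\ data, so the exchangeable-subsample bound plus $\mathbb E[\mathring\Psi_{i,n,k}^2|S_i=s]\le C$ give $\|\bar\Psi_{1,s}-\bar\Psi_{0,s}\|_2,\ \|\bar v_{1,s}-\bar v_{0,s}\|_2=O_p(\sqrt{h_n/n})$, and Cauchy--Schwarz with the coefficient rate makes these $O_p(h_n\sqrt{\log n}/n)=o_p(n^{-1/2})$ since $h_n^2\log n=o(n)$. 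The quadratic remainder is bounded by $\tfrac18\|\hat\beta_{a,s}^{NP}-\beta_{a,s}^{NP}\|_2^2\,\lambda_{\max}\big(\tfrac1{n_a(s)}\sum_{i\in I_a(s)}\mathring\Psi_{i,n}\mathring\Psi_{i,n}^\top\big)=O_p(h_n\log n/n)=o_p(n^{-1/2})$, which is precisely where $h_n^2\log^2 n=o(n)$ in Assumption~\ref{ass:np}(iv) is used. The approximation-error terms $\tfrac1{n_1(s)}\sum_{i\in I_1(s)}R^b-\tfrac1{n_0(s)}\sum_{i\in I_0(s)}R^b$ are, conditionally on the stratum-$s$ data and $n_1(s)$, mean-zero (both arm-averages have common mean $n(s)^{-1}\sum_{i:S_i=s}R^b(a,s,X_i)$) with conditional variance $O_p(n^{-1}\cdot n(s)^{-1}\sum_{i:S_i=s}R^b(a,s,X_i)^2)=O_p(h_n\log n/n^2)$ by Assumption~\ref{ass:np}(ii), hence $o_p(n^{-1/2})$; note this does not require $\theta_{a,s}^{NP},\beta_{a,s}^{NP}$ to be $L^2$-projection coefficients. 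Part (ii) is easier: $\tfrac1n\sum_i(\Delta^b(a,S_i,X_i))^2\lesssim\max_{a,s}(\|\hat\theta_{a,s}^{NP}-\theta_{a,s}^{NP}\|_2^2+\|\hat\beta_{a,s}^{NP}-\beta_{a,s}^{NP}\|_2^2)\tfrac1n\sum_i\|\mathring\Psi_{i,n}\|_2^2+\tfrac1n\sum_i R^b(a,S_i,X_i)^2=O_p(h_n^2\log n/n)+O_p(h_n\log n/n)=o_p(1)$.

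I expect the main obstacle to be the logistic-sieve part: establishing the uniform $O_p(\sqrt{h_n\log n/n})$ rate for the logistic quasi-MLE with a diverging number of regressors --- controlling the quasi-score and the random curvature uniformly over $a,s$ while keeping the linear index in the region where $\lambda'$ is bounded below --- and handling the nonlinearity in Assumption~\ref{ass:Delta}(i), where the mean-value factor $\lambda'(\xi_i)$ blocks a clean factorization of the coefficient error; expanding around the nonrandom $\beta_{a,s}^{NP}$ is what keeps both the first-order and quadratic pieces affordable under $h_n^2\log^2 n=o(n)$ rather than a stronger condition. The rest is bookkeeping with the exchangeable-subsample representation of CARs and the maximal inequalities already used in the proof of Theorem~\ref{thm:par}.
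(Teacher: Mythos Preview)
Your proposal is correct and reaches the same conclusions, but it handles the key step---verifying Assumption~\ref{ass:Delta}(i) for the nonlinear $\Delta^D$---by a different route than the paper. The paper does \emph{not} Taylor-expand $\lambda$; instead it centers $\lambda(\mathring\Psi_{i,n}^\top\hat\beta_{a,s}^{NP})-\lambda(\mathring\Psi_{i,n}^\top\beta_{a,s}^{NP})$ at its population mean $M_{a,s}(\hat\beta,\beta):=\mathbb E[\lambda(\mathring\Psi_{i,n}^\top\hat\beta)-\lambda(\mathring\Psi_{i,n}^\top\beta)\mid S_i=s]$ and bounds the resulting empirical process uniformly over the localized class $\mathcal F=\{\lambda(\mathring\Psi_{i,n}^\top\beta_1)-\lambda(\mathring\Psi_{i,n}^\top\beta_{a,s}^{NP}):\|\beta_1-\beta_{a,s}^{NP}\|_2\le C\sqrt{h_n/n_a(s)}\}$, which is VC-type with index $O(h_n)$ and bounded envelope, via the maximal inequality of \citet[Corollary~5.1]{CCK14}; the $R^D$ pieces are centered at $\mathbb E[R^D(a,s,X_i)\mid S_i=s]$ and dispatched by Chebyshev. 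Your Taylor route is more elementary---Cauchy--Schwarz on the linear piece plus an eigenvalue bound on the quadratic remainder---and makes transparent exactly where $h_n^2\log^2 n=o(n)$ bites (the quadratic term), whereas in the paper this condition is absorbed into the maximal-inequality rate. The paper's route, in turn, reuses verbatim the empirical-process machinery already deployed for Theorems~\ref{thm:par} and~\ref{thm:hd}, which gives a unified treatment across parametric, sieve, and regularized adjustments. One caution on your framing: the ``exchangeable-subsample'' device you invoke (conditioning on the stratum-$s$ covariates and $n_1(s)$, then treating $I_1(s)$ as a uniformly random subset) is not literally correct for the sequential schemes BCD and WEI; the paper instead conditions on $(A^{(n)},S^{(n)})$ and uses Assumption~\ref{ass:assignment1}(ii) to conclude that $\{X_i:i\in I_a(s)\}$ are i.i.d.\ from the law of $X\mid S=s$, which delivers the same variance bounds without any symmetry claim on the assignment mechanism---swap in that conditioning and your moment calculations go through unchanged.
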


The nonlinear (logistic) and nonparametric adjustments are numerically identical if the
same set of regressors are used. Theorem \ref{thm:np} then shows that the
nonlinear (logistic) adjustment with technical regressors performs well because it can
closely approximate the correct specification. Under the asymptotic framework
that the dimension of the regressors diverges to infinity and the
approximation error converges to zero, the nonlinear (logistic) adjustment can be viewed as
the nonparametric adjustment, which achieves the SEB. In fact, if we estimate both $\mu^Y(a,s,X)$ and $\mu^D(a,s,X)$ by linear sieve regressions, under similar conditions to Assumption \ref{ass:np}, we can show that such an adjusted estimator also achieves the SEB. So does \citeauthor{anseletal2018}'s (\citeyear{anseletal2018}) S estimator when their $X_i$ is replaced by sieve bases of $X_i$ because it is asymptotically equivalent to our estimator L with optimal linear adjustment.

\subsection{Regularized Large Dimensional Regression}

\label{sec:hd}

In this section, we consider the case where the regressor $\mathring{\Psi}_{i,n}\in\Re^{p_{n}}$ has dimension $p_{n}$ that can be much higher than $n$.
In this case, we can no longer use the nonlinear (logistic) (nonparametric) adjustment
method. Instead, we need to regularize the least squares and logistic
regressions. Specifically, let
\begin{align}
	\hat{\mu}^Y(a,s,X_i) = \mathring{\Psi}_{i,n}^\top \hat{\theta}_{a,s}^{R} \quad \text{and} \quad \hat{\mu}^D(a,s,X_i) = \lambda(\mathring{\Psi}_{i,n}^\top \hat{\beta}_{a,s}^{R}), 
	\label{eq:muhat_hd}
\end{align}
and the corresponding adjusted LATE estimator is denoted as $\hat{\tau}_{R}$,
where
\begin{align*}
	\hat{\theta}_{a,s}^{R}= &  \argmin_{t}\frac{-1}{n_{a}(s)}\sum_{i\in I_{a}%
		(s)}\del[1]{Y_i - \mathring{\Psi}_{i,n}^\top t}^{2}+\frac{\varrho_{n,a}%
		(s)}{n_{a}(s)}||\hat{\Omega}^{Y}t||_{1},\\
	\hat{\beta}_{a,s}^{R}= &  \argmin_{b}\frac{-1}{n_{a}(s)}\sum_{i\in I_{a}%
		(s)}%
	\sbr[2]{D_i\log\del[1]{\lambda(\mathring{\Psi}_{i,n}^\top b)} + (1-D_i)\log(1-\lambda\del[1]{\mathring{\Psi}_{i,n}^\top b)}}+\frac
	{\varrho_{n,a}(s)}{n_{a}(s)}||\hat{\Omega}^{D}b||_{1},
\end{align*}
where $\{\varrho_{n,a}(s)\}_{a=0,1,s\in\mathcal{S}}$ are tuning parameters, $\hat{\Omega}^{b}=\text{diag}(\hat{\omega}_{1}^{b},\cdots,\hat{\omega
}_{p_{n}}^{b})$ is a diagonal matrix of data-dependent penalty loadings for
$b=D,Y$, and $\|\cdot\|_1$ is the $\ell_1$ norm.\footnote{ We provide more details about  $\hat{\Omega}^{b}$ in Section \ref{sec:aux_imp} of the Online Supplement.} 

We maintain the following assumptions for Lasso and logistic Lasso
regressions. 

\begin{ass}
	\begin{enumerate}
		[label=(\roman*)] 
		
		\item For $a=0,1$. Suppose that%
		\begin{align*}
			&  \mathbb{E}\sbr[1]{Y_i(D_i(a))|X_i,S_i=s}=\mathring{\Psi}_{i,n}^{\top}%
			\theta_{a,s}^{R}+R^{Y}(a,s,X_{i})\quad\text{and}\\
			&  \mathbb{P}(D_{i}(a)=1|X_{i},S_{i}=s)=\lambda(\mathring{\Psi}_{i,n}^{\top
			}\beta_{a,s}^{R})+R^{D}(a,s,X_{i})
		\end{align*}
		such that $\max_{a=0,1,s\in\mathcal{S}}\max(||\theta_{a,s}^{R}||_{0}%
		,||\beta_{a,s}^{R}||_{0})\leq h_{n}$, where $||a||_{0}$ denotes the number of
		nonzero components in $a$.
		
		\item Suppose that for $q>2$,
		\begin{align*}
			\sup_{i\in\lbrack n]}||\mathring{\Psi}_{i,n}||_{\infty
			}\leq\zeta_{n}~a.s. \quad \text{and} \quad  \sup_{h\in\lbrack p_{n}]}\mathbb{E}\sbr[1]{|\mathring{\Psi}_{i,n,h}^q||S_i=s}<\infty,   
		\end{align*}
		where $\|\cdot\|_{\infty}$ is the $\ell_{\infty}$ norm. 
		
		\item Suppose that
		\[
		\max_{a=0,1,b=D,Y,s\in\mathcal{S}}\frac{1}{n_{a}(s)}\sum_{i\in I_{a}(s)}%
		(R^{b}(a,s,X_{i}))^{2}=O_{p}(h_{n}\log p_{n}/n),
		\]%
		\[
		\max_{a=0,1,b=D,Y,s\in\mathcal{S}}\mathbb{E}%
		\sbr[1]{(R^b(a,s,X_i))^2|S_i=s}=O(h_{n}\log p_{n}/n),
		\]
		and
		\[
		\sup_{a=0,1,b=D,Y,s\in\mathcal{S},x\in\mathcal{X}}|R^{b}(a,s,X)|=O(\sqrt
		{\zeta_{n}^{2}h_{n}^{2}\log p_{n}/n}).
		\]

		\item Suppose that $\frac{\log(p_{n})\zeta_{n}^{2}h_{n}^{2}}{n}\rightarrow0$
		and $\frac{\log^{2}(p_{n})\log^{2}(n)h_{n}^{2}}{n}\rightarrow0$. 
		
		\item There exists a constant $c \in(0,0.5)$ such that
		\begin{align*}
			c\leq &  \inf_{a = 0,1, s \in\mathcal{S}, x \in\text{Supp}(X)}\mathbb{P}%
			(D_{i}(a)=1|S_{i}=s,X_{i}=x)\\
			\leq &  \sup_{a = 0,1, s \in\mathcal{S}, x \in\text{Supp}(X)}\mathbb{P}%
			(D_{i}(a)=1|S_{i}=s,X_{i}=x) \leq1-c.
		\end{align*}

		\item Let $\ell_{n}$ be a sequence that diverges to infinity. Then there exist
		two constants $\kappa_{1}$ and $\kappa_{2}$ such that with probability
		approaching one,
		\begin{align*}
			0< \kappa_{1} \leq &  \inf_{a = 0,1, s\in\mathcal{S}, ||v||_{0} \leq h_{n}
				\ell_{n}} \frac{v^{\top}\left( \frac{1}{n_{a}(s)}\sum_{i \in I_{a}%
					(s)}\mathring{\Psi}_{i,n}\mathring{\Psi}_{i,n}^{\top}\right) v }{||v||_{2}%
				^{2}}\\
			\leq &  \sup_{a = 0,1, s\in\mathcal{S}, ||v||_{0} \leq h_{n} \ell_{n}}
			\frac{v^{\top}\left( \frac{1}{n_{a}(s)}\sum_{i \in I_{a}(s)}\mathring{\Psi
				}_{i,n}\mathring{\Psi}_{i,n}^{\top}\right) v }{||v||_{2}^{2}} \leq\kappa_{2} <
			\infty,
		\end{align*}
		and
		\begin{align*}
			0< \kappa_{1} \leq &  \inf_{a = 0,1, s \in\mathcal{S}, ||v||_{0} \leq h_{n}
				\ell_{n}} \frac{v^{\top}\mathbb{E}%
				\sbr[1]{\mathring{\Psi}_{i,n}\mathring{\Psi}_{i,n}^\top |S_i=s}v }%
			{||v||_{2}^{2}}\\
			\leq &  \sup_{a = 0,1, s\in\mathcal{S}, ||v||_{0} \leq h_{n} \ell_{n}}
			\frac{v^{\top}\mathbb{E}%
				\sbr[1]{\mathring{\Psi}_{i,n}\mathring{\Psi}_{i,n}^\top |S_i=s}v }%
			{||v||_{2}^{2}} \leq\kappa_{2} < \infty.
		\end{align*}

		\item For $a=0,1$, let $\varrho_{n,a}(s) = c \sqrt{n_{a}(s)}F_{N}^{-1}\del[1]{ 1-1/\sbr[1]{p_n\log(n_a(s))}}$ where $F_{N}(\cdot)$ is the standard normal CDF and $c>0$ is a constant. 
	\end{enumerate}
	
	\label{ass:hd} 
\end{ass}

Assumption \ref{ass:hd} is standard in the literature and we refer interested
readers to \cite{BCFH13} for more discussion.

\begin{thm}
	Suppose Assumptions \ref{ass:assignment1} and \ref{ass:hd} hold. Then
	$\{\hat{\mu}^{b}(a,s,X_{i})\}_{b = D,Y, a=0,1, s \in\mathcal{S}}$ defined in
	\eqref{eq:muhat_hd} and $\overline{\mu}^{b}(a,s,X) = \mu^{b}(a,s,X)$ satisfy
	Assumption \ref{ass:Delta}. All the results in Theorem \ref{thm:est}(i)-(ii) hold for $\hat{\tau}_{R}$. In addition, $\hat{\tau}_{R}$ achieves the SEB. \label{thm:hd}
\end{thm}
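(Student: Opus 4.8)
The plan is to reduce all three assertions of the theorem to a single task: showing that the Lasso and logistic-Lasso adjustments $\hat\mu^b$ in \eqref{eq:muhat_hd} satisfy Assumption \ref{ass:Delta}. Granting that, Theorem \ref{thm:est}(i)--(ii) immediately yield $\sqrt n(\hat\tau_R-\tau)\rightsquigarrow\mathcal N(0,\sigma^2)$ and $\hat\sigma^2\convP\sigma^2$, while Theorem \ref{thm:est}(iii) yields the SEB because here the working models are set equal to the true specifications, $\overline\mu^b(a,s,X)=\mu^b(a,s,X)$ for $b\in\{D,Y\}$. Of the three parts of Assumption \ref{ass:Delta}, part (iii) is immediate: $\mu^D(a,s,\cdot)\in[0,1]$, and by Jensen and Assumption \ref{ass:assignment1}(vi), $\mathbb E[(\mu^Y(a,s,X_i))^2\mid S_i=s]\le\mathbb E[Y_i(D_i(a))^2\mid S_i=s]<\infty$. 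So the content is in parts (i) and (ii).

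The engine is a standard package of rate bounds for $\ell_1$-penalized least squares and logistic regression. Under Assumption \ref{ass:hd} --- the restricted/sparse eigenvalue conditions \ref{ass:hd}(vi), the moment and boundedness conditions \ref{ass:hd}(ii), the approximation-error bounds \ref{ass:hd}(iii), the overlap condition \ref{ass:hd}(v) (which keeps the logistic Hessian well conditioned from below), and the penalty choice \ref{ass:hd}(vii), with the data-driven loadings $\hat\Omega^b$ handled as in \cite{BCFH13} --- one obtains, uniformly over $a=0,1$ and $s\in\mathcal S$ and using $n_a(s)\asymp n$ with probability approaching one (Assumption \ref{ass:assignment1}(iii)--(iv)): the $\ell_1$-rates $\|\hat\theta_{a,s}^R-\theta_{a,s}^R\|_1,\|\hat\beta_{a,s}^R-\beta_{a,s}^R\|_1=O_p(h_n\sqrt{\log p_n/n})$, the $\ell_2$-rates $O_p(\sqrt{h_n\log p_n/n})$, the prediction norms $\tfrac1{n_a(s)}\sum_{i\in I_a(s)}\big(\mathring\Psi_{i,n}^\top(\hat\theta_{a,s}^R-\theta_{a,s}^R)\big)^2$ (and likewise with $\hat\beta_{a,s}^R$) being $O_p(h_n\log p_n/n)$, and approximate $h_n$-sparsity of the solutions. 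Writing $\Delta^Y(a,s,X_i)=\mathring\Psi_{i,n}^\top(\hat\theta_{a,s}^R-\theta_{a,s}^R)-R^Y(a,s,X_i)$ and, via the $1/4$-Lipschitz property of $\lambda$, $|\Delta^D(a,s,X_i)|\le\tfrac14\big|\mathring\Psi_{i,n}^\top(\hat\beta_{a,s}^R-\beta_{a,s}^R)\big|+|R^D(a,s,X_i)|$, the prediction-norm rate together with the $L^2$ bound on $R^b$ in \ref{ass:hd}(iii) gives $\tfrac1n\sum_i(\Delta^b(a,S_i,X_i))^2=O_p(h_n\log p_n/n)=o_p(1)$, which is Assumption \ref{ass:Delta}(ii).

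The main obstacle is Assumption \ref{ass:Delta}(i): the stratum-wise difference of sample averages of $\Delta^b$ must be $o_p(n^{-1/2})$, uniformly in $s$. For $b=Y$ the key structural fact is that, by Assumption \ref{ass:assignment1}(ii), $\mathring\Psi_{i,n}$ has the same conditional mean $\mathbb E[\mathring\Psi_{i,n}\mid S_i=s]$ in the treated and control subsamples within stratum $s$; hence the difference equals $\big[\tfrac1{n_1(s)}\sum_{i\in I_1(s)}(\mathring\Psi_{i,n}-\mathbb E[\mathring\Psi_{i,n}\mid S_i=s])-\tfrac1{n_0(s)}\sum_{i\in I_0(s)}(\mathring\Psi_{i,n}-\mathbb E[\mathring\Psi_{i,n}\mid S_i=s])\big]^\top(\hat\theta_{a,s}^R-\theta_{a,s}^R)$ minus the analogous difference of sample means of $R^Y$. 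A Bernstein-type maximal inequality over the $p_n$ coordinates (using $\sup_i\|\mathring\Psi_{i,n}\|_\infty\le\zeta_n$, $\mathbb E[\mathring\Psi_{i,n,h}^2\mid S_i=s]\le C$, and $n_a(s)\asymp n$) bounds the $\ell_\infty$-norm of the bracketed vector by $O_p(\sqrt{\log p_n/n})$, and Hölder's inequality with the $\ell_1$-rate then gives $O_p(h_n\log p_n/n)=o_p(n^{-1/2})$, the last step using that \ref{ass:hd}(iv) forces $h_n\log p_n\log n=o(\sqrt n)$; the bound is deterministic on the intersection of the two high-probability rate events, so no sample splitting is needed. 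The $R^Y$ term is a difference of sample means of a deterministic function of $X_i$ with identical conditional means, so it has conditional variance of order $\mathbb E[(R^Y)^2\mid S_i=s]/n=O(h_n\log p_n/n^2)$, giving $O_p(\sqrt{h_n\log p_n}/n)=o_p(n^{-1/2})$.

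For $b=D$ the logistic link introduces the real difficulty, and here I would linearize $\lambda(\mathring\Psi_{i,n}^\top\hat\beta_{a,s}^R)-\lambda(\mathring\Psi_{i,n}^\top\beta_{a,s}^R)=\lambda'(\mathring\Psi_{i,n}^\top\beta_{a,s}^R)\,\mathring\Psi_{i,n}^\top(\hat\beta_{a,s}^R-\beta_{a,s}^R)+\tfrac12\lambda''(\bar\xi_i)\big(\mathring\Psi_{i,n}^\top(\hat\beta_{a,s}^R-\beta_{a,s}^R)\big)^2$. The first-order part is handled exactly as in the $b=Y$ case, with the deterministic vector $\lambda'(\mathring\Psi_{i,n}^\top\beta_{a,s}^R)\mathring\Psi_{i,n}$ (bounded coordinatewise by $\zeta_n/4$) replacing $\mathring\Psi_{i,n}$; the quadratic remainder, using $|\lambda''|\le C$, is bounded by a multiple of the in-sample and out-of-sample prediction norms of $\hat\beta_{a,s}^R-\beta_{a,s}^R$, both $O_p(h_n\log p_n/n)=o_p(n^{-1/2})$ --- the out-of-sample one via approximate sparsity and the restricted eigenvalue bound on the $I_{1-a}(s)$ Gram matrix in \ref{ass:hd}(vi) --- and the $R^D$ term is handled like the $R^Y$ term above. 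Taking $\max_{s\in\mathcal S}$ throughout (the set is finite) and assembling the pieces verifies Assumption \ref{ass:Delta}(i), after which Theorem \ref{thm:est} delivers all three claims. The step I expect to demand the most care is precisely this verification of \ref{ass:Delta}(i) for the logistic adjustment: the nonlinearity forces the Taylor expansion and a uniform handle on the quadratic remainder, and the rate restrictions in \ref{ass:hd}(iv) are exactly what is needed to push the error below the $n^{-1/2}$ threshold.
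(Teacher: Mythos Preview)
Your reduction to Assumption \ref{ass:Delta} is exactly the paper's reduction, and your treatment of parts (ii) and (iii) matches the paper's. Your argument for part (i) is correct but takes a genuinely different route. The paper centers each subsample at $M_{a,s}(\hat\beta_{a,s}^R,\beta_{a,s}^R):=\mathbb E[\lambda(\mathring\Psi_{i,n}^\top\hat\beta_{a,s}^R)-\lambda(\mathring\Psi_{i,n}^\top\beta_{a,s}^R)\mid S_i=s]$ and then controls the resulting empirical process uniformly over the class $\mathcal F=\{\lambda(\mathring\Psi_{i,n}^\top\beta_1)-\lambda(\mathring\Psi_{i,n}^\top\beta_{a,s}^R):\|\beta_1-\beta_{a,s}^R\|_2\le C\sqrt{h_n\log p_n/n_a(s)},\ \|\beta_1\|_0\le Ch_n\}$ via a maximal inequality of \cite{CCK14}, using that sparsity makes the covering number roughly $(p_n/\varepsilon)^{Ch_n}$; the same argument is run for the linear case in Theorem \ref{thm:np} and then recycled here. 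By contrast, you bypass empirical process machinery entirely: for $b=Y$ you use the H\"older pairing of an $\ell_\infty$ Bernstein bound on coordinatewise sample means with the Lasso $\ell_1$-rate, and for $b=D$ you first Taylor-expand the logistic link, handle the linear part by the same H\"older device with $\lambda'(\mathring\Psi_{i,n}^\top\beta_{a,s}^R)\mathring\Psi_{i,n}$ in place of $\mathring\Psi_{i,n}$, and control the quadratic remainder by the (in- and out-of-sample) prediction norms together with the sparse-eigenvalue bound in Assumption \ref{ass:hd}(vi). Both routes end up needing $h_n\log p_n=o(\sqrt n)$, which is exactly what Assumption \ref{ass:hd}(iv) delivers. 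Your approach is more elementary and makes the role of the $\ell_1$-rate and of the rate conditions in \ref{ass:hd}(iv) very transparent; the paper's approach is more uniform in that the linear and logistic cases are treated by the same empirical-process bound without any explicit linearization or separate remainder control.
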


Due to the approximate sparsity, the Lasso method consistently estimate
the correct specification, which explains why the corresponding estimator
can achieve the SEB.

\section{Simulations}
\label{sec:sim}

\subsection{Data Generating Processes}

Three data generating processes (DGPs) are used to assess the finite sample
performance of the estimation and inference methods introduced in the paper.
Suppose that
\begin{align*}
	Y_{i}(d) &  = a_{d}+\alpha(X_{i}, Z_{i}) + \varepsilon_{d+1,i},~d=0,1,\quad D_{i}(0)   =1\{b_{0}+ \gamma(X_{i}, Z_{i}) >c_{0}\varepsilon_{3,i}\},\\
	D_{i}(1)  &  =\left\lbrace
	\begin{array}
		[c]{cc}%
		1\{b_{1} + \gamma(X_{i}, Z_{i}) >c_{1}\varepsilon_{4,i}\} & \text{if }
		D_{i}(0)=0,\\
		1 & \text{otherwise},%
	\end{array}
	\right. \\
	D_{i} & =D_{i}(1)A_{i}+D_{i}(0)(1-A_{i}), \quad \text{and} \quad Y_{i} =Y_{i}(1)D_{i}+Y_{i}(0)(1-D_{i}),
\end{align*}
where $\{X_{i}, Z_{i}\}_{i\in[n]}, \alpha(\cdot, \cdot)$, $\{a_{i}, b_{i},
c_{i}\}_{i=0,1}$ and $\{\varepsilon_{j,i}\}_{j\in[4], i\in[n]}$ are
specified as follows.

\begin{enumerate}
	[label=(\roman*)]
	
	\item Let $Z_{i}$ be i.i.d. according to standardized Beta$(2,2)$, $S_{i} =
	\sum_{j = 1}^{4} 1\{Z_{i} \leq g_{j}\}$, and $(g_{1}, g_{2}, g_{3}, g_{4}) =
	(-0.25\sqrt{20}, 0, 0.25\sqrt{20}, 0.5\sqrt{20})$. $X_{i}:=(X_{1,i},
	X_{2,i})^{\top}$, where $X_{1,i}$ follows a uniform distribution on $[-2,2]$,
	$X_{2,i}:=Z_{i}+N(0,1)$, and $X_{1,i}$ and $X_{2,i}$ are independent. Further
	define
	\begin{align*}
		\alpha(X_{i}, Z_{i}) & =0.7X_{1,i}^{2}+X_{2,i}+4Z_{i}, \quad \gamma(X_{i}, Z_{i})   = 0.5X_{1,i}^{2}-0.5X_{2,i}^{2}-0.5Z_{i}^{2},
	\end{align*}
	$a_{1}=2, a_{0}=1, b_{1}=1.3, b_{0}=-1, c_{1}=c_{0}=3$, and $(\varepsilon
	_{1,i}, \varepsilon_{2,i},\varepsilon_{3,i},\varepsilon_{4,i})^{\top
	}\overset{i.i.d}{\sim}N(0, \Sigma)$, where
	\begin{align*}
		\Sigma=
		\begin{pmatrix}
			1 & 0.5 & 0.5^{2} & 0.5^{3}\\
			0.5 & 1 & 0.5 & 0.5^{2}\\
			0.5^{2} & 0.5 & 1 & 0.5\\
			0.5^{3} & 0.5^{2} & 0.5 & 1
		\end{pmatrix}.
	\end{align*}

	\item Let $Z$ be i.i.d. according to uniform$[-2,2]$, $S_{i} = \sum_{j =
		1}^{4} 1\{Z_{i} \leq g_{j}\}$, and $(g_{1}, g_{2}, g_{3}, g_{4}) = (-1, 0, 1,
	2)$. Let $X_{i}:=(X_{1,i}, X_{2,i})^{\top}$, where $X_{1,i}$ follows a uniform
	distribution on $[-2,2]$, $X_{2,i}$ follows a standard normal distribution,
	and $X_{1,i}$ and $X_{2,i}$ are independent. Further, define
	\begin{align*}
		\alpha(X_{i}, Z_{i}) & = -0.8X_{1,i}\cdot X_{2,i}+Z_{i}^{2}+Z_{i}\cdot
		X_{1,i}, \quad \gamma(X_{i}, Z_{i})   = 0.5X_{1,i}^{2}-0.5X_{2,i}^{2}-0.5Z_{i}^{2},
	\end{align*}
	$a_{1}=2, a_{0}=1, b_{1}=1, b_{0}=-1, c_{1}=c_{0}=3$, and $(\varepsilon_{1,i},
	\varepsilon_{2,i},\varepsilon_{3,i},\varepsilon_{4,i})^{\top}$ are defined in DGP(i).
	
	\item Let $Z$ be i.i.d. according to standardized Beta(2, 2), $S_{i} = \sum_{j
		= 1}^{4} 1\{Z_{i} \leq g_{j}\}$, and $(g_{1}, g_{2}, g_{3}, g_{4}) =
	(-0.25\sqrt{20}, 0, 0.25\sqrt{20}, 0.5\sqrt{20})$. Let $X_{i}:=(X_{1,i},
	\cdots, X_{20,i})^{\top}$, where $X_{i} \overset{i.i.d}{\sim} N(0_{20 \times
		1}, \Omega)$ where $\Omega$ is the Toeplitz matrix
	\begin{align*}
		\Omega=
		\begin{pmatrix}
			1 & 0.5 & 0.5^{2} & \cdots & 0.5^{19}\\
			0.5 & 1 & 0.5 & \cdots & 0.5^{18}\\
			0.5^{2} & 0.5 & 1 & \cdots & 0.5^{17}\\
			\vdots & \vdots & \vdots & \ddots & \vdots\\
			0.5^{19} & 0.5^{18} & 0.5^{17} & \cdots & 1
		\end{pmatrix}.
	\end{align*}
	Further define $\alpha(X_{i}, Z_{i})= \sum_{k=1}^{20}X_{k,i}\beta_{k}+Z_{i}$,
	$\gamma(X_{i}, Z_{i}) = \sum_{k=1}^{20}X_{k,i}^{\top}\gamma_{k}-Z_{i}$, with
	$\beta_{k} = \sqrt{6}/k^{2}$ and $\gamma_{k} = -2/k^{2}$. Moreover, $a_{1}=2,
	a_{0}=1, b_{1}=2, b_{0}=-1, c_{1}=c_{0}=\sqrt{7}$, and $(\varepsilon_{1,i},
	\varepsilon_{2,i},\varepsilon_{3,i},\varepsilon_{4,i})^{\top}$ are defined in DGP(i).
\end{enumerate}

\bigskip

For each data generating process, we consider the four randomization
schemes (SRS, WEI, BCD, SBR) defined as in Examples \ref{ex:srs}--\ref{ex:sbr} in  Appendix \ref{sec:car_descri}, respectively. Specifically, for WEI and BCD, we set $f(x) = (1-x)/2$ and $\lambda = 0.75$, respectively. 

We compute the true LATE effect $\tau_{0}$ using Monte Carlo simulations, with sample size being 10,000 and the number of Monte Carlo simulations being 1,000. We gauge the size and power of various tests by testing the hypotheses $H_{0}: \tau=\tau_{0}$ and $H_{0}: \tau=\tau_{0}+1$, respectively. All the tests are carried out at 5\% level of significance, and with the number of Monte Carlo simulations being 10,000. 

\subsection{Estimators for Comparison}
\label{sec:sim_est}
For DGPs(i)-(ii), we consider the following estimators.

\begin{enumerate}
	[label=(\roman*)]
	
	\item NA: the fully saturated estimator by \cite{BG21}, which is equivalent to setting $\bar{\mu}%
	^{b}(a,s,x) = \hat{\mu}^{b}(a,s,x) = 0$ for $b=D,Y$, $a=0,1$, all $s$ and all
	$x$.
	
	\item TSLS: $\hat{\tau}_{TSLS}$ defined in Section \ref{sec:TSLS}. We use the usual IV heteroskedasticity-robust standard error (i.e., $\hat{\sigma}_{TSLS,naive}/\sqrt{n}$) for inference.
	%That is, run the following IV regression
	%\begin{align*}
	%Y_i = \tau D_i + X_i^{\top}u+\sum_{k=1}^{4}v_k1\{S_i=k\}+\text{err}_i
	%\end{align*}
	%where $D_i$ is instrumented by $A_i$.

	\item L: the optimal linear estimator with $\Psi_{i,s}=X_{i}$ and the pseudo
	true values being estimated by $\hat{\theta}_{a,s}^{L}$ and $\hat{\beta
	}_{a,s}^{L}$ defined in (\ref{rr1}). 
	% This is asymptotically equivalent to \cite{anseletal2018}'s S estimator with $X_i$ as regressor.
	
	\item S: \citeauthor{anseletal2018}'s (\citeyear{anseletal2018}) S estimator with $X_i$ as regressor. We use the standard error of the S estimator (i.e., $\hat{\sigma}_S/\sqrt{n}$; see Section \ref{sec:s} of the Online Supplement for details) for inference.
	
	\item NL: the nonlinear (logistic) estimator with $\Psi_{i,s}=X_{i}$, and the pseudo true
	values being estimated by $\hat{\theta}_{a,s}^{OLS}$ and $\hat{\beta}%
	_{a,s}^{MLE}$ defined in (\ref{eq:nonlinear (logistic)}).
	
	\item F: the further efficiency improving estimator with $\Psi_{i,s}=X_{i}$,
	and the pseudo true values being estimated by $\hat{\theta}_{a,s}^{F}$ and
	$\hat{\beta}_{a,s}^{F}$ defined in (\ref{rr2}).
	
	\item NP: the nonparametric estimator outlined in Section \ref{sec:np}. The following 9 bases of a spline of order 3 are chosen as the sieve regressors:
	\begin{align}
		& \mathring{\Psi}_{i,n} = \left(  1, X_{1,i}, X_{2,i}, X_{1,i}^{2},
		X_{2,i}^{2}, X_{1,i}1\{X_{1,i}>t_{1}\}, X_{2,i}1\{X_{2,i}>t_{2}\},
		X_{1,i}X_{2,i}, \right. \notag \\
		& \qquad\qquad\left.  X_{1,i}1\{X_{1,i}>t_{1}\}X_{2,i}1\{X_{2,i}>t_{2}\}
		\right)^{\top}, \label{r1}
	\end{align}
	where $t_{1}$ and $t_{2}$ are the sample medians of $\{X_{1,i}\}_{i \in[n]}$ and $\{X_{2,i}\}_{i \in[n]}$, respectively.\footnote{The formal definition of spline is given in Section \ref{sec:aux_imp} of the Online Supplement.} %We choose these 9 bases in the hope of approximating the data generating processes well. 
	The adjustments are computed as in
	(\ref{eq:muhat_np}).
	
	\item SNP: \citeauthor{anseletal2018}'s (\citeyear{anseletal2018}) S estimator with $\mathring{\Psi}_{i,n}$ defined in (\ref{r1}) as regressor. We use the standard error of the S estimator (i.e., $\hat{\sigma}_S/\sqrt{n}$; see Section \ref{sec:s} of the Online Supplement for details) for inference.
	
	\item R: a regularized estimator. The nonparametric estimator outlined in Section \ref{sec:np} might not have a good size when the sample size is
	small, so we propose to use Lasso to select the sieve regressors. %\textcolor{red}{Try the same set of 9 bases first. If not good, try the following 11 bases.} 
	The sieves regressors $\mathring{\Psi}_{i,n}$ are the same as in (\ref{r1}). %The following 11 bases of a spline of order 3 are chosen as the sieve regressors:
	%\begin{align*}
	%& \mathring{\Psi}_{i,n} = \left(  1, X_{1,i}, X_{2,i}, X_{1,i}^{2}, X_{2,i}^{2}, X_{1,i}1\{X_{1,i}>t_{1}\}, X_{2,i}1\{X_{2,i}>t_{2}\}, X_{1,i}X_{2,i}, \right. \\
	%& \qquad\qquad\left.  X_{1,i}1\{X_{1,i}>t_{1}\}X_{2,i}1\{X_{2,i}>t_{2}\}, X_{1,i}^{2}1\{X_{1,i}>t_{1}\}, X_{2,i}^{2}1\{X_{2,i}>t_{2}\} \right)^{\top},
	%\end{align*}
	%where $t_{1}$ and $t_{2}$ are the sample medians of $\{X_{1,i}\}_{i \in[n]}$ and $\{X_{2,i}\}_{i \in[n]}$, respectively. 
	The adjustments are computed as in
	(\ref{eq:muhat_hd}). The tuning parameter is chosen as: $\varrho_{n,a}(s) = 1.1 \sqrt{n_{a}(s)}F_{N}^{-1}( 1-1/(p_{n}\log(n_{a}(s))))$. We compute the data-driven penalty loading matrices $\hat \Omega^Y$ and $\hat \Omega^D$ following the iterative procedure proposed by \cite{BCFH13}.\footnote{{\tt Matlab} code provided by \cite{BCFH13} and the {\tt R} package ``hdm" provide a built-in option for this iterative procedure.}
	
	%The implementation details are given in Section \ref{sec:aux_imp} of the Online Supplement.
\end{enumerate}

For DGP(iii), we consider the estimator with no adjustments (NA), and the
lasso estimators $\hat{\theta}_{a,s}^{R}$ and $\hat{\beta}_{a,s}^{R}$ defined
in (\ref{eq:muhat_hd}) with $\mathring{\Psi}_{i,n}=(1,\Psi_{i,n}^{\top})^{\top}=(1,X_{i}^{\top})^{\top}$. The tuning parameters are choosing as: $\varrho_{n,a}(s) = 1.1 \sqrt{n_{a}(s)}F_{N}^{-1}( 1-1/(p_{n}\log(n_{a}(s))))$. %The implementation details are given in Section \ref{sec:aux_imp} of the Online Supplement.

\subsection{Simulation Results}

Tables \ref{tab:Simulation}-\ref{tab:Simulation3} present the empirical sizes and powers of the true null $H_{0}: \tau=\tau_{0}$ and false null $H_{0}: \tau=\tau_{0}+1$ under DGPs (i)-(iii), respectively. We also report the ratio of the median length of the confidence intervals of a particular estimator to that of the NA estimator is in the corresponding bracket. Note that none of the working models in DGPs (i)-(iii) is correctly specified. Consider DGP (i). When $n=200$, both the NA and TSLS estimators are slightly under-sized. Both the NP and SNP estimators are oversized because the numbers of sieve regressors are relatively large compared to the sample size, while the R estimator has the correct size thanks to the Lasso selection of the sieve regressors. The L estimator performs the same as the S estimator. All other estimators have sizes close to the nominal level of 5\%. This confirms that our estimation and inference procedures are robust to misspecification.

\renewcommand{\arraystretch}{0.5}
\begin{table}[H]
	\caption{{\protect\small Size and Power for DGP(i) }}%
	\label{tab:Simulation}%
	\centering
	\smallskip\begin{tabularx}{\linewidth}{@{\extracolsep{\fill}}lcccccccc}
		\toprule
		& \multicolumn{4}{c}{$n = 200$} & \multicolumn{4}{c}{$n = 400$} \\ \cmidrule{2-5}\cmidrule{6-9}
		Methods  & \multicolumn{1}{c}{SRS} & \multicolumn{1}{c}{WEI} & \multicolumn{1}{c}{BCD} & \multicolumn{1}{c}{SBR} & \multicolumn{1}{c}{SRS} & \multicolumn{1}{c}{WEI} & \multicolumn{1}{c}{BCD} & \multicolumn{1}{c}{SBR} \\ 
		\midrule
		% \multicolumn{9}{c}{DGP(i)} \\
		\multicolumn{9}{l}{\textit{Size}} \\  \midrule
		\quad NA & 0.035 & 0.031 & 0.031 & 0.034  &  0.046 & 0.043 & 0.042 & 0.039 \\
		\quad  &  &  &  &   &   &  &  &  \\
		\quad TSLS &  0.036 & 0.034 & 0.032 & 0.038  & 0.045 & 0.040 & 0.044 & 0.042 \\
		\quad & [77.8\%] & [78.0\%] & [77.6\%] & [77.8\%] & [78.0\%] & [77.9\%] & [77.8\%] & [78.0\%]\\
		\quad L & 0.044 & 0.041 & 0.041 & 0.045  & 0.048 & 0.044 & 0.047 & 0.047 \\
		\quad & [76.6\%] & [76.6\%] & [76.5\%] & [76.5\%] & [77.3\%] & [77.1\%] & [77.2\%] & [77.4\%]\\
		\quad S &  0.044 & 0.041 & 0.041 & 0.045  & 0.048 & 0.044 & 0.047 & 0.047 \\
		\quad & [76.6\%] & [76.6\%] & [76.5\%] & [76.5\%] & [77.3\%] & [77.1\%] & [77.2\%] & [77.4\%] \\
		\quad NL & 0.044 & 0.042 & 0.040 & 0.045 & 0.049 & 0.045 & 0.047 & 0.047\\
		\quad & [77.5\%] & [77.3\%] & [77.2\%] & [77.2\%] & [77.6\%] & [77.5\%] & [77.5\%] & [77.6\%] \\
		\quad F & 0.054 & 0.052 & 0.049 & 0.053  & 0.054 & 0.048 & 0.052 & 0.050\\
		\quad & [74.7\%] & [74.9\%] & [74.6\%] & [74.5\%] & [75.4\%] & [75.3\%] & [75.3\%] & [75.6\%] \\
		\quad NP &  0.109 & 0.094 & 0.091 & 0.090 & 0.073 & 0.062 & 0.067 & 0.062\\
		\quad & [81.6\%] & [80.5\%] & [79.5\%] & [79.0\%] & [69.3\%] & [69.4\%] & [69.5\%] & [69.4\%]\\
		\quad SNP & 0.100 & 0.091 & 0.090 & 0.085 & 0.070 & 0.061 & 0.063 & 0.060  \\
		\quad & [73.2\%] & [72.3\%] & [72.0\%] & [71.8\%] & [68.0\%] & [67.9\%] & [68.1\%] & [68.0\%] \\
		\quad R & 0.053 & 0.050 & 0.049 & 0.055 & 0.057 & 0.049 & 0.051 & 0.047 \\
		\quad & [70.6\%] & [70.3\%] & [70.1\%] & [70.1\%] & [69.5\%] & [69.5\%] & [69.5\%] & [69.6\%] \\  \midrule
		\multicolumn{9}{l}{\textit{Power}} \\  \midrule 
		\quad NA & 0.170 & 0.169 & 0.170 & 0.170  & 0.293 & 0.289 & 0.291 & 0.294 \\
		\quad  &  &  &  &   &   &  &  &  \\
		\quad TSLS & 0.260 & 0.254 & 0.260 & 0.255 & 0.430 & 0.433 & 0.443 & 0.436 \\
		\quad & [77.8\%] & [78.0\%] & [77.6\%] & [77.8\%] & [78.0\%] & [77.9\%] & [77.8\%] & [78.0\%] \\
		\quad L & 0.274 & 0.264 & 0.273 & 0.268 & 0.439 & 0.440 & 0.447 & 0.444\\
		\quad & [76.6\%] & [76.6\%] & [76.5\%] & [76.5\%] & [77.3\%] & [77.1\%] & [77.2\%] & [77.4\%] \\
		\quad S & 0.274 & 0.264 & 0.273 & 0.268  &  0.439 & 0.440 & 0.447 & 0.444 \\
		\quad & [76.6\%] & [76.6\%] & [76.5\%] & [76.5\%] & [77.3\%] & [77.1\%] & [77.2\%] & [77.4\%]\\
		\quad NL & 0.268 & 0.257 & 0.267 & 0.261  & 0.434 & 0.435 & 0.443 & 0.439  \\
		\quad & [77.5\%] & [77.3\%] & [77.2\%] & [77.2\%] & [77.6\%] & [77.5\%] & [77.5\%] & [77.6\%]\\
		\quad F &  0.299 & 0.292 & 0.296 & 0.293 & 0.460 & 0.454 & 0.466 & 0.463  \\
		\quad & [74.7\%] & [74.9\%] & [74.6\%] & [74.5\%] & [75.4\%] & [75.3\%] & [75.3\%] & [75.6\%] \\
		\quad NP & 0.299 & 0.284 & 0.289 & 0.280  & 0.509 & 0.506 & 0.510 & 0.509 \\
		\quad & [81.6\%] & [80.5\%] & [79.5\%] & [79.0\%] & [69.3\%] & [69.4\%] & [69.5\%] & [69.4\%] \\
		\quad SNP & 0.344 & 0.331 & 0.340 & 0.333  &  0.532 & 0.526 & 0.533 & 0.532  \\
		\quad & [73.2\%] & [72.3\%] & [72.0\%] & [71.8\%] & [68.0\%] & [67.9\%] & [68.1\%] & [68.0\%]\\
		\quad R &  0.325 & 0.315 & 0.325 & 0.321  &  0.516 & 0.517 & 0.514 & 0.516  \\
		\quad & [70.6\%] & [70.3\%] & [70.1\%] & [70.1\%] & [69.5\%] & [69.5\%] & [69.5\%] & [69.6\%] \\
		\bottomrule
	\end{tabularx}
\end{table}

In terms of power, the NA estimator has the lowest power, corroborating the belief that one should carry out the regression adjustment whenever covariates correlate with the potential outcomes. The powers of the other estimators are much higher. In particular, the power of the F estimator is higher than those of the NA, TSLS, L, and NL estimators, which is consistent with our theory that the F estimator is weakly more efficient than those estimators. The NP, SNP, and R estimators enjoy the highest powers as a nonparametric model could approximate the true specification very well. The NP and SNP estimators have more size distortions than the R estimator when the sample size is 200. When the sample size is increased to 400, virtually all the sizes and powers of the estimators improve, and all the observations continue to hold.

\renewcommand{\arraystretch}{0.5}
\begin{table}[H]
	\caption{{\protect\small Size and Power for DGP(ii)}}%
	\label{tab:Simulation2}%
	\centering
	\smallskip\begin{tabularx}{\linewidth}{@{\extracolsep{\fill}}lcccccccc}
		\toprule
		& \multicolumn{4}{c}{$n = 200$} & \multicolumn{4}{c}{$n = 400$} \\ \cmidrule{2-5}\cmidrule{6-9}
		Methods  & \multicolumn{1}{c}{SRS} & \multicolumn{1}{c}{WEI} & \multicolumn{1}{c}{BCD} & \multicolumn{1}{c}{SBR} & \multicolumn{1}{c}{SRS} & \multicolumn{1}{c}{WEI} & \multicolumn{1}{c}{BCD} & \multicolumn{1}{c}{SBR} \\
		\midrule
		% \multicolumn{9}{c}{DGP(ii)} \\
		\multicolumn{9}{l}{\textit{Size}} \\ \midrule
		\quad NA & 0.033 & 0.031 & 0.029 & 0.030 & 0.045 & 0.042 & 0.043 & 0.041  \\
		\quad  &  &  &  &   &   &  &  &  \\
		\quad TSLS & 0.035 & 0.033 & 0.031 & 0.033  & 0.045 & 0.044 & 0.045 & 0.040 \\
		\quad & [99.5\%] & [99.4\%] & [99.6\%] & [99.4\%] & [99.8\%] & [99.8\%] & [99.8\%] & [99.7\%]\\
		\quad L & 0.044 & 0.040 & 0.044 & 0.038 & 0.049 & 0.047 & 0.046 & 0.046 \\
		\quad & [74.7\%] & [74.5\%] & [74.8\%] & [74.6\%] &  [75.5\%] & [75.6\%] & [75.6\%] & [75.5\%] \\
		\quad S & 0.044 & 0.040 & 0.044 & 0.038  & 0.049 & 0.047 & 0.046 & 0.046 \\
		\quad & [74.7\%] & [74.5\%] & [74.8\%] & [74.6\%] & [75.5\%] & [75.6\%] & [75.6\%] & [75.5\%]\\
		\quad NL & 0.043 & 0.039 & 0.042 & 0.037 & 0.049 & 0.047 & 0.046 & 0.046 \\
		\quad & [75.3\%] & [75.1\%] & [75.5\%] & [75.2\%] & [75.7\%] & [75.8\%] & [75.8\%] & [75.6\%]\\
		\quad F & 0.052 & 0.047 & 0.048 & 0.043  &  0.050 & 0.049 & 0.051 & 0.049  \\
		\quad & [70.2\%] & [69.7\%] & [70.3\%] & [70.4\%] & [70.8\%] & [70.8\%] & [70.9\%] & [70.7\%]\\
		\quad NP & 0.100 & 0.084 & 0.087 & 0.079   & 0.062 & 0.063 & 0.065 & 0.062 \\
		\quad & [69.2\%] & [67.7\%] & [67.3\%] & [67.7\%] &  [60.4\%] & [60.3\%] & [60.4\%] & [60.5\%]\\
		\quad SNP & 0.098 & 0.084 & 0.085 & 0.079 & 0.061 & 0.064 & 0.064 & 0.063 \\
		\quad & [63.7\%] & [62.8\%] & [63.0\%] & [62.8\%] & [59.7\%] & [59.8\%] & [59.7\%] & [59.8\%]\\
		\quad R &  0.055 & 0.051 & 0.049 & 0.049 & 0.052 & 0.051 & 0.048 & 0.045 \\
		\quad & [63.3\%] & [62.8\%] & [63.2\%] & [63.0\%] & [62.1\%] & [62.1\%] & [62.2\%] & [62.0\%]\\ \midrule
		\multicolumn{9}{l}{\textit{Power}} \\ \midrule
		\quad NA & 0.202 & 0.208 & 0.208 & 0.206  & 0.350 & 0.351 & 0.351 & 0.345\\
		\quad  &  &  &  &   &   &  &  &  \\
		\quad TSLS &  0.204 & 0.212 & 0.211 & 0.210 & 0.353 & 0.352 & 0.354 & 0.346 \\
		\quad & [99.5\%] & [99.4\%] & [99.6\%] & [99.4\%] & [99.8\%] & [99.8\%] & [99.8\%] & [99.7\%] \\
		\quad L & 0.334 & 0.331 & 0.342 & 0.340  & 0.512 & 0.526 & 0.524 & 0.516  \\
		\quad & [74.7\%] & [74.5\%] & [74.8\%] & [74.6\%] & [75.5\%] & [75.6\%] & [75.6\%] & [75.5\%] \\
		\quad S & 0.334 & 0.331 & 0.342 & 0.340  &  0.512 & 0.526 & 0.524 & 0.516 \\
		\quad & [74.7\%] & [74.5\%] & [74.8\%] & [74.6\%] & [75.5\%] & [75.6\%] & [75.6\%] & [75.5\%]\\
		\quad NL &  0.327 & 0.324 & 0.335 & 0.333 &  0.510 & 0.523 & 0.523 & 0.515 \\
		\quad & [75.3\%] & [75.1\%] & [75.5\%] & [75.2\%] & [75.7\%] & [75.8\%] & [75.8\%] & [75.6\%] \\
		\quad F &  0.372 & 0.374 & 0.379 & 0.375 & 0.562 & 0.568 & 0.566 & 0.561\\
		\quad & [70.2\%] & [69.7\%] & [70.3\%] & [70.4\%] & [70.8\%] & [70.8\%] & [70.9\%] & [70.7\%]\\
		\quad NP & 0.378 & 0.381 & 0.387 & 0.386  &  0.649 & 0.663 & 0.653 & 0.655\\
		\quad & [69.2\%] & [67.7\%] & [67.3\%] & [67.7\%] & [60.4\%] & [60.3\%] & [60.4\%] & [60.5\%]\\
		\quad SNP & 0.431 & 0.443 & 0.442 & 0.440  & 0.663 & 0.676 & 0.668 & 0.668 \\
		\quad &  [63.7\%] & [62.8\%] & [63.0\%] & [62.8\%] & [59.7\%] & [59.8\%] & [59.7\%] & [59.8\%]\\
		\quad R & 0.419 & 0.429 & 0.431 & 0.432 &  0.644 & 0.661 & 0.657 & 0.648  \\
		\quad & [63.3\%] & [62.8\%] & [63.2\%] & [63.0\%] & [62.1\%] & [62.1\%] & [62.2\%] & [62.0\%]\\
		\bottomrule
	\end{tabularx}
\end{table}

We also report the ratio of the median length of the confidence intervals of a particular estimator to that of the NA estimator in the corresponding parentheses. Generally speaking, the confidence intervals of the TSLS and adjusted estimators (L, NL, F, NP, and R) are 20\%-30\% shorter, in terms of the median, than that of the NA estimator. 

Most observations uncovered in DGP (i) carry forward to DGP (ii). Two new patterns emerge. First, the powers of the L, S, NL, F, NP, SNP, and R estimators are much higher than those of the NA and TSLS estimators. Second, the ratio of the median length of the confidence intervals of the TSLS estimator is as wide as that of the NA estimator, whereas the confidence intervals of the adjusted estimators (L, NL, F, NP, and R) become 25\%-40\% shorter, in terms of the median, than that of the NA estimator. This is probably because the true specifications for $Y_{i}(a)$ become more nonlinear.

We now consider DGP (iii). In this setting, only the NA and R estimators are feasible. When $n=200$, both estimators have the correct sizes but the R estimator has considerably higher power. When $n=400$, the sizes of these two estimators remain relatively unchanged, while their powers improve with a diverging gap. The confidence intervals of the R estimator are 60\%-65\% shorter, in terms of the median, than that of the NA estimator.

\begin{table}[H]
	\caption{{\protect\small Size and Power for DGP(iii) }}%
	\label{tab:Simulation3}%
	\centering
	\smallskip\begin{tabularx}{\linewidth}{@{\extracolsep{\fill}}lcccccccc}
		\toprule
		& \multicolumn{4}{c}{$n = 200$} & \multicolumn{4}{c}{$n = 400$} \\ \cmidrule{2-5}\cmidrule{6-9}
		Methods  & \multicolumn{1}{c}{SRS} & \multicolumn{1}{c}{WEI} & \multicolumn{1}{c}{BCD} & \multicolumn{1}{c}{SBR} & \multicolumn{1}{c}{SRS} & \multicolumn{1}{c}{WEI} & \multicolumn{1}{c}{BCD} & \multicolumn{1}{c}{SBR} \\
		\midrule
		%\multicolumn{9}{c}{DGP(iii)} \\
		\multicolumn{9}{l}{\textit{Size}} \\
		\quad NA & 0.046 & 0.043 & 0.046 & 0.048  &  0.046 & 0.047 & 0.045 & 0.047   \\
		\quad  &  &  &  &  &  &  &  &    \\
		\quad R & 0.064 & 0.058 & 0.061 & 0.060  &  0.057 & 0.061 & 0.058 & 0.060 \\
		\quad  & [37.2\%] & [36.9\%] & [36.7\%] & [36.8\%] & [34.4\%] & [34.4\%] & [34.5\%] & [34.5\%]   \\
		\multicolumn{9}{l}{\textit{Power}} \\
		\quad NA & 0.173 & 0.170 & 0.171 & 0.177 &  0.233 & 0.238 & 0.235 & 0.239\\
		\quad  &  &  &  &  &  &  &  &    \\
		\quad R & 0.516 & 0.524 & 0.533 & 0.534  &  0.811 & 0.815 & 0.817 & 0.815  \\
		\quad  & [37.2\%] & [36.9\%] & [36.7\%] & [36.8\%] & [34.4\%] & [34.4\%] & [34.5\%] & [34.5\%]  \\
		\bottomrule
	\end{tabularx}
\end{table}

In Section \ref{sec:additional simulation} of the Online Supplement, we simulate data with heterogeneous $\pi(s)$. We find that all estimators except TSLS have their empirical rejection rates close to the nominal size of 5\% under the null. TSLS, on the other hand, has around 15\% rejection rate when $n=1200$. This indicates the TSLS estimator is inconsistent when $\pi(s)$ is heterogeneous, in line with Theorem \ref{thm:TSLS}.

\subsection{Practical Recommendation}

If researchers want to use parametric adjustments without tuning parameters, we recommend the F estimator, which is guaranteed to be weakly more efficient than TSLS, L, and NL estimators. Regressors $\Psi_{i,s}$ can include linear, quadratic and interaction terms of the original covariates. If researchers want to achieve the SEB by using sieve bases and/or the dimension of covariates is high relative to the sample size, we recommend the R estimator.

% the power of the L estimator is slightly higher than that of the NL estimator even though a logistic model is less misspecified. This shows some robustness of the L estimator. T

\section{Empirical Application}

\label{sec:app}

Banking the unbanked is considered to be the first step toward broader
financial inclusion -- the focus of the World Bank's Universal Financial
Access 2020
initiative.\footnote{https://www.worldbank.org/en/topic/financialinclusion/brief/achieving-universal-financial-access-by-2020}
In a field experiment with a CAR design, \cite{dupasetal2018} examined the
impact of expanding access to basic saving accounts for rural households
living in three countries: Uganda, Malawi, and Chile. In particular, apart
from the intent-to-treat effects for the whole sample, they also studied the
local average treatment effects for the households who actively used the accounts. This section presents an application of our regression adjusted estimators to the same dataset to examine the LATEs of opening bank accounts on savings balance-- a central outcome of interest in their study.

We focus on the experiment conducted in Uganda. The sample consists of 2,160
households who were randomized with a CAR design. Specifically, within each of
41 strata formed by gender, occupation, and bank branch, half of households
were randomly allocated to the treatment group, the other half to the control
one. Households in the treatment group were then offered a voucher to open
bank accounts with no financial costs. However, not every treated household
ever opened and used the saving accounts for deposit. In fact, among those households with treatment assignment, only 41.87\% of them opened the accounts and made at least
one deposit within 2 years. Subject compliance is therefore imperfect in this experiment.

The randomization design apparently satisfies statements (i), (ii) and (iii) of Assumption \ref{ass:assignment1}. The target fraction of treatment assignment is 1/2. Because $\max_{s\in\mathcal{S}}|\frac{B_{n}(s)}{n(s)}|\approx0.056$,
it is plausible to claim that Assumption \ref{ass:assignment1}(iv) is also
satisfied. Since households in the control group need to pay for the fees of
opening accounts while the treated ones bear no financial costs, no-defiers
statement in Assumption \ref{ass:assignment1}(v) holds plausibly in this case.

One of the key analyses in \cite{dupasetal2018} is to estimate the treatment
effects on savings for active users -- households who actually opened the
accounts and made at least one deposit within 2 years. We follow their
footprints to estimate the same LATEs at savings balance.\footnote{Savings
	balance includes savings in formal financial intuitions, mobile money, cash at
	home or in secret place, savings in ROSCA/VSLA, savings with friends/family,
	other cash savings, total formal savings, total informal savings, and total
	savings (See \cite{dupasetal2018} for details). We use data from the first follow-up survey and exclude other cash savings because only 2\% of the households in the sample reported having it.} To maintain
comparability, for each outcome variable, we also keep $X_{i}$ similar to those
used in \cite{dupasetal2018} for our adjusted estimators.\footnote{The
	description of these estimators is similar to that in Section \ref{sec:sim}.
	Except for savings in formal financial institutions, mobile money, and total
	formal savings, $X_{i}$ includes baseline value for the outcome of interest, baseline value of total income, and a dummy for missing observations. For savings in formal financial institutions,
	mobile money, and total formal savings, since their baseline values are all
	zero, we set $X_{i}$ as the baseline value of total savings, baseline value of total income, and a dummy for missing observations.} Due to the low dimension of covariates used in the regression adjustments, we focus on the performance of the methods ``NA", ``TSLS", ``L", ``NL", and ``F".

Table \ref{tab:emp_ate} presents the LATE estimates and their standard errors
(in parentheses) estimated by these methods.\footnote{For each outcome variable, we filter out the observations with missing values of outcome variables or the strata with less than 10 observations. The total trimmed observations are less than 10\% of the whole sample in \cite{dupasetal2018}).} These results lead to four observations. First, consistent with the theoretical and simulation results, the standard errors for the LATE estimates with regression adjustments are
lower than those without adjustments. This observation holds for all the
outcome variables and all the regression adjustment methods. Over the eight
outcome variables, the standard errors estimated by regression adjustments are
on average around 8\% lower than those without adjustment. In particular, when
the outcome variable is total informal savings, the standard errors obtained
via the further improvement adjustment -- ``F" method is about 18\% lower
than those without adjustment. This means that regression adjustments,
with the similar covariates used in \cite{dupasetal2018}, can achieve sizable
efficiency gains in estimating the LATEs.

\newcolumntype{L}{>{\raggedright\arraybackslash}X} \newcolumntype{C}{>{\centering\arraybackslash}X}

\renewcommand{\arraystretch}{0.5}
\begin{table}[H]
	\caption{Impacts on Saving Stocks in 2010 US Dollars}%
	\label{tab:emp_ate}%
	\centering
	\begin{tabularx}{1\textwidth}{LCCCCCC}
		\toprule
		\qquad $Y$ & $n$ & NA & TSLS & L & NL & F \\
		\midrule
		Formal                  & 1968      & 20.558    & 21.154    & 22.160    & 22.196    & 22.743      \\ 
		fin. inst.              &           & (3.067)   & (3.015)   & (2.965)   & (2.976)   & (2.942)     \\ 
		\\
		Mobile                  & 1972      & -0.208    & -0.174    & -0.291    & -0.292    & -0.302     \\ 
		money                        &           & (0.223)   & (0.224)   & (0.212)   & (0.213)   & (0.208)  \\ 
		\\
		Total                   & 1966      & 20.399    & 21.097    & 21.924    & 21.986    & 22.335     \\ 
		formal                  &           & (3.089)   & (3.034)   & (2.979)   & (2.994)   & (2.956)    \\ 
		\\
		Cash at                 & 1971      & -10.826   & -7.456    & -9.004    & -8.904    & -8.373     \\ 
		home                    &           & (5.003)   & (4.404)   & (4.401)   & (4.355)   & (4.354)     \\ 
		\\
		ROSCA/                  & 1975      & -1.933    & -2.333    & -1.242    & -1.255    & 0.651     \\ 
		VSLA                    &           & (1.971)   & (1.858)   & (1.794)   & (1.812)   & (1.940)    \\ 
		\\
		Friends/                & 1974      & -3.621    & -3.346    & -1.428    & -1.536    & -2.067     \\ 
		family                  &           & (2.040)   & (1.999)   & (1.866)   & (2.015)   & (2.042)    \\ 
		\\
		Total                   & 1960      & -17.643   & -14.317   & -15.665   & -15.693   & -14.137     \\ 
		informal                &           & (6.200)   & (5.351)   & (5.185)   & (5.196)   & (5.082)    \\ 
		\\
		Total                   & 1952      & 2.787     & 7.153     & 7.169     & 7.193     & 8.962    \\ 
		savings                &           & (7.290)   & (6.368)   & (6.197)   & (6.218)   & (6.142)    \\ 
		\bottomrule
	\end{tabularx} \newline\vspace{-1ex} \justify
	Notes: The table reports the LATE estimates of opening bank accounts on saving
	stocks. NA, TSLS, L, NL, and F stand for the no-adjustment, TSLS, optimal
	linear, nonlinear (logistic), further efficiency improving,
	respectively. $n$ is the number of households. Standard errors are in parentheses.\end{table}

\renewcommand{\arraystretch}{1}

Second, the standard errors for the regression-adjusted LATE estimates are
mostly lower than those obtained by the usual TSLS procedure. Especially, when the outcome variables are mobile money and total informal savings, the standard errors obtained via ``F" method are about 7.1\% and 5\%, respectively, lower than those by TSLS. When the outcome variable is savings in friends/family, the standard error estimated by the optimal linear adjustment -- ``L" method is around 6.7\% lower than that obtained by TSLS. This means that, compared with our regression-adjusted methods, TSLS is generally less efficient to estimate the LATEs under CAR.

Third, the standard errors for the LATE estimates with regression adjustments
are similar in terms of magnitude. This implies that all the regression
adjustments achieve similar efficiency gain in this case.

Finally, as in \cite{dupasetal2018}, for the households who actively use bank
accounts, we find that reducing the cost of opening a bank account can
significantly increase their savings in formal institutions. We also observe
the evidence of crowd-out -- mainly moving cash from saving at home to saving
in bank.

% \section{Conclusion}

% \label{sec:conclu}

% In this paper, we address the problem of estimation and inference of local
% average treatment effects under covariate-adaptive randomizations using
% regression adjustments. We first derive the SEB
% for the LATE under CARs. We then propose a regression-adjusted LATE estimator
% under CARs. We derive its limit theory and show that, even under the potential
% misspecification of adjustments, our estimator maintains its consistency and
% its inference method still achieves an asymptotic size equal to the nominal
% level under the null. When the adjustment is correctly specified, our LATE
% estimator achieves the SEB. We also examine the
% efficiency gains brought by regression adjustments in parametric (both linear
% and nonlinear), nonparametric, and regularized forms. When the adjustment is
% parametrically misspecified, we construct a new estimator by combining the
% linear and nonlinear adjustments. This new estimator is shown to be weakly
% more efficient than all the parametrically adjusted estimators, including the
% one without any adjustment. Simulations and empirical application confirm
% efficiency gains that materialize from regression adjustments relative to both
% the estimator without adjustment and the widely used two-stage least squares
% estimator. 

\appendix

\section{Covariate-Adaptive Treatment Assignment Rules}
\label{sec:car_descri}

\begin{ex}
	[SRS] \label{ex:srs} Let $A_k$ be a Bernoulli random variable, independent of $\{S_i\}_{i=1,\neq k}^n$ and $\{A_i\}_{i=1}^{k-1}$, with success rate $\pi(s)$ when $S_k = s$ for $k=1,\ldots,n$. That is,
	\[
	\mathbb{P}\left(  A_{k}=1\big|\{S_{i}\}_{i=1}^{n},\{A_{i}\}_{i=1}^{k-1}\right)  =\mathbb{P}(A_{k}=1|S_k)=\pi(S_k).
	\]
\end{ex}

\begin{ex}
	[WEI] \label{ex:wei}  This design was first proposed by \cite{W78}. Let
	$n_{k-1}(S_{k}) = \sum_{i=1}^{k-1}1\{S_{i} = S_{k}\}$, $B_{k-1}(S_{k}) =
	\sum_{i=1}^{k-1}\left( A_{i} - \frac{1}{2} \right)  1\{S_{i} = S_{k}\}$, and
	\begin{align*}
		\mathbb{P}\left( A_{k} = 1\big| \{S_{i}\}_{i=1}^{k},\{A_{i}\}_{i=1}%
		^{k-1}\right)  = f\biggl(\frac{2B_{k-1}(S_{k})}{n_{k-1}(S_{k})}\biggr),
	\end{align*}
	where $f(\cdot):[-1,1] \mapsto[0,1]$ is a pre-specified non-increasing
	function satisfying $f(-x) = 1- f(x)$. Here, $\frac{B_{0}(S_{1})}{n_{0}%
		(S_{1})}$ and $B_{0}(S_{1})$ are understood to be zero.
\end{ex}

\begin{ex}
	[BCD] \label{ex:bcd}  The treatment status is determined sequentially for $1
	\leq k \leq n$ as
	\begin{align*}
		\mathbb{P}\left( A_{k} = 1| \{S_{i}\}_{i=1}^{k},\{A_{i}\}_{i=1}^{k-1}\right)
		=
		\begin{cases}
			\frac{1}{2} & \text{if }B_{k-1}(S_{k}) = 0\\
			\lambda & \text{if }B_{k-1}(S_{k}) < 0\\
			1-\lambda & \text{if }B_{k-1}(S_{k}) > 0,
		\end{cases}
	\end{align*}
	where $B_{k-1}(s)$ is defined as above and $\frac{1}{2}< \lambda\leq1$.
\end{ex}

\begin{ex}
	[SBR] \label{ex:sbr}  For each stratum, $\lfloor\pi(s) n(s) \rfloor$ units are
	assigned to treatment and the rest are assigned to control.
\end{ex}

\section{The S Estimator in \cite{anseletal2018}}
\label{sec:s}

\cite{anseletal2018} propose a LATE estimator adjusted with extra covariates.
It takes the form%

\begin{align*}
	\hat{\tau}_{S} := \frac{\sum_{s \in\mathcal{S}}\hat{p}(s)(\hat{\gamma}_{1s}^{Y} - \hat{\gamma}_{0s}^{Y} + (\hat{\nu}_{1s}^{Y} - \hat{\nu}_{0s}^{Y})^{\top}\bar{X}_{s}) }{\sum_{s \in\mathcal{S}}\hat{p}(s)(\hat{\gamma}_{1s}^{D} - \hat{\gamma}_{0s}^{D} + (\hat{\nu}_{1s}^{D} - \hat{\nu}_{0s}^{D})^{\top}\bar{X}_{s})},
\end{align*}
where $\hat{p}(s) := n(s)/n$, $\bar{X}_{s} := \frac{1}{n\hat{p}(s)}\sum_{i
	\in[n]}X_{i}1\{S_{i}=s\}$, and $(\hat{\gamma}_{as}^{Y},\hat{\gamma}_{as}^{D},\hat{\nu}_{as}^{Y},\hat{\nu}_{as}^{D})$ for $a=0,1$ are the estimated coefficients of the four sets of stratum-specific regressions using only the $s$ stratum:
\begin{align*}
	&  (1-A_{i})Y_{i} = (1-A_{i})(\gamma_{0s}^{Y}+X_{i}^{\top}\nu_{0s}^{Y} +
	e_{0i}^{Y}), \quad A_{i}Y_{i} = A_{i}(\gamma_{1s}^{Y}+X_{i}^{\top}\nu_{1s}^{Y}
	+ e_{1i}^{Y})\\
	&  (1-A_{i})D_{i} = (1-A_{i})(\gamma_{0s}^{D}+X_{i}^{\top}\nu_{0s}^{D} +
	e_{0i}^{D}), \quad A_{i}D_{i} = A_{i}(\gamma_{1s}^{D}+X_{i}^{\top}\nu_{1s}^{D}
	+ e_{1i}^{D}).
\end{align*}
Interpret $(\hat{\gamma}_{aS_i}^{Y},\hat{\gamma}_{aS_i}^{D},\hat{\nu}_{aS_i}^{Y},\hat{\nu}_{aS_i}^{D})$ for $a=0,1$ as the estimated coefficients of the four sets of stratum-specific regressions using only the $S_i$ stratum.

Under Assumption \ref{ass:assignment1} and Assumption \ref{ass:1iv} of our paper, \cite{anseletal2018} show that $\hat{\tau}_{S}$ is a consistent estimator of $\tau$, asymptotically normal, and the most efficient among the estimators studied in their paper ($\pi(s)$ can be heterogenous across strata). To define the explicit expression for the asymptotic variance of $\hat \tau_S$, denoted as $\sigma_{S}^{2}$, we need to introduce addition notation. For $s \in\mathcal{S}$, let $\tilde{X}_{is} :=
X_{i} - \mathbb{E}(X_{i}|S_{i}=s)$,
\begin{align*}
	\rho_{iS_i}(1)&:=\frac{Y_{i}(D_{i}(1)) - D_{i}(1)\tau- X_i^{\top}\nu_{1S_i}^{YD}}{\pi(S_i)}+X_{i}^{\top}(\nu_{1S_i}^{YD}-\nu_{0S_i}^{YD})\\
	\rho_{iS_i}(0)&:=\frac{Y_{i}(D_{i}(0)) - D_{i}(0)\tau- X_i^{\top}\nu_{0S_i}^{YD}}{1-\pi(S_i)}-X_{i}^{\top}(\nu_{1S_i}^{YD}-\nu_{0S_i}^{YD})\\
	\nu_{1s}^{YD} &: = \left[ \mathbb{E}(\tilde{X}_{is}\tilde{X}_{is}^{\top}|S_{i}=s)\right] ^{-1} \mathbb{E}\del[3]{\tilde{X}_{is}\sbr[2]{Y_i\del[1]{D_i(1)} - D_i(1)\tau }|S_i=s},\\
	\nu_{0s}^{YD} &:= \left[ \mathbb{E}(\tilde{X}_{is}\tilde{X}_{is}^{\top}|S_{i}=s)\right] ^{-1} \mathbb{E}\del[3]{\tilde{X}_{is}\sbr[2]{Y_i\del[1]{D_i(0)} - D_i(0)\tau }|S_i=s}.
\end{align*}
\begin{align*}
	\sigma_{S1}^{2}  &  := \mathbb{E}\sbr[3]{\pi(S_i)\cbr[2]{\rho_{iS_i}(1)-\mathbb{E}[\rho_{iS_i}(1)|S_i]}^2}\\
	\sigma_{S0}^{2}  &  := \mathbb{E}\sbr[3]{\del[1]{1-\pi(S_i)}\cbr[2]{\rho_{iS_i}(0)-\mathbb{E}[\rho_{iS_i}(0)|S_i]}^2}\\
	\sigma_{S2}^{2} &:=\mathbb{E}%
	\sbr[3]{\del[2]{\mathbb{E}\sbr[1]{Y_i(D_i(1)) - Y_i(D_i(0)) - \tau(D_i(1)-D_i(0))|S_i}}^2}.
	%\sigma_{S1}^{2}  &  := \mathbb{E}\biggl[\pi(S_{i}) \biggl\{\frac{Y_{i}(D_{i}(1)) - D_{i}(1)\tau- X_i^{\top}\nu_{1S}^{YD} - \mathbb{E}[Y_{i}(D_{i}(1))- D_{i}(1)\tau- X^{\top}_{i} \nu_{1S}^{YD}|S_i]}{\pi(S_i)}\\
	%&  \quad+X_{i}^{\top}(\nu_{1S}^{YD}-\nu_{0S}^{YD})-\mathbb{E}\sbr[1]{X_i^\top(\nu_{1S}^{YD}-\nu_{0S}^{YD})|S_i}\biggr\}^{2}\biggr]\\
	%\sigma_{S0}^{2}  &  := \mathbb{E}\biggl[(1-\pi(S_{i}))\biggl\{ \frac{Y_{i}(D_{i}(0)) - D_{i}(0)\tau- X_{i}^{\top}\nu_{0S}^{YD} - \mathbb{E}[Y_{i}(D_{i}(0)) - D_{i}(0)\tau- X_{i}^{\top}\nu_{0S}^{YD}|S_{i}]}{1-\pi(S_{i})}\\
	%&  \quad-\del[2]{X_i^\top(\nu_{1S}^{YD}-\nu_{0S}^{YD})-\mathbb{E}\sbr[1]{X_i^\top(\nu_{1S}^{YD}-\nu_{0S}^{YD})|S_i}}\biggr\}^{2}\biggr],\\
\end{align*}
In addition, define
\begin{align*}
	\hat{\rho}_{iS_i}(1)&:=\frac{Y_{i} - D_{i}\hat{\tau}_S- X_i^{\top}\hat{\nu}_{1S_i}^{YD}}{\hat{\pi}(S_i)}+X_{i}^{\top}(\hat{\nu}_{1S_i}^{YD}-\hat{\nu}_{0S_i}^{YD})\\
	\hat{\rho}_{iS_i}(0)&:=\frac{Y_{i} - D_{i}\hat{\tau}_S- X_i^{\top}\hat{\nu}_{0S_i}^{YD}}{1-\hat{\pi}(S_i)}-X_{i}^{\top}(\hat{\nu}_{1S_i}^{YD}-\hat{\nu}_{0S_i}^{YD})\\
	\hat{\sigma}_{S1}^2&:=\frac{1}{n}\sum_{i\in [n]}A_i\sbr[3]{\hat{\rho}_{iS_i}(1)-\frac{1}{n_1(S_i)}\sum_{j\in I_1(S_i)}\hat{\rho}_{jS_j}(1)}^2\\
	\hat{\sigma}_{S0}^2&:=\frac{1}{n}\sum_{i\in [n]}(1-A_i)\sbr[3]{\hat{\rho}_{iS_i}(0)-\frac{1}{n_0(S_i)}\sum_{j\in I_0(S_i)}\hat{\rho}_{jS_j}(0)}^2\\
	\hat{\sigma}_{S2}^2&:=\frac{1}{n}\sum_{i\in [n]}\del[3]{\frac{1}{n_1(S_i)}\sum_{j\in I_1(S_i)}(Y_j-\hat{\tau}_SD_j)-\frac{1}{n_0(S_i)}\sum_{j\in I_0(S_i)}(Y_j-\hat{\tau}_SD_j)}^2\\
	\hat{\sigma}_S^2&:=\frac{\hat{\sigma}_{S1}^2+\hat{\sigma}_{S0}^2+\hat{\sigma}_{S2}^2}{\del[1]{\sum_{s \in\mathcal{S}}\hat{p}(s)(\hat{\gamma}_{1s}^{D} - \hat{\gamma}_{0s}^{D} + (\hat{\nu}_{1s}^{D} - \hat{\nu}_{0s}^{D})^{\top}\bar{X}_{s})}^2}
\end{align*}
where $\hat{\nu}_{aS_i}^{YD}:= \hat{\nu}_{aS_i}^{Y}-\hat{\tau}_S\hat{\nu}_{aS_i}^{D}$ for $a=0,1$.  

\begin{thm}
	Suppose Assumptions \ref{ass:assignment1} and \ref{ass:1iv} hold. Then,
	\begin{enumerate}[label=(\roman*)]
		\item \begin{align}
			\sigma_{S}^{2} = \frac{\sigma_{S1}^{2} + \sigma_{S0}^{2}+\sigma_{S2}^{2}}{(\mathbb{E}[D_i(1) - D_i(0)])^{2}}.\label{eq:ansel_sigma}
		\end{align}
		
		\item \[\hat{\sigma}_S^2 \xrightarrow{p} \sigma_{S}^{2}.\]
	\end{enumerate}
	\label{thm:sigma_S}
\end{thm}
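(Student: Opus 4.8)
The plan is to recognize that \citeauthor{anseletal2018}'s S estimator is, \emph{numerically}, our doubly robust estimator $\hat\tau$ of \eqref{eq:tau0} fitted with the optimal linear working models \eqref{eq:muoverline_LP}--\eqref{eq:muhat_LP} at $\Psi_{i,s}=X_i$, that is $\hat\tau_S=\hat\tau_L$, and that $\hat\sigma_S^2$ equals the variance estimator $\hat\sigma^2$ of Theorem \ref{thm:est}(ii) evaluated at the same adjustments. Granting these two identities, part (i) and part (ii) follow from Theorem \ref{thm:est}(i)--(ii) (valid for $\hat\tau_L$ by Theorem \ref{thm:linear2} under Assumptions \ref{ass:assignment1} and \ref{ass:psi} specialized to $\Psi_{i,s}=X_i$), once the limiting variance $\sigma^2$ in \eqref{eq:sigma} is matched to the right-hand side of \eqref{eq:ansel_sigma}.

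\emph{Numerical identities.} By Frisch--Waugh--Lovell, the within-stratum OLS slope $\hat\nu^Y_{a,s}$ of $Y_i$ on $(1,X_i)$ using $I_a(s)$ equals the centered estimator $\hat\theta^L_{a,s}$ in \eqref{rr1} at $\Psi_{i,s}=X_i$, and likewise $\hat\nu^D_{a,s}=\hat\beta^L_{a,s}$. Taking $\hat\pi(s)=n_1(s)/n(s)$, the identity $\sum_{i\in I_1(s)}\hat\pi(s)^{-1}=n(s)$ together with the OLS relation $\bar Y_{1s}-\bar X_{1s}^\top\hat\nu^Y_{1,s}=\hat\gamma^Y_{1,s}$ gives $\sum_{i\in I_1(s)}\hat\pi(s)^{-1}(Y_i-X_i^\top\hat\nu^Y_{1,s})=n(s)\hat\gamma^Y_{1,s}$, and analogously for the control arm and for $D$; summing \eqref{eq:phi_G} (resp.\ \eqref{eq:phi_H}) over $i$ and dividing by $n$ then produces exactly the numerator (resp.\ denominator) of $\hat\tau_S$, so $\hat\tau_S=\hat\tau_L$. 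The same substitutions turn $\tilde\Xi_1(\mathcal D_i,S_i)$ into $\hat\rho_{iS_i}(1)$ and $\tilde\Xi_0(\mathcal D_i,S_i)$ into $-\hat\rho_{iS_i}(0)$, the within-stratum recenterings defining $\hat\Xi_1,\hat\Xi_0$ match those defining $\hat\sigma^2_{S1},\hat\sigma^2_{S0}$, $\hat\Xi_2(S_i)$ is literally the integrand of $\hat\sigma^2_{S2}$, and the two denominators coincide; hence $\hat\sigma_S^2=\hat\sigma^2$, and Theorem \ref{thm:est}(ii) immediately gives $\hat\sigma_S^2\convP\sigma^2$, i.e.\ part (ii) modulo $\sigma^2=\sigma_S^2$.

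\emph{Identifying the variance.} With $\overline\mu^Y(a,s,x)=x^\top\theta^L_{a,s}$ and $\overline\mu^D(a,s,x)=x^\top\beta^L_{a,s}$, \eqref{eq:mutilde} gives $\tilde\mu^Y(a,S_i,X_i)=\tilde X_{iS_i}^\top\theta^L_{aS_i}$ and $\tilde\mu^D(a,S_i,X_i)=\tilde X_{iS_i}^\top\beta^L_{aS_i}$, where $\tilde X_{iS_i}=X_i-\mathbb{E}[X_i|S_i]$, so by linearity $\tilde\mu^Y(a,S_i,X_i)-\tau\tilde\mu^D(a,S_i,X_i)=\tilde X_{iS_i}^\top\nu^{YD}_{aS_i}$ with $\nu^{YD}_{a,s}=\theta^L_{a,s}-\tau\beta^L_{a,s}$, which is exactly the object in the $\rho$-functions. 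Substituting into \eqref{eq:Xi1} collapses it to $\Xi_1(\mathcal D_i,S_i)=\rho_{iS_i}(1)-\mathbb{E}[\rho_{iS_i}(1)|S_i]$, and into \eqref{eq:Xi0} to $\Xi_0(\mathcal D_i,S_i)=-(\rho_{iS_i}(0)-\mathbb{E}[\rho_{iS_i}(0)|S_i])$, whence $\sigma_1^2=\sigma_{S1}^2$ and $\sigma_0^2=\sigma_{S0}^2$. Since Assumption \ref{ass:assignment1}(v) gives $D_i(1)-D_i(0)=1\{D_i(1)>D_i(0)\}$, we get $\mathbb{P}(D(1)>D(0))=\mathbb{E}[D_i(1)-D_i(0)]$, and by the definition of $\tau$ the unconditional recentering in \eqref{eq:Xi2} vanishes, so $\Xi_2(S_i)=\mathbb{E}[W_i-Z_i-\tau(D_i(1)-D_i(0))|S_i]$ and $\sigma_2^2=\sigma_{S2}^2$. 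Assembling, $\sigma^2=(\sigma_{S1}^2+\sigma_{S0}^2+\sigma_{S2}^2)/(\mathbb{E}[D_i(1)-D_i(0)])^2$, which is \eqref{eq:ansel_sigma}; with Theorem \ref{thm:est}(i) applied to $\hat\tau_L$ this gives $\sqrt{n}(\hat\tau_S-\tau)\rightsquigarrow\mathcal{N}(0,\sigma_S^2)$, completing part (i), and together with $\hat\sigma_S^2=\hat\sigma^2\convP\sigma^2=\sigma_S^2$ it completes part (ii).

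\emph{Main difficulty.} There is no analytic hard point; the work is the bookkeeping above, and the places demanding care are (a) tracking the combined coefficient $\nu^{YD}=\theta^L-\tau\beta^L$ that the doubly robust moment automatically produces when forming $\hat\tau=(\tfrac1n\sum\Xi_{H,i})^{-1}(\tfrac1n\sum\Xi_{G,i})$, (b) the stratum-specific location shifts, under which $\hat\tau$ and $\hat\sigma^2$ are invariant but for which the $\rho$- and $\Xi$-parametrizations make definite and (one must verify) compatible choices, and (c) the role of $\hat\pi(s)=n_1(s)/n(s)$, which is precisely what makes the inverse-propensity-weighted terms telescope into the stratum intercepts $\hat\gamma^b_{a,s}$. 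One should also confirm that the non-degeneracy and moment conditions implicit in \citeauthor{anseletal2018}'s setup coincide with Assumption \ref{ass:psi} at $\Psi_{i,s}=X_i$, so that Theorems \ref{thm:est} and \ref{thm:linear2} apply verbatim; note that Assumption \ref{ass:1iv} is then not actually needed, but is kept to stay within their framework.
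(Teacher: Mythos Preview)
Your proof is correct and takes a genuinely different route from the paper's. The paper proves Theorem~\ref{thm:sigma_S} by a self-contained derivation: it expands the numerator and denominator of $\hat\tau_S$ via within-stratum OLS algebra, obtains the asymptotic linear representation (display~\eqref{r10}) directly, and then runs the partial-sum/Brownian-motion argument (as in Lemma~\ref{lem:clt}) to get the CLT with the stated variance. Part~(ii) is then argued analogously to Theorem~\ref{thm:est}(ii).

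You instead establish the \emph{numerical} identities $\hat\tau_S=\hat\tau_L$ and $\hat\sigma_S^2=\hat\sigma^2$ (with the L-adjustment at $\Psi_{i,s}=X_i$), then invoke Theorems~\ref{thm:est} and~\ref{thm:linear2}. The algebra you sketch is right: Frisch--Waugh gives $\hat\theta^L_{a,s}=\hat\nu^Y_{a,s}$ and $\hat\beta^L_{a,s}=\hat\nu^D_{a,s}$; the OLS normal equations collapse $\sum_{i\in I_a(s)}(Y_i-X_i^\top\hat\nu^Y_{a,s})$ to $n_a(s)\hat\gamma^Y_{a,s}$, which together with $\hat\pi(s)=n_1(s)/n(s)$ turns $\hat G$ into the numerator of $\hat\tau_S$; and the $\tilde\Xi_a\leftrightarrow\pm\hat\rho_{iS_i}(a)$ and $\Xi_a\leftrightarrow\pm(\rho_{iS_i}(a)-\mathbb{E}[\rho_{iS_i}(a)\mid S_i])$ matchings go through exactly as you say. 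Your approach is shorter and more conceptual---it shows the S estimator is not merely asymptotically equivalent to $\hat\tau_L$ but literally an instance of the doubly robust construction---and, as you note, it reveals that Assumption~\ref{ass:1iv} is superfluous here (Assumption~\ref{ass:assignment1} together with the regularity in Assumption~\ref{ass:psi} suffices). The paper's direct proof, by contrast, is independent of the doubly robust machinery and so stands on its own; it also makes the decomposition $R_{n,1}+R_{n,2}$ and the asymptotic independence argument explicit rather than inheriting them from Lemma~\ref{lem:clt}.
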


It can be shown that $\sigma_{Sa}^{2} \geq\underline{\sigma}_{a}^{2}$ for $a
= 0,1$ and $\sigma_{S2}^{2} = \underline{\sigma}_{2}^{2}$, where the
inequalities are strict except special cases such as $\mathbb{E}(Y_{i}(D_{i}(a)) - D_{i}(a)\tau|X_{i},S_{i}=s)$ is linear in $X_{i}$, and $\underline{\sigma}_{a}^{2}$ for $a=0,1,2$ are defined in Theorem \ref{thm:eff}. This implies
in general, the S estimator is not semiparametrically most efficient.

%In Section \ref{sec:31} of the Online Supplement, we further derive the explicit expression for $\hat{\tau}_{S}$'s asymptotic variance in our notation, which is denoted as $\sigma_{S}^{2}$.

%The next result shows that $\hat{\tau}_S$ is more efficient than $\hat{\tau}_{TSLS}$.

\begin{thm}
	\label{thm:tau_TSLS} Suppose that Assumptions \ref{ass:assignment1} and \ref{ass:1iv} hold. Moreover, suppose that
	$\pi(s)$ is the same across $s\in\mathcal{S}$. Then $\hat{\tau}_{S}$ is more
	efficient than $\hat{\tau}_{TSLS}$ in the sense that $\sigma_{S}^{2}\leq
	\sigma_{TSLS}^{2}$.
\end{thm}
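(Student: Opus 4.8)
The plan is to deduce Theorem~\ref{thm:tau_TSLS} from the optimality of the linearly adjusted estimator (Theorem~\ref{thm:linear2}) together with the asymptotic equivalence of $\hat\tau_S$ and $\hat\tau_L$. Under Assumption~\ref{ass:1iv} with $\pi(s)=\pi$ for all $s\in\mathcal S$, Theorem~\ref{thm:TSLS} already gives $\hat\tau_{TSLS}\convP\tau$ and $\sqrt n(\hat\tau_{TSLS}-\tau)\convD\mathcal N(0,\sigma_{TSLS}^2)$, so the comparison is well posed. The first step is to exhibit $\hat\tau_{TSLS}$ as, asymptotically, a member of the class of linearly adjusted LATE estimators of the form \eqref{eq:tau0}. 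Partialling the strata dummies out of the two stages of \eqref{eq:TSLS} by Frisch--Waugh--Lovell, and using $\hat\pi(s)=n_1(s)/n(s)\convP\pi$ uniformly in $s$ together with $B_n(s)/\sqrt n=O_p(1)$ from Assumption~\ref{ass:1iv}, shows that $\hat\tau_{TSLS}$ has the same asymptotically linear expansion as the adjusted estimator $\hat\tau$ with the (generally misspecified) linear working models $\overline\mu^Y(a,s,X_i)=X_i^\top\delta$ and $\overline\mu^D(a,s,X_i)=X_i^\top\theta$, where $\theta$ and $\delta$ are the probability limits of the first- and second-stage slopes on $X_i$. The structurally important point is that these slopes are constant across $a\in\{0,1\}$ and $s\in\mathcal S$, whereas the class in Theorem~\ref{thm:linear} allows coefficients $(t_{a,s},b_{a,s})$ that are free functions of $a$ and $s$; hence the TSLS adjustment is a restricted, and in general suboptimal, element of that class, and in particular $\sigma_{TSLS}^2$ coincides with the variance formula \eqref{eq:sigma} evaluated at this constant-slope adjustment.

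Given this identification, Theorem~\ref{thm:linear2} applies with $\Psi_{i,s}=X_i$: $\hat\tau_L$ is weakly more efficient than every linearly adjusted estimator with regressors $X_i$, in particular than $\hat\tau_{TSLS}$, so $\sigma_L^2\le\sigma_{TSLS}^2$. It then remains to identify $\sigma_S^2$ with $\sigma_L^2$. For this one shows, as in Section~\ref{sec:s}, that the stratum-by-stratum, arm-by-arm OLS coefficients $(\hat\gamma_{as}^b,\hat\nu_{as}^b)$ defining $\hat\tau_S$ converge to the pseudo-true values that define the optimal linear adjustment \eqref{eq:optimallinear}; concretely, the regressions producing $\hat\nu_{as}^{YD}$ in Section~\ref{sec:s} are exactly the within-stratum, within-arm OLS fits underlying $\theta_{a,s}^L-\tau\beta_{a,s}^L=\nu_{as}^{YD}$, so $\hat\tau_S$ and $\hat\tau_L$ share the same influence function and $\sigma_S^2=\sigma_L^2$. (This last identity can also be checked directly by matching \eqref{eq:ansel_sigma} against \eqref{eq:sigma} evaluated at the optimal linear adjustment \eqref{eq:muoverline_LP}, term by term.) Chaining the two relations gives $\sigma_S^2=\sigma_L^2\le\sigma_{TSLS}^2$, which is the claim.

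The main obstacle is the first step: verifying rigorously that $\hat\tau_{TSLS}$ lies inside the linearly adjusted class up to $o_p(n^{-1/2})$. Two points need care. First, $\hat\tau_{TSLS}$ carries no explicit inverse-propensity weights $1/\hat\pi(S_i)$ and $1/(1-\hat\pi(S_i))$ of the kind in \eqref{eq:phi_H}--\eqref{eq:phi_G}; the equality of influence functions holds only because $\pi(s)$ is homogeneous, so one must track how the cross-stratum variation in $n_1(s)/n(s)$, of order $B_n(s)/n(s)=O_p(n^{-1/2})$, enters the linearization and show it affects only lower-order terms — this is precisely where Assumption~\ref{ass:1iv} and the homogeneity hypothesis are indispensable. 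Second, one must confirm that the $\sigma_{TSLS}^2$ defined in the proof of Theorem~\ref{thm:TSLS} equals the value of \eqref{eq:sigma} at the constant-slope linear adjustment above; this is bookkeeping once the influence-function representation is in hand, but it is what licenses invoking Theorem~\ref{thm:linear2}. The remaining ingredients — consistency of the stratum-wise OLS coefficients and the relevant moment bounds — are covered by Assumptions~\ref{ass:assignment1}, \ref{ass:psi}, and the regularity already imposed for Theorems~\ref{thm:TSLS} and \ref{thm:linear2}.
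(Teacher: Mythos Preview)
Your overall architecture matches the paper's: reduce the comparison to Theorem~\ref{thm:linear2} by identifying $\sigma_S^2$ with the optimal-linear variance $\sigma_L^2$, and then show $\sigma_L^2\le\sigma_{TSLS}^2$. The identification $\sigma_S^2=\sigma_L^2$ via $\nu_{as}^{YD}=\theta_{a,s}^L-\tau\beta_{a,s}^L$ is exactly what the paper uses.

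However, there is a genuine gap in your first step. You assert that $\hat\tau_{TSLS}$ has the \emph{same} asymptotically linear expansion as the doubly-robust estimator $\hat\tau$ with constant-slope adjustments $\overline\mu^Y(a,s,X_i)=X_i^\top\delta$, $\overline\mu^D(a,s,X_i)=X_i^\top\theta$, and that the $B_n(s)/\sqrt n$ contribution ``affects only lower-order terms.'' This is false in general. The doubly-robust estimator kills the $B_n(s)$ term because it divides by $\hat\pi(S_i)$; the TSLS estimator does not, and its influence function carries the additional piece
\[
\frac{1}{\sqrt n}\sum_{s\in\mathcal S}B_n(s)\Bigl[(1-\pi)\mathbb E(\varpi_i(1)\mid S_i=s)+\pi\,\mathbb E(\varpi_i(0)\mid S_i=s)\Bigr],
\]
which under Assumption~\ref{ass:1iv} is $O_p(1)$, not $o_p(1)$, and contributes exactly the term $\sigma_{TSLS,3}^2\ge0$ in the decomposition of $\sigma_{TSLS}^2$ from the proof of Theorem~\ref{thm:TSLS}. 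Your claim that ``$\sigma_{TSLS}^2$ coincides with the value of \eqref{eq:sigma} at the constant-slope adjustment'' is therefore incorrect whenever $\gamma(s)>0$ (e.g.\ under SRS).

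The fix is simple and is what the paper actually does: rather than matching influence functions, compare variances directly. One checks that $\sigma_{TSLS,0}^2+\sigma_{TSLS,1}^2$ equals $\sigma_0^2+\sigma_1^2$ in \eqref{eq:sigma} evaluated at the constant-slope linear adjustment (with $\theta_{1s}=\theta_{0s}$, $\beta_{1s}=\beta_{0s}$, $\theta_{1s}-\tau\beta_{1s}=\lambda_x^*$), that $\sigma_{TSLS,2}^2=\sigma_2^2$, and that $\sigma_{TSLS,3}^2\ge0$ is an \emph{extra} nonnegative piece. Then Theorem~\ref{thm:linear} gives $\sigma_0^{2,L}+\sigma_1^{2,L}\le\sigma_{TSLS,0}^2+\sigma_{TSLS,1}^2$, and adding the remaining terms yields $\sigma_S^2=\sigma_L^2\le\sigma_{TSLS}^2$.
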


%Fourth, we see $X_i$ enters $\Xi_a(\mathcal{D}_i,S_i)$ in a linear form.
Theorem \ref{thm:tau_TSLS} could be deduced from Theorem \ref{thm:linear2}. Both $\hat{\tau}_{TSLS}$ and $\hat{\tau}_{S}$ use linear adjustments of $X_{i}$, but Theorem \ref{thm:tau_TSLS} states that $\hat{\tau}_{S}$ is more efficient than $\hat{\tau}_{TSLS}$. In the discussion following Theorem \ref{thm:linear2}, we further show that $\hat{\tau}_{S}$ achieves the minimum asymptotic variance among the class of estimators with linear adjustments. On the other hand, nonlinear adjustments may be more efficient than the optimal linear adjustment. 

%In Section \ref{sec:bound}, we establish the semiparametric efficiency bound for $\tau$, which has not been previously calculated in the literature. Comparing the semiparametric efficiency bound and the expression of $\sigma_{S}^{2}$ derived in Section \ref{sec:31} of the Online Supplement, we observe that the S estimator in general does not reach the efficiency bound. In Section \ref{sec:estimation}, we propose a general regression-adjusted LATE estimator which allows for more general forms of adjustments, and thus can potentially achieve the semiparametric efficiency bound.

\section{Implementation Details for Sieve and Lasso Regressions}

\label{sec:aux_imp}

\paragraph{Sieve regressions.}

We provide more details on the sieve basis. Recall $\mathring{\Psi}_{i,n}
\equiv(b_{1,n}(x), \cdots, b_{h_{n}, n}(x))^{\top}$, where $\{b_{h,n }%
(\cdot)\}_{h \in[h_{n}]}$ are $h_{n}$ basis functions of a linear sieve space,
denoted as $\mathcal{B}$. Given that all the elements of vector $X$ are
continuously distributed, the sieve space $\mathcal{B}$ can be constructed as follows.

\begin{enumerate}
	\item For each element $X^{(l)}$ of $X$, $l=1,\cdots,d_{x}$, where $d_{x}$
	denotes the dimension of vector $X$, let $\mathcal{B}_{l}$ be the univariate
	sieve space of dimension $J_{n}$. One example of $\mathcal{B}_{l}$ is the
	linear span of the $J_{n}$ dimensional polynomials given by
	\[
	\mathcal{B}_{l} = \biggl\{\sum_{k=0}^{J_{n}}\alpha_{k} x^{k}, x \in
	\text{Supp}(X^{(l)}), \alpha_{k} \in\mathbb{R} \biggr\};
	\]

	Another example is the linear span of $r$-order splines with $J_{n}$ nodes
	given by
	\[
	\mathcal{B}_{l} = \biggl\{\sum_{k=0}^{r-1}\alpha_{k} x^{k} + \sum_{j=1}%
	^{J_{n}}b_{j}[\max(x-t_{j},0)]^{r-1}, x \in\text{Supp}(X^{(l)}), \alpha_{k},
	b_{j} \in\mathbb{R} \biggr\},
	\]
	where the grid $-\infty=t_{0} \leq t_{1} \leq\cdots\leq t_{J_{n}} \leq
	t_{J_{n}+1} = \infty$ partitions $\text{Supp}(X^{(l)})$ into $J_{n}+1$ subsets
	$I_{j} = [t_{j},t_{j+1}) \cap\text{Supp}(X^{(l)})$, $j=1,\cdots,J_{n}-1$,
	$I_{0} = (t_{0},t_{1}) \cap\text{Supp}(X^{(l)})$, and $I_{J_{n}} = (t_{J_{n}%
	},t_{J_{n}+1}) \cap\text{Supp}(X^{(l)})$.
	
	\item Let $\mathcal{B}$ be the tensor product of $\{\mathcal{B}_{l}%
	\}_{l=1}^{d_{x}}$, which is defined as a linear space spanned by the functions
	$\prod_{l=1}^{d_{x}} g_{l}$, where $g_{l} \in\mathcal{B}_{l}$. The dimension
	of $\mathcal{B}$ is then $K \equiv d_{x} J_{n}$ if $\mathcal{B}_{l}$ is
	spanned by $J_{n}$ dimensional polynomials.
\end{enumerate}

We refer interested readers to \cite{HIR03} and \cite{c07} for more details
about the implementation of sieve estimation. Given the sieve basis, we can
compute the $\{\hat{\mu}^{b}(a,s,X_{i})\}_{a = 0,1, b = D,Y, s \in\mathcal{S}%
}$ following \eqref{eq:muhat_np}.

\paragraph{Lasso regressions.}

We follow the estimation procedure and the choice of tuning parameter proposed
by \cite{BCFH13}. We provide details below for completeness. Recall
$\varrho_{n,a}(s) = c \sqrt{n_{a}(s)}F_{N}^{-1}( 1-1/(p_{n}\log(n_{a}(s))))$.
We set $c=1.1$ following \cite{BCFH13}. We then implement the following
algorithm to estimate $\hat{\theta}_{a,s}^{R}$ and $\hat{\beta}_{a,s}^{R}$: 

\begin{enumerate}
	[label=(\roman*)] 
	
	\item Let $\hat{\sigma}_{h}^{Y,(0)} = \frac{1}{n_{a}(s)}\sum_{i \in I_{a}%
		(s)}(Y_{i} - \bar{Y}_{a,s})^{2}\mathring{\Psi}_{i,n,h}^{2}$ and $\hat{\sigma
	}_{h}^{D,(0)} = \frac{1}{n_{a}(s)}\sum_{i \in I_{a}(s)}(D_{i} - \bar{D}%
	_{a,s})^{2}\mathring{\Psi}_{i,n,h}^{2}$ for $h \in[p_{n}]$, where $\bar
	{Y}_{a,s} = \frac{1}{n_{a}(s)}\sum_{i \in I_{a}(s)}Y_{i}$ and $\bar{D}_{a,s} =
	\frac{1}{n_{a}(s)}\sum_{i \in I_{a}(s)}D_{i}$. Estimate
	\begin{align*}
		\hat{\theta}_{a,s}^{R,0} =  &  \argmin_{t} \frac{-1}{n_{a}(s)}\sum_{i \in
			I_{a}(s)} \biggl(Y_{i} - \mathring{\Psi}_{i,n}^{\top}t\biggr)^{2} +
		\frac{\varrho_{n,a}(s)}{n_{a}(s)} \sum_{h \in[p_{n}]} \hat{\sigma}_{h}%
		^{Y,(0)}|t_{h}|,\\
		\hat{\beta}_{a,s}^{R,0} =  &  \argmin_{b} \frac{-1}{n_{a}(s)}\sum_{i \in
			I_{a}(s)} \biggl[D_{i}\log(\lambda(\mathring{\Psi}_{i,n}^{\top}b)) +
		(1-D_{i})\log(1-\lambda(\mathring{\Psi}_{i,n}^{\top}b))\biggr]\\
		+  &  \frac{\varrho_{n,a}(s)}{n_{a}(s)} \sum_{h \in[p_{n}]} \hat{\sigma}%
		_{h}^{D,(0)}|b_{h}|.
	\end{align*}

	\item For $k = 1,\cdots,K$, obtain $\hat{\sigma}_{h}^{Y,(k)} = \sqrt{\frac
		{1}{n}\sum_{i \in[n]} (\mathring{\Psi}_{i,n,h}\hat{\varepsilon}_{i}%
		^{Y,(k)})^{2}}$, where $\hat{\varepsilon}_{i}^{Y,(k)} = Y_{i} - \mathring
	{\Psi}_{i,n}^{\top} \hat{\theta}_{a,s}^{R,k-1}$ and $\hat{\sigma}_{h}^{D,(k)}
	= \sqrt{\frac{1}{n}\sum_{i \in[n]} (\mathring{\Psi}_{i,n,h}\hat{\varepsilon
		}_{i}^{D,(k)})^{2}}$, where $\hat{\varepsilon}_{i}^{D,(k)} = D_{i} -
	\lambda(\mathring{\Psi}_{i,n}^{\top}\hat{\beta}_{a,s}^{R,k-1})$. Estimate
	\begin{align*}
		\hat{\theta}_{a,s}^{R,k} =  &  \argmin_{t} \frac{-1}{n_{a}(s)}\sum_{i \in
			I_{a}(s)} \del[2]{Y_i - \mathring{\Psi}_{i,n}^\top t}^{2} + \frac
		{\varrho_{n,a}(s)}{n_{a}(s)} \sum_{h \in[p_{n}]} \hat{\sigma}_{h}%
		^{Y,(k-1)}|t_{h}|,\\
		\hat{\beta}_{a,s}^{R,k} =  &  \argmin_{b} \frac{-1}{n_{a}(s)}\sum_{i \in
			I_{a}(s)}
		\sbr[2]{D_i\log(\lambda(\mathring{\Psi}_{i,n}^\top b)) + (1-D_i)\log(1-\lambda(\mathring{\Psi}_{i,n}^\top b))}\\
		\qquad+  & \frac{\varrho_{n,a}(s)}{n_{a}(s)} \sum_{h \in[p_{n}]} \hat{\sigma
		}_{h}^{D,(k-1)}|b_{h}|.
	\end{align*}

	\item Let $\hat{\theta}_{a,s}^{R} = \hat{\theta}_{a,s}^{R,K}$ and $\hat{\beta
	}_{a,s}^{R} = \hat{\beta}_{a,s}^{R,K}$. 
\end{enumerate}

% \section{The Explicit Expression for $\sigma_{S}^{2}$}
% \label{sec:31} 

\section{Regression Adjustment under Full Compliance}
\label{sec:full_compliance}
In this section, we briefly discuss the regression adjustment under full compliance. We aim to construct consistent and efficient estimators for the average treatment effect (ATE).  Under full compliance, we have $D(a) = a$ for $a = 0,1$ so that $D = A$. The estimator $\hat \mu^D(a,s,x) = a$ is correctly specified. Then, our proposed estimator of ATE is%
\begin{align}
	\hat{\tau}_{ATE}  &  := \frac{1}{n}\sum_{i \in [n]} \left[ \frac{A_{i}(Y_{i} - \hat{\mu}^{Y}(1,S_{i},X_{i}))}{\hat{\pi
		}(S_{i})} - \frac{(1-A_{i})(Y_{i}-\hat{\mu}^{Y}(0,S_{i},X_{i}))}{1-\hat{\pi
		}(S_{i})} + \hat{\mu}^{Y}(1,S_{i},X_{i})-\hat{\mu}^{Y}(0,S_{i},X_{i}) \right], \label{eq:ate}%
\end{align}
where $\hat \mu^Y(a,s,x)$ is an estimator of the working model $\overline \mu^Y(a,s,x)$. 

The optimal linear adjustment is $\hat \mu^Y(a,s,X_i) = \Psi_{i,s}^{\top}\hat \theta_{a,s}^L$, where 
\begin{align}
	\dot{\Psi}_{i,a,s}  &  := \Psi_{i,s} - \frac{1}{n_{a}(s)}\sum_{i\in I_{a}%
		(s)}\Psi_{i,s}\nonumber\\
	\hat{\theta}_{a,s}^{L}  &  :=
	\del[3]{\frac{1}{n_a(s)}\sum_{i \in I_a(s)}\dot{\Psi}_{i,a,s}\dot{\Psi}_{i,a,s}^\top }^{-1}%
	\del[3]{\frac{1}{n_a(s)}\sum_{i \in I_a(s)}\dot{\Psi}_{i,a,s}Y_i }\nonumber.
\end{align}
We can show that such an adjustment achieves the minimal variance of the ATE estimator that is adjusted by linear functions of $\Psi_{i,s}$. 

Let $\mathring{\Psi}_{i,n}$ contains sieve bases of $X_i$. Then, the nonparametric adjustment can be written as $\hat \mu^Y(a,s,X_i) = \mathring{\Psi}_{i,n}^\top \hat{\theta}_{a,s}^{NP}$, where
\begin{align*}
	&  \hat{\theta}_{a,s}^{NP}%
	=\del[3]{\frac{1}{n_a(s)} \sum_{i \in I_a(s)} \mathring{\Psi}_{i,n} \mathring{\Psi}_{i,n}^{\top}}^{-1}%
	\del[3]{\frac{1}{n_a(s)} \sum_{i \in I_a(s)} \mathring{\Psi}_{i,n} Y_i}.
\end{align*}

Last, suppose $\mathring{\Psi}_{i,n}$ contains high-dimensional regressors of $X_i$. Then, the regularized adjustment can be written as $\hat \mu^Y(a,s,X_i) = \mathring{\Psi}_{i,n}^\top \hat{\theta}_{a,s}^{R}$,
where
\begin{align*}
	\hat{\theta}_{a,s}^{R}= &  \argmin_{t}\frac{-1}{n_{a}(s)}\sum_{i\in I_{a}%
		(s)}\del[1]{Y_i - \mathring{\Psi}_{i,n}^\top t}^{2}+\frac{\varrho_{n,a}%
		(s)}{n_{a}(s)}||\hat{\Omega}^{Y}t||_{1},
\end{align*}
$\{\varrho_{n,a}(s)\}_{a=0,1,s\in\mathcal{S}}$ are tuning parameters, $\hat{\Omega}^{Y}=\text{diag}(\hat{\omega}_{1}^{Y},\cdots,\hat{\omega
}_{p_{n}}^{Y})$ is a diagonal matrix of data-dependent penalty loadings as defined in Section
\ref{sec:aux_imp}. Under similar conditions as in Assumptions \ref{ass:np} and \ref{ass:hd}, we can show that the ATE estimator with both the nonparametric and regularized adjustments achieves the semiparametric efficiency bound.

\section{Proof of Theorem \ref{thm:TSLS}}

\label{sec:aux_TSLS_proof}

	We define $\hat    \sigma_{TSLS,naive}^2$ as 
	\begin{align*}
		\hat    \sigma_{TSLS,naive}^2 = e_1^\top[S_{\bar Z,\bar X}^\top S_{\bar Z,\bar Z}^{-1}S_{\bar Z,\bar X}]^{-1} \left[S_{\bar Z,\bar X}^\top S_{\bar Z,\bar Z}^{-1} \left(\frac{1}{n}\sum_{i=1}^n(\bar Z_i \bar Z_i^\top \hat \eps_i^2)\right) S_{\bar Z,\bar Z}^{-1} S_{\bar Z,\bar X}\right] [S_{\bar Z,\bar X}^\top S_{\bar Z,\bar Z}^{-1} S_{\bar Z,\bar X}]^{-1}e_1,
	\end{align*}
	where $\bar X_i = (D_i,\{1\{S_i=s\}\}_{s \in \mathcal{S}},X_i^\top)^\top$ $\bar Z_i = (A_i,\{1\{S_i=s\}\}_{s \in \mathcal{S}},X_i^\top)^\top$, $S_{\bar Z, \bar Z} = \frac{1}{n}\sum_{i \in [n]}\bar Z_i \bar Z_i^\top $,
	$S_{\bar Z, \bar X} = \frac{1}{n}\sum_{i \in [n]}\bar Z_i \bar X_i^\top $, $e_1$ is a vector with its first element being one and the rest being zero, $\hat \eps_i = Y_i - \hat \tau_{TSLS} D_i - \sum_{s\in \mathcal{S}}\hat \alpha_{s,TSLS}1\{S_i=s\} - X_i^\top \hat \delta_{TSLS}$, and $(\hat \tau_{TSLS},\hat \alpha_{s,TSLS},\hat \delta_{TSLS})$ are the usual TSLS estimators. 
	
	Next, we define $\sigma_{TSLS}^{2} $ and $\sigma_{TSLS,naive}^{2}$. Let $\mathbb{X}_{i} = (X_{i}^{\top}, \{1\{S_{i}=s\}\}_{s \in\mathcal{S}%
	})^{\top}$,
	\begin{align*}
		\sigma_{TSLS}^{2}  &  = \frac{ \sigma_{TSLS,0}^{2}+\sigma_{TSLS,1}^{2} +
			\sigma_{TSLS,2}^{2} + \sigma_{TSLS,3}^{2}}{ (\mathbb{E}(D_{i}(1) -
			D_{i}(0)))^{2} },\\
		\sigma_{TSLS,1}^{2}  &  = \frac{\mathbb{E}%
			\sbr[1]{Y_i(D_i(1)) - D_i(1)\tau - \mathbb{X}_i^\top \lambda^* - \mathbb{E}[Y_i(D_i(1)) - D_i(1)\tau - \mathbb{X}_i^\top \lambda^*|S_i]}^{2}%
		}{\pi}\\
		\sigma_{TSLS,0}^{2}  &  = \frac{\mathbb{E}%
			\sbr[1]{Y_i(D_i(0)) - D_i(0)\tau - \mathbb{X}_i^\top \lambda^* - \mathbb{E}[Y_i(D_i(0)) - D_i(0)\tau - \mathbb{X}_i^\top \lambda^*|S_i]}^{2}%
		}{1-\pi},\\
		\sigma_{TSLS,2}^{2}  &  = \mathbb{E}\left[ \mathbb{E}%
		\sbr[1]{Y(D(1)) - Y(D(0)) - (D(1)-D(0))\tau|S_i} \right] ^{2},\\
		\sigma_{TSLS,3}^{2}  &  = \mathbb{E}\left\lbrace \gamma(S_{i}%
		)\del[3]{\mathbb{E}\sbr[3]{\frac{Y_i(D_i(1)) - D_i(1)\tau - \mathbb{X}_i^\top \lambda^*}{\pi}+\frac{Y_i(D_i(0)) - D_i(0)\tau - \mathbb{X}_i^\top \lambda^*}{1-\pi}\bigg|S_i} }^{2}%
		\right\rbrace , \\
		\lambda^{*}  &  = \left( \mathbb{E}\mathbb{X}_{i}\mathbb{X}_{i}^{\top}\right)
		^{-1}\mathbb{E}\mathbb{X}_{i}\left[ \pi(Y_{i}(D_{i}(1)) - D_{i}(1)\tau) +
		(1-\pi)(Y_{i}(D_{i}(0)) - D_{i}(0)\tau)\right] .
	\end{align*}
	Furthermore, define
	\begin{align*}
		& \sigma_{TSLS,naive}^{2}   = \frac{ \sigma_{TSLS,0}^{2}+\sigma_{TSLS,1}^{2} +
			\sigma_{TSLS,2}^{2} + \tilde \sigma_{TSLS,3}^{2}}{ (\mathbb{E}(D_{i}(1) -
			D_{i}(0)))^{2} },\quad \text{where}\\
		& \tilde \sigma_{TSLS,3}^{2} = \mathbb{E}\left\lbrace \pi(1-\pi)\del[3]{\mathbb{E}\sbr[3]{\frac{Y_i(D_i(1)) - D_i(1)\tau - \mathbb{X}_i^\top \lambda^*}{\pi}+\frac{Y_i(D_i(0)) - D_i(0)\tau - \mathbb{X}_i^\top \lambda^*}{1-\pi}\bigg|S_i} }^{2}%
		\right\rbrace. 
	\end{align*}
	By definition, $\sigma_{TSLS}^2 \leq \sigma_{TSLS,naive}^2$. The inequality is strict if $\gamma(s) < \pi(1-\pi)$.

	Define $\tilde{A}_{i}$ as the residual from the regression of $A_{i}$ on
	$X_{i}$ and $\{1\{S_{i}=s\}\}_{s \in\mathcal{S}}$. Then, we have
	\begin{align*}
		\hat{\tau}_{TSLS} = \frac{\sum_{i \in[n]}\tilde{A}_{i}Y_{i}}{\sum_{i \in
				[n]}\tilde{A}_{i}D_{i}}=\frac{\sum_{i\in[n]}\del[1]{A_i-\pi(S_i)}Y_{i}%
			+\sum_{i\in[n]}R_{i}Y_{i}}{\sum_{i\in[n]}\del[1]{A_i-\pi(S_i)}D_{i}+\sum
			_{i\in[n]}R_{i}D_{i}},
	\end{align*}
	where $R_{i} = \tilde{A_{i}} - (A_{i} - \pi(S_{i}))$. We first suppose that
	\begin{align}
		\frac{1}{n}\sum_{i \in[n]} R_{i}Y_{i} = o_{p}(1) \quad\text{and} \quad\frac
		{1}{n}\sum_{i \in[n]} R_{i}D_{i} = o_{p}(1).\label{eq:Ri}%
	\end{align}
	In addition, we note that
	\begin{align*}
		\frac{1}{n}\sum_{i \in[n]} (A_{i}-\pi(S_{i}))Y_{i}  &  = \frac{1}{n}\sum_{i
			\in[n]} A_{i}(1-\pi(S_{i}))Y_{i}(D_{i}(1)) - \frac{1}{n}\sum_{i \in[n]}
		(1-A_{i})\pi(S_{i})Y_{i}(D_{i}(0)).
	\end{align*}
	For the first term on the RHS of the above display, we have
	\begin{align}
		&  \frac{1}{n}\sum_{i \in[n]} A_{i}(1-\pi(S_{i}))Y_{i}(D_{i}(1))\nonumber\\
		&  = \frac{1}{n}\sum_{i \in[n]} A_{i}(1-\pi(S_{i}))(Y_{i}(D_{i}(1)) -
		\mathbb{E}(Y_{i}(D_{i}(1))|S_{i})) + \frac{1}{n}\sum_{i \in[n]} A_{i}%
		(1-\pi(S_{i}))\mathbb{E}(Y_{i}(D_{i}(1))|S_{i})\nonumber\\
		&  = o_{p}(1) + \frac{1}{n}\sum_{i \in[n]} \pi(S_{i})(1-\pi(S_{i}%
		))\mathbb{E}(Y_{i}(D_{i}(1))|S_{i}) + \frac{1}{n}\sum_{s \in\mathcal{S}}%
		B_{n}(s)(1-\pi(s))\mathbb{E}(Y_{i}(D_{i}(1))|S_{i}=s)\nonumber\\
		&  = \mathbb{E}\pi(S_{i})(1-\pi(S_{i}))\mathbb{E}(Y_{i}(D_{i}(1))|S_{i}) +
		o_{p}(1),\label{eq:AY}%
	\end{align}
	where the second equality is by conditional Chebyshev's inequality using the
	facts that
	\begin{align*}
		&  \mathbb{E}
		\sbr[3]{\frac{1}{n}\sum_{i \in [n]} A_i(1-\pi(S_i))(Y_i(D_i(1)) - \mathbb{E}(Y_i(D_i(1))|S_i))\bigg| \{A_i,S_i\}_{i\in [n]} }
		= 0\\
		&  \mathbb{E}
		\sbr[3]{ \del[3]{\frac{1}{n}\sum_{i \in [n]} A_i(1-\pi(S_i))(Y_i(D_i(1)) - \mathbb{E}(Y_i(D_i(1))|S_i))}^2\bigg| \{A_i,S_i\}_{i\in [n]} }\\
		&  \leq\sum_{s \in\mathcal{S}} \frac{n_{1}(s)(1-\pi(s))^{2} \mathbb{E}%
			(Y^{2}(D(1)) |S_{i}=s)}{n^{2}} = o_{p}(1),
	\end{align*}
	and the third equality is by Assumption \ref{ass:assignment1}(iv) and the
	usual LLN. For the same reason, we have
	\begin{align*}
		&  \frac{1}{n}\sum_{i \in[n]} (1-A_{i})\pi(S_{i})Y_{i}(D_{i}(0))
		\overset{p}{\longrightarrow} \mathbb{E}\pi(S_{i})(1-\pi(S_{i}))\mathbb{E}%
		(Y_{i}(D_{i}(0))|S_{i}),\\
		&  \frac{1}{n}\sum_{i \in[n]} A_{i}(1-\pi(S_{i}))D_{i}(1)
		\overset{p}{\longrightarrow} \mathbb{E}\pi(S_{i})(1-\pi(S_{i}))\mathbb{E}%
		(D_{i}(1)|S_{i}),\\
		&  \frac{1}{n}\sum_{i \in[n]} (1-A_{i})\pi(S_{i})D_{i}(0)
		\overset{p}{\longrightarrow} \mathbb{E}\pi(S_{i})(1-\pi(S_{i}))\mathbb{E}%
		D_{i}(0)|S_{i}),
	\end{align*}
	and
	\begin{align*}
		\hat{\tau}_{TSLS} \overset{p}{\longrightarrow} \frac{\mathbb{E}\pi(S_{i}%
			)(1-\pi(S_{i}))(\mathbb{E}(Y_{i}(D_{i}(1))|S_{i})-\mathbb{E}(Y_{i}%
			(D_{i}(0))|S_{i}) ) }{\mathbb{E}\pi(S_{i})(1-\pi(S_{i}))(\mathbb{E}%
			(D_{i}(1)|S_{i})-\mathbb{E}(D_{i}(0)|S_{i}) )}.
	\end{align*}
	Therefore, it is only left to show \eqref{eq:Ri}. Let $\mathbb{X}_{i} =
	(X_{i}^{\top},\{1\{S_{i}=s\}\}_{s \in\mathcal{S}})^{\top}$, $\hat{\theta}$ be
	the OLS coefficient of regressing $A_{i}$ on $\mathbb{X}_{i}$, and $\theta=
	(0_{d_{x}}^{\top}, \{\pi(s)\}_{s \in\mathcal{S}})^{\top}$, where $d_{x}$ is
	the dimension of $X_{i}$. Then, we have $R_{i} = -\mathbb{X}_{i}^{\top}%
	(\hat{\theta} - \theta)$. In order to show \eqref{eq:Ri}, it suffices to show
	$\hat{\theta} \overset{p}{\longrightarrow} \theta$, or equivalently, $\frac
	{1}{n} \sum_{i \in[n]}\mathbb{X}_{i} (A_{i} - \pi(S_{i}))
	\overset{p}{\longrightarrow} 0$. We note that
	\begin{align}
		&  \frac{1}{n} \sum_{i \in[n]}\mathbb{X}_{i} (A_{i} - \pi(S_{i})) = \frac
		{1}{n} \sum_{i \in[n]}(\mathbb{X}_{i} - \mathbb{E}(\mathbb{X}_{i}|S_{i}))
		(A_{i} - \pi(S_{i})) + \frac{1}{n} \sum_{i \in[n]} \mathbb{E}(\mathbb{X}%
		_{i}|S_{i}) (A_{i} - \pi(S_{i}))\nonumber\\
		&  = \frac{1}{n} \sum_{i \in[n]}(\mathbb{X}_{i} - \mathbb{E}(\mathbb{X}%
		_{i}|S_{i})) A_{i}(1 - \pi(S_{i})) - \frac{1}{n} \sum_{i \in[n]}%
		(\mathbb{X}_{i} - \mathbb{E}(\mathbb{X}_{i}|S_{i})) (1-A_{i})\pi(S_{i}) +
		\frac{1}{n} \sum_{s \in\mathcal{S}} \mathbb{E}(\mathbb{X}_{i}|S_{i}%
		=s)B_{n}(s)\nonumber\\
		& = o_{p}(1),\label{eq:tau41}%
	\end{align}
	where the last equality holds following the similar argument in \eqref{eq:AY}.
	This concludes the proof of the first statement.
	
	For the second statement, let $\mathbb{X}_{i} = (X_{i}^{\top}, \{1\{S_{i}%
	=s\}\}_{s \in\mathcal{S}})^{\top}$,
	\begin{align*}
		\hat{\theta} =
		\del[3]{\frac{1}{n}\sum_{i \in [n]}\mathbb{X}_i\mathbb{X}_i^\top }^{-1}
		\del[3]{\frac{1}{n}\sum_{i \in [n]}\mathbb{X}_iA_i},
	\end{align*}
	$\widetilde{A}_{i} = A_{i} -\mathbb{X}_{i}^{\top}\hat{\theta}$, and $\theta=
	(0_{d_{x}}^{\top},\pi,\cdots,\pi)^{\top}$. Then, we have
	\begin{align*}
		\sqrt{n}(\hat{\tau}_{TSLS}-\tau) = \frac{\frac{1}{\sqrt{n}}\sum_{i \in
				[n]}\widetilde{A}_{i}(Y_{i} - D_{i} \tau)}{\frac{1}{n}\sum_{i \in
				[n]}\widetilde{A}_{i}D_{i}}.
	\end{align*}
	By the same argument in the proof of the first statement of Theorem
	\ref{thm:TSLS}, we have
	\begin{align*}
		\frac{1}{n}\sum_{i \in[n]}\widetilde{A}_{i}D_{i} \overset{p}{\longrightarrow}
		\pi(1-\pi)\mathbb{E}(D(1)-D(0)).
	\end{align*}

	Next, we turn to the numerator. We have
	\begin{align*}
		& \frac{1}{\sqrt{n}}\sum_{i \in[n]}\widetilde{A}_{i}(Y_{i} - D_{i} \tau) =
		\frac{1}{\sqrt{n}}\sum_{i \in[n]}(A_{i} - \mathbb{X}_{i}^{\top}\theta-
		\mathbb{X}_{i}^{\top}(\hat{\theta} - \theta))(Y_{i} - D_{i} \tau)\\
		& =\frac{1}{\sqrt{n}}\sum_{i \in[n]}(A_{i} - \pi)(Y_{i} - D_{i} \tau) -
		\frac{1}{n}\sum_{i \in[n]} \mathbb{X}_{i}^{\top}(Y_{i}-D_{i}\tau)
		\del[3]{\frac{1}{n}\sum_{i \in [n]}\mathbb{X}_i\mathbb{X}_i^\top }^{-1}
		\del[3]{\frac{1}{\sqrt{n}}\sum_{i \in [n]}\mathbb{X}_i(A_i - \pi)}.
	\end{align*}
	where the second equality uses the facts that $\mathbb{X}_{i}^{\top}\theta=
	\pi$ and
	\begin{align*}
		\hat{\theta} - \theta &  =
		\del[3]{\frac{1}{n}\sum_{i \in [n]}\mathbb{X}_i\mathbb{X}_i^\top }^{-1}
		\del[3]{\frac{1}{n}\sum_{i \in [n]}\mathbb{X}_i(A_i - \mathbb{X}_i^\top \theta)}
		= \del[3]{\frac{1}{n}\sum_{i \in [n]}\mathbb{X}_i\mathbb{X}_i^\top }^{-1}
		\del[3]{\frac{1}{n}\sum_{i \in [n]}\mathbb{X}_i(A_i - \pi)}.
	\end{align*}

	We first consider the joint convergence of $\frac{1}{\sqrt{n}}\sum_{i \in
		[n]}(A_{i} - \pi)(Y_{i} - D_{i} \tau) $ and $\frac{1}{\sqrt{n}}\sum_{i \in
		[n]}\mathbb{X}_{i}(A_{i} - \pi)$. Let $\lambda_{1}$ be a scalar and
	$\lambda_{2} \in\Re^{d_{x}}$. Then, it suffices to consider the weak
	convergence of $\frac{1}{\sqrt{n}}\sum_{i \in[n]}(A_{i} - \pi)(\lambda
	_{1}(Y_{i} - D_{i}\tau) + \lambda_{2}^{\top}\mathbb{X}_{i})$. Let $\varpi_{i}=\lambda_{1}(Y_{i} - D_{i}\tau) + \lambda_{2}^{\top}\mathbb{X}_{i}$ and
	$\varpi_{i}(a) = \lambda_{1}(Y_{i}(D_{i}(a)) - D_{i}(a)\tau) + \lambda_{2}%
	^{\top}\mathbb{X}_{i}$. Note that $\varpi_{i}=A_{i}\varpi_{i}(1)+(1-A_{i})\varpi_{i}(0)$. We have 
	\begin{align}
		& \frac{1}{\sqrt{n}}\sum_{i \in[n]}(A_{i} - \pi)\varpi_{i} = \frac{1}{\sqrt{n}%
		}\sum_{i \in[n]}\left[ A_{i}(1 - \pi)\varpi_{i}(1) - (1-A_{i})\pi\varpi
		_{i}(0)\right] \nonumber\\
		&  = \frac{1}{\sqrt{n}}\sum_{i \in[n]}\left[ A_{i}(1 - \pi)(\varpi
		_{i}(1)-\mathbb{E}(\varpi_{i}(1)|S_{i})) - (1-A_{i})\pi(\varpi_{i}(0)-\mathbb{E}%
		(\varpi_{i}(0)|S_{i}))\right] \nonumber\\
		& \qquad+ \frac{1}{\sqrt{n}}\sum_{i \in[n]}\left[ A_{i}(1 - \pi)\mathbb{E}%
		(\varpi_{i}(1)|S_{i}) - (1-A_{i})\pi\mathbb{E}(\varpi_{i}(0)|S_{i})\right]
		\nonumber\\
		&  = \frac{1}{\sqrt{n}}\sum_{i \in[n]}\left[ A_{i}(1 - \pi)(\varpi
		_{i}(1)-\mathbb{E}(\varpi_{i}(1)|S_{i})) - (1-A_{i})\pi(\varpi_{i}(0)-\mathbb{E}%
		(\varpi_{i}(0)|S_{i}))\right] \nonumber\\
		&  \qquad+ \frac{1}{\sqrt{n}} \sum_{s \in\mathcal{S}} B_{n}(s)\left[ (1 -
		\pi)\mathbb{E}(\varpi_{i}(1)|S_{i}=s)+\pi\mathbb{E}(\varpi_{i}(0)|S_{i}=s)\right]
		+ \frac{\pi(1-\pi)}{\sqrt{n}}\sum_{i \in[n]}\mathbb{E}(\varpi_{i}(1)-\varpi
		_{i}(0)|S_{i})\nonumber\\
		&  = \frac{1}{\sqrt{n}}\sum_{i \in[n]}\left[ A_{i}(1 - \pi)(\varpi
		_{i}(1)-\mathbb{E}(\varpi_{i}(1)|S_{i})) - (1-A_{i})\pi(\varpi_{i}(0)-\mathbb{E}%
		(\varpi_{i}(0)|S_{i}))\right] \nonumber\\
		&  \qquad+ \frac{1}{\sqrt{n}} \sum_{s \in\mathcal{S}} B_{n}(s)\left[ (1 -
		\pi)\mathbb{E}(\varpi_{i}(1)|S_{i}=s)+\pi\mathbb{E}(\varpi_{i}(0)|S_{i}=s)\right]
		\nonumber\\
		&  \qquad+ \frac{\pi(1-\pi)}{\sqrt{n}}\sum_{i \in[n]}\left( \mathbb{E}%
		(\varpi_{i}(1)-\varpi_{i}(0)|S_{i}) - \mathbb{E}(\varpi_{i}(1) - \varpi_{i}(0))
		\right) \nonumber\\
		&  \rightsquigarrow\mathcal{N}(0,\Sigma^{2}),\label{r1s}%
	\end{align}
	where
	\begin{align*}
		\Sigma^{2}  &  = (1-\pi)\pi\left[  (1-\pi)\mathbb{E}%
		\sbr[1]{\varpi_i(1) - \mathbb{E}(\varpi_i(1)|S_i)}^{2} + \pi\mathbb{E}%
		\sbr[1]{\varpi_i(0) - \mathbb{E}(\varpi_i(0)|S_i)}^{2}\right] \\
		& \qquad+ \mathbb{E}%
		\sbr[3]{\gamma(S_i)\del[2]{\mathbb{E}\sbr[1]{(1-\pi)\varpi_i(1) + \pi\varpi_i(0)|S_i} }^2 }+
		\pi^{2}(1-\pi)^{2} \mathbb{E}%
		\del[2]{\mathbb{E}\sbr[1]{\varpi_i(1) - \varpi_i(0)|S_i} }^{2},
	\end{align*}
	the last convergence in distribution is by a similar argument in the proof of
	\citet[Lemma B.2]{BCS17} and the fact that
	\begin{align*}
		\mathbb{E}(\varpi_{i}(1) - \varpi_{i}(0)) = \lambda_{1}\mathbb{E}( Y_{i}(D_{i}(1))
		- Y_{i}(D_{i}(1)) - (D_{i}(1) - D_{i}(0))\tau) = 0.
	\end{align*}
	Thus (\ref{r1s}) implies both $\frac{1}{\sqrt{n}}\sum_{i \in[n]}(A_{i} -
	\pi)(Y_{i} - D_{i} \tau) $ and $\frac{1}{\sqrt{n}}\sum_{i \in[n]}%
	\mathbb{X}_{i}(A_{i} - \pi)$ are $O_{p}(1)$. In addition, let $\hat{\lambda} =
	\left( \frac{1}{n}\sum_{i \in[n]}\mathbb{X}_{i}\mathbb{X}_{i}^{\top}\right)
	^{-1}\frac{1}{n}\sum_{i \in[n]} \mathbb{X}_{i} (Y_{i}-D_{i}\tau)$. We can
	show
	\begin{align*}
		\hat{\lambda} \overset{p}{\longrightarrow} \lambda^{*} : = \left(
		\mathbb{E}\mathbb{X}_{i}\mathbb{X}_{i}^{\top}\right) ^{-1}\mathbb{E}%
		\mathbb{X}_{i}\left[ \pi(Y_{i}(D_{i}(1)) - D_{i}(1)\tau) + (1-\pi)(Y_{i}%
		(D_{i}(0)) - D_{i}(0)\tau)\right] .
	\end{align*}
	Therefore, by letting $\lambda_{1} = 1$ and $\lambda_{2} = \lambda^{*}$, we
	have
	\begin{align*}
		\sqrt{n}(\hat{\tau}_{TSLS} - \tau) \rightsquigarrow\mathcal{N}(0,\sigma_{TSLS}^{2}),
	\end{align*}
	where
	\begin{align*}
		\sigma_{TSLS}^{2}  &  = \frac{\sigma_{TSLS,0}^{2}+\sigma_{TSLS,1}^{2}+\sigma_{TSLS,2}^{2}+\sigma_{TSLS,3}^{2}}{ (\mathbb{E}(D_{i}(1) - D_{i}(0)))^{2} },\\
		\sigma_{TSLS,0}^{2}  &  = \frac{\mathbb{E}%
			\sbr[1]{Y_i(D_i(0)) - D_i(0)\tau - \mathbb{X}_i^\top \lambda^* - \mathbb{E}[Y_i(D_i(0)) - D_i(0)\tau - \mathbb{X}_i^\top \lambda^*|S_i]}^{2}%
		}{1-\pi},\\
		\sigma_{TSLS,1}^{2}  &  = \frac{\mathbb{E}%
			\sbr[1]{Y_i(D_i(1)) - D_i(1)\tau - \mathbb{X}_i^\top \lambda^* - \mathbb{E}[Y_i(D_i(1)) - D_i(1)\tau - \mathbb{X}_i^\top \lambda^*|S_i]}^{2}%
		}{\pi},\\
		\sigma_{TSLS,2}^{2}  &  = \mathbb{E}\left[ \mathbb{E}%
		\sbr[1]{Y(D(1)) - Y(D(0)) - (D(1)-D(0))\tau|S_i} \right] ^{2},\\
		\sigma_{TSLS,3}^{2}  &  = \mathbb{E}\left\lbrace \gamma(S_{i}%
		)\del[3]{\mathbb{E}\sbr[3]{\frac{Y_i(D_i(1)) - D_i(1)\tau - \mathbb{X}_i^\top \lambda^*}{\pi}+\frac{Y_i(D_i(0)) - D_i(0)\tau - \mathbb{X}_i^\top \lambda^*}{1-\pi}\bigg|S_i} }^{2}
		\right\rbrace .
	\end{align*}
	
	To see the second result, we note that $\bar X_i = (D_i,\mathbb{X}_{i}^\top)^\top$ and $\bar Z_i = (A_i,\mathbb{X}_{i}^\top)^\top$. Denote $\breve Z_i = (\tilde A_i, \mathbb X_i^\top)^\top$. Then, we have 
	\begin{align*}
		& e_1^\top [S_{\bar X, \bar Z} S_{\bar Z, \bar Z}^{-1} S_{\bar Z, \bar X}]^{-1} \\
		& = [S_{\bar X, \breve Z} S_{\breve Z, \breve Z}^{-1} S_{\breve Z, \bar X}]^{-1}\\
		& = e_1^\top\biggl\{ \begin{pmatrix}
			\sum_{i \in [n]}D_i\tilde A_i/n & \sum_{i \in [n]}D_i \mathbb X_i^\top/n \\
			0 & \sum_{i \in [n]} \mathbb X_i \mathbb X_i^\top/n
		\end{pmatrix} \begin{pmatrix}
			\sum_{i \in [n]}\tilde A_i^2/n & 0 \\
			0 & \sum_{i \in [n]} \mathbb X_i \mathbb X_i^\top/n
		\end{pmatrix}^{-1} \\
		& \times \begin{pmatrix}
			\sum_{i \in [n]}D_i\tilde A_i/n & 0 \\
			\sum_{i \in [n]}D_i \mathbb X_i/n & \sum_{i \in [n]} \mathbb X_i \mathbb X_i^\top/n
		\end{pmatrix}\biggr\}^{-1} \\
		& = e_1^\top\begin{pmatrix}
			\frac{(\sum_{i \in [n]}D_i\tilde A_i/n)^2}{(\sum_{i \in [n]}\tilde A_i^2/n)}+ \sum_{i \in [n]}D_i \mathbb X_i^\top/n \left[\sum_{i \in [n]} \mathbb X_i \mathbb X_i^\top/n\right]^{-1}\sum_{i \in [n]}D_i \mathbb X_i/n& \sum_{i \in [n]}D_i \mathbb X_i^\top/n \\
			\sum_{i \in [n]}D_i \mathbb X_i/n & \sum_{i \in [n]} \mathbb X_i \mathbb X_i^\top/n
		\end{pmatrix}^{-1}\\
		& \convP [\pi(1-\pi)]^{-1}(\mathbb{E}(D(1) - D(0)))^{-2} \begin{pmatrix}
			1 & -\gamma_D^\top 
		\end{pmatrix}.
	\end{align*}
	and
	\begin{align*}
		S_{\bar X,\bar{Z}}S_{\bar Z,\bar Z} \bar Z_i & = S_{\bar X,\breve{Z}}S_{\breve Z,\breve Z} \breve Z_i  \\
		& = \begin{pmatrix}
			\sum_{i \in [n]}D_i\tilde A_i/n & \sum_{i \in [n]}D_i \mathbb X_i^\top/n \\
			0 & \sum_{i \in [n]} \mathbb X_i \mathbb X_i^\top/n
		\end{pmatrix} \begin{pmatrix}
			\sum_{i \in [n]}\tilde A_i^2/n & 0 \\
			0 & \sum_{i \in [n]} \mathbb X_i \mathbb X_i^\top/n
		\end{pmatrix}^{-1} \begin{pmatrix}
			\tilde A_i \\
			\mathbb X_i
		\end{pmatrix} \\
		& = \begin{pmatrix}
			(\sum_{i \in [n]}D_i\tilde A_i/n)(\sum_{i \in [n]}\tilde A_i^2/n)^{-1} & (\sum_{i \in [n]}D_i \mathbb X_i^\top/n)(\sum_{i \in [n]} \mathbb X_i \mathbb X_i^\top/n)^{-1} \\
			0 & I
		\end{pmatrix}\begin{pmatrix}
			\tilde A_i \\
			\mathbb X_i
		\end{pmatrix},
	\end{align*}
	where $\gamma_D = (\mathbb E \mathbb X_i \mathbb X_i^\top)^{-1}\mathbb{E}(\mathbb X_i (\pi D_i(1) + (1-\pi)D_i(0)))$. Further note that
	\begin{align*}
		\begin{pmatrix}
			(\sum_{i \in [n]}D_i\tilde A_i/n)(\sum_{i \in [n]}\tilde A_i^2/n)^{-1} & (\sum_{i \in [n]}D_i \mathbb X_i^\top/n)(\sum_{i \in [n]} \mathbb X_i \mathbb X_i^\top/n)^{-1} \\
			0 & I
		\end{pmatrix} \convP \begin{pmatrix}
			\mathbb{E}(D(1) - D(0)) & \gamma_D^\top \\
			0 & I
		\end{pmatrix}
	\end{align*}
	and
	\begin{align*}
		\hat \lambda_{TSLS} \equiv \begin{pmatrix}
			\hat \alpha_{1,TSLS} \\
			\vdots \\
			\hat \alpha_{S,TSLS} \\
			\hat \theta_{TSLS}
		\end{pmatrix} & = (\sum_{i \in [n]}\mathbb X_i \mathbb X_i^\top/n)^{-1} (\sum_{i \in [n]}\mathbb X_i (Y_i - D_i \hat \tau_{TSLS})/n) \\
		& = \hat \lambda + (\sum_{i \in [n]}\mathbb X_i \mathbb X_i^\top/n)^{-1} (\sum_{i \in [n]}\mathbb X_i D_i/n) (\tau-\hat \tau_{TSLS}) \convP \lambda^*.
	\end{align*}
	Then, we have
	\begin{align*}
		\hat e_i = e_i - D_i(\hat \tau_{TSLS} - \tau) - \mathbb X_i^\top (\hat \lambda_{TSLS} - \lambda^*),
	\end{align*}
	where $e_i = Y_i - D_i\tau - \mathbb X_i^\top \lambda^*$. In addition, as shown above, we have $\tilde A_i  = A_i - \pi - \mathbb X_i^\top (\hat \theta - \theta)$ and $\hat \theta \convP \theta$. This implies, 
	\begin{align*}
		& \frac{1}{n}  \sum_{i \in [n]}\hat e_i^2 \begin{pmatrix}
			\tilde A_i^2 & \tilde A_i \mathbb X_i^\top  \\
			\tilde A_i \mathbb X_i & \mathbb X_i \mathbb X_i^\top  
		\end{pmatrix} \\
		& = \frac{1}{n}  \sum_{i \in [n]} e_i^2 \begin{pmatrix}
			(A_i-\pi)^2 & (A_i-\pi) \mathbb X_i^\top  \\
			(A_i-\pi) \mathbb X_i & \mathbb X_i \mathbb X_i^\top  
		\end{pmatrix} + o_P(1) \\
		& = \frac{1}{n}  \sum_{i \in [n]} \left[A_i(Y_i(D_i(1)) - D_i(1)\tau - \mathbb X_i^\top \lambda^*)^2 + (1-A_i)(Y_i(D_i(0)) - D_i(0)\tau - \mathbb X_i^\top \lambda^*)^2 \right] \\
		& \times \begin{pmatrix}
			(A_i-\pi)^2 & (A_i-\pi) \mathbb X_i^\top  \\
			(A_i-\pi) \mathbb X_i & \mathbb X_i \mathbb X_i^\top  
		\end{pmatrix} + o_P(1)\\
		& = \frac{1}{n}\sum_{i \in [n]}A_i\begin{pmatrix}
			(1-\pi)^2(Y_i(D_i(1)) - D_i(1)\tau - \mathbb X_i^\top \lambda^*)^2 & (1-\pi)(Y_i(D_i(1)) - D_i(1)\tau - \mathbb X_i^\top \lambda^*)^2\mathbb X_i^\top \\ 
			(1-\pi)(Y_i(D_i(1)) - D_i(1)\tau - \mathbb X_i^\top \lambda^*)^2\mathbb X_i & (Y_i(D_i(1)) - D_i(1)\tau - \mathbb X_i^\top \lambda^*)^2\mathbb X_i \mathbb X_i^\top
		\end{pmatrix} \\
		& + \frac{1}{n}\sum_{i \in [n]}(1-A_i)\begin{pmatrix}
			\pi^2(Y_i(D_i(0)) - D_i(0)\tau - \mathbb X_i^\top \lambda^*)^2 & -\pi(Y_i(D_i(0)) - D_i(0)\tau - \mathbb X_i^\top \lambda^*)^2\mathbb X_i^\top \\ 
			-\pi(Y_i(D_i(0)) - D_i(0)\tau - \mathbb X_i^\top \lambda^*)^2\mathbb X_i & (Y_i(D_i(0)) - D_i(0)\tau - \mathbb X_i^\top \lambda^*)^2\mathbb X_i \mathbb X_i^\top
		\end{pmatrix} + o_P(1).
	\end{align*}
	For the first term on the RHS of the above display, we have
	\begin{align*}
		& \frac{1}{n}\sum_{i \in [n]}A_i\begin{pmatrix}
			(1-\pi)^2(Y_i(D_i(1)) - D_i(1)\tau - \mathbb X_i^\top \lambda^*)^2 & (1-\pi)(Y_i(D_i(1)) - D_i(1)\tau - \mathbb X_i^\top \lambda^*)^2\mathbb X_i^\top \\ 
			(1-\pi)(Y_i(D_i(1)) - D_i(1)\tau - \mathbb X_i^\top \lambda^*)^2\mathbb X_i & (Y_i(D_i(1)) - D_i(1)\tau - \mathbb X_i^\top \lambda^*)^2\mathbb X_i \mathbb X_i^\top
		\end{pmatrix} \\
		& = \frac{1}{n}\sum_{i \in [n]}A_i\biggl\{\begin{pmatrix}
			(1-\pi)^2(Y_i(D_i(1)) - D_i(1)\tau - \mathbb X_i^\top \lambda^*)^2 & (1-\pi)(Y_i(D_i(1)) - D_i(1)\tau - \mathbb X_i^\top \lambda^*)^2\mathbb X_i^\top \\ 
			(1-\pi)(Y_i(D_i(1)) - D_i(1)\tau - \mathbb X_i^\top \lambda^*)^2\mathbb X_i & (Y_i(D_i(1)) - D_i(1)\tau - \mathbb X_i^\top \lambda^*)^2\mathbb X_i \mathbb X_i^\top
		\end{pmatrix}  \\
		&-    \mathbb{E}\left[\begin{pmatrix}
			(1-\pi)^2(Y_i(D_i(1)) - D_i(1)\tau - \mathbb X_i^\top \lambda^*)^2 & (1-\pi)(Y_i(D_i(1)) - D_i(1)\tau - \mathbb X_i^\top \lambda^*)^2\mathbb X_i^\top \\ 
			(1-\pi)(Y_i(D_i(1)) - D_i(1)\tau - \mathbb X_i^\top \lambda^*)^2\mathbb X_i & (Y_i(D_i(1)) - D_i(1)\tau - \mathbb X_i^\top \lambda^*)^2\mathbb X_i \mathbb X_i^\top
		\end{pmatrix}    \bigg| S_i\right] \biggr\} \\
		& +     \frac{1}{n}\sum_{i \in [n]}(A_i-\pi)\mathbb{E}\left[\begin{pmatrix}
			(1-\pi)^2(Y_i(D_i(1)) - D_i(1)\tau - \mathbb X_i^\top \lambda^*)^2 & (1-\pi)(Y_i(D_i(1)) - D_i(1)\tau - \mathbb X_i^\top \lambda^*)^2\mathbb X_i^\top \\ 
			(1-\pi)(Y_i(D_i(1)) - D_i(1)\tau - \mathbb X_i^\top \lambda^*)^2\mathbb X_i & (Y_i(D_i(1)) - D_i(1)\tau - \mathbb X_i^\top \lambda^*)^2\mathbb X_i \mathbb X_i^\top
		\end{pmatrix}    \bigg| S_i\right] \\
		& + \frac{1}{n}\sum_{i \in [n]}\pi\mathbb{E}\left[\begin{pmatrix}
			(1-\pi)^2(Y_i(D_i(1)) - D_i(1)\tau - \mathbb X_i^\top \lambda^*)^2 & (1-\pi)(Y_i(D_i(1)) - D_i(1)\tau - \mathbb X_i^\top \lambda^*)^2\mathbb X_i^\top \\ 
			(1-\pi)(Y_i(D_i(1)) - D_i(1)\tau - \mathbb X_i^\top \lambda^*)^2\mathbb X_i & (Y_i(D_i(1)) - D_i(1)\tau - \mathbb X_i^\top \lambda^*)^2\mathbb X_i \mathbb X_i^\top
		\end{pmatrix}    \bigg| S_i\right] \\
		& \convP \pi\mathbb{E}\left[\begin{pmatrix}
			(1-\pi)^2(Y_i(D_i(1)) - D_i(1)\tau - \mathbb X_i^\top \lambda^*)^2 & (1-\pi)(Y_i(D_i(1)) - D_i(1)\tau - \mathbb X_i^\top \lambda^*)^2\mathbb X_i^\top \\ 
			(1-\pi)(Y_i(D_i(1)) - D_i(1)\tau - \mathbb X_i^\top \lambda^*)^2\mathbb X_i & (Y_i(D_i(1)) - D_i(1)\tau - \mathbb X_i^\top \lambda^*)^2\mathbb X_i \mathbb X_i^\top
		\end{pmatrix}\right]\\
		& \equiv \Omega_1.
	\end{align*}
	To see the convergence in probability in the above display, we note that 
	\begin{align*}
		& A_i\biggl\{\begin{pmatrix}
			(1-\pi)^2(Y_i(D_i(1)) - D_i(1)\tau - \mathbb X_i^\top \lambda^*)^2 & (1-\pi)(Y_i(D_i(1)) - D_i(1)\tau - \mathbb X_i^\top \lambda^*)^2\mathbb X_i^\top \\ 
			(1-\pi)(Y_i(D_i(1)) - D_i(1)\tau - \mathbb X_i^\top \lambda^*)^2\mathbb X_i & (Y_i(D_i(1)) - D_i(1)\tau - \mathbb X_i^\top \lambda^*)^2\mathbb X_i \mathbb X_i^\top
		\end{pmatrix}  \\
		&-    \mathbb{E}\left[\begin{pmatrix}
			(1-\pi)^2(Y_i(D_i(1)) - D_i(1)\tau - \mathbb X_i^\top \lambda^*)^2 & (1-\pi)(Y_i(D_i(1)) - D_i(1)\tau - \mathbb X_i^\top \lambda^*)^2\mathbb X_i^\top \\ 
			(1-\pi)(Y_i(D_i(1)) - D_i(1)\tau - \mathbb X_i^\top \lambda^*)^2\mathbb X_i & (Y_i(D_i(1)) - D_i(1)\tau - \mathbb X_i^\top \lambda^*)^2\mathbb X_i \mathbb X_i^\top
		\end{pmatrix}    \bigg| S_i\right] \biggr\} 
	\end{align*}
	is independent and conditionally mean zero given $(A^{(n)},S^{(n)})$. Therefore, by the conditional Chebyshev's inequality, we have
	\begin{align*}
		& \frac{1}{n}\sum_{i \in [n]}A_i\biggl\{\begin{pmatrix}
			(1-\pi)^2(Y_i(D_i(1)) - D_i(1)\tau - \mathbb X_i^\top \lambda^*)^2 & (1-\pi)(Y_i(D_i(1)) - D_i(1)\tau - \mathbb X_i^\top \lambda^*)^2\mathbb X_i^\top \\ 
			(1-\pi)(Y_i(D_i(1)) - D_i(1)\tau - \mathbb X_i^\top \lambda^*)^2\mathbb X_i & (Y_i(D_i(1)) - D_i(1)\tau - \mathbb X_i^\top \lambda^*)^2\mathbb X_i \mathbb X_i^\top
		\end{pmatrix}  \\
		&-    \mathbb{E}\left[\begin{pmatrix}
			(1-\pi)^2(Y_i(D_i(1)) - D_i(1)\tau - \mathbb X_i^\top \lambda^*)^2 & (1-\pi)(Y_i(D_i(1)) - D_i(1)\tau - \mathbb X_i^\top \lambda^*)^2\mathbb X_i^\top \\ 
			(1-\pi)(Y_i(D_i(1)) - D_i(1)\tau - \mathbb X_i^\top \lambda^*)^2\mathbb X_i & (Y_i(D_i(1)) - D_i(1)\tau - \mathbb X_i^\top \lambda^*)^2\mathbb X_i \mathbb X_i^\top
		\end{pmatrix}    \bigg| S_i\right] \biggr\} \\
		& = o_P(1).
	\end{align*}
	
	Also, by Assumption \ref{ass:1iv}, we have
	\begin{align*}
		& \frac{1}{n}\sum_{i \in [n]}(A_i-\pi)\mathbb{E}\left[\begin{pmatrix}
			(1-\pi)^2(Y_i(D_i(1)) - D_i(1)\tau - \mathbb X_i^\top \lambda^*)^2 & (1-\pi)(Y_i(D_i(1)) - D_i(1)\tau - \mathbb X_i^\top \lambda^*)^2\mathbb X_i^\top \\ 
			(1-\pi)(Y_i(D_i(1)) - D_i(1)\tau - \mathbb X_i^\top \lambda^*)^2\mathbb X_i & (Y_i(D_i(1)) - D_i(1)\tau - \mathbb X_i^\top \lambda^*)^2\mathbb X_i \mathbb X_i^\top
		\end{pmatrix}    \bigg| S_i\right]\\
		& = o_P(1).
	\end{align*}
	Last, by the usual Law of Large numbers for i.i.d. data, we have
	\begin{align*}
		& \frac{1}{n}\sum_{i \in [n]}\pi\mathbb{E}\left[\begin{pmatrix}
			(1-\pi)^2(Y_i(D_i(1)) - D_i(1)\tau - \mathbb X_i^\top \lambda^*)^2 & (1-\pi)(Y_i(D_i(1)) - D_i(1)\tau - \mathbb X_i^\top \lambda^*)^2\mathbb X_i^\top \\ 
			(1-\pi)(Y_i(D_i(1)) - D_i(1)\tau - \mathbb X_i^\top \lambda^*)^2\mathbb X_i & (Y_i(D_i(1)) - D_i(1)\tau - \mathbb X_i^\top \lambda^*)^2\mathbb X_i \mathbb X_i^\top
		\end{pmatrix}    \bigg| S_i\right] \\
		& \convP \Omega_1.
	\end{align*}
	
	Similarly, we have
	\begin{align*}
		& \frac{1}{n}\sum_{i \in [n]}(1-A_i)\begin{pmatrix}
			\pi^2(Y_i(D_i(0)) - D_i(0)\tau - \mathbb X_i^\top \lambda^*)^2 & -\pi(Y_i(D_i(0)) - D_i(0)\tau - \mathbb X_i^\top \lambda^*)^2\mathbb X_i^\top \\ 
			-\pi(Y_i(D_i(0)) - D_i(0)\tau - \mathbb X_i^\top \lambda^*)^2\mathbb X_i & (Y_i(D_i(0)) - D_i(0)\tau - \mathbb X_i^\top \lambda^*)^2\mathbb X_i \mathbb X_i^\top
		\end{pmatrix} \\
		& \convP (1-\pi)\mathbb{E}\left[ \begin{pmatrix}
			\pi^2(Y_i(D_i(0)) - D_i(0)\tau - \mathbb X_i^\top \lambda^*)^2 & -\pi(Y_i(D_i(0)) - D_i(0)\tau - \mathbb X_i^\top \lambda^*)^2\mathbb X_i^\top \\ 
			-\pi(Y_i(D_i(0)) - D_i(0)\tau - \mathbb X_i^\top \lambda^*)^2\mathbb X_i & (Y_i(D_i(0)) - D_i(0)\tau - \mathbb X_i^\top \lambda^*)^2\mathbb X_i \mathbb X_i^\top
		\end{pmatrix}\right] \\
		& \equiv \Omega_0.
	\end{align*}
	
	Consequently, we have
	\begin{align*}
		\hat \sigma_{TSLS,naive}^2 & \convP \frac{\begin{pmatrix}
				1 & -\gamma_D^\top \end{pmatrix} \begin{pmatrix}
				\mathbb{E}(D(1) - D(0)) & \gamma_D^\top \\
				0 & I
			\end{pmatrix} (\Omega_1 + \Omega_0) \begin{pmatrix}
				\mathbb{E}(D(1) - D(0)) & \gamma_D^\top \\
				0 & I
			\end{pmatrix}^\top \begin{pmatrix}
				1 \\
				-\gamma_D
		\end{pmatrix}}{[\pi(1-\pi)]^2[\mathbb E(D(1) - D(0))]^4} \\
		& = \frac{\pi^{-1} \mathbb{E}(Y_i(D_i(1)) - D_i(1)\tau - \mathbb X_i^\top \lambda^*)^2 + (1-\pi)^{-1} \mathbb{E}(Y_i(D_i(1)) - D_i(1)\tau - \mathbb X_i^\top \lambda^*)^2 }{[\mathbb E(D(1) - D(0))]^2} \\
		& = \frac{\sigma_{TSLS,0}^{2}+\sigma_{TSLS,1}^{2}+\sigma_{TSLS,2}^{2}+\tilde \sigma_{TSLS,3}^{2}}{ (\mathbb{E}(D_{i}(1) - D_{i}(0)))^{2}}.
	\end{align*}
	
	For the last result, by the proof Theorem \ref{thm:est} with $\overline{\mu}^b(a,s,x) = 0$ for $a = 0,1$ and $b = D,Y$ and $\pi(s) = \pi$, we have
	\begin{align*}
		\sigma_{NA}^{2} & = \frac{\sum_{s \in S} \frac{p(s)}{\pi}Var(Y(D(1))- \tau D(1)|S=s) +
			\sum_{s \in S}\frac{p(s)}{1-\pi}Var(Y(D(0)) - \tau D(0)|S=s)}{\mathbb{P}(D(1)>D(0))^{2}} \\
		& + \frac{ Var(\mathbb{E}[W_i-Z_i|S_i] -
			\tau\left( \mathbb{E}[D_{i}(1)-D_{i}(0)|S_{i}]
			\right))%
		}{\mathbb{P}(D(1)>D(0))^{2}} \\
		& =  \frac{\mathbb E\frac{1}{\pi}Var(Y(D(1))- \tau D(1)|S) +\frac{1}{1-\pi}Var(Y(D(0)) - \tau D(0)|S)}{\mathbb{P}(D(1)>D(0))^{2}} + \frac{\sigma_{TSLS,2}^2}{\mathbb{P}(D(1)>D(0))^{2}}.
	\end{align*}
	
	Then, we have $\sigma_{NA}^2 < \sigma_{TSLS}^2$ if and only if
	$$\mathbb E\left[\frac{1}{\pi}Var(Y(D(1))- \tau D(1)|S) +\frac{1}{1-\pi}Var(Y(D(0)) - \tau D(0)|S)\right]<\sigma_{TSLS,0}^2 + \sigma_{TSLS,1}^2 + \sigma_{TSLS,3}^2,$$
	which is equivalent to 
	\begin{align*}
		2 \left[\frac{\mathbb E cov(Y_i(D_i(1)) - D_i(1)\tau, \mathbb X_i^\top\lambda^*|S)}{\pi} + \frac{\mathbb E cov(Y_i(D_i(0)) - D_i(0)\tau, \mathbb X_i^\top\lambda^*|S)}{1-\pi} \right] \leq \frac{\mathbb E Var(\mathbb X_i^\top \lambda^*|S)}{\pi(1-\pi)} + \sigma_{TSLS,3}^2.
	\end{align*}

	\section{Proof of Theorem \ref{thm:est}}
	
	Let
	\begin{align*}
		G  &  :=\mathbb{E}\sbr[2]{\del[1]{Y(1)-Y(0)}\del[1]{D(1)-D(0)}},\\
		H  &  := \mathbb{E}\sbr[1]{D(1)-D(0)},\\
		\hat{G} & := \frac{1}{n}\sum_{i \in[n]}\left[ \frac{A_{i}(Y_{i} - \hat{\mu
			}^{Y}(1,S_{i},X_{i}))}{\hat{\pi}(S_{i})} - \frac{(1-A_{i})(Y_{i}-\hat{\mu}%
			^{Y}(0,S_{i},X_{i}))}{1-\hat{\pi}(S_{i})} + \hat{\mu}^{Y}(1,S_{i},X_{i}%
		)-\hat{\mu}^{Y}(0,S_{i},X_{i}) \right] ,\\
		\hat{H}  & := \frac{1}{n}\sum_{i \in[n]}\left[ \frac{A_{i}(D_{i} - \hat{\mu
			}^{D}(1,S_{i},X_{i}))}{\hat{\pi}(S_{i})} - \frac{(1-A_{i})(D_{i}-\hat{\mu}%
			^{D}(0,S_{i},X_{i}))}{1-\hat{\pi}(S_{i})} + \hat{\mu}^{D}(1,S_{i},X_{i}%
		)-\hat{\mu}^{D}(0,S_{i},X_{i}) \right]  .
	\end{align*}
	Then, we have
	\begin{align}
		\sqrt{n}(\hat{\tau}-\tau) & =\sqrt{n}%
		\del[3]{\frac{\hat{G}}{\hat{H}}-\frac{G}{H}}\nonumber\\
		& =\frac{1}{\hat{H}}\sqrt{n}(\hat{G}-G)-\frac{G}{\hat{H}H}\sqrt{n}(\hat
		{H}-H)\nonumber\\
		& =\frac{1}{\hat{H}}%
		\sbr[3]{\sqrt{n}(\hat{G}-G)-\tau\sqrt{n}(\hat{H}-H)}.\label{eq:tau}%
	\end{align}
	Next, we divide the proof into three steps. In the first step, we obtain the
	linear expansion of $\sqrt{n}(\hat{G}-G)$. Based on the same argument, we can
	obtain the linear expansion of $\sqrt{n}(\hat{H}-H)$. In the second step, we
	obtain the linear expansion of $\sqrt{n}(\hat{\tau} - \tau)$ and then prove
	the asymptotic normality. In the third step, we show the consistency of
	$\hat{\sigma}$. The second result in the Theorem is obvious given the
	semiparametric efficiency bound derived in Theorem \ref{thm:eff}.
	
	%In the fourth step, we show that when $\overline{\mu}(a,s,x) = \mu(a,s,x)$ for all $(a,s,x) \in \{0,1\} \times \Supp(SX)$, the asymptotic variance $\sigma^2$ achieves the minimum.

	\textbf{Step 1.} We have
	\begin{align*}
		\sqrt{n}(\hat{G}-G) & =\frac{1}{\sqrt{n}}\sum_{i \in[n]}\biggl[\frac
		{A_{i}(Y_{i} - \hat{\mu}^{Y}(1,S_{i},X_{i}))}{\hat{\pi}(S_{i})} -
		\frac{(1-A_{i})(Y_{i}-\hat{\mu}^{Y}(0,S_{i},X_{i}))}{1-\hat{\pi}(S_{i})}\\
		&  + \hat{\mu}^{Y}(1,S_{i},X_{i})-\hat{\mu}^{Y}(0,S_{i},X_{i}) \biggr]-\sqrt
		{n}G\\
		& =\frac{1}{\sqrt{n}}\sum_{i=1}^{n}%
		\sbr[3]{\hat{\mu}^Y(1, S_i, X_i)-\frac{A_i\hat{\mu}^Y(1, S_i, X_i)}{\hat{\pi}(S_i)}}\\
		& +\frac{1}{\sqrt{n}}\sum_{i=1}^{n}%
		\sbr[3]{\frac{(1-A_i)\hat{\mu}^Y(0, S_i, X_i)}{1-\hat{\pi}(S_i)}-\hat{\mu}^Y(0, S_i, X_i)}\\
		& +\frac{1}{\sqrt{n}}\sum_{i=1}^{n}\frac{A_{i}Y_{i}}{\hat{\pi}(S_{i})}%
		-\frac{1}{\sqrt{n}}\sum_{i=1}^{n}\frac{(1-A_{i})Y_{i}}{1-\hat{\pi}(S_{i}%
			)}-\sqrt{n}G\\
		&  =:R_{n,1}+R_{n,2}+R_{n,3},
	\end{align*}
	where
	\begin{align*}
		R_{n,1} & :=\frac{1}{\sqrt{n}}\sum_{i=1}^{n}%
		\sbr[3]{\hat{\mu}^Y(1, S_i, X_i)-\frac{A_i\hat{\mu}^Y(1, S_i, X_i)}{\hat{\pi}(S_i)}},\\
		R_{n,2} & :=\frac{1}{\sqrt{n}}\sum_{i=1}^{n}%
		\sbr[3]{\frac{(1-A_i)\hat{\mu}^Y(0, S_i, X_i)}{1-\hat{\pi}(S_i)}-\hat{\mu}^Y(0, S_i, X_i)},\\
		R_{n,3} & :=\frac{1}{\sqrt{n}}\sum_{i=1}^{n}\frac{A_{i}Y_{i}}{\hat{\pi}%
			(S_{i})}-\frac{1}{\sqrt{n}}\sum_{i=1}^{n}\frac{(1-A_{i})Y_{i}}{1-\hat{\pi
			}(S_{i})}-\sqrt{n}G.
	\end{align*}

	Lemma \ref{lem:G} shows that
	\begin{align*}
		R_{n,1} & =\frac{1}{\sqrt{n}}\sum_{i=1}^{n}\del[3]{1-\frac{1}{\pi(S_i)}}A_{i}%
		\tilde{\mu}^{Y}(1, S_{i}, X_{i})+\frac{1}{\sqrt{n}}\sum_{i=1}^{n}%
		(1-A_{i})\tilde{\mu}^{Y}(1, S_{i}, X_{i})+o_{p}(1),\\
		R_{n,2} & =\frac{1}{\sqrt{n}}\sum_{i=1}^{n}%
		\del[3]{\frac{1}{1-\pi(S_i)}-1}(1-A_{i})\tilde{\mu}^{Y}(0, S_{i}, X_{i}%
		)-\frac{1}{\sqrt{n}}\sum_{i=1}^{n}A_{i}\tilde{\mu}^{Y}(0, S_{i}, X_{i}%
		)+o_{p}(1),\\
		R_{n,3}  &  = \frac{1}{\sqrt{n}}\sum_{i=1}^{n}\frac{1}{\pi(S_{i})}\tilde
		{W}_{i}A_{i}-\frac{1}{\sqrt{n}}\sum_{i=1}^{n}\frac{1-A_{i}}{1-\pi(S_{i}%
			)}\tilde{Z}_{i} + \frac{1}{\sqrt{n}}\sum_{i=1}^{n}%
		\del[1]{\mathbb{E}[W_i-Z_i|S_i]-\mathbb{E}[W_i-Z_i]}.
	\end{align*}

	This implies
	\begin{align}
		\sqrt{n}(\hat{G} - G)  &  = \biggl\{\frac{1}{\sqrt{n}}\sum_{i=1}^{n} \left[
		\left( 1- \frac{1}{\pi(S_{i})} \right) \tilde{\mu}^{Y}(1,S_{i},X_{i}) -
		\tilde{\mu}^{Y}(0,S_{i},X_{i}) + \frac{\tilde{W}_{i}}{\pi(S_{i})}\right]
		A_{i}\nonumber\\
		&  + \frac{1}{\sqrt{n}}\sum_{i=1}^{n} \left[ \left( \frac{1}{1-\pi(S_{i})}-1
		\right) \tilde{\mu}^{Y}(0,S_{i},X_{i}) + \tilde{\mu}^{Y}(1,S_{i},X_{i}) -
		\frac{\tilde{Z}_{i}}{1-\pi(S_{i})}\right] (1-A_{i})\biggr\}\nonumber\\
		&  + \left\{ \frac{1}{\sqrt{n}}\sum_{i=1}^{n}%
		\del[1]{\mathbb{E}[W_i-Z_i|S_i]-\mathbb{E}[W_i-Z_i]} \right\} +o_{p}%
		(1).\label{eq:G}%
	\end{align}

	Similarly, we can show that
	\begin{align}
		\sqrt{n}(\hat{H} - H)  &  = \biggl\{\frac{1}{\sqrt{n}}\sum_{i=1}^{n} \left[
		\left( 1- \frac{1}{\pi(S_{i})} \right) \tilde{\mu}^{D}(1,S_{i},X_{i}) -
		\tilde{\mu}^{D}(0,S_{i},X_{i}) + \frac{ \tilde{D}_{i}(1)}{\pi(S_{i})}\right]
		A_{i}\nonumber\\
		&  + \frac{1}{\sqrt{n}}\sum_{i=1}^{n} \left[ \left( \frac{1}{1-\pi(S_{i})}-1
		\right) \tilde{\mu}^{D}(0,S_{i},X_{i}) + \tilde{\mu}^{D}(1,S_{i},X_{i}) -
		\frac{\tilde{D}_{i}(0)}{1-\pi(S_{i}) }\right] (1-A_{i})\biggr\}\nonumber\\
		&  + \left\{ \frac{1}{\sqrt{n}}\sum_{i=1}^{n}%
		\del[1]{\mathbb{E}[D_i(1)-D_i(0)|S_i]-\mathbb{E}[D_i(1)-D_i(0)]} \right\}
		+o_{p}(1),\label{eq:H}%
	\end{align}
	where $\tilde{D}_{i}(a) = D_{i}(a) - \mathbb{E}(D_{i}(a)|S_{i})$ for $a=0,1$
	and $\tilde{\mu}^{D}(0,s,X_{i}) = \overline{\mu}^{D}(0,s,X_{i}) -
	\mathbb{E}(\overline{\mu}^{D}(0,S_{i},X_{i})|S_{i}=s).$
	
	Combining \eqref{eq:tau}, \eqref{eq:G}, and \eqref{eq:H}, we obtain the linear
	expansion for $\hat{\tau}$ as
	\begin{align*}
		\sqrt{n}(\hat{\tau} - \tau)  &  = \frac{1}{\hat{H}} \left[ \sqrt{n}(\hat{G} -
		G) - \tau\sqrt{n}(\hat{H} - H) \right] \\
		&  = \frac{1}{\hat{H}} \left[ \frac{1}{\sqrt{n}}\sum_{i=1}^{n} \Xi
		_{1}(\mathcal{D}_{i}, S_{i})A_{i} + \frac{1}{\sqrt{n}}\sum_{i=1}^{n} \Xi
		_{0}(\mathcal{D}_{i},S_{i})(1-A_{i}) + \frac{1}{\sqrt{n}}\sum_{i=1}^{n}\Xi
		_{2}(S_{i})\right]  + o_{p}(1),
	\end{align*}
	where $\mathcal{D}_{i} = \{Y_{i}(1), Y_{i}(0), D_{i}(1), D_{i}(0), X_{i}\}$,
	\begin{align*}
		\Xi_{1}(\mathcal{D}_{i}, S_{i})  &  = \left[ \left( 1- \frac{1}{\pi(S_{i})}
		\right) \tilde{\mu}^{Y}(1,S_{i},X_{i}) - \tilde{\mu}^{Y}(0,S_{i},X_{i}) +
		\frac{\tilde{W}_{i}}{\pi(S_{i})}\right] \\
		&  - \tau\left[ \left( 1- \frac{1}{\pi(S_{i})} \right) \tilde{\mu}^{D}%
		(1,S_{i},X_{i}) - \tilde{\mu}^{D}(0,S_{i},X_{i}) + \frac{ \tilde{D}_{i}%
			(1)}{\pi(S_{i})}\right] ,\\
		\Xi_{0}(\mathcal{D}_{i}, S_{i})  &  = \left[ \left( \frac{1}{1-\pi(S_{i})}-1
		\right) \tilde{\mu}^{Y}(0,S_{i},X_{i}) + \tilde{\mu}^{Y}(1,S_{i},X_{i}) -
		\frac{\tilde{Z}_{i}}{1-\pi(S_{i})}\right] \\
		& - \tau\left[ \left( \frac{1}{1-\pi(S_{i})}-1 \right) \tilde{\mu}^{D}%
		(0,S_{i},X_{i}) + \tilde{\mu}^{D}(1,S_{i},X_{i}) - \frac{\tilde{D}_{i}%
			(0)}{1-\pi(S_{i}) }\right] ,\\
		\Xi_{2}(S_{i})  &  = \del[1]{\mathbb{E}[W_i-Z_i|S_i]-\mathbb{E}[W_i-Z_i]} -
		\tau\left[ \mathbb{E}[D_{i}(1)-D_{i}(0)|S_{i}]-\mathbb{E}[D_{i}(1)-D_{i}(0)]
		\right] .
	\end{align*}

	\textbf{Step 2.} Lemma \ref{lem:clt} implies that
	\begin{align*}
		& \frac{1}{\sqrt{n}}\sum_{i=1}^{n} \Xi_{1}(\mathcal{D}_{i}, S_{i})A_{i}
		\rightsquigarrow\mathcal{N}(0, \sigma_{1}^{2}),\quad\frac{1}{\sqrt{n}}%
		\sum_{i=1}^{n} \Xi_{0}(\mathcal{D}_{i}, S_{i})(1-A_{i}) \rightsquigarrow
		\mathcal{N}(0, \sigma_{0}^{2}), \quad\text{and}\\
		& \frac{1}{\sqrt{n}}\sum_{i=1}^{n} \Xi_{2}(S_{i}) \rightsquigarrow
		\mathcal{N}(0, \sigma_{2}^{2}),
	\end{align*}
	and the three terms are asymptotically independent, where
	\begin{align*}
		&  \sigma_{1}^{2} = \mathbb{E}\pi(S_{i})\Xi_{1}^{2}(\mathcal{D}_{i},S_{i}),
		\quad\sigma_{0}^{2} = \mathbb{E}(1-\pi(S_{i}))\Xi_{0}^{2}(\mathcal{D}%
		_{i},S_{i}), \quad\text{and} \quad\sigma_{2}^{2} = \mathbb{E}\Xi_{2}^{2}%
		(S_{i}).
	\end{align*}
	This further implies $\hat{H} \overset{p}{\longrightarrow} H$ and
	\begin{align*}
		\sqrt{n}(\hat{\tau} - \tau) \rightsquigarrow\mathcal{N}\left( 0, \frac
		{\sigma_{1}^{2} + \sigma_{0}^{2} + \sigma_{2}^{2}}{H^{2}}\right) ,
	\end{align*}

	\textbf{Step 3.} We aim to show the consistency of $\hat{\sigma}^{2}$. First
	note that
	\[
	\frac{1}{n}\sum_{i=1}^{n} \Xi_{H,i} = \hat{H} \overset{p}{\longrightarrow} H =
	\mathbb{E}(D_{i}(1)-D_{i}(0)).
	\]
	In addition, Lemma \ref{lem:sigma} shows.
	\begin{align*}
		\frac{1}{n}\sum_{i=1}^{n} A_i\hat{\Xi}_{1}^{2}(\mathcal{D}_{i},S_{i})
		\overset{p}{\longrightarrow} \sigma_{1}^{2}, \quad\frac{1}{n}\sum_{i=1}^{n}(1-A_i)
		\hat{\Xi}_{0}^{2}(\mathcal{D}_{i},S_{i}) \overset{p}{\longrightarrow}
		\sigma_{0}^{2}, \quad\text{and} \quad\frac{1}{n}\sum_{i=1}^{n} \hat{\Xi}%
		_{2}^{2}(\mathcal{D}_{i},S_{i}) \overset{p}{\longrightarrow} \sigma_{2}^{2}.
	\end{align*}
	This implies $\hat{\sigma}^{2} \overset{p}{\longrightarrow} \sigma^{2}$.
	
	%We aim to show the consistency of $\hat{\sigma}^2$. First note that
	%$$\frac{1}{n}\sum_{i=1}^n \Xi_{H,i} = \hat{H} \convP H = \mathbb{E}(D_i(1)-D_i(0)).$$
	%We only need to show
	%\begin{align*}
	%\frac{1}{n}\sum_{i=1}^n(\Xi_{G,i} - \hat{\tau} \Xi_{H,i})^2 \convP \sigma_1^2 + \sigma_0^2 + \sigma_2^2.
	%\end{align*}
	%Define
	%\begin{align*}
	%\Xi_{G,i}^* &= \frac{A_i(Y_i - \overline{\mu}^Y(1,S_i,X_i))}{\pi(S_i)} - \frac{(1-A_i)(Y_i-\overline{\mu}^Y(0,S_i,X_i))}{1-\pi(S_i)} + \overline{\mu}^Y(1,S_i,X_i)-\overline{\mu}^Y(0,S_i,X_i), \\
	%\Xi_{H,i}^* &=\frac{A_i(D_i - \overline{\mu}^D(1,S_i,X_i))}{\pi(S_i)} - \frac{(1-A_i)(D_i-\overline{\mu}^D(0,S_i,X_i))}{1-\pi(S_i)} + \overline{\mu}^D(1,S_i,X_i)-\overline{\mu}^D(0,S_i,X_i).
	%\end{align*}
	%Lemma \ref{lem:sigma} shows
	%\begin{align*}
	%||\Xi_{G} - \hat{\tau} \Xi_{H}||_{P_n,2} - ||\Xi_{G}^* - \hat{\tau} \Xi_{H}^*||_{P_n,2} = o_p(1),
	%\end{align*}
	%where $||\Xi_{G} - \hat{\tau} \Xi_{H}||_{P_n,2} := [\frac{1}{n}\sum_{i=1}^n(\Xi_{G,i} - \hat{\tau} \Xi_{H,i})^2 ]^{1/2}.$ In addition, Lemma \ref{lem:sigma} further establishes that
	%\begin{align*}
	%& \frac{1}{n}\sum_{i=1}^n  (\Xi_{G,i}^*)^2 \convP \\
	%& \frac{1}{n}\sum_{i=1}^n  (\Xi_{H,i}^*)^2 \convP \\
	%& \frac{1}{n}\sum_{i=1}^n  \Xi_{H,i}^*\Xi_{G,i}^* \convP \\
	%\end{align*}
	%Combining these with the fact that $\hat{\tau} \convP \tau$, we have
	%\begin{align}
	%\frac{1}{n}\sum_{i=1}^n(\Xi_{G,i} - \hat{\tau} \Xi_{H,i})^2 \convP
	%\end{align}

	\section{Proof of Theorem \ref{thm:eff}}
	
	\label{sec:pf_eff}
	
	Without loss of generality, we assume $A_{i} = \phi_{i}(\{S_{i}\}_{i \in
		[n]},U)$, where $\phi_{i}(\cdot)$ is a deterministic function and $U$ is a
	random variable (vector) with density $P_{U}(\cdot)$ and is independent of
	everything else in the data. Further denote $\mathcal{Y}_{i}(a) =
	\{Y_{i}(D_{i}(a)),D_{i}(a),X_{i}\}$. We consider parametric submodels indexed
	by a generic parameter $\theta$. The likelihoods of $S_{i}$ evaluated at $s$
	and $\mathcal{Y}_{i}(a)$ given $S_{i}=s$ evaluated at $\overline{y}$ are
	written as $f_{S}(s;\theta)$ and $f_{\mathcal{Y}(a)|S}(\overline{y}|s;\theta)$
	for $a=0,1$, respectively. The density of $U$ does not depend on $\theta$. Let
	$\theta_{n} = \theta^{*} + h/\sqrt{n}$, where $\theta^{*}$ indexes the true
	underlying DGP.
	
	By Assumption \ref{ass:assignment1}, the joint likelihood of $\{Y_{i}%
	,X_{i},S_{i},A_{i}\}_{i \in[n]}$ under $\theta$ can be written as
	\begin{align*}
		P_{U}(u) \Pi_{i \in[n]}\left[ f_{S}(s_{i};\theta) \Pi_{a=0,1}f_{\mathcal{Y}%
			(a)|S}(\tilde{y}_{i}(a)|s_{i};\theta)^{1\{\phi_{i}(s_{1},\cdots,s_{n}%
			,u)=a\}}\right]
	\end{align*}
	where $(x_{i},y_{i}(d_{i}(a)), d_{i}(a), u, s_{i})$ are the realizations
	$(X_{i},Y_{i}(D_{i}(a)), D_{i}(a), U, S_{i})$ for $i \in[n]$ and $\tilde
	{y}_{i}(a) = \{y_{i}(d_{i}(a)),d_{i}(a),x_{i}\}$. We make the following
	regularity assumptions with respect to the submodel.
	
	\begin{ass}
		\begin{enumerate}
			[label=(\roman*)]
			
			\item Suppose $f_{S}(s;\theta)$ and $f_{\mathcal{Y}(a)|S}(\overline
			{y}|s;\theta)$ for $a=0,1$ are differentiable in quadratic mean at $\theta
			^{*}$ with score functions $g_{s}(S_{i})$ and $g_{a}(\mathcal{Y}_{i}%
			(a)|S_{i})$ for $a=0,1$, respectively, such that
			\begin{align*}
				&  \dot{f}_{\mathcal{Y}(a)|S}(\overline{y}|s;\theta) = \frac{\partial
					\log(f_{\mathcal{Y}(a)|S}(\overline{y}|s;\theta) )}{\partial\theta}, \quad
				\dot{f}_{S}(s;\theta) = \frac{\partial\log(f_{S}(s;\theta))}{\partial\theta
				},\\
				&  \dot{f}_{\mathcal{Y}(a)|S}(\mathcal{Y}_{i}(a)|S_{i};\theta^{*}) =
				g_{a}(\mathcal{Y}_{i}(a)|S_{i}), \quad\text{and} \quad\dot{f}_{S}(S_{i}%
				;\theta^{*}) = g_{s}(S_{i}).
			\end{align*}
			%\item Suppose
			%\begin{align*}
			%\dot{f}_{\mathcal{Y}(a)|S}(\mathcal{Y}_i(a),S_i;\theta^*) = \underline{\Xi}_0(\mathcal{D}_i,S_i) \quad \text{and} \quad g_s(S_i) = \underline{\Xi}_2(\mathcal{D}_i,S_i)
			%\end{align*}

			\item Suppose $\dot{f}_{\mathcal{Y}(a)|S}(\overline{y}|s;\theta)$ and $\dot
			{f}_{S}(s;\theta)$ are continuous at $\theta^{*}$ so that there exist a
			sequence $t_{n} = o(1)$ and a function $L_{a}(\mathcal{Y}_{i}(a),S_{i})$ such
			that
			\begin{align*}
				|\dot{f}_{\mathcal{Y}(a)|S}(\mathcal{Y}_{i}(a)|S_{i};\theta^{*}+h/\sqrt{n}) -
				g_{a}(\mathcal{Y}_{i}(a)|S_{i})| + |\dot{f}_{S}(S_{i};\theta^{*}+h/\sqrt
				{n})-g_{s}(S_{i})| \leq t_{n} L_{a}(\mathcal{Y}_{i}(a),S_{i})
			\end{align*}
			and $\mathbb{E} |Y_{i}(D_{i}(a))L_{a}(\mathcal{Y}_{i}(a),S_{i})| < \infty$ for
			$a=0,1$.
			
			\item Suppose there exists a constant $C>0$ such that
			\begin{align*}
				\max_{s \in\mathcal{S}} \mathbb{E}%
				\sbr[2]{ \left.\envert[1]{\underline{\Xi}_1(\mathcal{D}_i,S_i)g_1(\mathcal{Y}_i(1)|S_i)} +\envert[1]{ \underline{\Xi}_0(\mathcal{D}_i,S_i)g_0(\mathcal{Y}_i(0)|S_i)}\right|S_i=s}
				&  \leq C\\
				\max_{s \in\mathcal{S}} \mathbb{E}%
				\sbr[2]{ \left.\underline{\Xi}_1(\mathcal{D}_i,S_i)^2 + \underline{\Xi}_0(\mathcal{D}_i,S_i)^2\right|S_i=s}
				&  \leq C,
			\end{align*}
			where $\underline{\Xi}_{1}(\mathcal{D}%
			_{i},S_{i})$, $\underline{\Xi}_{0}(\mathcal{D}_{i},S_{i})$ and $\underline{\Xi
			}_{2}(S_{i})$ are defined as $\Xi_{1}(\mathcal{D}_{i},S_{i})$, $\Xi_{0}(\mathcal{D}%
			_{i},S_{i})$ and $\Xi_{2}(S_{i})$ in \eqref{eq:Xi1}--\eqref{eq:Xi2},
			respectively, with the researcher-specified working model $\overline{\mu}%
			^{b}(a,s,x)$ equal to the true specification $\mu^{b}(a,s,x)$ for all
			$(a,b,s,x) \in\{0,1\} \times\{D,Y\} \times\mathcal{SX}$.
		\end{enumerate}
		\label{ass:E}
	\end{ass}
	We denote $\tau(\theta) = \mathbb{E}_{\theta} (Y_{i}(1) - Y_{i}(0)
	|D_{i}(1)>D_{i}(0))$, where $\mathbb{E}_{\theta}(\cdot)$ means the expectation
	is taken with the parametric submodel indexed by $\theta$. We further denote
	$\mathbb{E}(\cdot) = \mathbb{E}_{\theta^{*}}(\cdot)$, which is the expectation
	with respect to the true DGP.

	\textbf{Proof of Theorem \ref{thm:eff}.} Following the same argument in
	\cite{A22}, in order to show the semiparametric efficiency bound, we only need
	to show (1) local asymptotic normality of the log likelihood ratio for the
	parametric submodel with tangent set of the form
	\begin{align}
		\mathbb{T} =
		\begin{pmatrix}
			& \Psi(\mathcal{D}_{i},S_{i},A_{i}) = g_{s}(S_{i}) + A_{i}g_{1}(\mathcal{Y}%
			_{i}(1)|S_{i}) + (1-A_{i})g_{0}(\mathcal{Y}_{i}(0)|S_{i}):\\
			& \mathbb{E}%
			\sbr[1]{g_s^2(S_i) + \sum_{a=0,1}g_a^2(\mathcal{Y}_i(a)|S_i)}<\infty,
			\mathbb{E}g_{s}(S_{i})=0, \mathbb{E}(g_{a}(\mathcal{Y}_{i}(a)|S_{i})|S_{i}) =
			0,\\
			& \mathbb{E}(g_{1}(\mathcal{Y}_{i}(1)|S_{i})|X_{i},S_{i}) = \mathbb{E}%
			(g_{0}(\mathcal{Y}_{i}(0)|S_{i})|X_{i},S_{i})
		\end{pmatrix}
		.\label{eq:T}%
	\end{align}
	and (2) $\sqrt{n}(\tau(\theta^{*}+h/\sqrt{n}) - \tau(\theta^{*})) =
	\langle\tilde{\Psi},\Psi\rangle_{\bar{\mathbb{P}}} h+ o(1)$, where
	$\tilde{\Psi}(\mathcal{D}_{i},S_{i},A_{i})$ is the efficient score defined as
	\begin{align}
		\tilde{\Psi}(\mathcal{D}_{i},S_{i},A_{i}) =
		\sbr[1]{\underline{\Xi}_2(S_i) + A_i\underline{\Xi}_1(\mathcal{D}_i,S_i) + (1-A_i)\underline{\Xi}_0(\mathcal{D}_i,S_i)}/\mathbb{E}%
		[D_{i}(1)-D_{i}(0)]
	\end{align}
	and $\langle\tilde{\Psi},\Psi\rangle_{\bar{\mathbb{P}}} = \frac{1}{n}\sum_{i
		\in[n]}\mathbb{E}\tilde{\Psi}(\mathcal{D}_{i},S_{i},A_{i})\Psi(\mathcal{D}%
	_{i},S_{i},A_{i})$ is the inner product w.r.t. measure $\bar{\mathbb{P}}
	:=\frac{1}{n}\sum_{i \in[n]} \mathbb{P}_{i}$. We establish these two results
	in two steps.
	
	\paragraph{Step 1.}
	
	Denote $\theta_{n} = \theta^{*} + h/\sqrt{n}$ where $\theta^{*}$ is fixed and
	$\mathbb{P}_{n,h}$ as the joint distribution of $\{Y_{i},X_{i},S_{i}%
	,A_{i}\}_{i \in[n]}$ under $\theta_{n}$. The log likelihood ratio for
	$\theta_{n}$ against $\theta^{*}$ is given by
	\begin{align*}
		\ell_{n,h} = \sum_{i \in[n]}\tilde{\ell}_{s}(S_{i};\theta_{n}) + \sum_{a =
			0,1}\sum_{i \in[n]}1\{A_{i}=a\} \tilde{\ell}_{\mathcal{Y}(a)|S}(\mathcal{Y}%
		_{i}|S_{i};\theta_{n}),
	\end{align*}
	where $\mathcal{Y}_{i} = (Y_{i},D_{i},X_{i})$, $\tilde{\ell}_{s}(S_{i}%
	;\theta_{n}) = \log\left( \frac{f_{S}(S_{i};\theta_{n})}{f_{S}(S_{i}%
		;\theta^{*})}\right) $, and $\tilde{\ell}_{\mathcal{Y}(a)|S}(\mathcal{Y}%
	_{i}|S_{i};\theta_{n}) = \log\left( \frac{f_{\mathcal{Y}(a)|S}(\mathcal{Y}%
		_{i}|S_{i};\theta_{n})}{f_{\mathcal{Y}(a)|S}(\mathcal{Y}_{i}|S_{i};\theta
		^{*})}\right) $ for $a=0,1$. Then, \citet[Corollary 3.1]{A22} shows
	$\ell_{n,h}$ converges in distribution to a $\mathcal{N}(-h^{\prime}\tilde
	{I}^{*}h/2,h^{\prime}\tilde{I}^{*}h)$ law under $\theta^{*}$ where $\tilde
	{I}^{*}$ is the limit of
	\begin{align*}
		\mathbb{E}_{\theta^{*}}g_{s}^{2}(S_{i}) + \frac{1}{n}\sum_{i \in[n]} \sum_{a =
			0,1}1\{A_{i}=a\}\mathbb{E}_{\theta^{*}}\left[ g_{a}^{2}(\mathcal{Y}%
		_{i}(a)|S_{i})|S_{i}\right] .
	\end{align*}
	and the score for this parametric submodel can be written as
	\begin{align}
		\Psi(\mathcal{D}_{i},S_{i},A_{i}) = g_{s}(S_{i}) + A_{i}g_{1}(\mathcal{Y}%
		_{i}(1)|S_{i}) + (1-A_{i})g_{0}(\mathcal{Y}_{i}(0)|S_{i}). \label{eq:score}%
	\end{align}

	We note that by definition, we have
	\begin{align*}
		\mathbb{E}g_{s}(S_{i})=0 \quad\text{and} \quad\mathbb{E}(g_{a}(\mathcal{Y}%
		_{i}(a)|S_{i})|S_{i}) = 0.
	\end{align*}
	In addition, we have the equality that, for an arbitrary function $h(\cdot)$
	of $X$ such that $\mathbb{E}h^{2}(X)<\infty$,
	\begin{align}
		&  \mathbb{E}_{\theta}(h(X)|S) = \int_{x} h(x)f_{X|S}(x|S;\theta)
		\text{d}x\nonumber\\
		& = \int_{x}
		h(x)\sbr[3]{\int_{y(d(a)),d(a)}f_{Y(D(a)), D(a)|X,S}(y(d(a)),d(a)|x,S;\theta)\text{d} y(d(a)) \text{d} d(a)}f_{X|S}%
		(x|S;\theta) \text{d}x\nonumber\\
		& =\int_{y(d(a)),d(a),x} h(x) f_{\mathcal{Y}(a)|S}(y(d(a)),d(a),x|S;\theta)
		\text{d} y(d(a)) \text{d} d(a) \text{d}x\label{eq:fs}%
	\end{align}
	for $a=0,1$, where $f_{\mathcal{Y}(a)|S}(y(d(a)),d(a),x|s;\theta)$ is the
	joint likelihood of $(Y(D(a)),D(a),X)$ given $S$ for $a =0,1$. We note that,
	for $a =0,1$,
	\begin{align*}
		\frac{\partial f_{\mathcal{Y}(a)|S}(y(d(a)),d(a),x|S;\theta^{*})}{
			\partial\theta}  &  = f_{\mathcal{Y}(a)|S}(y(d(a)),d(a),x|S;\theta^{*})
		g_{a}(\mathcal{Y}(a)|S).
	\end{align*}
	Therefore, taking derivatives of $\theta$ in \eqref{eq:fs} and evaluating the
	derivatives at $\theta^{*}$, we have
	\begin{align*}
		\mathbb{E} \sbr[1]{h(X) g_1(\mathcal{Y}(1)|S)|S} = \mathbb{E}%
		\sbr[1]{ h(X) g_0(\mathcal{Y}(0)|S)|S},
	\end{align*}
	which implies $\mathbb{E}%
	\sbr[1]{g_1(\mathcal{Y}(1)|S) - g_0(\mathcal{Y}(0)|S)|X,S} = 0$. Therefore,
	the tangent set can be written in \eqref{eq:T}.
	
	%In addition, let $ H = \mathbb{E}_{\theta^*}(D_i(1)-D_i(0))$. we can choose a parametric submodel so that $g_s(S_i) = \frac{\underline{\Xi}_2(S_i)}{H}$ and $g_a(\mathcal{Y}_i(a)|S_i) = \frac{\underline{\Xi}_a(\mathcal{D}_i,S_i)}{H}$. In this case, we can show that
	%\begin{align*}
	%& \mathbb{E}_{\theta^*}g_s^2(S_i) + \frac{1}{n}\sum_{i \in [n]} \sum_{a = 0,1}1\{A_i=a\}\mathbb{E}_{\theta^*}\left[g_a^2(\mathcal{Y}_i(a)|S_i)|S_i\right] \\
	%& = \mathbb{E}_{\theta^*}g_s^2(S_i) + \frac{1}{n}\sum_{i \in [n]} \sum_{a = 0,1}1\{A_i=a\}1\{S_i=s\}\mathbb{E}_{\theta^*}\left[g_a^2(\mathcal{Y}_i(a)|S_i=s)|S_i=s\right]  \\
	%& = \mathbb{E}_{\theta^*}g_s^2(S_i) + \frac{1}{n}\sum_{i \in [n]} (A_i-\pi(s))1\{S_i=s\} \mathbb{E}_{\theta^*}\left[g_1^2(\mathcal{Y}_i(1)|S_i=s)|S_i=s\right] \\
	%&+ \frac{1}{n}\sum_{i \in [n]} \pi(s)1\{S_i=s\} \mathbb{E}_{\theta^*}\left[g_1^2(\mathcal{Y}_i(1)|S_i=s)|S_i=s\right] \\
	%&+ \frac{1}{n}\sum_{i \in [n]} (1-A_i-(1-\pi(s)))1\{S_i=s\} \mathbb{E}_{\theta^*}\left[g_0^2(\mathcal{Y}_i(0)|S_i=s)|S_i=s\right] \\
	%&+ \frac{1}{n}\sum_{i \in [n]} (1-\pi(s))1\{S_i=s\} \mathbb{E}_{\theta^*}\left[g_0^2(\mathcal{Y}_i(0)|S_i=s)|S_i=s\right]
	%\convP \underline{\sigma}_1^2+\underline{\sigma}_0^2+\underline{\sigma}_2^2,
	%\end{align*}
	%which implies $\tilde{I}^* = \underline{\sigma}^2$ and local asymptotic normality for this specific submodel holds.

	\paragraph{Step 2.}
	
	We have
	\begin{align*}
		\tau(\theta) = \frac{\mathbb{E}_{\theta}(Y_{i}(D_{i}(1)) - Y_{i}(D_{i}%
			(0)))}{\mathbb{E}_{\theta}(D_{i}(1) - D_{i}(0))}.
	\end{align*}
	By the mean-value theorem, we have
	\begin{align*}
		\tau(\theta^{*} + h/\sqrt{n}) - \tau(\theta^{*})  &  = \frac{\partial
			\tau(\theta)}{ \partial\theta}\bigg|_{ \theta= \tilde{\theta}} \frac{h}%
		{\sqrt{n}}\\
		&  = \frac{\partial\tau(\theta)}{ \partial\theta}\bigg|_{ \theta= \theta^{*}}
		\frac{h}{\sqrt{n}} + \left[ \frac{\partial\tau(\theta)}{ \partial\theta
		}\bigg|_{ \theta= \tilde{\theta}} -\frac{\partial\tau(\theta)}{ \partial
			\theta}\bigg|_{ \theta= \theta^{*}}\right] \frac{h}{\sqrt{n}}.
	\end{align*}
	Let $G(\theta) = \mathbb{E}_{\theta}\sbr[1]{Y(D(1)) -  Y(D(0))}$, $H(\theta) =
	\mathbb{E}_{\theta}\sbr[1]{D(1) -  D(0)}$, $G = G(\theta^{*})$, and $H =
	H(\theta^{*})$. Note that $\tau(\theta)=G(\theta)/H(\theta)$ and $\tau=G/H$.
	Then, we have
	\begin{align*}
		\frac{\partial G(\theta)}{\partial\theta}  &  = \mathbb{E}_{\theta}
		\sbr[1]{Y(D(1))(\dot{f}_{\mathcal{Y}(1)|S}(\mathcal{Y}(1)|S;\theta) + \dot{f}_S(S;\theta)}
		- \mathbb{E}_{\theta}%
		\sbr[1]{ Y(D(0))(\dot{f}_{\mathcal{Y}(0)|S}(\mathcal{Y}(0)|S;\theta)+ \dot{f}_S(S;\theta))}\\
		\frac{\partial H(\theta)}{\partial\theta}  &  = \mathbb{E}_{\theta
		}%
		\sbr[1]{ D(1)(\dot{f}_{\mathcal{Y}(1)|S}(\mathcal{Y}(1)|S;\theta)+ \dot{f}_S(S;\theta))}
		- \mathbb{E}_{\theta}%
		\sbr[1]{ D(0)(\dot{f}_{\mathcal{Y}(0)|S}(\mathcal{Y}(0)|S;\theta)+ \dot{f}_S(S;\theta))}.
	\end{align*}
	Therefore by Assumption \ref{ass:E} we can find a constant $L$ such that
	\begin{align*}
		\biggl|\frac{\partial\tau(\theta)}{ \partial\theta}\bigg|_{ \theta=
			\tilde{\theta}} -\frac{\partial\tau(\theta)}{ \partial\theta}\bigg|_{ \theta=
			\theta^{*}}\biggr|  &  = \biggl|\frac{H(\tilde{\theta}) \frac{\partial
				G(\tilde{\theta})}{\partial\theta} - G(\tilde{\theta})\frac{\partial
				H(\tilde{\theta})}{\partial\theta}}{H^{2}(\tilde{\theta})} - \frac
		{H(\theta^{*}) \frac{\partial G(\theta^{*})}{\partial\theta} - G(\theta
			^{*})\frac{\partial H(\theta^{*})}{\partial\theta}}{H^{2}(\theta^{*}%
			)}\biggr|\\
		&  \leq t_{n} L.
	\end{align*}
	This implies
	\begin{align}
		\sqrt{n}(\tau(\theta^{*} + h/\sqrt{n}) - \tau(\theta^{*})) = \frac
		{\partial\tau(\theta)}{ \partial\theta}\bigg|_{ \theta= \theta^{*}} h +
		o(1).\label{eq:derv1}%
	\end{align}

	In addition, following the calculation by \cite{F07late}, we have
	\begin{align*}
		& \frac{\partial\tau(\theta)}{ \partial\theta}\bigg|_{ \theta= \theta^{*}} =
		\frac{\left[ \frac{\partial G(\theta)}{\partial\theta} - \tau\frac{\partial
				H(\theta)}{\partial\theta}\right] \bigg|_{ \theta= \theta^{*}}}{H}\\
		&  = \frac{\mathbb{E}\left[ (Y(D(1))-\tau D(1))(g_{1}(\mathcal{Y}(1)|S) +
			g_{s}(S)) \right] }{H} - \frac{\mathbb{E}\left[ (Y(D(0))-\tau D(0))(g_{0}%
			(\mathcal{Y}(0)|S) + g_{s}(S)) \right] }{H},
	\end{align*}
	where for notation simplicity, we write $\mathbb{E}_{\theta^{*}}$ as
	$\mathbb{E}$. Let
	\begin{align*}
		\Gamma(X,S)  &  = \biggl[\pi(S)(\mathbb{E}(Z|X,S) - \mathbb{E}(Z|S)) + \left(
		1- \pi(S)\right) (\mathbb{E}(W|X,S)-\mathbb{E}(W|S))\\
		&  - \tau\left(  \pi(S)(\mathbb{E}(D(0)|X,S) - \mathbb{E}(D(0)|S)) + \left( 1-
		\pi(S)\right) (\mathbb{E}(D(1)|X,S) -\mathbb{E}(D(1)|S)) \right) \biggr].
	\end{align*}
	Then, we have
	\begin{align*}
		&  Y(D(1))-\tau D(1) = \pi(S) \underline{\Xi}_{1}(\mathcal{D},S) +
		\mathbb{E}(W - \tau D(1)|S) + \Gamma(X,S),\\
		&  Y(D(0))-\tau D(0) = -(1-\pi(S)) \underline{\Xi}_{0}(\mathcal{D},S) +
		\mathbb{E}(Z-\tau D(0)|S) + \Gamma(X,S).
	\end{align*}
	This implies
	\begin{align*}
		&  \mathbb{E}(Y(D(1))-\tau D(1))(g_{1}(\mathcal{Y}(1)|S) + g_{s}(S))\\
		&  = \mathbb{E}\pi(S)\underline{\Xi}_{1}(\mathcal{D},S)g_{1}(\mathcal{Y}%
		(1)|S)+\mathbb{E}\Gamma(X,S)g_{1}(\mathcal{Y}(1)|S) + \mathbb{E}%
		(\mathbb{E}(W-\tau D(1)|S)g_{s}(S)),
	\end{align*}
	\begin{align*}
		&  \mathbb{E}(Y(D(0))-\tau D(0))(g_{0}(\mathcal{Y}(0)|S) + g_{s}(S))\\
		&  = -\mathbb{E}(1-\pi(S))\Xi_{0}(\mathcal{D},S)g_{0}(\mathcal{Y}%
		(0)|S)+\mathbb{E}\Gamma(X,S)g_{0}(\mathcal{Y}(0)|S) + \mathbb{E}%
		(\mathbb{E}(Z-\tau D(0)|S)g_{s}(S)),
	\end{align*}
	where we have used $\mathbb{E}[\underline{\Xi}_{a}(\mathcal{D},S)|S]=0$,
	$\mathbb{E}[\Gamma(X,S)|S]=0$ and $\mathbb{E}[g_{a}(\mathcal{Y}(a)|S)|S]=0$
	for $a=0,1$. Then
	\begin{align}
		\frac{\partial\tau(\theta)}{ \partial\theta}\bigg|_{ \theta= \theta^{*}} &  =
		\frac{\mathbb{E}\pi(S)\underline{\Xi}_{1}(\mathcal{D},S)g_{1}(\mathcal{Y}%
			(1)|S)}{H} + \frac{\mathbb{E}(1-\pi(S))\underline{\Xi}_{0}(\mathcal{D}%
			,S)g_{0}(\mathcal{Y}(0)|S)}{H} + \frac{\mathbb{E}g_{s}(S)\underline{\Xi}%
			_{2}(S)}{H}\nonumber\\
		&  + \frac{\mathbb{E}\Gamma(X,S) (g_{1}(\mathcal{Y}(1)|S) - g_{0}%
			(\mathcal{Y}(0)|S))}{H}\nonumber\\
		&  = \frac{\mathbb{E}\pi(S)\underline{\Xi}_{1}(\mathcal{D},S)g_{1}%
			(\mathcal{Y}(1)|S)}{H} + \frac{\mathbb{E}(1-\pi(S))\underline{\Xi}%
			_{0}(\mathcal{D},S)g_{0}(\mathcal{Y}(0)|S)}{H} + \frac{\mathbb{E}%
			g_{s}(S)\underline{\Xi}_{2}(S)}{H}.\label{eq:derv2}%
	\end{align}
	where the last equality is due to (\ref{eq:T}).
	
	On the other hand, we note that
	\begin{align*}
		& \langle\tilde{\Psi},\Psi\rangle_{\bar{\mathbb{P}}} = \frac{1}{n}\sum_{i
			\in[n]}\left[ \frac{ \mathbb{E}\sbr[1]{g_s(S_i)\underline{\Xi}_2(S_i)}}{H} +
		\frac{\mathbb{E}%
			\sbr[1]{A_i\underline{\Xi}_1(\mathcal{D}_i,S_i)g_1(\mathcal{Y}_i(1)|S_i)}}{H}
		+ \frac{\mathbb{E}%
			\sbr[1]{(1-A_i)\underline{\Xi}_0(\mathcal{D}_i,S_i)g_0(\mathcal{Y}_i(0)|S_i)}}%
		{H}\right] \\
		&  = \frac{\partial\tau(\theta)}{ \partial\theta}\bigg|_{ \theta= \theta^{*}}
		+ \frac{1}{n}\sum_{i \in[n]}\left[  \frac{\mathbb{E}%
			\sbr[1]{(A_i-\pi(S_i))\underline{\Xi}_1(\mathcal{D}_i,S_i)g_1(\mathcal{Y}_i(1)|S_i)}}%
		{H} - \frac{\mathbb{E}%
			\sbr[1]{(A_i-\pi(S_i))\underline{\Xi}_0(\mathcal{D}_i,S_i)g_0(\mathcal{Y}_i(0)|S_i)}}%
		{H}\right] .
	\end{align*}
	In addition, by Assumption \ref{ass:E}, we have, for some constant $C>0$,
	that
	\begin{align*}
		&  \left| \frac{1}{n}\sum_{i \in[n]}\left[  \frac{\mathbb{E}(A_{i}-\pi
			(S_{i}))\underline{\Xi}_{1}(\mathcal{D}_{i},S_{i})g_{1}(\mathcal{Y}%
			_{i}(1)|S_{i})}{H} - \frac{\mathbb{E}(A_{i}-\pi(S_{i}))\underline{\Xi}%
			_{0}(\mathcal{D}_{i},S_{i})g_{0}(\mathcal{Y}_{i}(0)|S_{i})}{H}\right] \right|
		\\
		&  \leq\frac{C}{n}\sum_{s\in\mathcal{S}} \mathbb{E}|B_{n}(s)| = o(1),
	\end{align*}
	where the inequality is by law of iterated expectation and Assumption
	\ref{ass:E}(iii) and the last equality is due to $\mathbb{E}|B_{n}%
	(s)|/n=o(1)$.\footnote{Since $|B_{n}(s)/n| \leq1$, $\{B_{n}(s)/n\}$ is
		uniformly integrable. Then from $B_{n}(s)/n = o_{p}(1)$, we have
		$\mathbb{E}|B_{n}(s)|/n=o(1)$.} This implies
	\begin{align}
		\langle\tilde{\Psi},\Psi\rangle_{\bar{\mathbb{P}}} = \frac{\partial\tau
			(\theta)}{ \partial\theta}\bigg|_{ \theta= \theta^{*}}+o(1) .\label{eq:derv3}%
	\end{align}
	Combining \eqref{eq:derv1}, \eqref{eq:derv2} and \eqref{eq:derv3}, we obtained
	the desired result for Step 2. Last, it is obvious from the previous
	calculation that
	\begin{align*}
		\langle\tilde{\Psi},\tilde{\Psi} \rangle_{\bar{\mathbb{P}}} \rightarrow
		\underline{\sigma}^{2}.
	\end{align*}

	%where the last line is by the facts that
	%\begin{align*}
	%\frac{\mathbb{E}(\mathbb{E}(W - Z - \tau (D(1)-D(0)) |S)\Xi_2(S))}{H^2} =\frac{\mathbb{E}\Xi_2^2(S)}{H^2}
	%\end{align*}
	%and
	%\begin{align*}
	%\mathbb{E}(\Xi_1(\mathbb{D},S) - \Xi_0(\mathbb{D},S)|X,S) = 0.
	%\end{align*}
	%This leads to the result that
	%\begin{align*}
	%\sqrt{n}(\tau(\theta^* + h/\sqrt{n}) - \tau(\theta^*)) = \underline{\sigma}^2 h + o(1).
	%\end{align*}

	%Last, recall the score defined in \eqref{eq:score}. We have
	%\begin{align*}
	%\mathbb{E}\Psi^2(\mathcal{D}_i,S_i,A_i) & = \frac{1}{H^2}\left[\mathbb{E}\Xi_2^2(S_i) + \mathbb{E}A_i \Xi_1^2(\mathcal{D}_i,S_i) + \mathbb{E}(1-A_i) \Xi_0^2(\mathcal{D}_i,S_i)\right] \\
	%& = \underline{\sigma}^2 + \frac{1}{H^2}\left[\mathbb{E}(A_i-\pi(S_i)) \Xi_1^2(\mathcal{D}_i,S_i) - \mathbb{E}(A_i-\pi(S_i)) \Xi_1^2(\mathcal{D}_i,S_i)\right] = \underline{\sigma}^2 + o(1).
	%\end{align*}
	%To see that last equality, we note that, by the condition in Theorem \ref{thm:eff}, for some constant $C>0$ and $a=0,1$,
	%\begin{align}
	%|\mathbb{E}(A_i-\pi(S_i)) \Xi_a^2(\mathcal{D}_i,S_i)| & = |\mathbb{E} ((\mathbb{E}A_i|S^{(n)}) - \pi(S_i)) \mathbb{E}(\Xi_a^2(\mathcal{D}_i,S_i)|S_i)| \notag \\
	%& \leq C \mathbb{E}|\mathbb{E}A_i|S^{(n)}) - \pi(S_i)| = o(1).
	%\label{eq:score2}
	%\end{align}
	%Combining \eqref{eq:derivative} and \eqref{eq:score2}, we have
	%\begin{align*}
	%\sqrt{n}(\tau(\theta^* + h/\sqrt{n}) - \tau(\theta^*)) = (\mathbb{E}\Psi^2(\mathcal{D}_i,S_i,A_i)+o(1)) h + o(1) = \mathbb{E}\Psi^2(\mathcal{D}_i,S_i,A_i) h + o(1),
	%\end{align*}
	%which is to the desired result.

	\section{Proof of Theorem \ref{thm:par}}
	
	The proof is divided into two steps. In the first step, we show Assumption
	\ref{ass:Delta}(i). In the second step, we establish Assumptions
	\ref{ass:Delta}(ii) and \ref{ass:Delta}(iii).
	
	\vspace{1.5mm} \noindent\textbf{Step 1.} Recall
	\begin{align*}
		\Delta^{Y}(a,s,X_{i}) = \hat{\mu}^{Y}(a,s,X_{i}) - \overline{\mu}%
		^{Y}(a,s,X_{i}) = \Lambda_{a,s}^{Y}(X_{i},\hat{\theta}_{a,s})-\Lambda
		_{a,s}^{Y}(X_{i},\theta_{a,s}),
	\end{align*}
	and $\{X_{i}^{s}\}_{i \in[n]}$ is generated independently from the
	distribution of $X_{i}$ given $S_{i}=s$, and so is independent of
	$\{A_{i},S_{i}\}_{i \in[n]}$. Let $M_{a,s}(\theta_{1},\theta_{2}): =
	\mathbb{E}[\Lambda_{a,s}^{Y}(X_{i},\theta_{1}) - \Lambda_{a,s}^{Y}%
	(X_{i},\theta_{2})|S_{i}=s] = \mathbb{E}[\Lambda_{a,s}^{Y}(X_{i}^{s}%
	,\theta_{1}) - \Lambda_{a,s}^{Y}(X_{i}^{s},\theta_{2})]$. We have
	\begin{align}
		&  \biggl|\frac{\sum_{i\in I_{1}(s)}\Delta^{Y}(a,s,X_{i})}{n_{1}(s)} -
		\frac{\sum_{i \in I_{0}(s)}\Delta^{Y}(a,s,X_{i})}{n_{0}(s)}\biggr|\nonumber\\
		& \leq
		\envert[3]{\frac{\sum_{i\in I_1(s)}[ \Delta^Y(a,s,X_i) - M_{a,s}(\hat{\theta}_{a,s}, \theta_{a,s})] }{n_1(s)} }
		+
		\envert[3]{\frac{\sum_{i\in I_0(s)}[ \Delta^Y(a,s,X_i) - M_{a,s}(\hat{\theta}_{a,s},\theta_{a,s})] }{n_0(s)} }\nonumber\\
		& = o_{p}(n^{-1/2}).\label{eq:Delta}%
	\end{align}
	To see the last equality, we note that, for any $\varepsilon>0$, with
	probability approaching one (w.p.a.1), we have
	\begin{align*}
		\max_{s \in\mathcal{S}}||\hat{\theta}_{a,s} - \theta_{a,s}||_{2}
		\leq\varepsilon.
	\end{align*}
	Therefore, on the event $\mathcal{A}_{n}(\varepsilon) := \{\max_{s
		\in\mathcal{S}}||\hat{\theta}_{a,s} - \theta_{a,s}||_{2} \leq\varepsilon,
	\min_{s \in\mathcal{S}}n_{1}(s) \geq\varepsilon n\}$ we have
	\begin{align*}
		&
		\envert[3]{\frac{\sum_{i\in I_1(s)}[ \Delta^Y(a,s,X_i) - M_{a,s}(\hat{\theta}_{a,s},\theta_{a,s})] }{n_1(s)} }
		\biggl|\{A_{i},S_{i}\}_{i \in[n]}\\
		& \overset{d}{=}
		\envert[3]{\frac{\sum_{i = N(s)+1}^{N(s)+n_1(s)}[ \Delta^Y(a,s,X_i^s) - M_{a,s}(\hat{\theta}_{a,s},\theta_{a,s})] }{n_1(s)} }
		\biggl|\{A_{i},S_{i}\}_{i \in[n]} \leq||\mathbb{P}_{n_{1}(s)} - \mathbb{P}%
		||_{\mathcal{F}}\biggl|\{A_{i},S_{i}\}_{i \in[n]},
	\end{align*}
	where
	\begin{align*}
		\mathcal{F} = \{ \Lambda_{a,s}^{Y}(X_{i}^{s},\theta_{1}) - \Lambda_{a,s}%
		^{Y}(X_{i}^{s},\theta_{2}) - M_{a,s}(\theta_{1},\theta_{2}): ||\theta
		_{1}-\theta_{2}||_{2}\leq\varepsilon\}.
	\end{align*}
	Therefore, for any $\delta>0$ we have
	\begin{align*}
		&  \mathbb{P}%
		\del[3]{ \envert[3]{\frac{\sum_{i\in I_1(s)}[ \Delta^Y(a,s,X_i) - M_{a,s}(\hat{\theta}_{a,s},\theta_{a,s})] }{n_1(s)}}\geq \delta n^{-1/2}}\\
		& \leq\mathbb{P}%
		\del[3]{\envert[3]{\frac{\sum_{i\in I_1(s)}[ \Delta^Y(a,s,X_i) - M_{a,s}(\hat{\theta}_{a,s},\theta_{a,s})] }{n_1(s)} } \geq \delta n^{-1/2}, \mathcal{A}_n(\eps) }
		+ \mathbb{P}(\mathcal{A}_{n}^{c}(\varepsilon))\\
		& \leq\mathbb{E}%
		\sbr[3]{\mathbb{P}\del[3]{ \envert[3]{\frac{\sum_{i\in I_1(s)}[ \Delta^Y(a,s,X_i) - M_{a,s}(\hat{\theta}_{a,s},\theta_{a,s})] }{n_1(s)} } \geq \delta n^{-1/2}, \mathcal{A}_n(\eps)\biggl|\{A_i,S_i\}_{i \in [n]}}}
		+ \mathbb{P}(\mathcal{A}_{n}^{c}(\varepsilon))\\
		& \leq\sum_{s \in\mathcal{S}}\mathbb{E}\left[ \mathbb{P}\left( ||\mathbb{P}%
		_{n_{1}(s)} - \mathbb{P}||_{\mathcal{F}} \geq\delta n^{-1/2} \biggl|\{A_{i}%
		,S_{i}\}_{i \in[n]}\right) 1\{n_{1}(s) \geq n\varepsilon\}\right]  +
		\mathbb{P}(\mathcal{A}_{n}^{c}(\varepsilon))\\
		& \leq\sum_{s \in\mathcal{S}}\mathbb{E}\left\{  \frac{n^{1/2}\mathbb{E}\left[
			||\mathbb{P}_{n_{1}(s)} - \mathbb{P}||_{\mathcal{F}}|\{A_{i},S_{i}\}_{i
				\in[n]}\right] 1\{n_{1}(s) \geq n\varepsilon\}}{\delta}\right\}  +
		\mathbb{P}(\mathcal{A}_{n}^{c}(\varepsilon)).
	\end{align*}
	By Assumption \ref{ass:par}, $\mathcal{F}$ is a VC-class with a fixed VC index
	and envelope $L_{i}$ such that $\mathbb{E}(L_{i}^{q}|\{A_{i},S_{i}\}_{i
		\in[n]}) \leq C<\infty$. This implies $\mathbb{E} \max_{i \in[n_{1}(s)]}%
	L_{i}^{2} \leq C n_{1}^{2/q}(s)$. In addition,
	\begin{align*}
		\sup_{f \in\mathcal{F}}\mathbb{P}f^{2} \leq\mathbb{E}L_{i}^{2}(\theta
		_{1}-\theta_{2})^{2} \leq C\varepsilon^{2}.
	\end{align*}
	Invoke \citet[Corollary 5.1]{CCK14} with $A$ and $\nu$ being fixed constants,
	and $\sigma^{2}$, $F$, $M$ being $C\varepsilon^{2}$, $L$, $\max_{1\leq i\leq
		n_{1}(s)} L_{i}$, respectively, in our setting. We have
	\begin{align*}
		&  n^{1/2}\mathbb{E}\left[ ||\mathbb{P}_{n_{1}(s)} - \mathbb{P}||_{\mathcal{F}%
		}|\{A_{i},S_{i}\}_{i \in[n]}\right] 1\{n_{1}(s) \geq n\varepsilon\}\\
		&  \leq
		C\del[3]{\sqrt{\frac{n}{n_1(s)}\eps^2\log(1/\eps)} + n^{1/2}n_1^{1/q-1}(s)\log(1/\eps)}1\{n_{1}%
		(s) \geq n\varepsilon\}\\
		& \leq C (\varepsilon^{1/2}\log^{1/2}(1/\varepsilon) + n^{1/q-1/2}%
		\varepsilon^{1/q-1} \log(1/\varepsilon)).
	\end{align*}
	Therefore,
	\begin{align*}
		\mathbb{E}\left\{  \frac{n^{1/2}\mathbb{E}%
			\sbr[1]{||\mathbb{P}_{n_1(s)} - \mathbb{P}||_{\mathcal{F}}|\{A_i,S_i\}_{i \in [n]}}1\{n_{1}%
			(s) \geq n\varepsilon\}}{\delta}\right\}  \leq C\mathbb{E}\left(
		\varepsilon^{1/2}\log^{1/2}(1/\varepsilon) + n^{1/q-1/2}\varepsilon
		^{1/q-1}\log(1/\varepsilon)\right) /\delta.
	\end{align*}
	By letting $n \rightarrow\infty$ followed by $\varepsilon\rightarrow0$, we
	have
	\begin{align*}
		\lim_{n\rightarrow\infty} \mathbb{P}%
		\del[3]{ \envert[3]{\frac{\sum_{i\in I_1(s)}[ \Delta^Y(a,s,X_i) - M_{a,s}(\hat{\theta}_{a,s}, \theta_{a,s})] }{n_1(s)} }\geq \delta n^{-1/2}}
		= 0,
	\end{align*}
	Therefore,
	\begin{align*}
		\envert[3]{\frac{\sum_{i\in I_1(s)}[ \Delta^Y(a,s,X_i) - M_{a,s}(\hat{\theta}_{a,s}, \theta_{a,s})] }{n_1(s)} }
		= o_{p}(n^{-1/2}).
	\end{align*}
	For the same reason, we have
	\begin{align*}
		\envert[3]{\frac{\sum_{i\in I_0(s)}[ \Delta^Y(a,s,X_i) - M_{a,s}(\hat{\theta}_{a,s},\theta_{a,s})] }{n_0(s)} }
		= o_{p}(n^{-1/2}),
	\end{align*}
	and \eqref{eq:Delta} holds.
	
	\textbf{Step 2}. We have
	\begin{align*}
		&  \frac{1}{n}\sum_{i=1}^{n} \Delta^{Y,2}(a,S_{i},X_{i}) = \frac{1}{n}%
		\sum_{i=1}^{n} \sum_{s \in\mathcal{S}}1\{S_{i}=s\} (\Lambda_{a,s}^{Y}%
		(X_{i},\hat{\theta}_{a,s}) - \Lambda_{a,s}^{Y}(X_{i},\theta_{a,s}))^{2}\\
		&  \leq\del[3]{\frac{1}{n}\sum_{i =1}^n L_i^2} C\max_{s \in\mathcal{S}}%
		||\hat{\theta}_{a,s}-\theta_{a,s}||_{2}^{2} = o_{p}(1).
	\end{align*}
	This verifies Assumption \ref{ass:Delta}(ii). Assumption \ref{ass:Delta}(iii)
	holds by Assumption \ref{ass:par}(ii).
	
	\section{Proof of Theorem \ref{thm:linear}}
	
	Let
	\begin{align}
		&  \nu^{Y}(a,S_{i},X_{i}) = \mathbb{E}(Y_{i}(D_{i}(a))|S_{i},X_{i}) -
		\mathbb{E}(Y_{i}(D_{i}(a))|S_{i})\quad\text{and}\nonumber\\
		&  \nu^{D}(a,S_{i},X_{i}) = \mathbb{E}(D_{i}(a)|S_{i},X_{i}) - \mathbb{E}%
		(D_{i}(a)|S_{i}) .\label{eq:nutilde}%
	\end{align}
	Also recall that $W_{i}= Y_{i}(D_{i}(1))$, $Z_{i} = Y_{i}(D_{i}(0))$, $\mu
	^{Y}(a,S_{i},X_{i}) = \mathbb{E}(Y_{i}(D_{i}(a))|S_{i},X_{i})$. Then, we have
	\begin{align*}
		\mathbb{E}\pi(S_{i}) \Xi_{1}^{2}(\mathcal{D}_{i},S_{i})  &  = \mathbb{E}%
		\left\{ \frac{\left( W_{i} - \mu^{Y}(1,S_{i},X_{i}) - \tau(D_{i}(1) - \mu
			^{D}(1,S_{i},X_{i}))\right) ^{2}}{\pi(S_{i})}\right\} \\
		&  + \mathbb{E}\biggl\{\pi(S_{i})\biggl[\frac{\nu^{Y}(1,S_{i},X_{i}) -
			\tilde{\mu}^{Y}(1,S_{i},X_{i}) - \tau(\nu^{D}(1,S_{i},X_{i}) - \tilde{\mu}%
			^{D}(1,S_{i},X_{i}) )}{\pi(S_{i})}\\
		&  + \tilde{\mu}^{Y}(1,S_{i},X_{i}) - \tilde{\mu}^{Y}(0,S_{i},X_{i}) -
		\tau(\tilde{\mu}^{D}(1,S_{i},X_{i}) - \tilde{\mu}^{D}(0,S_{i},X_{i}%
		))\biggr]^{2}\biggr\}.
	\end{align*}
	Similarly, we have
	\begin{align*}
		\mathbb{E}(1-\pi(S_{i})) \Xi_{0}^{2}(\mathcal{D}_{i},S_{i})  &  =
		\mathbb{E}\left\{ \frac{\left( Z_{i} - \mu^{Y}(0,S_{i},X_{i}) - \tau(D_{i}(0)
			- \mu^{D}(0,S_{i},X_{i}))\right) ^{2}}{1-\pi(S_{i})}\right\} \\
		&  + \mathbb{E}\biggl\{(1-\pi(S_{i}))\biggl[\frac{\nu^{Y}(0,S_{i},X_{i}) -
			\tilde{\mu}^{Y}(0,S_{i},X_{i}) - \tau(\nu^{D}(0,S_{i},X_{i}) - \tilde{\mu}%
			^{D}(0,S_{i},X_{i}) )}{1-\pi(S_{i})}\\
		&  -\left(  \tilde{\mu}^{Y}(1,S_{i},X_{i}) - \tilde{\mu}^{Y}(0,S_{i},X_{i}) -
		\tau(\tilde{\mu}^{D}(1,S_{i},X_{i}) - \tilde{\mu}^{D}(0,S_{i},X_{i}))\right)
		\biggr]^{2}\biggr\}.
	\end{align*}
	Last, we have
	\begin{align*}
		\mathbb{E}\Xi_{2}^{2}(S_{i})  &  = \mathbb{E}(\mu^{Y}(1,S_{i},X_{i}) - \mu
		^{Y}(0,S_{i},X_{i})-\tau(\mu^{D}(1,S_{i},X_{i}) - \mu^{D}(0,S_{i},X_{i}%
		)))^{2}\\
		&  - \mathbb{E}(\nu^{Y}(1,S_{i},X_{i}) - \nu^{Y}(0,S_{i},X_{i})-\tau(\nu
		^{D}(1,S_{i},X_{i}) - \nu^{D}(0,S_{i},X_{i})))^{2}%
	\end{align*}
	Let
	\begin{align*}
		\sigma_{\ast}^{2}  &  = (\mathbb{P}(D_{i}(1)>D_{i}(0)))^{-2}\biggl\{\mathbb{E}%
		\sbr[3]{\frac{\del[1]{W_i - \mu^Y(1,S_i,X_i) - \tau(D_i(1) - \mu^D(1,S_i,X_i))}^2}{\pi(S_i)}}\\
		&  + \mathbb{E}%
		\sbr[3]{\frac{\del[1]{Z_i - \mu^Y(0,S_i,X_i) - \tau[D_i(0) - \mu^D(0,S_i,X_i)]}^2}{1-\pi(S_i)}}\\
		&  + \mathbb{E}%
		\del[2]{\mu^Y(1,S_i,X_i) - \mu^Y(0,S_i,X_i)-\tau[\mu^D(1,S_i,X_i) - \mu^D(0,S_i,X_i)]}^{2}%
		\biggr\},
	\end{align*}
	which does not depend on the working models $\overline{\mu}^{b}(a,S_{i}%
	,X_{i})$ for $a=0,1$ and $b = D,Y$.
	Then, we have
	\begin{align*}
		\sigma^{2}((t_{a,s},b_{a,s})_{a=0,1,s\in\mathcal{S}}) = \frac{\sigma_{*}^{2} +
			V((t_{a,s},b_{a,s})_{a=0,1,s\in\mathcal{S}})}{\mathbb{P}(D_{i}(1) >
			D_{i}(0))^{2}},
	\end{align*}
	where $\sigma_{*}^{2}$ does not depend on $(t_{a,s},b_{a,s})_{a=0,1,s\in
		\mathcal{S}}$ and
	\begin{align*}
		V((t_{a,s},b_{a,s})_{a=0,1,s\in\mathcal{S}})  &  = \mathbb{E}
		\del[3]{\sqrt{\frac{\pi(S_i)}{1-\pi(S_i)}} A_{0}(S_i,X_i) + \sqrt{\frac{1-\pi(S_i)}{\pi(S_i)}} A_{1}(S_i,X_i)}^{2}%
		\\
		&  = \sum_{s\in\mathcal{S}} p(s)\mathbb{E}%
		\sbr[3]{ \del[3]{\sqrt{\frac{\pi(s)}{1-\pi(s)}} A_{0}(s,X_i) + \sqrt{\frac{1-\pi(s)}{\pi(s)}} A_{1}(s,X_i)}^2 \biggl|S_i=s}
	\end{align*}
	where for $a=0,1$,
	\begin{align*}
		A_{a}(s,x)  &  = \nu^{Y}(a,s,x) - \tilde{\mu}^{Y}(a,s,x) - \tau(\nu^{D}(a,s,x)
		- \tilde{\mu}^{D}(a,s,x) )\\
		&  = (\nu^{Y}(a,s,x) -\tau\nu^{D}(a,s,x)) - \tilde{\Psi}_{i,s}^{\top}(t_{a,s}
		- \tau b_{a,s}),
	\end{align*}
	and $(\tilde{\mu}^{Y}(a,s,x),\tilde{\mu}^{D}(a,s,x))$ and $(\nu^{Y}%
	(a,s,x),\nu^{D}(a,s,x))$ are defined in \eqref{eq:mutilde} and
	\eqref{eq:nutilde}, respectively. Specifically, we have
	\begin{align*}
		\tilde{\mu}^{Y}(a,s,x) = \tilde{\Psi}_{i,s}^{\top}t_{a,s}, \quad\tilde{\mu
		}^{D}(a,s,x) = \tilde{\Psi}_{i,s}^{\top}b_{a,s}, \quad\text{and} \quad
		\tilde{\Psi}_{i,s} = \Psi_{i,s} - \mathbb{E}(\Psi_{i,s}|S_{i}=s).
	\end{align*}

	In order to minimize $V((t_{a,s},b_{a,s})_{a=0,1,s\in\mathcal{S}})$, it
	suffices to minimize
	\[
	\mathbb{E}%
	\sbr[3]{ \del[3]{\sqrt{\frac{\pi(s)}{1-\pi(s)}} A_{0}(s,X_i) + \sqrt{\frac{1-\pi(s)}{\pi(s)}} A_{1}(s,X_i)}^2 \biggl|S_i=s}
	\]
	for each $s \in\mathcal{S}$. In addition, we have
	\begin{align*}
		&  \mathbb{E}%
		\sbr[3]{ \del[3]{\sqrt{\frac{\pi(s)}{1-\pi(s)}} A_{0}(s,X_i) + \sqrt{\frac{1-\pi(s)}{\pi(s)}} A_{1}(s,X_i)}^2 \biggl|S_i=s}
		= \mathbb{E}\left(  (\overline{y}_{i,s} - \tilde{\Psi}_{i,s}^{\top}\gamma
		_{s})^{2} \biggl|S_{i}=s\right) ,
	\end{align*}
	where
	\begin{align*}
		\overline{y}_{i,s} = \sqrt{\frac{1-\pi(s)}{\pi(s)}} (\nu^{Y}(1,s,X_{i})
		-\tau\nu^{D}(1,s,X_{i})) + \sqrt{\frac{\pi(s)}{1-\pi(s)}}(\nu^{Y}(0,s,X_{i})
		-\tau\nu^{D}(0,s,X_{i}))
	\end{align*}
	and
	\begin{align*}
		\gamma_{s} = \sqrt{\frac{1-\pi(s)}{\pi(s)}} (t_{1,s} -\tau b_{1,s}) +
		\sqrt{\frac{\pi(s)}{1-\pi(s)}}(t_{0,s} -\tau b_{0,s}).
	\end{align*}
	By solving the first order condition, we find that
	\begin{align*}
		\Theta^{*}  &  =
		\begin{pmatrix}
			& (\theta_{a,s}^{*},\beta_{a,s}^{*})_{a=0,1,s\in\mathcal{S}}:\\
			& \sqrt{\frac{1-\pi(s)}{\pi(s)}} (\theta_{1,s}^{*} -\tau\beta_{1,s}^{*}) +
			\sqrt{\frac{\pi(s)}{1-\pi(s)}}(\theta_{0,s}^{*} -\tau\beta_{0,s}^{*}) =
			\mathbb{E}(\tilde{\Psi}_{i,s}\tilde{\Psi}_{i,s}^{\top}|S_{i}=s)^{-1}%
			\mathbb{E}(\tilde{\Psi}_{i,s}\overline{y}_{i,s}|S_{i}=s)
		\end{pmatrix}
		\\
		&  =
		\begin{pmatrix}
			& (\theta_{a,s}^{*},\beta_{a,s}^{*})_{a=0,1,s\in\mathcal{S}}:\\
			& \sqrt{\frac{1-\pi(s)}{\pi(s)}} (\theta_{1,s}^{*} -\tau\beta_{1,s}^{*}) +
			\sqrt{\frac{\pi(s)}{1-\pi(s)}}(\theta_{0,s}^{*} -\tau\beta_{0,s}^{*})\\
			& = \sqrt{\frac{1-\pi(s)}{\pi(s)}} (\theta_{1,s}^{L} -\tau\beta_{1,s}^{L}) +
			\sqrt{\frac{\pi(s)}{1-\pi(s)}}(\theta_{0,s}^{L} -\tau\beta_{0,s}^{L}).
		\end{pmatrix}
		,
	\end{align*}
	where
	\begin{align*}
		\theta_{a,s}^{L}  &  = [\mathbb{E}(\tilde{\Psi}_{i,s}\tilde{\Psi}_{i,s}%
		^{\top}|S_{i}=s)]^{-1}[\mathbb{E}(\tilde{\Psi}_{i,s}\nu^{Y}(a,s,X_{i}%
		)|S_{i}=s)]\\
		&  = [\mathbb{E}(\tilde{\Psi}_{i,s}\tilde{\Psi}_{i,s}^{\top}|S_{i}%
		=s)]^{-1}[\mathbb{E}(\tilde{\Psi}_{i,s}\mathbb{E}(Y_{i}(D_{i}(a))|S_{i}%
		,X_{i})|S_{i}=s)]\\
		&  = [\mathbb{E}(\tilde{\Psi}_{i,s}\tilde{\Psi}_{i,s}^{\top}|S_{i}%
		=s)]^{-1}[\mathbb{E}(\tilde{\Psi}_{i,s}Y_{i}(D_{i}(a))|S_{i}=s)].
	\end{align*}
	Similarly, we have
	\begin{align*}
		\beta_{a,s}^{L}  &  = [\mathbb{E}(\tilde{\Psi}_{i,s}\tilde{\Psi}_{i,s}^{\top
		}|S_{i}=s)]^{-1}[\mathbb{E}(\tilde{\Psi}_{i,s}D_{i}(a)|S_{i}=s)].
	\end{align*}
	This concludes the proof.
	
	\section{Proof of Theorem \ref{thm:linear2}}
	
	In order to verify Assumption \ref{ass:Delta}, by Theorem \ref{thm:par}, it
	suffices to show that $\hat{\theta}_{a,s}^{L} \overset{p}{\longrightarrow}
	\theta_{a,s}^{L}$ and $\hat{\beta}_{a,s}^{L} \overset{p}{\longrightarrow}
	\beta_{a,s}^{L}$. We focus on the former with $a=1$. Let $\{W_{i}^{s}%
	,X_{i}^{s}\}_{i \in[n]}$ be generated independently from the joint
	distribution of $(Y_{i}(D_{i}(1)),X_{i})$ given $S_{i}=s$ and denote
	$\Psi_{i,s}^{s} = \Psi_{s}(X_{i}^{s})$, $\tilde{\Psi}_{i,s}^{s} = \Psi
	_{s}(X_{i}^{s}) - \mathbb{E}\Psi_{s}(X_{i}^{s})$, $\dot{\Psi}_{i,1,s}^{s} =
	\Psi_{s}(X_{i}^{s}) - \frac{1}{n_{1}(s)}\sum_{i=N(s)+1}^{N(s)+n_{1}(s)}%
	\Psi_{s}(X_{i}^{s})$, and $\dot{\Psi}_{i,0,s}^{s} = \Psi_{s}(X_{i}^{s}) -
	\frac{1}{n_{0}(s)}\sum_{i=N(s)+n_{1}(s)+1}^{N(s)+n(s)}\Psi_{s}(X_{i}^{s})$.
	Then, we have
	\begin{align*}
		\hat{\theta}_{1,s}^{L} \overset{d}{=}
		\del[3]{\frac{1}{n_1(s)} \sum_{i=N(s)+1}^{N(s)+n_1(s)} \dot{\Psi}_{i,1,s}^s\dot{\Psi}_{i,1,s}^{s,\top} }^{-1}%
		\del[3]{\frac{1}{n_1(s)} \sum_{i=N(s)+1}^{N(s)+n_1(s)} \dot{\Psi}_{i,1,s}^sW_i^s }.
	\end{align*}
	As $\frac{1}{n_{1}(s)}\sum_{i=N(s)+1}^{N(s)+n_{1}(s)}\Psi_{i,s}^{s}
	\overset{p}{\longrightarrow} \mathbb{E}\Psi_{i,s}^{s} = \mathbb{E}(\Psi
	_{s}(X_{i}^{s})) = \mathbb{E}(\Psi_{s}(X_{i})|S_{i}=s)$ by the standard LLN,
	we have
	\begin{align*}
		&
		\del[3]{\frac{1}{n_1(s)} \sum_{i=N(s)+1}^{N(s)+n_1(s)} \dot{\Psi}_{i,1,s}^s\dot{\Psi}_{i,1,s}^{s,\top} }
		=
		\del[3]{\frac{1}{n_1(s)} \sum_{i=N(s)+1}^{N(s)+n_1(s)} \tilde{\Psi}_{i,s}^s\tilde{\Psi}_{i,s}^{s,\top} }
		+o_{p}(1),\\
		&
		\del[3]{\frac{1}{n_1(s)} \sum_{i=N(s)+1}^{N(s)+n_1(s)} \dot{\Psi}_{i,1,s}^sW_i^s }
		=
		\del[3]{\frac{1}{n_1(s)} \sum_{i=N(s)+1}^{N(s)+n_1(s)} \tilde{\Psi}_{i,s}^s W_i^s }
		+o_{p}(1).
	\end{align*}
	In addition, by the standard LLN,
	\begin{align*}
		&  \frac{1}{n_{1}(s)} \sum_{i=N(s)+1}^{N(s)+n_{1}(s)} \tilde{\Psi}_{i,s}%
		^{s}\tilde{\Psi}_{i,s}^{s,\top} \overset{p}{\longrightarrow} \mathbb{E}%
		\tilde{\Psi}_{i,s}^{s}\tilde{\Psi}_{i,s}^{s,\top} = \mathbb{E}(\tilde{\Psi
		}_{i,s}\tilde{\Psi}_{i,s}^{\top}|S_{i}=s),\\
		&  \frac{1}{n_{1}(s)} \sum_{i=N(s)+1}^{N(s)+n_{1}(s)} \tilde{\Psi}_{i,s}%
		^{s}W_{i}^{s} \overset{p}{\longrightarrow} \mathbb{E}\tilde{\Psi}_{i,s}%
		^{s}W_{i}^{s} = \mathbb{E}(\tilde{\Psi}_{i,s}Y_{i}(D_{i}(1))|S_{i}=s).
	\end{align*}
	Last, Assumption \ref{ass:psi} implies $\mathbb{E}(\tilde{\Psi}_{i,s}%
	\tilde{\Psi}_{i,s}^{\top}|S_{i}=s)$ is invertible, this means
	\begin{align*}
		\hat{\theta}_{1,s}^{L} \overset{p}{\longrightarrow} \left[ \mathbb{E}%
		(\tilde{\Psi}_{i,s}\tilde{\Psi}_{i,s}^{\top}|S_{i}=s)\right] ^{-1}%
		\mathbb{E}(\tilde{\Psi}_{i,s}Y_{i}(D_{i}(1))|S_{i}=s) = \theta_{1,s}^{L}.
	\end{align*}
	Similarly, we can show that $\hat{\theta}_{0,s}^{L}
	\overset{p}{\longrightarrow} \theta_{0,s}^{L}$ and $\hat{\beta}_{a,s}^{L}
	\overset{p}{\longrightarrow} \beta_{a,s}^{L}$ for $a=0,1$ and $s
	\in\mathcal{S}$. Therefore, Assumption \ref{ass:Delta} holds, and thus, all
	the results in Theorem \ref{thm:est} hold for $\hat{\tau}_{L}$. Then, the
	optimality result in the second half of Theorem \ref{thm:linear2} is a direct
	consequence of Theorem \ref{thm:linear}.
	
	Last, we compare the asymptotic variances of TSLS estimator and the estimator with the optimal linear adjustment with $\pi(s) = \pi$ for $s \in \mathcal{S}$ and $\Psi_{i,s} = X_i$. In this special case, we first note that the asymptotic variance of the estimator with the optimal linear adjustment is 
	\begin{align*}
		\frac{\sigma_{1}^{2}+\sigma_{0}^{2}+\sigma_{2}^{2}}{[\mathbb{E}(D(1)-D(0))]^2},
	\end{align*}
	where 
	\begin{align*}
		\sigma_{0}^{2}  &  = \mathbb{E} (1-\pi)\Xi_0^2(\mathcal{D}_i,S_i) \\
		\Xi_{0}(\mathcal{D
		}_{i}, S_{i})  &  := \left[ \left( \frac{1}{1-\pi}-1 \right) \tilde{%
			X}_{i}^{\top}\theta_{0s} + \tilde{X}_{i}^{\top}\theta_{1s} - \frac{\tilde{Z}
			_{i}}{1-\pi}\right]  - \tau\left[ \left( \frac{1}{1-\pi}-1 \right) \tilde{X}_{i}^{\top}\beta_{0s} + \tilde{X}_{i}
		^{\top}\beta_{1s} - \frac{\tilde{D}_{i}(0)}{1-\pi }\right] \\
		\sigma_{1}^{2}  &  =  \mathbb{E} \pi\Xi_1^2(\mathcal{D}_i,S_i), \\
		\Xi_{1}(\mathcal{D}_{i}, S_{i})  &  := \left[ \left( 1- \frac{1}{\pi }
		\right) \tilde{X}_{i}^{\top}\theta_{1s} - \tilde{X}_{i}^{\top}\theta_{0s} +
		\frac{\tilde{W}_{i}}{\pi }\right]   - \tau\left[ \left( 1- \frac{1}{\pi} \right) \tilde{X}_{i}^{\top}\beta_{1
			s} - \tilde{X}_{i}^{\top}\beta_{0s} + \frac{ \tilde{D}_{i}(1)}{\pi}\right] ,\\
		\sigma_{2}^{2}  &  = \mathbb{E}\left[ \mathbb{E}%
		\sbr[1]{Y(D(1)) - Y(D(0)) - (D(1)-D(0))\tau|S_i} \right] ^{2},
	\end{align*}
	with
	\begin{align}
		\theta_{as}  &  = \left[ \mathbb{E}(\tilde{X}_{is}\tilde{X}_{is}^{\top}%
		|S_{i}=s)\right] ^{-1} \left[  \mathbb{E}(\tilde{X}_{is}Y_{i}(D_{i}%
		(a))|S_{i}=s)\right] \quad\text{and} \notag \\
		\beta_{as}  &  = \left[ \mathbb{E}(\tilde{X}_{is}\tilde{X}_{is}^{\top}%
		|S_{i}=s)\right] ^{-1} \left[  \mathbb{E}(\tilde{X}_{is}D_{i}(a)|S_{i}%
		=s)\right] ,\quad a=0,1.
		\label{eq:optimal1}
	\end{align}

	Observe that $\sigma_{TSLS,1}^{2}%
	+\sigma_{TSLS,0}^{2}$ can also be written as
	\begin{align}
		\mathbb{E}\left[ \pi(S_{i})\Xi_{1}(\mathcal{D}_{i}, S_{i})^{2} + (1-\pi
		(S_{i}))\Xi_{0}(\mathcal{D}_{i}, S_{i})^{2}\right] ,\label{eq:Xi_linear}%
	\end{align}
	where
	\begin{align*}
		\Xi_{1}(\mathcal{D}_{i}, S_{i})  &  := \left[ \left( 1- \frac{1}{\pi(S_{i})}
		\right) \tilde{X}_{i}^{\top}\theta_{1s} - \tilde{X}_{i}^{\top}\theta_{0s} +
		\frac{\tilde{W}_{i}}{\pi(S_{i})}\right] \\ &  - \tau\left[ \left( 1- \frac{1}{\pi(S_{i})} \right) \tilde{X}_{i}^{\top}\beta_{1
			s} - \tilde{X}_{i}^{\top}\beta_{0s} + \frac{ \tilde{D}_{i}(1)}{\pi(S_{i%
			})}\right] ,\\
		\Xi_{0}(\mathcal{D
		}_{i}, S_{i})  &  := \left[ \left( \frac{1}{1-\pi(S_{i})}-1 \right) \tilde{%
			X}_{i}^{\top}\theta_{0s} + \tilde{X}_{i}^{\top}\theta_{1s} - \frac{\tilde{Z}
			_{i}}{1-\pi(S_{i})}\right] \\ & - \tau\left[ \left( \frac{1}{1-\pi(S_{i})}-1 \right) \tilde{X}_{i}^{\top}\beta_{0s} + \tilde{X}_{i}
		^{\top}\beta_{1s} - \frac{\tilde{D}_{i}(0)}{1-\pi(S_{i}) }\right] ,
	\end{align*}
	with 
	$\theta_{1s} = \theta_{0s}$, $\beta_{1s} =
	\beta_{0s}$ and $\theta_{1s} - \tau\beta_{1s} = \lambda_{x}^{*}$, where $\lambda_x^*$ is the first $d_x$ coefficients of $\lambda^*$ defined in Theorem \ref{thm:TSLS} where $d_x$ is the dimension of $X_i$. By Theorem \ref{thm:linear}, we achieve the optimal linear adjustment when $\theta_{a,s}$ and $\beta_{a,s}$ satisfy \eqref{eq:optimal1},  which implies
	\begin{align*}
		\sigma_{1}^{2}+\sigma_{0}^{2} \leq\sigma_{TSLS,1}^{2}+\sigma_{TSLS,0}^{2}.
	\end{align*}
	In addition, we have $\sigma_{2}^{2} = \sigma_{TSLS,2}^{2}$ and $0 \leq
	\sigma_{TSLS,3}^{2}$, 
	which implies the desired result.

	\section{Proof of Theorem \ref{thm:OLS_MLE}}
	
	Let $\{D_{i}^{s}(1),X_{i}^{s}\}_{i \in[n]}$ be generated independently from
	the joint distribution of $(D_{i}(1),X_{i})$ given $S_{i}=s$, $\Psi_{i,s}^{s}
	= \Psi_{s}(X_{i}^{s})$, and $\mathring{\Psi}_{i,s}^{s}=(1,\Psi_{i,s}^{s,\top
	})^{\top}$. Then, we have, pointwise in $b$,
	\begin{align*}
		&  \frac{1}{n_{1}(s)} \sum_{i \in I_{1}(s)} \left[ D_{i} \log(\lambda
		(\mathring{\Psi}_{i,s}^{\top} b)) + (1-D_{i}) \log(1-\lambda(\mathring{\Psi
		}_{i,s}^{\top} b)) \right] \\
		&  \overset{d}{=} \frac{1}{n_{1}(s)} \sum_{i = N(s)+1}^{N(s)+n_{1}(s)} \left[
		D_{i}^{s}(1) \log(\lambda(\mathring{\Psi}_{i,s}^{s,\top} b)) + (1-D_{i}%
		^{s}(1)) \log(1-\lambda(\mathring{\Psi}_{i,s}^{s, \top} b)) \right] \\
		&  \overset{p}{\longrightarrow} \mathbb{E}\left[ D_{i}^{s}(1) \log
		(\lambda(\mathring{\Psi}_{i,s}^{s, \top} b)) + (1-D_{i}^{s}(1)) \log
		(1-\lambda(\mathring{\Psi}_{i,s}^{s,\top} b)) \right] \\
		&  = \mathbb{E}\left[ D_{i}(1) \log(\lambda(\mathring{\Psi}_{i,s}^{ \top} b))
		+ (1-D_{i}(1)) \log(1-\lambda(\mathring{\Psi}_{i,s}^{\top} b)) |S_{i}=s\right]
		.
	\end{align*}
	As the logistic likelihood function is concave in $b$, the pointwise
	convergence in $b$ implies uniform convergence, i.e.,
	\begin{align*}
		\sup_{b} \biggl\vert  &  \frac{1}{n_{1}(s)} \sum_{i \in I_{1}(s)} \left[ D_{i}
		\log(\lambda(\mathring{\Psi}_{i,s}^{\top} b)) + (1-D_{i}) \log(1-\lambda
		(\mathring{\Psi}_{i,s}^{\top} b)) \right] \\
		&  - \mathbb{E}\left[ D_{i}(1) \log(\lambda(\mathring{\Psi}_{i,s}^{\top} b)) +
		(1-D_{i}(1)) \log(1-\lambda(\mathring{\Psi}_{i,s}^{\top} b)) |S_{i}=s\right]
		\biggr\vert \overset{p}{\longrightarrow} 0.
	\end{align*}
	Then, by the standard proof for the extremum estimation, we have $\hat{\beta
	}_{a,s}^{MLE} \overset{p}{\longrightarrow} \beta_{a,s}^{MLE}$. Similarly, we
	can show that $\hat{\theta}_{a,s}^{OLS} \overset{p}{\longrightarrow}
	\theta_{a,s}^{OLS}$. The verifies Assumption \ref{ass:par}(i). Assumptions
	\ref{ass:par}(ii) and \ref{ass:par}(iii) follow from Assumption
	\ref{ass:OLS_MLE}(ii). Then, the desired results hold due to Theorem
	\ref{thm:par}.
	
	\section{Proof of Theorem \ref{thm:linear3}}
	
	We note that the adjustments proposed in Theorem \ref{thm:linear3} are still
	parametric. Specifically, we have
	\begin{align*}
		&  \overline{\mu}^{Y}(a,s,X_{i}) = \Lambda_{a,s}^{Y}(X_{i},\{\beta_{1,s}%
		^{MLE},\beta_{0,s}^{MLE},\theta_{a,s}^{F}\}),\\
		&  \overline{\mu}^{D}(a,s,X_{i}) = \Lambda_{a,s}^{D}(X_{i},\{\beta_{1,s}%
		^{MLE},\beta_{0,s}^{MLE},\beta_{a,s}^{F}\}),\\
		&  \hat{\mu}^{Y}(a,s,X_{i}) = \Lambda_{a,s}^{Y}(X_{i},\{\hat{\beta}%
		_{1,s}^{MLE},\hat{\beta}_{0,s}^{MLE},\hat{\theta}_{a,s}^{F}\}), \quad
		\text{and}\\
		&  \hat{\mu}^{D}(a,s,X_{i}) = \Lambda_{a,s}^{D}(X_{i},\{\hat{\beta}%
		_{1,s}^{MLE},\hat{\beta}_{0,s}^{MLE},\hat{\beta}_{a,s}^{F}\}),
	\end{align*}
	where
	\begin{align*}
		\Lambda_{a,s}^{Y}(X_{i},\{b_{1},b_{0},t_{a}^{*}\}) =
		\begin{pmatrix}
			\Psi_{i,s}^{ \top}\\
			\lambda(\mathring{\Psi}_{i,s}^{\top} b_{1})\\
			\lambda(\mathring{\Psi}_{i,s}^{\top} b_{0})
		\end{pmatrix}
		^{\top}t_{a}^{*} \quad\text{and} \quad\Lambda_{a,s}^{D}(X_{i},\{b_{1}%
		,b_{0},b_{a}^{*}\}) =
		\begin{pmatrix}
			\Psi_{i,s}^{ \top}\\
			\lambda(\mathring{\Psi}_{i,s}^{\top} b_{1})\\
			\lambda(\mathring{\Psi}_{i,s}^{\top} b_{0})
		\end{pmatrix}
		^{\top}b_{a}^{*}.
	\end{align*}
	Therefore, in view of Theorem \ref{thm:par}, to verify Assumption
	\ref{ass:Delta}, it suffices to show that $\hat{\theta}_{a,s}^{F}
	\overset{p}{\longrightarrow} \theta_{a,s}^{F}$ and $\hat{\beta}_{a,s}^{F}
	\overset{p}{\longrightarrow} \beta_{a,s}^{F}$, as we have already shown the
	consistency of $\hat{\beta}_{a,s}^{MLE}$ in the proof of Theorem
	\ref{thm:OLS_MLE}. We focus on $\hat{\theta}_{a,s}^{F}$. Define $\dot{\Phi
	}_{i,a,s}: = \Phi_{i,s} - \frac{1}{n_{a}(s)}\sum_{i\in I_{a}(s)}\Phi_{i,s}$,
	where
	\begin{align*}
		\Phi_{i,s} =
		\begin{pmatrix}
			\Psi_{i,s}\\
			\lambda(\mathring{\Psi}_{i,s}^{\top} \beta_{1,s}^{MLE})\\
			\lambda(\mathring{\Psi}_{i,s}^{\top} \beta_{0,s}^{MLE})
		\end{pmatrix}
		.
	\end{align*}
	We first show that
	\begin{align}
		\frac{1}{n_{a}(s)} \sum_{i \in I_{a}(s)}\breve{\Phi}_{i,a,s}\breve{\Phi
		}_{i,a,s}^{\top}= \frac{1}{n_{a}(s)} \sum_{i \in I_{a}(s)}\dot{\Phi}%
		_{i,a,s}\dot{\Phi}_{i,a,s}^{\top}+ o_{p}(1).\label{eq:1}%
	\end{align}
	Let $v,u \in\Re^{d_{\Psi}+2}$ be two arbitrary vectors such that
	$||u||_{2}=||v||_{2} = 1$. Then, we have
	\begin{align}
		&
		\envert[3]{ v^\top \sbr[3]{\frac{1}{n_a(s)} \sum_{i \in I_a(s)}\left(\breve{\Phi}_{i,a,s}\breve{\Phi}_{i,a,s}^\top -\dot{\Phi}_{i,a,s}\dot{\Phi}_{i,a,s}^\top \right)}u }\nonumber\\
		&  =
		\envert[3]{ \frac{1}{n_a(s)} \sum_{i \in I_a(s)} \left[(v^\top\breve{\Phi}_{i,a,s})(u^\top\breve{\Phi}_{i,a,s}) - (v^\top\dot{\Phi}_{i,a,s})(u^\top\dot{\Phi}_{i,a,s})\right] }\nonumber\\
		&  =
		\envert[3]{ \frac{1}{n_a(s)} \sum_{i \in I_a(s)} \left[v^\top(\breve{\Phi}_{i,a,s}-\dot{\Phi}_{i,a,s})(u^\top\breve{\Phi}_{i,a,s}) + (v^\top\dot{\Phi}_{i,a,s})u^\top( \breve{\Phi}_{i,a,s}-\dot{\Phi}_{i,a,s} )\right] }\nonumber\\
		&  \leq\frac{1}{n_{a}(s)}\sum_{i \in I_{a}(s)} ||\breve{\Phi}_{i,a,s}%
		-\dot{\Phi}_{i,a,s}||_{2} (||\breve{\Phi}_{i,a,s}||_{2}+||\dot{\Phi}%
		_{i,a,s}||_{2}) \label{r12}%
	\end{align}
	where the first inequality is due to Cauchy-Schwarz inequality. We now show
	(\ref{r12}) is $o_{p}(1)$. First note that
	\begin{align*}
		&  ||\breve{\Phi}_{i,a,s}-\dot{\Phi}_{i,a,s}||_{2}\leq\sum_{a^{\prime}=0,1}
		\|B_{a^{\prime}}\|_{2}%
	\end{align*}
	where
	\begin{align*}
		B_{a^{\prime}}:= \lambda(\mathring{\Psi}_{i,s}^{\top} \hat{\beta}_{a^{\prime
			},s}^{MLE})-\lambda(\mathring{\Psi}_{i,s}^{\top}\beta_{a^{\prime},s}^{MLE}
		)-\frac{1}{n_{a}(s)}\sum_{i\in I_{a}(s)}
		\sbr[1]{\lambda(\mathring{\Psi}_{i,s}^{\top} \hat{\beta}_{a',s}^{MLE})-\lambda(\mathring{\Psi}_{i,s}^{\top}\beta_{a',s}^{MLE} )}.
	\end{align*}
	Note that
	\begin{align*}
		\lambda(\mathring{\Psi}_{i,s}^{\top} \hat{\beta}_{a^{\prime},s}^{MLE}%
		)-\lambda(\mathring{\Psi}_{i,s}^{\top}\beta_{a^{\prime},s}^{MLE} ) &
		=\frac{\partial\lambda(\mathring{\Psi}_{i,s}^{\top} \tilde{\beta}_{a^{\prime
				},s}^{MLE})}{\partial\beta_{a^{\prime},s}}(\hat{\beta}_{a^{\prime},s}%
		^{MLE}-\beta_{a^{\prime},s}^{MLE})\\
		\frac{1}{n_{a}(s)}\sum_{i\in I_{a}(s)}\lambda(\mathring{\Psi}_{i,s}^{\top}
		\hat{\beta}_{a^{\prime},s}^{MLE})-\lambda(\mathring{\Psi}_{i,s}^{\top}%
		\beta_{a^{\prime},s}^{MLE} ) &
		=\sbr[3]{\frac{1}{n_a(s)}\sum_{i\in I_a(s)}\frac{\partial \lambda(\mathring{\Psi}_{i,s}^{\top} \tilde{\beta}_{a',s}^{MLE})}{\partial \beta_{a',s}}}(\hat
		{\beta}_{a^{\prime},s}^{MLE}-\beta_{a^{\prime},s}^{MLE})
	\end{align*}
	where $\tilde{\beta}_{a^{\prime},s}^{MLE}$ is a mid-point of $\hat{\beta
	}_{a^{\prime},s}^{MLE}$ and $\beta_{a^{\prime},s}^{MLE}$. Hence
	\begin{align*}
		\|B_{a^{\prime}}\|_{2}%
		=\enVert[3]{\frac{\partial \lambda(\mathring{\Psi}_{i,s}^{\top} \tilde{\beta}_{a',s}^{MLE})}{\partial \beta_{a',s}}-\frac{1}{n_a(s)}\sum_{i\in I_a(s)}\frac{\partial \lambda(\mathring{\Psi}_{i,s}^{\top} \tilde{\beta}_{a',s}^{MLE})}{\partial \beta_{a',s}}}_{2}%
		\|\hat{\beta}_{a^{\prime},s}^{MLE}-\beta_{a^{\prime},s}^{MLE}\|_{2}.
	\end{align*}
	Since $\partial\lambda(u)/\partial u\leq1$,
	\begin{align*}
		&
		\enVert[3]{\frac{\partial \lambda(\mathring{\Psi}_{i,s}^{\top} \tilde{\beta}_{a',s}^{MLE})}{\partial \beta_{a',s}}}_{2}%
		=\enVert[3]{\left. \frac{\partial \lambda(u)}{\partial u}\right|_{u=\mathring{\Psi}_{i,s}^{\top} \tilde{\beta}_{a',s}^{MLE}}\cdot \mathring{\Psi}_{i,s}^{\top} }_{2}%
		\leq\|\mathring{\Psi}_{i,s}\|_{2}.
	\end{align*}
	Thus,
	\begin{align}
		\|B_{a^{\prime}}\|_{2} & \leq
		\del[3]{\|\mathring{\Psi}_{i,s}\|_2+\frac{1}{n_a(s)}\sum_{i\in I_a(s)}\|\mathring{\Psi}_{i,s}\|_2}\|\hat
		{\beta}_{a^{\prime},s}^{MLE}-\beta_{a^{\prime},s}^{MLE}\|_{2}\nonumber\\
		&  \leq
		\del[3]{2+\|\Psi_{i,s}\|_2+\frac{1}{n_a(s)}\sum_{i\in I_a(s)}\|\Psi_{i,s}\|_2}\|\hat
		{\beta}_{a^{\prime},s}^{MLE}-\beta_{a^{\prime},s}^{MLE}\|_{2},\nonumber\\
		||\breve{\Phi}_{i,a,s}-\dot{\Phi}_{i,a,s}||_{2} & \leq
		\del[3]{2+\|\Psi_{i,s}\|_2+\frac{1}{n_a(s)}\cdot\sum_{i\in I_a(s)}\|\Psi_{i,s}\|_2}\sum
		_{a^{\prime}=0,1}\|\hat{\beta}_{a^{\prime},s}^{MLE}-\beta_{a^{\prime},s}%
		^{MLE}\|_{2}.\label{r15}%
	\end{align}

	%\begin{align*}
	%& ||\breve{\Phi}_{i,a,s}-\dot{\Phi}_{i,a,s}||_2 \\
	%& \leq \left\Vert \mathring{\Psi}_{i,s} - \frac{1}{n_a(s)}\sum_{i \in I_a(s)}\mathring{\Psi}_{i,s} \right\Vert_2 \left[\sum_{a'=0,1}\left(||\hat{\beta}_{a',s}^{MLE} - \beta_{a',s}^{MLE}||_2\right)\right] \\
	%& \leq C\left(1 + || \Psi_{i,s}||_2+\left\Vert\frac{1}{n_a(s)}\sum_{i \in I_a(s)} \Psi_{i,s}\right\Vert_2\right) \biggl\{\sum_{a=0,1}\left[||\hat{\beta}_{a',s}^{MLE} - \beta_{a',s}^{MLE}||_2  \right]\biggr\}
	%\end{align*}
	Moreover, we can show
	\begin{align}
		&  ||\breve{\Phi}_{i,a,s}||_{2}+||\dot{\Phi}_{i,a,s}||_{2} \leq
		2\del[3]{4 + || \Psi_{i,s}||_2+\frac{1}{n_a(s)}\sum_{i \in I_a(s)} \|\Psi_{i,s}\|_2}.\label{r17}%
	\end{align}
	Substituting (\ref{r15}), (\ref{r17}) and the fact that $||\hat{\beta}%
	_{a,s}^{MLE} - \beta_{a,s}^{MLE}||_{2} = o_{p}(1)$ into (\ref{r12}), we show
	that (\ref{r12}) is $o_{p}(1)$. As it holds for arbitrary $u,v$, it implies
	\eqref{eq:1}. Similarly, we can show that
	\begin{align}
		\frac{1}{n_{a}(s)} \sum_{i \in I_{a}(s)}\breve{\Phi}_{i,a,s}Y_{i} = \frac
		{1}{n_{a}(s)} \sum_{i \in I_{a}(s)}\dot{\Phi}_{i,a,s}Y_{i} + o_{p}%
		(1).\label{eq:2}%
	\end{align}
	Following the same argument in the proof of Theorem \ref{thm:linear2}, we can
	show that
	\begin{align*}
		\sbr[3]{\frac{1}{n_a(s)} \sum_{i \in I_a(s)}\dot{\Phi}_{i,a,s}\dot{\Phi}_{i,a,s}^\top}^{-1}%
		\sbr[3]{\frac{1}{n_a(s)} \sum_{i \in I_a(s)}\dot{\Phi}_{i,a,s}Y_i }
		\overset{p}{\longrightarrow} \theta_{a,s}^{F}.
	\end{align*}
	In addition, by Assumption \ref{ass:psi_new}, with probability approaching
	one, there exists a constant $c>0$ such that
	\begin{align}
		\lambda_{\min}%
		\del[3]{\frac{1}{n_a(s)} \sum_{i \in I_a(s)}\dot{\Phi}_{i,a,s}\dot{\Phi}_{i,a,s}^\top}
		\geq c.\label{eq:3}%
	\end{align}

	Combining \eqref{eq:1}, \eqref{eq:2}, and \eqref{eq:3}, we can show that
	\begin{align*}
		\hat{\theta}_{a,s}^{F}  &  =
		\sbr[3]{\frac{1}{n_a(s)} \sum_{i \in I_a(s)}\breve{\Phi}_{i,a,s}\breve{\Phi}_{i,a,s}^\top }^{-1}
		\sbr[3]{\frac{1}{n_a(s)} \sum_{i \in I_a(s)}\breve{\Phi}_{i,a,s}Y_i}\\
		&  =
		\sbr[3]{\frac{1}{n_a(s)} \sum_{i \in I_a(s)}\dot{\Phi}_{i,a,s}\dot{\Phi}_{i,a,s}^\top}^{-1}%
		\sbr[3]{\frac{1}{n_a(s)} \sum_{i \in I_a(s)}\dot{\Phi}_{i,a,s}Y_i }+o_{p}(1)
		\overset{p}{\longrightarrow} \theta_{a,s}^{F}.
	\end{align*}
	Similarly, we have $\hat{\beta}_{a,s}^{F} \overset{p}{\longrightarrow}
	\beta_{a,s}^{F}$, which implies all the results in Theorem \ref{thm:est} hold
	for $\hat{\tau}_{F}$. The optimality result in the second half of the theorem
	is a direct consequence of Theorem \ref{thm:linear}.
	
	\section{Proof of Theorem \ref{thm:np}}
	
	We focus on verifying Assumption \ref{ass:Delta} for $\hat{\mu}^{D}%
	(a,s,X_{i})$. The proof for $\hat{\mu}^{Y}(a,s,X_{i})$ is similar and hence
	omitted. Following the proof of Theorem \ref{thm:OLS_MLE}, we note that, for
	each $a =0,1$ and $s \in\mathcal{S}$, the data in cell $I_{a}(s)$, denoted
	$\{D_{i}^{s}(a),X_{i}^{s}\}_{i \in[n]}$, can be viewed as i.i.d. following the
	joint distribution of $(D_{i}(a),X_{i})$ given $S_{i}=s$ conditionally on
	$\{A_{i},S_{i}\}_{i \in[n]}$. Then following the standard logistic sieve
	regression in \cite{HIR03}, we have
	%\begin{align*}
	%\max_{a=0,1, s \in \mathcal{S}} \frac{1}{n_a(s)}\sum_{i \in I_a(s)}\left(\Psi_{i,n}^\top(\hat{\beta}_{a,s}^{NP} - \beta_{a,s}^{NP})\right)^2 = O_p\left( h_n/n\right).
	%\end{align*}%
	\begin{align*}
		\max_{a=0,1, s \in\mathcal{S}} ||\hat{\beta}_{a,s}^{NP} - \beta_{a,s}%
		^{NP}||_{2} = O_{p}\left( \sqrt{ h_{n}/n_{a}(s)}\right) .
	\end{align*}
	Then we have
	\begin{align}
		&  \biggl|\frac{\sum_{i\in I_{1}(s)}\Delta^{D}(a,s,X_{i})}{n_{1}(s)} -
		\frac{\sum_{i \in I_{0}(s)}\Delta^{D}(a,s,X_{i})}{n_{0}(s)}\biggr|\nonumber\\
		& \leq\biggl|\frac{\sum_{i\in I_{1}(s)}
			\del[1]{\lambda(\mathring{\Psi}_{i,n}^\top \hat{\beta}_{a,s}^{NP}) -\lambda(\mathring{\Psi}_{i,n}^\top \beta_{a,s}^{NP})}
		}{n_{1}(s)} - \frac{\sum_{i \in I_{0}(s)}%
			\del[1]{\lambda(\mathring{\Psi}_{i,n}^\top \hat{\beta}_{a,s}^{NP})-\lambda(\mathring{\Psi}_{i,n}^\top \beta_{a,s}^{NP})}}%
		{n_{0}(s)}\biggr|\nonumber\\
		& \qquad+
		\envert[3]{\frac{1}{n_1(s)}\sum_{i \in I_1(s)}\del[2]{ R^D(a,s,X_i) - \mathbb{E}[R^D(a,s,X_i)|S_i=s]}}\nonumber\\
		& \qquad+
		\envert[3]{\frac{1}{n_0(s)}\sum_{i \in I_0(s)}\del[2]{ R^D(a,s,X_i) - \mathbb{E}[R^D(a,s,X_i)|S_i=s]}}
		= :I + II + III.\label{eq:Delta_np}%
	\end{align}
	To bound term $I$ in \eqref{eq:Delta_np}, we define $M_{a,s}(\beta_{1}%
	,\beta_{2}) := \mathbb{E}%
	\sbr[1]{\lambda(\mathring{\Psi}_{i,n}^\top \beta_1) - \lambda(\mathring{\Psi}_{i,n}^\top \beta_2)|S_i=s}
	= \mathbb{E}[\lambda(\mathring{\Psi}_{i,n}^{s,\top} \beta_{1}) -
	\lambda(\mathring{\Psi}_{i,n}^{s,\top} \beta_{2})]$, where $\mathring{\Psi
	}_{i,n}^{s} = \mathring{\Psi}(X_{i}^{s})$. Then we have
	\begin{align*}
		I  &  \leq\biggl|\frac{\sum_{i\in I_{1}(s)}
			\sbr[1]{\lambda(\mathring{\Psi}_{i,n}^\top \hat{\beta}_{a,s}^{NP}) -\lambda(\mathring{\Psi}_{i,n}^\top \beta_{a,s}^{NP}) - M_{a,s}(\hat{\beta}_{a,s}^{NP},\beta_{a,s}^{NP})}
		}{n_{1}(s)}\biggr|\\
		&  \qquad+ \biggl|\frac{\sum_{i \in I_{0}(s)}%
			\sbr[1]{\lambda(\mathring{\Psi}_{i,n}^\top \hat{\beta}_{a,s}^{NP})-\lambda(\mathring{\Psi}_{i,n}^\top \beta_{a,s}^{NP})-M_{a,s}(\hat{\beta}_{a,s}^{NP},\beta_{a,s}^{NP}) }}%
		{n_{0}(s)}\biggr| =: I_{1}+I_{2}.
	\end{align*}
	Following the argument in the proof of Theorem \ref{thm:par}, in order to show
	$I_{1} = o_{p}(n^{-1/2})$, we only need to show
	\[
	n^{1/2}\mathbb{E}%
	\sbr[1]{||\mathbb{P}_{n_1(s)} - \mathbb{P}||_{\mathcal{F}}|\{A_i,S_i\}_{i \in [n]}}1\{n_{1}%
	(s) \geq n\varepsilon, n_{0}(s)\geq n\varepsilon\} = o(1),
	\]
	where $\varepsilon$ is an arbitrary but fixed constant, and
	\[
	\mathcal{F} :=
	\cbr[2]{\lambda(\mathring{\Psi}_{i,n}^\top \beta_1)-\lambda(\mathring{\Psi}_{i,n}^\top \beta_{a,s}^{NP}): \beta_1 \in \Re^{h_n}, ||\beta_1 - \beta_{a,s}^{NP}||_2 \leq C \sqrt{h_n/n_a(s)}},
	\]
	for some constant $C>0$. Furthermore, we note that $\mathcal{F}$ has a bounded
	envelope, is of the VC-type with VC-index upper bounded by $Ch_{n}%
	$,\footnote{See \citet[Section 2.6.5]{VW96} for the calculation of the VC
		index.} and has
	\begin{align*}
		\sup_{f \in\mathcal{F}}\mathbb{E}\sbr[1]{f^2|\{A_i,S_i\}_{i \in [n]}}
		\leq\frac{Ch_{n}}{n_{a}(s)}.
	\end{align*}

	Invoking \citet[Corollary 5.1]{CCK14} with $A$ being a constant, $\nu=Ch_{n}$,
	$\sigma^{2} = Ch_{n}/n_{a}(s)$, and $F$ and $M$ being $2h_{n}$, we have
	\begin{align*}
		&  n^{1/2}\mathbb{E}\left[ ||\mathbb{P}_{n_{1}(s)} - \mathbb{P}||_{\mathcal{F}%
		}|\{A_{i},S_{i}\}_{i \in[n]}\right] 1\{n_{1}(s) \geq n\varepsilon, n_{0}(s)
		\geq n\varepsilon\}\\
		& \leq C\sqrt{\frac{n}{n_{1}(s)}}
		\del[3]{\sqrt{ \frac{h_n^2 \log n}{n_a(s)}} + \frac{h_n \log n}{\sqrt{n_1(s)}}}1\{n_{1}%
		(s) \geq n\varepsilon, n_{0}(s) \geq n\varepsilon\}\\
		& \leq C\sqrt{\frac{1}{\varepsilon}}%
		\del[3]{\sqrt{\frac{h_n^2\log n}{n\varepsilon}}+\frac{h\log n}{\sqrt{n\varepsilon}}}
		\rightarrow0,
	\end{align*}
	as $n\to\infty$.
	
	Similarly, we can show $I_{2} = o_{p}(n^{-1/2})$. In addition, we note that
	\begin{align*}
		II \overset{d}{=}
		\envert[3]{\frac{1}{n_1(s)}\sum_{i =N(s)+1}^{N(s)+n_1(s)} \del[2]{R^D(a,s,X_i^s) - \mathbb{E}[R^D(a,s,X_i^s)]}}
		= o_{p}(n^{-1/2})
	\end{align*}
	by the Chebyshev's inequality as $\mathbb{E}R^{D,2}(a,s,X_{i}^{s}) =
	\mathbb{E}[R^{D,2}(a,s,X_{i})|S_{i}=s] = o(1)$ by Assumption \ref{ass:np}(ii).
	Similarly we have $III = o_{p}(n^{-1/2})$. Combining the bounds of $I$, $II$,
	$III$ with \eqref{eq:Delta_np}, we have
	\begin{align*}
		\biggl|\frac{\sum_{i\in I_{1}(s)}\Delta^{D}(a,s,X_{i})}{n_{1}(s)} - \frac
		{\sum_{i \in I_{0}(s)}\Delta^{D}(a,s,X_{i})}{n_{0}(s)}\biggr| = o_{p}%
		(n^{-1/2}),
	\end{align*}
	which verifies Assumption \ref{ass:Delta}(i).
	
	To verify Assumption \ref{ass:Delta}(ii), we note that
	\begin{align*}
		& \frac{1}{n}\sum_{i=1}^{n} \Delta^{D,2}(a,s,X_{i}) \leq\frac{2}{n}\sum
		_{i=1}^{n}
		\del[1]{\lambda(\mathring{\Psi}_{i,n}^\top\hat{\beta}_{a,s}^{NP}) - \lambda(\mathring{\Psi}_{i,n}^\top \beta_{a,s}^{NP})}^{2}
		+ \frac{2}{n}\sum_{i=1}^{n} R^{D,2}(a,S_{i},X_{i})\\
		&  \leq\frac{2}{n}\sum_{i=1}^{n} \|\mathring{\Psi}_{i,n}\|_{2}^{2}\|\hat
		{\beta}_{a,s}^{NP}-\beta_{a,s}^{NP}\|_{2}^{2}+ \frac{2}{n}\sum_{i=1}^{n}
		R^{D,2}(a,S_{i},X_{i})\\
		& = \frac{2}{n}\sum_{i=1}^{n} \|\mathring{\Psi}_{i,n}\|_{2}^{2}\|\hat{\beta
		}_{a,s}^{NP}-\beta_{a,s}^{NP}\|_{2}^{2}+ o_{p}(1)\leq2\max_{i}\|\mathring
		{\Psi}_{i,n}\|_{2}^{2}\max_{s} \|\hat{\beta}_{a,s}^{NP}-\beta_{a,s}^{NP}%
		\|_{2}^{2}+ o_{p}(1)\\
		&  = O_{p}\del[1]{\zeta^2(h_n) h_n/n_a(s)}+ o_{p}(1)=o_{p}(1)
	\end{align*}
	where the first equality is due to Assumption \ref{ass:np}(ii), and the second
	equality is due to Assumption \ref{ass:np}(iv).
	
	Last, Assumption \ref{ass:Delta}(iii) is implied by Assumption
	\ref{ass:assignment1}(vi) via Jensen's inequality.
	
	\section{Proof of Theorem \ref{thm:hd}}
	
	We focus on verifying Assumption \ref{ass:Delta} for $\hat{\mu}^{D}%
	(a,s,X_{i})$. The proof for $\hat{\mu}^{Y}(a,s,X_{i})$ is similar and hence
	omitted. Following the proof of Theorem \ref{thm:OLS_MLE}, we note that, for
	each $a =0,1$ and $s \in\mathcal{S}$, the data in cell $I_{a}(s)$, denoted
	$\{D_{i}^{s}(a),X_{i}^{s}\}_{i \in[n]}$, can be viewed as i.i.d. following the
	joint distribution of $(D_{i}(a),X_{i})$ given $S_{i}=s$ conditionally on
	$\{A_{i},S_{i}\}_{i \in[n]}$. Then following the standard logistic Lasso
	regression in \cite{BCFH13}, we have
	%\begin{align*}
	%\max_{a=0,1, s \in \mathcal{S}} \frac{1}{n_a(s)}\sum_{i \in I_a(s)}\left(\Psi_{i,n}^\top(\hat{\beta}_{a,s}^{NP} - \beta_{a,s}^{NP})\right)^2 = O_p\left( h_n/n\right).
	%\end{align*}%
	\begin{align*}
		\max_{a=0,1, s \in\mathcal{S}} ||\hat{\beta}_{a,s}^{R} - \beta_{a,s}^{R}||_{2}
		= O_{p}\left( \sqrt{ h_{n} \log p_{n}/n_{a}(s)}\right)  \quad\text{and}
		\quad\max_{a=0,1, s \in\mathcal{S}} ||\hat{\beta}_{a,s}^{R} ||_{0} =
		O_{p}(h_{n}).
	\end{align*}
	Then, we have
	\begin{align}
		&  \biggl|\frac{\sum_{i\in I_{1}(s)}\Delta^{D}(a,s,X_{i})}{n_{1}(s)} -
		\frac{\sum_{i \in I_{0}(s)}\Delta^{D}(a,s,X_{i})}{n_{0}(s)}\biggr|\nonumber\\
		& \leq\biggl|\frac{\sum_{i\in I_{1}(s)}
			\del[1]{\lambda(\mathring{\Psi}_{i,n}^\top \hat{\beta}_{a,s}^{R}) -\lambda(\mathring{\Psi}_{i,n}^\top \beta_{a,s}^{R})}
		}{n_{1}(s)} - \frac{\sum_{i \in I_{0}(s)}%
			\del[1]{\lambda(\mathring{\Psi}_{i,n}^\top \hat{\beta}_{a,s}^{R})-\lambda(\mathring{\Psi}_{i,n}^\top \beta_{a,s}^{R})}}%
		{n_{0}(s)}\biggr|\nonumber\\
		& \qquad+
		\envert[3]{\frac{1}{n_1(s)}\sum_{i \in I_1(s)}\del[2]{ R^D(a,s,X_i) - \mathbb{E}[R^D(a,s,X_i)|S_i=s]}}\nonumber\\
		& \qquad+
		\envert[3]{\frac{1}{n_0(s)}\sum_{i \in I_0(s)}\del[2]{ R^D(a,s,X_i) - \mathbb{E}[R^D(a,s,X_i)|S_i=s]}}
		:= I + II + III.\label{eq:Delta_hd}%
	\end{align}
	To bound term $I$ in \eqref{eq:Delta_np}, we define $M_{a,s}(\beta_{1}%
	,\beta_{2}) := \mathbb{E}%
	\sbr[1]{\lambda(\mathring{\Psi}_{i,n}^\top \beta_1) - \lambda(\mathring{\Psi}_{i,n}^\top \beta_2)|S_i=s}
	= \mathbb{E}[\lambda(\mathring{\Psi}_{i,n}^{s,\top} \beta_{1}) -
	\lambda(\mathring{\Psi}_{i,n}^{s,\top} \beta_{2})]$, where $\mathring{\Psi
	}_{i,n}^{s} = \mathring{\Psi}(X_{i}^{s})$. Then we have
	\begin{align*}
		I  &  \leq\biggl|\frac{\sum_{i\in I_{1}(s)}
			\sbr[1]{\lambda(\mathring{\Psi}_{i,n}^\top \hat{\beta}_{a,s}^{R}) -\lambda(\mathring{\Psi}_{i,n}^\top \beta_{a,s}^{R}) - M_{a,s}(\hat{\beta}_{a,s}^{R},\beta_{a,s}^{R})}
		}{n_{1}(s)}\biggr|\\
		&  \qquad+ \biggl|\frac{\sum_{i \in I_{0}(s)}%
			\sbr[1]{\lambda(\mathring{\Psi}_{i,n}^\top \hat{\beta}_{a,s}^{R})-\lambda(\mathring{\Psi}_{i,n}^\top \beta_{a,s}^{R})-M_{a,s}(\hat{\beta}_{a,s}^{R},\beta_{a,s}^{R}) }}%
		{n_{0}(s)}\biggr| =: I_{1}+I_{2}.
	\end{align*}
	Following the argument in the proof of Theorems \ref{thm:par} and
	\ref{thm:np}, in order to show $I_{1} = o_{p}(n^{-1/2})$, we only need to
	show
	\[
	n^{1/2}\mathbb{E}%
	\sbr[1]{||\mathbb{P}_{n_1(s)} - \mathbb{P}||_{\mathcal{F}}|\{A_i,S_i\}_{i \in [n]}}1\{n_{1}%
	(s) \geq n\varepsilon, n_{0}(s)\geq n\varepsilon\} = o(1),
	\]
	where $\varepsilon$ is an arbitrary but fixed constant, and
	\[
	\mathcal{F} :=
	\cbr[2]{\lambda(\mathring{\Psi}_{i,n}^\top \beta_1)-\lambda(\mathring{\Psi}_{i,n}^\top \beta_{a,s}^{R}): \beta_1 \in \Re^{h_n}, ||\beta_1 - \beta_{a,s}^{R}||_2 \leq C \sqrt{h_n \log(p_n)/n_a(s)}, ||\beta_1||_0 \leq C h_n},
	\]
	for some constant $C>0$. Furthermore, we note that $\mathcal{F}$ has a bounded
	envelope and
	\begin{align*}
		\sup_{Q} N(\mathcal{F}, e_{Q}, \varepsilon||F||_{Q,2}) \leq\left( \frac{c_{1}
			p_{n}}{\varepsilon}\right) ^{c_{2}h_{n}},
	\end{align*}
	where $c_{1},c_{2}$ are two fixed constants, $N(\cdot)$ is the covering
	number, $e_{Q}(f,g) = \sqrt{Q|f-g|^{2}}$, and the supremum is taken over all
	discrete probability measures $Q$. Last, we have
	\begin{align*}
		\sup_{f \in\mathcal{F}}\mathbb{E}\sbr[1]{f^2|\{A_i,S_i\}_{i \in [n]}}\leq
		\frac{Ch_{n} \log p_{n}}{n_{a}(s)}.
	\end{align*}

	Invoking \citet[Corollary 5.1]{CCK14} with $A = Cp_{n}$, $\nu=Ch_{n}$,
	$\sigma^{2} = Ch_{n} \log(p_{n})/n_{a}(s)$, and $F$ and $M$ being $2$, we
	have
	\begin{align*}
		&  n^{1/2}\mathbb{E}\left[ ||\mathbb{P}_{n_{1}(s)} - \mathbb{P}||_{\mathcal{F}%
		}|\{A_{i},S_{i}\}_{i \in[n]}\right] 1\{n_{1}(s) \geq n\varepsilon, n_{0}(s)
		\geq n\varepsilon\}\\
		& \leq C \sqrt{\frac{n}{n_{1}(s)}}\left(  \sqrt{h_{n}\frac{h_{n}\log p_{n}%
			}{n_{a}(s)}\log\del[3]{\frac{p_n}{\sqrt{\frac{h_n\log p_n}{n_a(s)}}}}}%
		+\frac{h_{n}}{\sqrt{n_{1}(s)}}\log
		\del[3]{\frac{p_n}{\sqrt{\frac{h_n\log p_n}{n_a(s)}}}}\right)  1\{n_{1}(s)
		\geq n\varepsilon, n_{0}(s) \geq n\varepsilon\}\\
		& \leq C\left( \sqrt{\frac{n}{n_{1}(s)}}\right)  \left( \frac{h_{n} \log
			(p_{n})}{\sqrt{n_{1}(s)\wedge n_{0}(s)}}\right) 1\{n_{1}(s) \geq
		n\varepsilon,n_{0}(s) \geq n\varepsilon\} \rightarrow0.
	\end{align*}
	The bounds for $I_{2}$, $II$ and $III$ can be established following the same
	argument as in the proof of Theorem \ref{thm:np}. We omit the detail for
	brevity. This leads to Assumption \ref{ass:Delta}(i).
	
	To verify Assumption \ref{ass:Delta}(ii), we note that
	\begin{align*}
		& \frac{1}{n}\sum_{i=1}^{n} \Delta^{D,2}(a,s,X_{i}) \leq\frac{2}{n}\sum
		_{i=1}^{n}
		\del[1]{\lambda(\mathring{\Psi}_{i,n}^\top\hat{\beta}_{a,s}^{R}) - \lambda(\mathring{\Psi}_{i,n}^\top \beta_{a,s}^{R})}^{2}
		+ \frac{2}{n}\sum_{i=1}^{n} R^{D,2}(a,S_{i},X_{i})\\
		& =\frac{2}{n}\sum_{i=1}^{n}
		\del[1]{\lambda(\mathring{\Psi}_{i,n}^\top\hat{\beta}_{a,s}^{R}) - \lambda(\mathring{\Psi}_{i,n}^\top \beta_{a,s}^{R})}^{2}
		+ o_{p}(1)= o_{p}(1),
	\end{align*}
	where the first equality is due to Assumption \ref{ass:hd}(iii) and the second
	equality is by Assumption \ref{ass:hd}(vi) and the fact that
	\begin{align*}
		\frac{2}{n}\sum_{i=1}^{n}
		\del[1]{\lambda(\mathring{\Psi}_{i,n}^\top\hat{\beta}_{a,s}^{R}) - \lambda(\mathring{\Psi}_{i,n}^\top \beta_{a,s}^{R})}^{2}
		\lesssim\frac{(\hat{\beta}_{a,s}^{R} - \beta_{a,s}^{R})^{\top}}{n} \sum
		_{i=1}^{n} \mathring{\Psi}_{i,n}\mathring{\Psi}_{i,n}^{\top}(\hat{\beta}%
		_{a,s}^{R} - \beta_{a,s}^{R}) \lesssim||\hat{\beta}_{a,s}^{R} - \beta
		_{a,s}^{R}||_{2}^{2} = o_{p}(1),
	\end{align*}
	where the first probability inequality is due to the fact that $\lambda
	(\cdot)$ is Lipschitz continuous with Lipschitz constant 1. Last, Assumption
	\ref{ass:Delta}(iii) is implied by Assumption \ref{ass:assignment1}(vi) via
	Jensen's inequality.
	
	\section{Proof of Theorem \ref{thm:sigma_S}}
	\label{sec:sigma_S}
	
	Some part of the proof of part (i) is due to \cite{anseletal2018} while some part of the
	proof is original. %We first state the expression: We now prove this. 
	Let $U_{i}:=(1, X_{i}^{\top})^{\top}$ and $\hat{\lambda
	}_{as}:=(\hat{\gamma}_{as}^{b}, \hat{\nu}_{as}^{b,\top})^{\top}$ for $a=0,1$
	and $b=Y,D$. Consider $\hat{\lambda}_{0s}^{D}$ as an example; note that
	\[
	\hat{\lambda}_{0s}^{D}=
	\del[3]{\frac{1}{n}\sum_{i=1}^{n}(1-A_i)1\{S_i=s\}U_iU_i^{\top}}^{-1}\frac
	{1}{n}\sum_{i=1}^{n}(1-A_{i})1\{S_{i}=s\}U_{i}D_{i}.
	\]
	Consider the denominator of $\hat{\lambda}_{0s}^{D}$:
	\begin{align}
		& \frac{1}{n}\sum_{i=1}^{n}(1-A_{i})1\{S_{i}=s\}U_{i}U_{i}^{\top}=\frac{1}%
		{n}\sum_{i=1}^{n}(\pi(s)-A_{i})1\{S_{i}=s\}U_{i}U_{i}^{\top}+\frac{1}{n}%
		\sum_{i=1}^{n}(1-\pi(s))1\{S_{i}=s\}U_{i}U_{i}^{\top}\nonumber\\
		& =\frac{1}{n}\sum_{i=1}^{n}(\pi(s)-A_{i})1\{S_{i}%
		=s\}\del[1]{U_iU_i^{\top}-\mathbb{E}[U_iU_i^{\top}|S_i]}+\frac{1}{n}\sum
		_{i=1}^{n}(\pi(s)-A_{i})1\{S_{i}=s\}\mathbb{E}[U_{i}U_{i}^{\top}%
		|S_{i}]\nonumber\\
		& \qquad+\frac{1}{n}\sum_{i=1}^{n}(1-\pi(s))1\{S_{i}=s\}U_{i}U_{i}^{\top}
		.\label{r3}%
	\end{align}
	Consider the first term of (\ref{r3}). Note that
	\begin{align*}
		\mathbb{E}%
		\sbr[3]{ \frac{1}{n}\sum_{i=1}^{n}(\pi(s)-A_i)1\{S_i=s\}\del[1]{U_iU_i^{\top}-\mathbb{E}[U_iU_i^{\top}|S_i]}\vert A^{(n)}, S^{(n)}}=0.
	\end{align*}
	Invoking the conditional Chebyshev's inequality, we have, for any $a>0$,
	$1\leq k,\ell\leq\text{dim}(U_{i})$, {\footnotesize
		\begin{align}
			& \mathbb{P}%
			\del[3]{\envert[3]{\frac{1}{n}\sum_{i=1}^{n}(\pi(s)-A_i)1\{S_i=s\}\del[1]{U_{ik}U_{i\ell}-\mathbb{E}[U_{ik}U_{i\ell}|S_i]}}\geq a\vert A^{(n)}, S^{(n)}}\nonumber\\
			&  \leq\frac{1}{a^{2}}%
			\var \del[3]{\frac{1}{n}\sum_{i=1}^{n}(\pi(s)-A_i)1\{S_i=s\}\del[1]{U_{ik}U_{i\ell}-\mathbb{E}[U_{ik}U_{i\ell}|S_i]}\vert A^{(n)}, S^{(n)}}\nonumber\\
			&  = \frac{\sum_{i,j\in[n]}(\pi(s)-A_{i})(\pi(s)-A_{j})1\{S_{i}=s\}1\{S_{j}%
				=s\}\mathbb{E}%
				\sbr[2]{\del[1]{U_{ik}U_{i\ell}-\mathbb{E}[U_{ik}U_{i\ell}|S_i]}\del[1]{U_{jk}U_{j\ell}-\mathbb{E}[U_{jk}U_{j\ell}|S_j]}\vert A^{(n)}, S^{(n)} }}%
			{a^{2}n^{2}}\nonumber\\
			&  = \frac{\sum_{i\in[n]}(\pi(s)-A_{i})^{2}1\{S_{i}=s\}\mathbb{E}%
				\sbr[2]{\del[1]{U_{ik}U_{i\ell}-\mathbb{E}[U_{ik}U_{i\ell}|S_i]}^2\vert A^{(n)}, S^{(n)} }}%
			{a^{2}n^{2}}\nonumber\\
			&  \leq\frac{\sum_{i\in[n]}(\pi(s)-A_{i})^{2}1\{S_{i}=s\}\mathbb{E}%
				\sbr[2]{U_{ik}^2U_{i\ell}^2\vert S_i }}{a^{2}n^{2}}\leq\frac{\sum_{i\in
					[n]}\mathbb{E}\sbr[2]{U_{ik}^2U_{i\ell}^2\vert S_i=s }}{a^{2}n^{2}%
			}=o(1)\label{r4}%
		\end{align}
	} where the second equality is due to
	\begin{align*}
		& \mathbb{E}\sbr[2]{\del[1]{U_{ik}U_{i\ell}-\mathbb{E}[U_{ik}U_{i\ell}|S_i]}\del[1]{U_{jk}U_{j\ell}-\mathbb{E}[U_{jk}U_{j\ell}|S_j]}\vert A^{(n)}, S^{(n)} }\\
		& =\mathbb{E}\sbr[2]{\del[1]{U_{ik}U_{i\ell}-\mathbb{E}[U_{ik}U_{i\ell}|S_i]}\del[1]{U_{jk}U_{j\ell}-\mathbb{E}[U_{jk}U_{j\ell}|S_j]}\vert S^{(n)} }\\
		& =\mathbb{E}\sbr[2]{U_{ik}U_{i\ell}-\mathbb{E}[U_{ik}U_{i\ell}|S_i]\vert S^{(n)} }\mathbb{E}%
		\sbr[2]{U_{jk}U_{j\ell}-\mathbb{E}[U_{jk}U_{j\ell}|S_j]\vert S^{(n)} }\\
		& =\mathbb{E}%
		\sbr[2]{U_{ik}U_{i\ell}-\mathbb{E}[U_{ik}U_{i\ell}|S_i]\vert S_i }\mathbb{E}%
		\sbr[2]{U_{jk}U_{j\ell}-\mathbb{E}[U_{jk}U_{j\ell}|S_j]\vert S_j }=0
	\end{align*}
	for $i\neq j$, where the second equality is due to that $U_{ik}U_{i\ell}-\mathbb{E}[U_{ik}U_{i\ell}|S_i]$ and $U_{jk}U_{j\ell}-\mathbb{E}[U_{jk}U_{j\ell}|S_j]$ are independent conditional on $S^{(n)}$. From (\ref{r4}), we deduce that the first term of (\ref{r3}) is
	$o_{p}(1)$. Consider the second term of (\ref{r3}).
	\begin{align*}
		& \frac{1}{n}\sum_{i=1}^{n}(\pi(s)-A_{i})1\{S_{i}=s\}\mathbb{E}[U_{i}%
		U_{i}^{\top}|S_{i}]=\mathbb{E}[UU^{\top}|S=s]\frac{1}{n}\sum_{i=1}^{n}%
		(\pi(s)-A_{i})1\{S_{i}=s\}\\
		& =\mathbb{E}[UU^{\top}|S=s]\frac{1}{n}B_{n}(s)=o_{p}(1).
	\end{align*}
	Consider the third term of (\ref{r3}).
	\begin{align*}
		& \frac{1}{n}\sum_{i=1}^{n}(1-\pi(s))1\{S_{i}=s\}U_{i}U_{i}^{\top}%
		=(1-\pi(s))\frac{n(s)}{n}\frac{1}{n(s)}\sum_{i=1}^{n}1\{S_{i}=s\}U_{i}%
		U_{i}^{\top}\\
		& \qquad\xrightarrow{p}(1-\pi(s))p(s)\mathbb{E}[UU^{\top}|S=s].
	\end{align*}

	We hence have
	\begin{align*}
		\frac{1}{n}\sum_{i=1}^{n}(1-A_{i})1\{S_{i}=s\}U_{i}U_{i}^{\top}%
		\xrightarrow{p}(1-\pi(s))p(s)\mathbb{E}[UU^{\top}|S=s].
	\end{align*}
	Similarly, we have
	\begin{align*}
		& \frac{1}{n}\sum_{i=1}^{n}(1-A_{i})1\{S_{i}=s\}U_{i}D_{i}%
		\xrightarrow{p}(1-\pi(s))\hat{p}(s)\mathbb{E}[UD(0)|S=s]\\
		&  \hat{\lambda}_{0s}^{D}
		\xrightarrow{p}\del[2]{\mathbb{E}[UU^{\top}|S=s]}^{-1}\mathbb{E}[UD(0)|S=s]\\
		&  \hat{\lambda}_{1s}^{D}
		\xrightarrow{p}\del[2]{\mathbb{E}[UU^{\top}|S=s]}^{-1}\mathbb{E}[UD(1)|S=s].
	\end{align*}
	Thus, we have
	\begin{align*}
		& \sum_{s \in\mathcal{S}}\hat{p}(s)(\hat{\gamma}_{1s}^{D} - \hat{\gamma}%
		_{0s}^{D} + (\hat{\nu}_{1s}^{D} - \hat{\nu}_{0s}^{D})^{\top}\bar{X}_{s}%
		)=\sum_{s \in\mathcal{S}}(\hat{\lambda}_{1s}^{D}-\hat{\lambda}_{0s}^{D}%
		)^{\top}\left(
		\begin{array}
			[c]{c}%
			\frac{1}{n}\sum_{i\in[n]}1\{S_{i}=s\}\\
			\frac{1}{n}\sum_{i\in[n]}X_{i}1\{S_{i}=s\}
		\end{array}
		\right) \\
		& \qquad=\sum_{s \in\mathcal{S}} \frac{n(s)}{n}\frac{1}{n(s)}\sum_{i\in
			[n]}1\{S_{i}=s\}U_{i}^{\top}(\hat{\lambda}_{1s}^{D}-\hat{\lambda}_{0s}^{D})\\
		& \qquad\xrightarrow{p}\sum_{s \in\mathcal{S}} p(s)\mathbb{E}[U^{\top}|S=s]
		\del[2]{\mathbb{E}[UU^{\top}|S=s]}^{-1}\mathbb{E}%
		\sbr[2]{U\del[1]{D(1)-D(0)}|S=s}\\
		& \qquad=\sum_{s \in\mathcal{S}} p(s)\mathbb{E}%
		\sbr[1]{D(1)-D(0)|S=s}=\mathbb{E}\sbr[1]{D(1)-D(0)}
	\end{align*}
	where the second last equality is due to $\mathbb{E}[U^{\top}|S=s]
	\del[1]{\mathbb{E}[UU^{\top}|S=s]}^{-1}=(1,0,\ldots, 0)$ (\cite{anseletal2018}
	p290). Thus, the denominator of $\sqrt{n}(\hat{\tau}_{S}-\tau)$ converges in
	probability to $\mathbb{E}[D(1)-D(0)]$.
	
	\bigskip
	
	We now consider the numerator of $\sqrt{n}(\hat{\tau}_{S}-\tau)$. Relying on a
	similar argument, we have
	\begin{align*}
		&  \hat{\lambda}_{1s}^{Y}%
		=\del[3]{\frac{1}{n}\sum_{i=1}^{n}A_i1\{S_i=s\}U_iU_i^{\top}}^{-1}\frac{1}%
		{n}\sum_{i=1}^{n}A_{i}1\{S_{i}=s\}U_{i}Y_{i}(D_{i}(1))\\
		& \qquad\xrightarrow{p}\del[2]{\mathbb{E}[UU^{\top}|S=s]}^{-1}\mathbb{E}%
		[UY(D(1))|S=s]\\
		&  \hat{\lambda}_{0s}^{Y}%
		=\del[3]{\frac{1}{n}\sum_{i=1}^{n}(1-A_i)1\{S_i=s\}U_iU_i^{\top}}^{-1}\frac
		{1}{n}\sum_{i=1}^{n}(1-A_{i})1\{S_{i}=s\}U_{i}Y_{i}(D_{i}(0))\\
		& \qquad\xrightarrow{p}\del[2]{\mathbb{E}[UU^{\top}|S=s]}^{-1}\mathbb{E}%
		[UY(D(0))|S=s]
	\end{align*}
	\begin{align*}
		& \hat{\eta}_{1s}:=\hat{\lambda}_{1s}^{Y}-\tau\hat{\lambda}_{1s}%
		^{D}\xrightarrow{p}\del[2]{\mathbb{E}[UU^{\top}|S=s]}^{-1}\mathbb{E}%
		\sbr[2]{U\sbr[1]{Y(D(1))-\tau D(1)}|S=s}=:\eta_{1s}\\
		& \hat{\eta}_{0s}:=\hat{\lambda}_{0s}^{Y}-\tau\hat{\lambda}_{0s}%
		^{D}\xrightarrow{p}\del[2]{\mathbb{E}[UU^{\top}|S=s]}^{-1}\mathbb{E}%
		\sbr[2]{U\sbr[1]{Y(D(0))-\tau D(0)}|S=s}=:\eta_{0s}.\notag
	\end{align*}
	The numerator of $\sqrt{n}(\hat{\tau}_{S}-\tau)$ could be written as
	\begin{align}
		& \sqrt{n}\sum_{s \in\mathcal{S}}\hat{p}(s)(\hat{\gamma}_{1s}^{Y} -
		\hat{\gamma}_{0s}^{Y} + (\hat{\nu}_{1s}^{Y} - \hat{\nu}_{0s}^{Y})^{\top}%
		\bar{X}_{s}) -\sqrt{n}\sum_{s \in\mathcal{S}}\hat{p}(s)(\hat{\gamma}_{1s}^{D}
		- \hat{\gamma}_{0s}^{D} + (\hat{\nu}_{1s}^{D} - \hat{\nu}_{0s}^{D})^{\top}%
		\bar{X}_{s}) \tau\nonumber\\
		& =\sqrt{n} \sum_{s \in\mathcal{S}}\hat{p}(s)\frac{1}{n(s)}\sum_{i\in[n]}
		1\{S_{i}=s\}U_{i}^{\top}%
		\sbr[1]{\hat{\lambda}_{1s}^Y-\tau \hat{\lambda}_{1s}^D-(\hat{\lambda}_{0s}^Y-\tau \hat{\lambda}_{0s}^D)}\nonumber\\
		& =\sqrt{n} \sum_{s \in\mathcal{S}}\hat{p}(s)\bar{U}_{s}^{\top}(\hat{\eta
		}_{1s}-\eta_{1s})-\sqrt{n} \sum_{s \in\mathcal{S}}\hat{p}(s)\bar{U}_{s}^{\top
		}(\hat{\eta}_{0s}-\eta_{0s})+\sqrt{n} \sum_{s \in\mathcal{S}}\hat{p}(s)\bar
		{U}_{s}^{\top}(\eta_{1s}-\eta_{0s})\label{r6}%
	\end{align}
	where $\bar{U}_{s}:=\frac{1}{n(s)}\sum_{i\in[n]} 1\{S_{i}=s\}U_{i}%
	\xrightarrow{p}\mathbb{E}[U|S=s]$. Consider the first term of (\ref{r6}).
	{\footnotesize
		\begin{align}
			& \sqrt{n} \sum_{s \in\mathcal{S}}\hat{p}(s)\bar{U}_{s}^{\top}(\hat{\eta}%
			_{1s}-\eta_{1s})\nonumber\\
			& = \sum_{s \in\mathcal{S}}\hat{p}(s)\bar{U}_{s}^{\top}%
			\del[3]{\frac{1}{n}\sum_{i=1}^{n}A_i1\{S_i=s\}U_iU_i^{\top}}^{-1}\frac
			{1}{\sqrt{n}}\sum_{i=1}^{n}A_{i}1\{S_{i}=s\}U_{i}%
			\sbr[1]{Y_i(D_i(1))-\tau D_i(1)-U_i^{\top}\eta_{1s}}\nonumber\\
			& = \sum_{s \in\mathcal{S}}\hat{p}(s)\mathbb{E}[U^{\top}%
			|S=s]\del[3]{\pi(s)\hat{p}(s)\mathbb{E}[UU^{\top}|S=s]}^{-1}\frac{1}{\sqrt{n}%
			}\sum_{i=1}^{n}A_{i}1\{S_{i}=s\}U_{i}%
			\sbr[1]{Y_i(D_i(1))-\tau D_i(1)-U_i^{\top}\eta_{1s}}+o_{p}(1)\nonumber\\
			& = \sum_{s \in\mathcal{S}}\frac{1}{\pi(s)}\mathbb{E}[U^{\top}%
			|S=s]\del[3]{\mathbb{E}[UU^{\top}|S=s]}^{-1}\frac{1}{\sqrt{n}}\sum_{i=1}%
			^{n}A_{i}1\{S_{i}=s\}U_{i}%
			\sbr[1]{Y_i(D_i(1))-\tau D_i(1)-U_i^{\top}\eta_{1s}}+o_{p}(1)\nonumber\\
			& = \sum_{s \in\mathcal{S}}\frac{1}{\pi(s)}\frac{1}{\sqrt{n}}\sum_{i=1}%
			^{n}A_{i}1\{S_{i}%
			=s\}\sbr[1]{Y_i(D_i(1))-\tau D_i(1)-U_i^{\top}\eta_{1s}}+o_{p}(1)\label{r7}%
		\end{align}
	} where the second equality is based on that
	\[
	n^{-1/2}\sum_{i=1}^{n}A_{i}1\{S_{i}=s\}U_{i}%
	\sbr[1]{Y_i(D_i(1))-\tau D_i(1)-U_i^{\top}\eta_{1s}}=O_{p}(1),
	\]
	which is implied by the asymptotic normality of (\ref{r10a}), which we will prove shortly, and the last equality is due to $\mathbb{E}%
	[U^{\top}|S=s] \del[1]{\mathbb{E}[UU^{\top}|S=s]}^{-1}=(1,0,\ldots, 0)$
	(\cite{anseletal2018} p290). Likewise, the second term of (\ref{r6})
	\begin{align}
		& \sqrt{n} \sum_{s \in\mathcal{S}}\hat{p}(s)\bar{U}_{s}^{\top}(\hat{\eta}%
		_{0s}-\eta_{0s})\nonumber\\
		& =\sum_{s \in\mathcal{S}}\frac{1}{1-\pi(s)}\frac{1}{\sqrt{n}}\sum_{i=1}%
		^{n}(1-A_{i})1\{S_{i}%
		=s\}\sbr[1]{Y_i(D_i(0))-\tau D_i(0)-U_i^{\top}\eta_{0s}}+o_{p}(1).\label{r8}%
	\end{align}

	Note that
	\begin{align*}
		\eta_{as}=\left(
		\begin{array}
			[c]{c}%
			\mathbb{E}\sbr[1]{Y(D(a))-\tau D(a)|S=s}-\mathbb{E}[X^{\top}\nu_{as}%
			^{YD}|S=s]\\
			\nu_{as}^{YD}%
		\end{array}
		\right)
	\end{align*}
	for $a=0,1$ via the Frisch-Waugh Theorem. Hence
	\begin{align}
		U_{i}^{\top}\eta_{as}=\mathbb{E}\sbr[1]{Y(D(a))-\tau D(a)|S=s}+X_{i}^{\top}%
		\nu_{as}^{YD}-\mathbb{E}[X^{\top}\nu_{as}^{YD}|S=s].\label{r9}%
	\end{align}
	Substituting (\ref{r7}), (\ref{r8}) and (\ref{r9}) into (\ref{r6}), we could
	write the numerator of $\sqrt{n}(\hat{\tau}_{S}-\tau)$ as {\footnotesize
		\begin{align}
			& \sum_{s \in\mathcal{S}}\frac{1}{\pi(s)}\frac{1}{\sqrt{n}}\sum_{i=1}^{n}%
			A_{i}1\{S_{i}%
			=s\}\sbr[1]{Y_i(D_i(1))-\tau D_i(1)-U_i^{\top}\eta_{1s}}\nonumber\\
			& \qquad- \sum_{s \in\mathcal{S}}\frac{1}{1-\pi(s)}\frac{1}{\sqrt{n}}%
			\sum_{i=1}^{n}(1-A_{i})1\{S_{i}%
			=s\}\sbr[1]{Y_i(D_i(0))-\tau D_i(0)-U_i^{\top}\eta_{0s}}\label{r10a}\\
			& \qquad+ \sum_{s \in\mathcal{S}}\frac{1}{\sqrt{n}}\sum_{i\in[n]}
			1\{S_{i}=s\}U_{i}^{\top}(\eta_{1s}-\eta_{0s})+o_{p}(1)\nonumber\\
			& =\sum_{s \in\mathcal{S}}\frac{1}{\pi(s)}\frac{1}{\sqrt{n}}\sum_{i=1}%
			^{n}A_{i}1\{S_{i}%
			=s\}\sbr[3]{Y_i(D_i(1))-\tau D_i(1)- \mathbb{E}\sbr[1]{Y(D(1))-\tau D(1)|S=s}-\del[1]{X_i^{\top}\nu_{1s}^{YD}-\mathbb{E}[X^{\top}\nu_{1s}^{YD}|S=s]} }\nonumber\\
			& \quad- \sum_{s \in\mathcal{S}}\frac{1}{1-\pi(s)}\frac{1}{\sqrt{n}}\sum
			_{i=1}^{n}(1-A_{i})1\{S_{i}%
			=s\}\sbr[3]{Y_i(D_i(0))-\tau D_i(0)- \mathbb{E}\sbr[1]{Y(D(0))-\tau D(0)|S=s}-\del[1]{X_i^{\top}\nu_{0s}^{YD}-\mathbb{E}[X^{\top}\nu_{0s}^{YD}|S=s]}}\nonumber\\
			& \quad+ \sum_{s \in\mathcal{S}}\frac{1}{\sqrt{n}}\sum_{i\in[n]}
			1\{S_{i}=s\}\mathbb{E}\sbr[2]{Y(D(1))-Y(D(0))-\tau( D(1)-D(0))|S=s}\nonumber\\
			& \quad+\sum_{s \in\mathcal{S}}\frac{1}{\sqrt{n}}\sum_{i\in[n]} 1\{S_{i}%
			=s\}\del[2]{X_i^{\top}(\nu_{1s}^{YD}-\nu_{0s}^{YD})-\mathbb{E}[X^{\top}(\nu_{1s}^{YD}-\nu_{0s}^{YD})|S=s]}+o_{p}%
			(1)\nonumber\\
			& =\sum_{s \in\mathcal{S}}\frac{1}{\sqrt{n}}\sum_{i=1}^{n}A_{i}1\{S_{i}%
			=s\}\sbr[3]{\frac{Y_i(D_i(1))-\tau D_i(1)-X_i^{\top}\nu_{1s}^{YD}- \mathbb{E}\sbr[1]{Y(D(1))-\tau D(1)-X^{\top}\nu_{1s}^{YD}|S=s}}{\pi(s)} }\nonumber\\
			& \quad+\sum_{s \in\mathcal{S}}\frac{1}{\sqrt{n}}\sum_{i\in[n]} A_{i}%
			1\{S_{i}%
			=s\}\del[2]{X_i^{\top}(\nu_{1s}^{YD}-\nu_{0s}^{YD})-\mathbb{E}[X^{\top}(\nu_{1s}^{YD}-\nu_{0s}^{YD})|S=s]}\nonumber\\
			& \quad- \sum_{s \in\mathcal{S}}\frac{1}{\sqrt{n}}\sum_{i=1}^{n}%
			(1-A_{i})1\{S_{i}%
			=s\}\sbr[3]{\frac{Y_i(D_i(0))-\tau D_i(0)-X_i^{\top}\nu_{0s}^{YD}- \mathbb{E}\sbr[1]{Y(D(0))-\tau D(0)-X^{\top}\nu_{0s}^{YD}|S=s}}{1-\pi(s)}}\nonumber\\
			& \quad+\sum_{s \in\mathcal{S}}\frac{1}{\sqrt{n}}\sum_{i\in[n]} (1-A_{i}%
			)1\{S_{i}%
			=s\}\del[2]{X_i^{\top}(\nu_{1s}^{YD}-\nu_{0s}^{YD})-\mathbb{E}[X^{\top}(\nu_{1s}^{YD}-\nu_{0s}^{YD})|S=s]}\nonumber\\
			& \quad+ \frac{1}{\sqrt{n}}\sum_{i\in[n]} \mathbb{E}%
			\sbr[2]{Y(D(1))-Y(D(0))-\tau( D(1)-D(0))|S}+o_{p}(1).\label{r10}%
		\end{align}
	}
	
	Define
	\begin{align*}  
		\rho_{is}(1) & := \frac{Y_{i}(D_{i}(1))-\tau D_{i}(1)-X_{i}^{\top}\nu_{1s}^{YD}}{\pi(s)}+X_{i}^{\top}(\nu_{1s}^{YD}-\nu_{0s}^{YD})\\
		\rho_{is}(0) & :=\frac{Y_{i}(D_{i}(0))-\tau D_{i}(0)-X_{i}^{\top}\nu_{0s}^{YD}%
		}{1-\pi(s)}-X_{i}^{\top}(\nu_{1s}^{YD}-\nu_{0s}^{YD}).
	\end{align*}
	Then the first four terms of (\ref{r10}) could be written compactly as
	\begin{align*}
		R_{n,1} & :=\sum_{s \in\mathcal{S}}\frac{1}{\sqrt{n}}\sum_{i=1}^{n}%
		A_{i}1\{S_{i}=s\}\sbr[1]{\rho_{is}(1)-\mathbb{E}[\rho_{is}(1)|S_i=s]}\\
		& \qquad-\sum_{s \in\mathcal{S}}\frac{1}{\sqrt{n}}\sum_{i=1}^{n}%
		(1-A_{i})1\{S_{i}=s\}\sbr[1]{\rho_{is}(0)-\mathbb{E}[\rho_{is}(0)|S_i=s]}.
	\end{align*}
	Define $R_{n,2}:=\frac{1}{\sqrt{n}}\sum_{i\in[n]} \mathbb{E}%
	\sbr[1]{Y(D(1))-Y(D(0))-\tau( D(1)-D(0))|S}$. To establish the asymptotic
	distribution of (\ref{r10}), we first argue that
	\begin{align*}
		(R_{n,1},R_{n,2})\overset{d}{=}(R_{n,1}^{*},R_{n,2})+o_{p}(1)
	\end{align*}
	for a random variable $R_{n,1}^{*}$ that satisfies $R_{n,1}^{*}\perp
	\!\!\!\perp R_{n,2}$. Conditional on $\{S^{(n)}, A^{(n)}\}$, the distribution
	of $R_{n,1}$ is the same as the distribution of the same quantity where units
	are ordered by strata and then ordered by $A_{i}=1$ first and $A_{i}=0$ second
	within strata. To this end, define $N(s):=\sum_{i=1}^{n}1\{S_{i}<s\}$ and
	$F(s):=\mathbb{P}(S_{i}<s)$. Furthermore, independently for each
	$s\in\mathcal{S}$ and independently of $\{S^{(n)}, A^{(n)}\}$, let
	$\cbr[1]{Y_i(1)^s,Y_i(0)^s,D_i(1)^s,D_i(0)^s, X_i^s: 1\leq i\leq n}$ be i.i.d.
	over $i$ with distribution equal to that of $(Y(1),Y(0),D(1),D(0), X)|S=s$.
	Define
	\begin{align*}
		\tilde{\rho}_{is}(a):=\rho_{is}(a)-\mathbb{E}[\rho_{is}(a)|S_{i}=s], \qquad
		\tilde{\rho}_{is}^{s}(a):=\rho_{is}^{s}(a)-\mathbb{E}[\rho_{is}^{s}(a)|S_{i}=s],
	\end{align*}
	where
	\begin{align*}
		\rho_{is}^{s}(1) & := \frac{Y_{i}^{s}(D_{i}^{s}(1))-\tau D_{i}^{s}(1)-X_{i}^{s,
				\top}\nu_{1s}^{YD}}{\pi(s)}+X_{i}^{s,\top}(\nu_{1s}^{YD}-\nu_{0s}^{YD})\\
		\rho_{is}^{s}(0) & :=\frac{Y_{i}^{s}(D_{i}^{s}(0))-\tau D_{i}^{s}(0)-X_{i}^{s,
				\top}\nu_{0s}^{YD}}{1-\pi(s)}-X_{i}^{s,\top}(\nu_{1s}^{YD}-\nu_{0s}^{YD}).
	\end{align*}
	Then we have
	\begin{align*}
		R_{n,1} & :=\sum_{s \in\mathcal{S}}\frac{1}{\sqrt{n}}\sum_{i=1}^{n}%
		1\{S_{i}=s\}\sbr[2]{A_i\tilde{\rho}_{is}(1)-(1-A_i)\tilde{\rho}_{is}(0)}.
	\end{align*}
	Define
	\begin{align*}
		\tilde{R}_{n,1} & :=\sum_{s \in\mathcal{S}}%
		\sbr[3]{\frac{1}{\sqrt{n}}\sum_{i=n\frac{N(s)}{n}+1}^{n\del[1]{\frac{N(s)}{n}+\frac{n_1(s)}{n}}}\tilde{\rho}_{is}^s(1)-\frac{1}{\sqrt{n}}\sum_{n\del[1]{\frac{N(s)}{n}+\frac{n_1(s)}{n}}+1}^{n\del[1]{\frac{N(s)}{n}+\frac{n(s)}{n}}}\tilde{\rho}_{is}^s(0)}\\
		R_{n,1}^{*} & :=\sum_{s\in\mathcal{S}}%
		\sbr[3]{\frac{1}{\sqrt{n}}\sum_{i=\lfloor nF(s)\rfloor+1}^{\lfloor n(F(s)+\pi(s) p(s))\rfloor}\tilde{\rho}_{is}^s(1)-\frac{1}{\sqrt{n}}\sum_{i=\lfloor n(F(s)+\pi(s) p(s))\rfloor+1}^{\lfloor n(F(s)+p(s))\rfloor}\tilde{\rho}_{is}^s(0)}.
	\end{align*}
	Thus $R_{n,1}|S^{(n)}, A^{(n)}\overset{d}{=}\tilde{R}_{n,1}|S^{(n)}, A^{(n)}$
	(and as a by-product $R_{n,1}\overset{d}{=}\tilde{R}_{n,1}$). Since $R_{n,2}$
	is a function of $\{S^{(n)}, A^{(n)}\}$, we have, arguing along the line of a
	joint distribution being the product of a conditional distribution and a
	marginal distribution, $(R_{n,1}, R_{n,2})\overset{d}{=}(\tilde{R}_{n,1},
	R_{n,2})$. Define the following partial sum process
	\begin{align*}
		g_{n}(u):=\frac{1}{\sqrt{n}}\sum_{i=1}^{\lfloor nu\rfloor}\tilde{\rho}^{s}_{is}(1).
	\end{align*}
	Under our assumptions, $g_{n}(u)$ converges weakly to a suitably scaled
	Brownian motion. Next, by elementary properties of Brownian motion, we have
	that
	\begin{align}
		g_{n}\del[1]{F(s)+\pi(s) p(s)}-g_{n}\del[1]{F(s)} & =\frac{1}{\sqrt{n}}%
		\sum_{i=\lfloor nF(s)\rfloor+1}^{\lfloor n(F(s)+\pi(s) p(s))\rfloor}%
		\tilde{\rho}^{s}_{is}(1)\rightsquigarrow\mathcal{N}%
		\del[3]{0, \pi(s)p(s)\var\del[1]{\rho_s(1)|S=s}}.\label{r11}%
	\end{align}
	Furthermore, since
	\begin{align*}
		\del[3]{\frac{N(s)}{n}, \frac{n_1(s)}{n}}\xrightarrow{p}
		\del[1]{F(s), \pi(s) p(s)},
	\end{align*}
	it follows that
	\begin{align}
		g_{n}\del[3]{\frac{N(s)+n_1(s)}{n}}-g_{n}%
		\del[3]{\frac{N(s)}{n}}-\sbr[3]{g_n\del[1]{F(s)+\pi(s) p(s)}-g_n\del[1]{F(s)}}\xrightarrow{p}0\label{r14}%
	\end{align}
	where the convergence follows from the stochastic equicontinuity of the
	partial sum process. Using (\ref{r11}) and (\ref{r14}), we have:
	\begin{align}
		(R_{n,1}, R_{n,2}) & \overset{d}{=}(\tilde{R}_{n,1}, R_{n,2})=(R^{*}_{n,1},
		R_{n,2})+o_{p}(1)\label{r16}\\
		R^{*}_{n,1} & \rightsquigarrow\mathcal{N}%
		\del[3]{0, \sum_{s\in \mathcal{S}}\sbr[3]{\pi(s)p(s)\var\del[1]{\rho_s(1)|S=s}+\sbr[1]{1-\pi(s)}p(s)\var\del[1]{\rho_s(0)|S=s}}}\nonumber\\
		& = \mathcal{N}%
		\del[3]{0, \mathbb{E}\sbr[2]{\pi(S)\del[1]{\rho_S(1)-\mathbb{E}[\rho_S(1)|S]}^2+(1-\pi(S))\del[1]{\rho_S(0)-\mathbb{E}[\rho_S(0)|S]}^2}}\nonumber\\
		& =:\zeta_{1}\nonumber
	\end{align}
	where the convergence in distribution is due to an analogous argument for
	$\tilde{\rho}^{s}_{is}(0)$ and the independence of
	$\cbr[1]{Y_i(1)^s,Y_i(0)^s,D_i(1)^s,D_i(0)^s, X_i^s: 1\leq i\leq n, s\in \mathcal{S}}$
	across both $i$ and $s$. Moreover, since $R_{n,1}^{*}$ is a function of
	$\cbr[1]{Y_i(1)^s,Y_i(0)^s,D_i(1)^s,D_i(0)^s, X_i^s: 1\leq i\leq n, s\in \mathcal{S}}$
	$\perp\!\!\!\perp S^{(n)}, A^{(n)}$, and $R_{n,2}$ is a function of
	$\{S^{(n)}, A^{(n)}\}$, we see that $R_{n,1}^{*}\perp\!\!\!\perp R_{n,2}$.
	Thus (\ref{r16}) implies
	\begin{align}
		(R_{n,1}, R_{n,2})\overset{d}{=}(R^{*}_{n,1}, R_{n,2})+o_{p}%
		(1)\rightsquigarrow\del[1]{\zeta_{1},\zeta_{2}}\nonumber
	\end{align}
	where $\zeta_{1}$ and $\zeta_{2}$ are independent, with
	\begin{align*}
		\zeta_{2}:=\mathcal{N}%
		\del[3]{0, \mathbb{E}\sbr[3]{\del[2]{\mathbb{E}\sbr[1]{Y(D(1)) - Y(D(0)) - \tau(D(1)-D(0))|S}}^2}}.
	\end{align*}
	We hence show that the asymptotic distribution of the numerator of $\sqrt{n}(\hat{\tau}_{S}-\tau)$ is $\zeta_{1}+\zeta_{2}$. This completes the proof of part (i). The proof of part (ii), available upon request, is omitted in the interest of space as it is quite similar to that of part (ii) of Theorem \ref{thm:est}.

	\section{Technical Lemmas Used in the Proof of Theorem \ref{thm:est}}
	\label{sec:lem}
	\begin{lem}
		Suppose assumptions in Theorem \ref{thm:est} hold. Then, we have
		\begin{align*}
			R_{n,1} & =\frac{1}{\sqrt{n}}\sum_{i=1}^{n}\del[3]{1-\frac{1}{\pi(S_i)}}A_{i}%
			\tilde{\mu}^{Y}(1, S_{i}, X_{i})+\frac{1}{\sqrt{n}}\sum_{i=1}^{n}%
			(1-A_{i})\tilde{\mu}^{Y}(1, S_{i}, X_{i})+o_{p}(1),\\
			R_{n,2} & =\frac{1}{\sqrt{n}}\sum_{i=1}^{n}%
			\del[3]{\frac{1}{1-\pi(S_i)}-1}(1-A_{i})\tilde{\mu}^{Y}(0, S_{i}, X_{i}%
			)-\frac{1}{\sqrt{n}}\sum_{i=1}^{n}A_{i}\tilde{\mu}^{Y}(0, S_{i}, X_{i}%
			)+o_{p}(1),\\
			R_{n,3}  &  = \frac{1}{\sqrt{n}}\sum_{i=1}^{n}\frac{1}{\pi(S_{i})}\tilde
			{W}_{i}A_{i}-\frac{1}{\sqrt{n}}\sum_{i=1}^{n}\frac{1-A_{i}}{1-\pi(S_{i}%
				)}\tilde{Z}_{i} + \frac{1}{\sqrt{n}}\sum_{i=1}^{n}%
			\del[1]{\mathbb{E}[W_i-Z_i|S_i]-\mathbb{E}[W_i-Z_i]}+o_{p}(1),
		\end{align*}
		where for $a=0,1$,
		\begin{align*}
			&  \tilde{\mu}^{Y}(a, S_{i}, X_{i}):=\overline{\mu}^{Y}(a, S_{i},
			X_{i})-\overline{\mu}^{Y}(a,S_{i}),\quad\overline{\mu}^{Y}(a, S_{i})
			:=\mathbb{E}\sbr[1]{\overline{\mu}^Y(a, S_i, X_i)|S_i},\\
			&  W_{i} :=Y_{i}(1)D_{i}(1)+Y_{i}(0)(1-D_{i}(1)), \quad Z_{i}:=Y_{i}%
			(1)D_{i}(0)+Y_{i}(0)(1-D_{i}(0)),\\
			&  \tilde{W}_{i}:=W_{i}-\mathbb{E}[W_{i}|S_{i}], \quad\text{and} \quad
			\tilde{Z}_{i}:=Z_{i}-\mathbb{E}[Z_{i}|S_{i}].
		\end{align*}
		\label{lem:G}
	\end{lem}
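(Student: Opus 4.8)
The plan is to prove the three expansions by the same two-stage reduction: in each case I first replace the estimated adjustment $\hat{\mu}^{Y}(a,\cdot)$ by its pseudo-true limit, and then replace $\hat{\pi}(s)$ by $\pi(s)$. The engine for the first replacement is the pair of exact identities $\sum_{i:S_{i}=s}\big(\tfrac{A_{i}}{\hat{\pi}(S_{i})}-1\big)=0$ and $\sum_{i:S_{i}=s}\big(\tfrac{1-A_{i}}{1-\hat{\pi}(S_{i})}-1\big)=0$, which make $R_{n,1}$ and $R_{n,2}$ numerically invariant to stratum-specific location shifts of the adjustment. Using this invariance I may subtract $\overline{\mu}^{Y}(a,s):=\mathbb{E}[\overline{\mu}^{Y}(a,S_{i},X_{i})\mid S_{i}=s]$, so that the adjustment entering $R_{n,1}$, say, becomes $\hat{\mu}^{Y}(1,S_{i},X_{i})-\overline{\mu}^{Y}(1,S_{i})=\tilde{\mu}^{Y}(1,S_{i},X_{i})+\Delta^{Y}(1,S_{i},X_{i})$, splitting the problem into a pseudo-true piece and a noise piece $\Delta^{Y}$.

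For the $\Delta^{Y}$ piece I would work stratum by stratum. Writing out the treatment side of $R_{n,1}$ and using $1-\tfrac{1}{\hat{\pi}(s)}=-n_{0}(s)/n_{1}(s)$, the stratum-$s$ contribution collapses to $-n_{0}(s)\big[\tfrac{1}{n_{1}(s)}\sum_{i\in I_{1}(s)}\Delta^{Y}(1,s,X_{i})-\tfrac{1}{n_{0}(s)}\sum_{i\in I_{0}(s)}\Delta^{Y}(1,s,X_{i})\big]$, whose bracket is $o_{p}(n^{-1/2})$ uniformly in $s$ by Assumption~\ref{ass:Delta}(i); since $\sum_{s}n_{0}(s)\le n$ and $\mathcal{S}$ is finite, dividing by $\sqrt{n}$ leaves $o_{p}(1)$. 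The control side of $R_{n,2}$ is symmetric, with $n_{1}(s)$ in the role of $n_{0}(s)$. This reduces $R_{n,1}$ and $R_{n,2}$ to the same expressions with $\tilde{\mu}^{Y}$ in place of $\hat{\mu}^{Y}$. To then remove the hat on $\pi$, I decompose $1-\tfrac{A_{i}}{\hat{\pi}(S_{i})}=(1-A_{i})+A_{i}\big(1-\tfrac{1}{\hat{\pi}(S_{i})}\big)$, so that $\hat{\pi}$ survives only through $\tfrac{1}{\pi(s)}-\tfrac{1}{\hat{\pi}(s)}=\tfrac{B_{n}(s)/n(s)}{\pi(s)\hat{\pi}(s)}$, which is $o_{p}(1)$ by Assumption~\ref{ass:assignment1}(iv) (with $\hat{\pi}(s)\convP\pi(s)$ bounded away from $0$ and $1$). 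Its multiplier $\tfrac{1}{\sqrt{n}}\sum_{i\in I_{1}(s)}\tilde{\mu}^{Y}(1,s,X_{i})$ is $O_{p}(1)$, since conditionally on $(A^{(n)},S^{(n)})$ it has mean zero and second moment $\tfrac{n_{1}(s)}{n}\mathbb{E}[\tilde{\mu}^{Y}(1,s,X)^{2}\mid S=s]\le C$ by Assumptions~\ref{ass:assignment1}(i)--(ii) and \ref{ass:Delta}(iii); an $o_{p}(1)\cdot O_{p}(1)$ bound then yields the stated forms of $R_{n,1}$ and $R_{n,2}$.

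For $R_{n,3}$ I would first note $A_{i}Y_{i}=A_{i}W_{i}$, $(1-A_{i})Y_{i}=(1-A_{i})Z_{i}$ and $G=\mathbb{E}[W-Z]$, then split $W_{i}=\tilde{W}_{i}+\mathbb{E}[W_{i}\mid S_{i}]$ and $Z_{i}=\tilde{Z}_{i}+\mathbb{E}[Z_{i}\mid S_{i}]$. The conditional-mean parts telescope via the same identity: $\tfrac{1}{\sqrt{n}}\sum_{i}\tfrac{A_{i}\mathbb{E}[W_{i}\mid S_{i}]}{\hat{\pi}(S_{i})}=\tfrac{1}{\sqrt{n}}\sum_{i}\mathbb{E}[W_{i}\mid S_{i}]$ and likewise for $Z$, which together with $-\sqrt{n}\,\mathbb{E}[W-Z]$ gives the term $\tfrac{1}{\sqrt{n}}\sum_{i}(\mathbb{E}[W_{i}-Z_{i}\mid S_{i}]-\mathbb{E}[W_{i}-Z_{i}])$; the mean-zero parts $\tfrac{1}{\sqrt{n}}\sum_{i}\tfrac{A_{i}\tilde{W}_{i}}{\hat{\pi}(S_{i})}$ and $\tfrac{1}{\sqrt{n}}\sum_{i}\tfrac{(1-A_{i})\tilde{Z}_{i}}{1-\hat{\pi}(S_{i})}$ shed their hats at $o_{p}(1)$ cost by the same $o_{p}(1)\cdot O_{p}(1)$ argument, now using Assumption~\ref{ass:assignment1}(vi) for $\mathbb{E}[\tilde{W}_{i}^{2}\mid S_{i}],\mathbb{E}[\tilde{Z}_{i}^{2}\mid S_{i}]<\infty$. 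This produces exactly the claimed expansion of $R_{n,3}$.

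The main obstacle is the $\Delta^{b}$ control: one has to recognize that the numerical location-shift invariance is precisely what converts the raw within-cell average of $\Delta^{b}$ into the difference-of-averages appearing in Assumption~\ref{ass:Delta}(i), and then check that the residual $n_{0}(s)/\sqrt{n}$ (resp.\ $n_{1}(s)/\sqrt{n}$) factor is exactly of the right order to be absorbed. Once that is in place, the remaining manipulations are routine bookkeeping with conditional Chebyshev bounds and the consistency $\hat{\pi}(s)\convP\pi(s)$.
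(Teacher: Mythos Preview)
Your proposal is correct and follows essentially the same approach as the paper: the location-shift invariance to insert $\overline{\mu}^{Y}(a,S_{i})$, the stratum-by-stratum collapse of the $\Delta^{Y}$ piece to the difference-of-averages in Assumption~\ref{ass:Delta}(i) via $1-1/\hat{\pi}(s)=-n_{0}(s)/n_{1}(s)$, and the $o_{p}(1)\cdot O_{p}(1)$ conditional-Chebyshev argument for removing the hat on $\pi$ are exactly what the paper does. Your handling of the conditional-mean part of $R_{n,3}$ is in fact slightly more direct than the paper's, which routes the same identity $\sum_{i}A_{i}1\{S_{i}=s\}/\hat{\pi}(s)=n(s)$ through a $B_{n}(s)$ decomposition whose terms then cancel exactly.
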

	
	\begin{proof}
		We have
		\begin{align}
			R_{n,1}&=\frac{1}{\sqrt{n}}\sum_{i=1}^{n}\sbr[3]{\hat{\mu}^Y(1, S_i, X_i)-\frac{A_i\hat{\mu}^Y(1, S_i, X_i)}{\hat{\pi}(S_i)}}\notag \\
			&=-\frac{1}{\sqrt{n}}\sum_{i=1}^{n}\frac{A_i-\hat{\pi}(S_i)}{\hat{\pi}(S_i)}\hat{\mu}^Y(1, S_i, X_i)\notag\\
			&=-\frac{1}{\sqrt{n}}\sum_{i=1}^{n}\frac{A_i-\hat{\pi}(S_i)}{\hat{\pi}(S_i)}\sbr[1]{\hat{\mu}^Y(1, S_i, X_i)-\overline{\mu}^Y(1, S_i, X_i)+\overline{\mu}^Y(1, S_i, X_i)}\notag\\
			&=-\frac{1}{\sqrt{n}}\sum_{i=1}^{n}\frac{A_i-\hat{\pi}(S_i)}{\hat{\pi}(S_i)}\Delta^Y(1, S_i, X_i)-\frac{1}{\sqrt{n}}\sum_{i=1}^{n}\frac{A_i}{\hat{\pi}(S_i)}\overline{\mu}^Y(1, S_i, X_i)+\frac{1}{\sqrt{n}}\sum_{i=1}^{n}\overline{\mu}^Y(1, S_i, X_i)\notag\\
			&=-\frac{1}{\sqrt{n}}\sum_{i=1}^{n}\frac{A_i-\hat{\pi}(S_i)}{\hat{\pi}(S_i)}\Delta^Y(1, S_i, X_i)-\frac{1}{\sqrt{n}}\sum_{i=1}^{n}\frac{A_i}{\hat{\pi}(S_i)}\tilde{\mu}^Y(1, S_i, X_i)+\frac{1}{\sqrt{n}}\sum_{i=1}^{n}\tilde{\mu}^Y(1, S_i, X_i),\label{r2}
		\end{align}
		where the last equality is due to
		\begin{align*}
			\frac{1}{\sqrt{n}}\sum_{i=1}^{n}\frac{A_i}{\hat{\pi}(S_i)}\overline{\mu}^Y(1,S_i)&=\frac{1}{\sqrt{n}}\sum_{i=1}^{n}\overline{\mu}^Y(1,S_i).
		\end{align*}
		Consider the first term of (\ref{r2}).
		\begin{align}
			&\envert[3]{\frac{1}{\sqrt{n}}\sum_{i=1}^{n}\frac{A_i-\hat{\pi}(S_i)}{\hat{\pi}(S_i)}\Delta^Y(1, S_i, X_i)}=\envert[3]{\frac{1}{\sqrt{n}}\sum_{s\in \mathcal{S}}\sum_{i=1}^{n}\frac{A_i-\hat{\pi}(s)}{\hat{\pi}(s)}\Delta^Y(1, s, X_i)1\{S_i=s\}}\notag\\
			&=\frac{1}{\sqrt{n}}\envert[3]{\sum_{s\in \mathcal{S}}\frac{1}{\hat{\pi}(s)}\sum_{i=1}^{n}A_i\Delta^Y(1, s, X_i)1\{S_i=s\}-\sum_{s\in \mathcal{S}}\sum_{i=1}^{n}\Delta^Y(1, s, X_i)1\{S_i=s\}}\notag\\
			&=\frac{1}{\sqrt{n}}\envert[3]{\sum_{s\in \mathcal{S}}\sum_{i\in I_1(s)}\Delta^Y(1, s, X_i)\frac{n(s)}{n_1(s)}-\sum_{s\in \mathcal{S}}\sum_{i\in I_0(s)\cup I_1(s)}\Delta^Y(1, s, X_i)}\notag\\
			&=\frac{1}{\sqrt{n}}\envert[3]{\sum_{s\in \mathcal{S}}\sum_{i\in I_1(s)}\Delta^Y(1, s, X_i)\frac{n_0(s)}{n_1(s)}-\sum_{s\in \mathcal{S}}\sum_{i\in I_0(s)}\Delta^Y(1, s, X_i)}\notag\\
			&=\frac{1}{\sqrt{n}} \envert[3]{\sum_{s\in \mathcal{S}}n_0(s)\sbr[3]{\frac{\sum_{i\in I_1(s)}\Delta^Y(1,s,X_i)}{n_1(s)}-\frac{\sum_{i\in I_0(s)}\Delta^Y(1,s,X_i)}{n_0(s)}} }\notag\\
			&\leq \frac{1}{\sqrt{n}} \sum_{s\in \mathcal{S}}n_0(s)\envert[3]{\frac{\sum_{i\in I_1(s)}\Delta^Y(1,s,X_i)}{n_1(s)}-\frac{\sum_{i\in I_0(s)}\Delta^Y(1,s,X_i)}{n_0(s)}} =o_p(1)\notag
		\end{align}
		where the last equality is due to Assumption \ref{ass:Delta}. Thus
		\begin{align}
			R_{n,1}&=-\frac{1}{\sqrt{n}}\sum_{i=1}^{n}\frac{A_i}{\hat{\pi}(S_i)}\tilde{\mu}^Y(1, S_i, X_i)+\frac{1}{\sqrt{n}}\sum_{i=1}^{n}\tilde{\mu}^Y(1, S_i, X_i)+o_p(1)\notag\\
			&=-\frac{1}{\sqrt{n}}\sum_{i=1}^{n}\frac{A_i}{\hat{\pi}(S_i)}\tilde{\mu}^Y(1, S_i, X_i)+\frac{1}{\sqrt{n}}\sum_{i=1}^{n}A_i\tilde{\mu}^Y(1, S_i, X_i)+\frac{1}{\sqrt{n}}\sum_{i=1}^{n}(1-A_i)\tilde{\mu}^Y(1, S_i, X_i)+o_p(1)\notag\\
			&=\frac{1}{\sqrt{n}}\sum_{i=1}^{n}\del[3]{1-\frac{1}{\hat{\pi}(S_i)}}A_i\tilde{\mu}^Y(1, S_i, X_i)+\frac{1}{\sqrt{n}}\sum_{i=1}^{n}(1-A_i)\tilde{\mu}^Y(1, S_i, X_i)+o_p(1).\label{r5}
		\end{align}
		In addition, we note that
		\begin{align*}
			\frac{1}{\sqrt{n}}\sum_{i=1}^{n}\del[3]{1-\frac{1}{\hat{\pi}(S_i)}}A_i\tilde{\mu}^Y(1, S_i, X_i) & = \frac{1}{\sqrt{n}}\sum_{i=1}^{n}\del[3]{1-\frac{1}{\pi(S_i)}}A_i\tilde{\mu}^Y(1, S_i, X_i) \\
			& + \sum_{s\in \mathcal{S}}\del[3]{\frac{1}{\pi(s)}-\frac{1}{\hat{\pi}(s)}}\frac{1}{\sqrt{n}}\sum_{i=1}^{n}A_i\tilde{\mu}^Y(1, s, X_i)1\{S_i=s\}.
			%& = \frac{1}{\sqrt{n}}\sum_{i=1}^{n}\del[3]{1-\frac{1}{\pi(S_i)}}A_i\tilde{\mu}^Y(1, S_i, X_i) +o_p(1),
		\end{align*}
		Note that under Assumption \ref{ass:assignment1}(i), conditional on $\{S^{(n)}, A^{(n)}\}$, the distribution of $$\frac{1}{\sqrt{n}}\sum_{i=1}^{n}A_i\tilde{\mu}^Y(1, s, X_i)1\{S_i=s\}$$ is the same as the distribution of the same quantity where units are ordered by strata and then ordered by $A_i=1$ first and $A_i=0$ second within strata. To this end, define $N(s):=\sum_{i=1}^{n}1\{S_i<s\}$ and $F(s):=\mathbb{P}(S_i<s)$. Furthermore, independently for each $s\in \mathcal{S}$ and independently of $\{S^{(n)}, A^{(n)}\}$, let $\cbr[1]{X_i^s: 1\leq i\leq n}$ be i.i.d with marginal distribution equal to the distribution of $X_i|S=s$. Define
		\begin{align*}
			\tilde{\mu}^b(a, s, X_i^s)&:=\overline{\mu}^b(a, s, X_i^s)-\mathbb{E}\sbr[1]{\overline{\mu}^b(a, s, X_i^s)|S_i=s}
		\end{align*}
		Then, we have, for $s \in \mathcal{S}$,
		\begin{align*}
			\frac{1}{\sqrt{n}}\sum_{i=1}^{n}A_i\tilde{\mu}^Y(1, s, X_i)1\{S_i=s\} \stackrel{d}{=}\frac{1}{\sqrt{n}}\sum_{i=N(s)+1}^{N(s)+n_1(s)}\tilde{\mu}^Y(1, s, X_i^s).
		\end{align*}
		In addition, we have
		\begin{align*}
			\mathbb{E}\left[\left(\frac{1}{\sqrt{n}}\sum_{i=N(s)+1}^{N(s)+n_1(s)}\tilde{\mu}^Y(1, s, X_i^s) \right)^2\biggl| S^{(n)}, A^{(n)}\right] & = \frac{n_1(s)}{n}\mathbb{E}\sbr[1]{\tilde{\mu}^{Y,2}(a, s, X_i^s)|S^{(n)}} \\
			& \leq  \frac{n_1(s)}{n}E\left[\overline{\mu}^{Y,2}(a, s, X_i)|S_i=s\right] = O_p(1),
		\end{align*}
		which implies
		\begin{align*}
			\max_{s \in \mathcal{S}}\biggl|\frac{1}{\sqrt{n}}\sum_{i=N(s)+1}^{N(s)+n_1(s)}\tilde{\mu}^Y(1, s, X_i^s)\biggr| = O_p(1).
		\end{align*}
		Combining this with the facts that $\max_{s \in \mathcal{S}}|\hat{\pi}(s) - \pi(s)| = o_p(1)$ and $\min_{s \in \mathcal{S}}\pi(s)>c>0$ for some constant $c$, we have
		\begin{align*}
			& \sum_{s\in \mathcal{S}}\del[3]{\frac{1}{\pi(s)}-\frac{1}{\hat{\pi}(s)}}\frac{1}{\sqrt{n}}\sum_{i=1}^{n}A_i\tilde{\mu}^Y(1, s, X_i)1\{S_i=s\} = o_p(1) \\
			& \frac{1}{\sqrt{n}}\sum_{i=1}^{n}\del[3]{1-\frac{1}{\hat{\pi}(S_i)}}A_i\tilde{\mu}^Y(1, S_i, X_i) = \frac{1}{\sqrt{n}}\sum_{i=1}^{n}\del[3]{1-\frac{1}{\pi(S_i)}}A_i\tilde{\mu}^Y(1, S_i, X_i) + o_p(1).
			%& = \frac{1}{\sqrt{n}}\sum_{i=1}^{n}\del[3]{1-\frac{1}{\pi(S_i)}}A_i\tilde{\mu}^Y(1, S_i, X_i) +o_p(1),
		\end{align*}
		Therefore, we have
		\begin{align*}
			R_{n,1}&=\frac{1}{\sqrt{n}}\sum_{i=1}^{n}\del[3]{1-\frac{1}{\pi(S_i)}}A_i\tilde{\mu}^Y(1, S_i, X_i)+\frac{1}{\sqrt{n}}\sum_{i=1}^{n}(1-A_i)\tilde{\mu}^Y(1, S_i, X_i)+o_p(1).
		\end{align*}
		The linear expansion of $R_{n,2}$ can be established in the same manner.
		For $R_{n,3}$, note that
		\begin{align*}
			Y_i&=Y_i(1)\sbr[1]{D_i(1)A_i+D_i(0)(1-A_i)}+Y_i(0)\sbr[1]{1-D_i(1)A_i-D_i(0)(1-A_i)}\\
			&=\sbr[1]{Y_i(1)D_i(1)-Y_i(0)D_i(1)}A_i+\sbr[1]{Y_i(1)D_i(0)-Y_i(0)D_i(0)}(1-A_i)+Y_i(0).
		\end{align*}
		Then
		\begin{align}
			A_iY_i&= \sbr[1]{Y_i(1)D_i(1)+Y_i(0)(1-D_i(1))}A_i,\notag\\
			(1-A_i)Y_i&=\sbr[1]{Y_i(1)D_i(0)+Y_i(0)(1-D_i(0))}(1-A_i),\notag\\
			\frac{1}{\sqrt{n}}\sum_{i=1}^{n}\frac{A_iY_i}{\hat{\pi}(S_i)}&=\frac{1}{\sqrt{n}}\sum_{i=1}^{n}\frac{1}{\hat{\pi}(S_i)}  \sbr[1]{Y_i(1)D_i(1)+Y_i(0)(1-D_i(1))}A_i=:\frac{1}{\sqrt{n}}\sum_{i=1}^{n}\frac{1}{\hat{\pi}(S_i)}W_iA_i,\notag\\
			\frac{1}{\sqrt{n}}\sum_{i=1}^{n}\frac{(1-A_i)Y_i}{1-\hat{\pi}(S_i)}&=\frac{1}{\sqrt{n}}\sum_{i=1}^{n}\frac{\sbr[1]{Y_i(1)D_i(0)+Y_i(0)(1-D_i(0))}(1-A_i)}{1-\hat{\pi}(S_i)}=:\frac{1}{\sqrt{n}}\sum_{i=1}^{n}\frac{Z_i(1-A_i)}{1-\hat{\pi}(S_i)}.\notag
		\end{align}
		Thus we have
		\begin{align}
			R_{n,3}&=\frac{1}{\sqrt{n}}\sum_{i=1}^{n}\frac{A_iY_i}{\hat{\pi}(S_i)}-\frac{1}{\sqrt{n}}\sum_{i=1}^{n}\frac{(1-A_i)Y_i}{1-\hat{\pi}(S_i)}-\sqrt{n}G\notag\\
			& = \left\{\frac{1}{\sqrt{n}}\sum_{i=1}^{n}\frac{1}{\hat{\pi}(S_i)}\tilde{W}_iA_i-\frac{1}{\sqrt{n}}\sum_{i=1}^{n}\frac{1-A_i}{1-\hat{\pi}(S_i)}\tilde{Z}_i\right\} \notag\\
			&+\left\{\frac{1}{\sqrt{n}}\sum_{i=1}^{n}\frac{1}{\hat{\pi}(S_i)}\mathbb{E}[W_i|S_i]A_i-\frac{1}{\sqrt{n}}\sum_{i=1}^{n}\frac{1-A_i}{1-\hat{\pi}(S_i)}\mathbb{E}[Z_i|S_i]-\sqrt{n}G\right\}. \label{r20}
		\end{align}
		We now consider the second term on the RHS of \eqref{r20}. First note that
		\begin{align*}
			\frac{1}{\sqrt{n}}\sum_{i=1}^{n}\frac{1}{\hat{\pi}(S_i)}\mathbb{E}[W_i|S_i]A_i =  \frac{1}{\sqrt{n}}\sum_{i=1}^{n}\frac{1}{\pi(S_i)}\mathbb{E}[W_i|S_i]A_i-\frac{1}{\sqrt{n}}\sum_{i=1}^{n}\frac{\hat{\pi}(S_i)-\pi(S_i)}{\hat{\pi}(S_i)\pi(S_i)}\mathbb{E}[W_i|S_i]A_i,
		\end{align*}
		\begin{align}
			&\frac{1}{\sqrt{n}}\sum_{i=1}^{n}\frac{1}{\pi(S_i)}\mathbb{E}[W_i|S_i]A_i=\sum_{s\in \mathcal{S}}\frac{1}{\sqrt{n}}\sum_{i=1}^{n}\frac{1}{\pi(s)}\mathbb{E}[W_i|S_i=s]A_i1\{S_i = s\}\notag\\
			&=\sum_{s\in \mathcal{S}}\frac{1}{\sqrt{n}}\sum_{i=1}^{n}\frac{\mathbb{E}[W_i|S_i=s]}{\pi(s)}(A_i-\pi(s))1\{S_i = s\}+\sum_{s\in \mathcal{S}}\frac{1}{\sqrt{n}}\sum_{i=1}^{n}\frac{1}{\pi(s)}\mathbb{E}[W_i|S_i=s]\pi(s)1\{S_i = s\}\notag\\
			&=\sum_{s\in \mathcal{S}}\frac{\mathbb{E}[W|S=s]}{\pi(s)\sqrt{n}}\sum_{i=1}^{n}(A_i-\pi(s))1\{S_i = s\}+\sum_{s\in \mathcal{S}}\frac{\mathbb{E}[W|S=s]}{\sqrt{n}}\sum_{i=1}^{n}1\{S_i = s\}\notag\\
			&=\sum_{s\in \mathcal{S}}\frac{\mathbb{E}[W|S=s]}{\pi(s)\sqrt{n}}B_n(s)+\sum_{s\in \mathcal{S}}\frac{\mathbb{E}[W|S=s]}{\sqrt{n}}n(s),\label{r13}
		\end{align}
		and
		\begin{align}
			&\frac{1}{\sqrt{n}}\sum_{i=1}^{n}\frac{\hat{\pi}(S_i)-\pi(S_i)}{\hat{\pi}(S_i)\pi(S_i)}\mathbb{E}[W_i|S_i]A_i=\sum_{s\in \mathcal{S}}\frac{1}{\sqrt{n}}\sum_{i=1}^{n}\frac{\hat{\pi}(s)-\pi(s)}{\hat{\pi}(s)\pi(s)}\mathbb{E}[W_i|S_i=s]A_i1\{S_i=s\}\notag\\
			&=\sum_{s\in \mathcal{S}}\frac{1}{\sqrt{n}}\sum_{i=1}^{n}\frac{B_n(s)}{n(s)\hat{\pi}(s)\pi(s)}\mathbb{E}[W_i|S_i=s]A_i1\{S_i=s\}\notag \\
			&=\sum_{s\in \mathcal{S}}\frac{B_n(s)\mathbb{E}[W|S=s]}{\sqrt{n}n(s)\hat{\pi}(s)\pi(s)}\sum_{i=1}^{n}A_i1\{S_i=s\}\notag=\sum_{s\in \mathcal{S}}\frac{B_n(s)\mathbb{E}[W|S=s]}{\sqrt{n}n(s)\hat{\pi}(s)\pi(s)}n_1(s) \notag \\
			&=\sum_{s\in \mathcal{S}}\frac{B_n(s)\mathbb{E}[W|S=s]}{\sqrt{n}\pi(s)}.\notag %\label{r12}
		\end{align}
		Therefore, we have
		\begin{align*}
			\frac{1}{\sqrt{n}}\sum_{i=1}^{n}\frac{1}{\hat{\pi}(S_i)}\mathbb{E}[W_i|S_i]A_i  = \sum_{s\in \mathcal{S}}\frac{\mathbb{E}[W|S=s]}{\sqrt{n}}n(s).
		\end{align*}
		Similarly, we have
		\begin{align*}
			\frac{1}{\sqrt{n}}\sum_{i=1}^{n}\frac{1-A_i}{1-\hat{\pi}(S_i)}\mathbb{E}[Z_i|S_i] = \sum_{s\in \mathcal{S}}\frac{\mathbb{E}[Z|S=s]}{\sqrt{n}}n(s)
		\end{align*}
		Then, we have
		\begin{align}
			& \frac{1}{\sqrt{n}}\sum_{i=1}^{n}\frac{1}{\hat{\pi}(S_i)}\mathbb{E}[W_i|S_i]A_i-\frac{1}{\sqrt{n}}\sum_{i=1}^{n}\frac{1-A_i}{1-\hat{\pi}(S_i)}\mathbb{E}[Z_i|S_i]-\sqrt{n}G \notag \\
			& = \sum_{s\in \mathcal{S}}\frac{\mathbb{E}[W|S=s]}{\sqrt{n}}n(s)-\sum_{s\in \mathcal{S}}\frac{\mathbb{E}[Z|S=s]}{\sqrt{n}}n(s) -\sqrt{n}G \notag \\
			&=\sum_{s\in \mathcal{S}}\sqrt{n}\del[3]{\frac{n(s)}{n}-p(s)}\mathbb{E}[W-Z|S=s]+\sum_{s\in \mathcal{S}}\sqrt{n}p(s)\mathbb{E}[W-Z|S=s]-\sqrt{n}G \notag \\
			&=\sum_{s\in \mathcal{S}}\sqrt{n}\del[3]{\frac{n(s)}{n}-p(s)}\mathbb{E}[W-Z|S=s]+\sqrt{n}\mathbb{E}[W-Z]-\sqrt{n}G \notag \\
			&=\sum_{s\in \mathcal{S}}\frac{n(s)}{\sqrt{n}}\mathbb{E}[W-Z|S=s]-\sqrt{n}\mathbb{E}[W-Z] \notag \\
			&=\frac{1}{\sqrt{n}}\sum_{s\in \mathcal{S}}\sum_{i=1}^{n}\del[2]{1\{S_i=s\}\mathbb{E}[W_i-Z_i|S_i=s]}-\sqrt{n}\mathbb{E}[W-Z]\notag \\
			&=\frac{1}{\sqrt{n}}\sum_{i=1}^{n}\mathbb{E}[W_i-Z_i|S_i]-\sqrt{n}\mathbb{E}[W-Z] \notag \\
			&=\frac{1}{\sqrt{n}}\sum_{i=1}^{n}\del[1]{\mathbb{E}[W_i-Z_i|S_i]-\mathbb{E}[W_i-Z_i]}. \label{eq:Rc2}
		\end{align}
		Combining \eqref{r20} and \eqref{eq:Rc2}, we have
		\begin{align*}
			R_{n,3}& = \left\{\frac{1}{\sqrt{n}}\sum_{i=1}^{n}\frac{1}{\hat{\pi}(S_i)}\tilde{W}_iA_i-\frac{1}{\sqrt{n}}\sum_{i=1}^{n}\frac{1-A_i}{1-\hat{\pi}(S_i)}\tilde{Z}_i\right\}+\left\{\frac{1}{\sqrt{n}}\sum_{i=1}^{n}\del[1]{\mathbb{E}[W_i-Z_i|S_i]-\mathbb{E}[W_i-Z_i]}\right\} \\
			& = \left\{\frac{1}{\sqrt{n}}\sum_{i=1}^{n}\frac{1}{\pi(S_i)}\tilde{W}_iA_i-\frac{1}{\sqrt{n}}\sum_{i=1}^{n}\frac{1-A_i}{1-\pi(S_i)}\tilde{Z}_i\right\}\\
			&+\left\{\frac{1}{\sqrt{n}}\sum_{i=1}^{n}\del[1]{\mathbb{E}[W_i-Z_i|S_i]-\mathbb{E}[W_i-Z_i]}\right\} + o_p(1),
		\end{align*}
		where the second equality holds because
		\begin{align*}
			& \left(\frac{1}{\pi(s)} - \frac{1}{\hat{\pi}(s)}\right)\frac{1}{\sqrt{n}}\sum_{i=1}^{n}\tilde{W}_iA_i1\{S_i=s\} = o_p(1) \quad \text{and} \\
			& \left(\frac{1}{\pi(s)} - \frac{1}{\hat{\pi}(s)}\right)\frac{1}{\sqrt{n}}\sum_{i=1}^{n}\tilde{Z}_i(1-A_i)1\{S_i=s\} = o_p(1)
		\end{align*}
		due to the same argument used in the proofs of $R_{n,1}$.
	\end{proof}

		\begin{lem}
			Under the assumptions in Theorem \ref{thm:est}, we have
			\begin{align*}
				& \frac{1}{\sqrt{n}}\sum_{i=1}^{n} \Xi_{1}(\mathcal{D}_{i}, S_{i})A_{i}
				\rightsquigarrow\mathcal{N}\left( 0, \mathbb{E}\pi(S_{i})\Xi_{1}%
				^{2}(\mathcal{D}_{i},S_{i})\right) ,\\
				&  \frac{1}{\sqrt{n}}\sum_{i=1}^{n} \Xi_{0}(\mathcal{D}_{i}, S_{i})(1-A_{i})
				\rightsquigarrow\mathcal{N}\left( 0, \mathbb{E}(1-\pi(S_{i}))\Xi_{0}%
				^{2}(\mathcal{D}_{i},S_{i})\right) , \quad\text{and}\\
				& \frac{1}{\sqrt{n}}\sum_{i=1}^{n} \Xi_{2}(S_{i}) \rightsquigarrow
				\mathcal{N}(0, \mathbb{E}\Xi_{2}^{2}(S_{i})),
			\end{align*}
			and the three terms are asymptotically independent. \label{lem:clt}
		\end{lem}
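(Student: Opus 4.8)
The plan is to mimic the partial-sum/Brownian-motion argument of \citet[Lemma B.2]{BCS17}, exploiting the fact that $\Xi_1$ and $\Xi_0$ are centered conditionally on the stratum. First I would record the structural facts that drive the whole argument. From \eqref{eq:Xi1}--\eqref{eq:Xi2}, together with $\mathbb{E}[\tilde{\mu}^{b}(a,S_i,X_i)\mid S_i]=0$ (by \eqref{eq:mutilde}), $\mathbb{E}[\tilde{W}_i\mid S_i]=\mathbb{E}[\tilde{Z}_i\mid S_i]=0$, and $\mathbb{E}[\tilde{D}_i(a)\mid S_i]=0$, one gets $\mathbb{E}[\Xi_1(\mathcal{D}_i,S_i)\mid S_i]=\mathbb{E}[\Xi_0(\mathcal{D}_i,S_i)\mid S_i]=0$ and $\mathbb{E}[\Xi_2(S_i)]=0$; finiteness of all the relevant second moments follows from Assumptions \ref{ass:assignment1}(vi) and \ref{ass:Delta}(iii) together with $c<\pi(\cdot)<1-c$. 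The conditional centering is the crucial point: it is exactly what makes the contribution of the randomization scheme (the $B_n(s)$ terms, hence $\gamma(s)$) disappear from the limiting variance, in contrast to the TSLS analysis of Theorem \ref{thm:TSLS}, where the analogous object is not conditionally mean zero.

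Next I would set up the standard coupling. Conditionally on $\{S^{(n)},A^{(n)}\}$, Assumption \ref{ass:assignment1}(i)--(ii) implies that the blocks $\{\mathcal{D}_i:i\in I_a(s)\}$, $a\in\{0,1\}$, $s\in\mathcal{S}$, are mutually independent, and block $(a,s)$ is an i.i.d.\ sample of size $n_a(s)$ from the law of $\mathcal{D}_i$ given $S_i=s$. Introducing i.i.d.\ arrays $\{\mathcal{D}^{a,s}_j\}_{j\ge 1}$, one per $(a,s)$, mutually independent and independent of $\{S^{(n)},A^{(n)}\}$, we then have the exact distributional identity, jointly in the three coordinates,
\begin{align*}
& \Bigl( \tfrac{1}{\sqrt{n}}\sum_i A_i\Xi_1(\mathcal{D}_i,S_i),\ \tfrac{1}{\sqrt{n}}\sum_i(1-A_i)\Xi_0(\mathcal{D}_i,S_i),\ \tfrac{1}{\sqrt{n}}\sum_i\Xi_2(S_i) \Bigr) \\
\overset{d}{=}{} & \Bigl( \sum_{s\in\mathcal{S}}\tfrac{1}{\sqrt{n}}\sum_{j=1}^{n_1(s)}\Xi_1(\mathcal{D}^{1,s}_j,s),\ \sum_{s\in\mathcal{S}}\tfrac{1}{\sqrt{n}}\sum_{j=1}^{n_0(s)}\Xi_0(\mathcal{D}^{0,s}_j,s),\ \tfrac{1}{\sqrt{n}}\sum_i\Xi_2(S_i) \Bigr),
\end{align*}
where $n_1(s),n_0(s)$ on the right are the same functions of $\{S^{(n)},A^{(n)}\}$ as on the left. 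By Assumption \ref{ass:assignment1}(iii)--(iv), $n_1(s)/n\convP\pi(s)p(s)$ and $n_0(s)/n\convP(1-\pi(s))p(s)$, using $n(s)/n\convP p(s)$ and $n_1(s)/n=\pi(s)n(s)/n+B_n(s)/n$ with $B_n(s)/n=o_p(1)$.

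Then I would run the Donsker step. For each $(a,s)$ set $g^{a,s}_n(u):=n^{-1/2}\sum_{j=1}^{\lfloor nu\rfloor}\Xi_{b(a)}(\mathcal{D}^{a,s}_j,s)$, with $b(1)=1$, $b(0)=0$; since the summands are i.i.d.\ mean zero with finite variance $\sigma_{a,s}^2:=\var(\Xi_{b(a)}(\mathcal{D}_i,s)\mid S_i=s)$, Donsker's theorem gives $g^{a,s}_n\rightsquigarrow\sigma_{a,s}B^{a,s}$ for a standard Brownian motion $B^{a,s}$, jointly and independently over $(a,s)$, with $\{g^{a,s}_n\}_{a,s}$ independent, for every $n$, of $\{S^{(n)},A^{(n)}\}$. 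Since the $(a,s)$-th coupled partial sum equals $g^{a,s}_n(n_a(s)/n)$ and $n_a(s)/n$ converges in probability to the constant $c_{a,s}\in\{\pi(s)p(s),(1-\pi(s))p(s)\}$, a joint-weak-convergence/continuous-mapping argument (applying the continuous-mapping theorem to the evaluation map, using the independence of $g^{a,s}_n$ from $n_a(s)$ and stochastic equicontinuity) yields $g^{a,s}_n(n_a(s)/n)\rightsquigarrow\sigma_{a,s}B^{a,s}(c_{a,s})\sim\mathcal{N}(0,\sigma_{a,s}^2c_{a,s})$, while the third coordinate converges by the Lindeberg--L\'evy CLT to $\mathcal{N}(0,\mathbb{E}\Xi_2^2(S_i))$; this limit is independent of the first two because $\tfrac{1}{\sqrt{n}}\sum_i\Xi_2(S_i)$ is, for every $n$, a function of $S^{(n)}$ and hence independent of the arrays $\{\mathcal{D}^{a,s}_j\}$, whereas the first two limits are functionals of the $B^{a,s}$ only. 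Summing the increments over $s$ (independent across $s$ since the arrays are) and using the conditional centering to replace $\sigma_{1,s}^2$ by $\mathbb{E}[\Xi_1^2(\mathcal{D}_i,s)\mid S_i=s]$ and likewise for $\Xi_0$, the first two limiting variances become $\sum_s\pi(s)p(s)\,\mathbb{E}[\Xi_1^2(\mathcal{D}_i,s)\mid S_i=s]=\mathbb{E}[\pi(S_i)\Xi_1^2(\mathcal{D}_i,S_i)]$ and $\sum_s(1-\pi(s))p(s)\,\mathbb{E}[\Xi_0^2(\mathcal{D}_i,s)\mid S_i=s]=\mathbb{E}[(1-\pi(S_i))\Xi_0^2(\mathcal{D}_i,S_i)]$. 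Transferring back through the exact coupling delivers the stated joint limit of three independent centered normals.

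The step I expect to be the main obstacle is this last one: keeping the coupling, the random time changes $n_a(s)/n$, the Donsker convergence of the $g^{a,s}_n$, and the three independence claims mutually consistent within a single joint-weak-convergence argument. The supporting ingredients — stochastic equicontinuity of the partial-sum processes and the requisite moment and centering conditions — are all furnished by Assumptions \ref{ass:assignment1} and \ref{ass:Delta}, so beyond this bookkeeping the proof is routine.
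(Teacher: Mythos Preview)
Your proposal is correct and follows essentially the same route as the paper's proof: a conditional-on-$(S^{(n)},A^{(n)})$ coupling to i.i.d.\ arrays drawn from the stratum-conditional law, Donsker partial-sum processes, replacement of the random times $n_a(s)/n$ by their deterministic limits via stochastic equicontinuity, and independence of the third coordinate because it is $S^{(n)}$-measurable. The only cosmetic difference is that you introduce separate arrays $\{\mathcal{D}^{a,s}_j\}$ for each $(a,s)$, whereas the paper uses a single array $\{\mathcal{D}^{s}_i\}$ per stratum and reads off the treated and control sums from disjoint index blocks $[N(s)+1,N(s)+n_1(s)]$ and $[N(s)+n_1(s)+1,N(s)+n(s)]$; both constructions give the needed independence and lead to the same limit.
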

		
		\begin{proof}
			Note that under Assumption \ref{ass:assignment1}(i), conditional on $\{S^{(n)}, A^{(n)}\}$, the distribution of
			$$\left(\frac{1}{\sqrt{n}}\sum_{i=1}^n \Xi_{1}(\mathcal{D}_i, S_i)A_i,~\frac{1}{\sqrt{n}}\sum_{i=1}^n \Xi_{0}(\mathcal{D}_i, S_i)(1-A_i)\right)$$
			is the same as the distribution of the same quantity where units are ordered by strata and then ordered by $A_i=1$ first and $A_i=0$ second within strata. To this end, define $N(s):=\sum_{i=1}^{n}1\{S_i<s\}$ and $F(s):=\mathbb{P}(S_i<s)$. Furthermore, independently for each $s\in \mathcal{S}$ and independently of $\{S^{(n)}, A^{(n)}\}$, let $\cbr[1]{\mathcal{D}_i^s: 1\leq i\leq n}$ be i.i.d with marginal distribution equal to the distribution of $\mathcal{D}|S=s$. Then, we have
			\begin{align*}
				& \left(\frac{1}{\sqrt{n}}\sum_{i=1}^n \Xi_{1}(\mathcal{D}_i, S_i)A_i,~\frac{1}{\sqrt{n}}\sum_{i=1}^n \Xi_{0}(\mathcal{D}_i, S_i)(1-A_i)\right)\biggl|S^{(n)},A^{(n)} \\
				& \stackrel{d}{=} \left(\frac{1}{\sqrt{n}}\sum_{s \in \mathcal{S}}\sum_{i=N(s)+1}^{N(s)+n_1(s)} \Xi_{1}(\mathcal{D}_i^s, s),~\frac{1}{\sqrt{n}}\sum_{s \in \mathcal{S}}\sum_{N(s)+n_1(s)+1}^{N(s)+n(s)} \Xi_{0}(\mathcal{D}_i^s, s)\right)\biggl|S^{(n)},A^{(n)}.
			\end{align*}
			In addition, since $\Xi_{2}(S_i)$ is a function of $\{S^{(n)}, A^{(n)}\}$, we have, arguing along the line of a joint distribution being the product of a conditional distribution and a marginal distribution,
			\begin{align*}
				& \left(\frac{1}{\sqrt{n}}\sum_{i=1}^n \Xi_{1}(\mathcal{D}_i, S_i)A_i,~\frac{1}{\sqrt{n}}\sum_{i=1}^n \Xi_{0}(\mathcal{D}_i, S_i)(1-A_i),~\frac{1}{\sqrt{n}}\sum_{i=1}^n\Xi_{2}(S_i)\right) \\
				& \stackrel{d}{=} \left(\frac{1}{\sqrt{n}}\sum_{s \in \mathcal{S}}\sum_{i=N(s)+1}^{N(s)+n_1(s)} \Xi_{1}(\mathcal{D}_i^s, s),~\frac{1}{\sqrt{n}}\sum_{s \in \mathcal{S}}\sum_{N(s)+n_1(s)+1}^{N(s)+n(s)} \Xi_{0}(\mathcal{D}_i^s, s),~\frac{1}{\sqrt{n}}\sum_{i=1}^n\Xi_{2}(S_i)\right).
			\end{align*}
			Define $\Gamma_{a,n}(u,s) = \frac{1}{\sqrt{n}}\sum_{i=1}^{\lfloor u n \rfloor}\Xi_a(\mathcal{D}_i^s,s)$ for $a = 0,1, s \in \mathcal{S}$. We have
			\begin{align*}
				& \frac{1}{\sqrt{n}}\sum_{s \in \mathcal{S}}\sum_{i=N(s)+1}^{N(s)+n_1(s)} \Xi_{1}(\mathcal{D}_i^s, s) = \sum_{s \in \mathcal{S}}\left[\Gamma_{1,n}\left(\frac{N(s)+n_1(s)}{n},s\right) - \Gamma_{1,n}\left(\frac{N(s)}{n},s\right)\right], \\
				& \frac{1}{\sqrt{n}}\sum_{s \in \mathcal{S}}\sum_{N(s)+n_1(s)+1}^{N(s)+n(s)} \Xi_{0}(\mathcal{D}_i^s, s) = \sum_{s \in \mathcal{S}}\left[ \Gamma_{0,n}\left(\frac{N(s)+n(s)}{n},s\right) - \Gamma_{0,n}\left(\frac{N(s)+n_1(s)}{n},s\right)\right].
			\end{align*}
			In addition, the partial sum process (w.r.t. $u \in [0,1]$) is stochastic equicontinuous and
			\begin{align*}
				\del[3]{\frac{N(s)}{n}, \frac{n_1(s)}{n}}\xrightarrow{p}\del[1]{F(s), \pi(s) p(s)}.
			\end{align*}
			Therefore,
			\begin{align*}
				& \left(\frac{1}{\sqrt{n}}\sum_{s \in \mathcal{S}}\sum_{i=N(s)+1}^{N(s)+n_1(s)} \Xi_{1}(\mathcal{D}_i^s, s),~\frac{1}{\sqrt{n}}\sum_{s \in \mathcal{S}}\sum_{N(s)+n_1(s)+1}^{N(s)+n(s)} \Xi_{0}(\mathcal{D}_i^s, s),~\frac{1}{\sqrt{n}}\sum_{i=1}^n\Xi_{2}(S_i)\right) \\
				& = \begin{pmatrix}
					& \sum_{s \in \mathcal{S}}\left[\Gamma_{1,n}\left(F(s)+p(s)\pi(s),s\right) - \Gamma_{1,n}\left(F(s),s\right)\right],\\
					&\sum_{s \in \mathcal{S}}\left[\Gamma_{0,n}\left(F(s)+p(s),s\right) - \Gamma_{0,n}\left(F(s)+\pi(s)p(s),s\right)\right],\\
					&~\frac{1}{\sqrt{n}}\sum_{i=1}^n\Xi_{2}(S_i)
				\end{pmatrix} + o_p(1)
			\end{align*}
			and by construction,
			\begin{align*}
				& \sum_{s \in \mathcal{S}}\left[\Gamma_{1,n}\left(F(s)+p(s)\pi(s),s\right) - \Gamma_{1,n}\left(F(s),s\right)\right],\\
				& \sum_{s \in \mathcal{S}}\left[\Gamma_{0,n}\left(F(s)+p(s),s\right) - \Gamma_{0,n}\left(F(s)+p(s)\pi(s),s\right)\right],\\
				& \text{and} \quad \frac{1}{\sqrt{n}}\sum_{i=1}^n\Xi_{2}(S_i)
			\end{align*}
			are independent. Last, we have
			\begin{align*}
				& \sum_{s \in \mathcal{S}}\left[\Gamma_{1,n}\left(F(s)+p(s)\pi(s),s\right) - \Gamma_{1,n}\left(F(s),s\right)\right] \convD \N\left(0, \mathbb{E}\pi(S_i)\Xi_1^2(\mathcal{D}_i,S_i)\right) \\
				& \sum_{s \in \mathcal{S}}\left[\Gamma_{0,n}\left(F(s)+p(s),s\right) - \Gamma_{0,n}\left(F(s)+p(s)\pi(s),s\right)\right] \convD \N\left(0, \mathbb{E}(1-\pi(S_i))\Xi_0^2(\mathcal{D}_i,S_i)\right) \\
				& \frac{1}{\sqrt{n}}\sum_{i=1}^n\Xi_{2}(S_i) \convD \N\left(0, \mathbb{E}\Xi_{2}^2(S_i)\right).
			\end{align*}
			This implies the desired result.
		\end{proof}

		\begin{lem}
			Suppose assumptions in Theorem \ref{thm:est} hold. Then,
			\begin{align*}
				\frac{1}{n}\sum_{i=1}^{n} A_{i}\hat{\Xi}_{1}^{2}(\mathcal{D}_{i},S_{i})
				\overset{p}{\longrightarrow} \sigma_{1}^{2}, \quad\frac{1}{n}\sum_{i=1}%
				^{n}(1-A_{i}) \hat{\Xi}_{0}^{2}(\mathcal{D}_{i},S_{i})
				\overset{p}{\longrightarrow} \sigma_{0}^{2}, \quad\text{and} \quad\frac{1}%
				{n}\sum_{i=1}^{n} \hat{\Xi}_{2}^{2}(\mathcal{D}_{i},S_{i})
				\overset{p}{\longrightarrow} \sigma_{2}^{2}.
			\end{align*}
			\label{lem:sigma}
		\end{lem}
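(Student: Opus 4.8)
\textbf{Proof proposal for Lemma \ref{lem:sigma}.}
The plan is to prove each of the three convergences by first replacing the estimated nuisances $\hat{\mu}^{b}$, $\hat{\pi}$, $\hat{\tau}$ inside $\hat{\Xi}_{1},\hat{\Xi}_{0},\hat{\Xi}_{2}$ by their population counterparts $\overline{\mu}^{b}$, $\pi$, $\tau$ at the cost of $o_p(1)$ errors, and then applying a within-stratum law of large numbers exploiting that, conditionally on $\{A_i,S_i\}_{i\in[n]}$, the data $\{\mathcal{D}_i\}_{i\in I_a(s)}$ are i.i.d.\ draws from the law of $\mathcal{D}\mid S=s$ --- the same coupling device used in the proofs of Lemma \ref{lem:G} and Lemma \ref{lem:clt}. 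Throughout, $n(s)/n\convP p(s)$ and $n_a(s)/n(s)\convP\pi(s)^{a}(1-\pi(s))^{1-a}$, and $\min_{s}\pi(s)>c$, $\max_s|\hat{\pi}(s)-\pi(s)|=o_p(1)$, $\hat{\tau}\convP\tau$ (the last from Theorem \ref{thm:est}(i)) are available.

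\emph{Step 1 (replacing estimated nuisances).} Expand $\frac{1}{n}\sum_i A_i\hat{\Xi}_1^2(\mathcal{D}_i,S_i)$ after substituting the definitions of $\tilde{\Xi}_1$ and its cell average. Every resulting summand is either a ``clean'' term not involving any estimation error, or a term containing a factor $\Delta^{b}(a,S_i,X_i)=\hat{\mu}^{b}-\overline{\mu}^{b}$, $\hat{\pi}(s)-\pi(s)$, or $\hat{\tau}-\tau$. Pure-quadratic pieces in $\Delta^{b}$ are controlled directly by Assumption \ref{ass:Delta}(ii), $\frac1n\sum_i(\Delta^{b}(a,S_i,X_i))^2=o_p(1)$ (which also yields $\frac1{n_a(s)}\sum_{i\in I_a(s)}(\Delta^b)^2=o_p(1)$ since $n_a(s)/n$ is bounded below w.p.a.1); cross pieces are dispatched by Cauchy--Schwarz against factors with bounded conditional second moment, using Assumptions \ref{ass:Delta}(iii) and \ref{ass:assignment1}(vi) together with $\frac1n\sum_i Y_i^2=O_p(1)$; the $\hat{\pi}(s)-\pi(s)$ and $\hat{\tau}-\tau$ pieces are $o_p(1)$ because $1/\hat{\pi}(s)$ stays bounded w.p.a.1 and the remaining factors are $O_p(1)$. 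This gives $\frac{1}{n}\sum_i A_i\hat{\Xi}_1^2(\mathcal{D}_i,S_i)=\frac{1}{n}\sum_i A_i\bar{\Xi}_1^2(\mathcal{D}_i,S_i)+o_p(1)$, where $\bar{\Xi}_1(\mathcal{D}_i,s):=\tilde{\Xi}_1^{\mathrm{pop}}(\mathcal{D}_i,s)-\frac{1}{n_1(s)}\sum_{j\in I_1(s)}\tilde{\Xi}_1^{\mathrm{pop}}(\mathcal{D}_j,s)$ and $\tilde{\Xi}_1^{\mathrm{pop}}$ is $\tilde{\Xi}_1$ evaluated at $(\overline{\mu}^{Y},\overline{\mu}^{D},\pi,\tau)$. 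The identical reduction applies to the $0$- and $2$-components.

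\emph{Step 2 (within-stratum LLN and identification of the limit).} Group by stratum: $\frac{1}{n}\sum_i A_i\bar{\Xi}_1^2(\mathcal{D}_i,S_i)=\sum_s\frac{n(s)}{n}\frac{n_1(s)}{n(s)}\frac{1}{n_1(s)}\sum_{i\in I_1(s)}\bar{\Xi}_1^2(\mathcal{D}_i,s)$. Conditionally on $\{A_i,S_i\}$, $\{\mathcal{D}_i\}_{i\in I_1(s)}\overset{d}{=}\{\mathcal{D}_i^{s}\}_{i=1}^{n_1(s)}$ i.i.d.\ with law $\mathcal{D}\mid S=s$, so the conditional LLN gives $\frac{1}{n_1(s)}\sum_{i\in I_1(s)}\psi(\mathcal{D}_i,s)\convP\mathbb{E}[\psi(\mathcal{D},s)\mid S=s]$ for $\psi$ equal to $\tilde{\Xi}_1^{\mathrm{pop}}$, its square, and the relevant cross products (finite moments by Assumptions \ref{ass:Delta}(iii) and \ref{ass:assignment1}(vi)). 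Since $A\indep$ potential outcomes given $S$ (Assumption \ref{ass:assignment1}(ii)), the centering cell average converges to $\mathbb{E}[\tilde{\Xi}_1^{\mathrm{pop}}(\mathcal{D},s)\mid S=s]$, which is exactly the $S$-measurable shift relating $\tilde{\Xi}_1^{\mathrm{pop}}$ to $\Xi_1$ in \eqref{eq:Xi1} (because $\tilde{\mu}^{Y}(a,s,\cdot)$, $\tilde{W}$, $\tilde{D}(1)$ are the demeaned versions of $\overline{\mu}^{Y}(a,s,\cdot)$, $W$, $D(1)$); hence $\frac{1}{n_1(s)}\sum_{i\in I_1(s)}\bar{\Xi}_1^2(\mathcal{D}_i,s)\convP\mathbb{E}[\Xi_1^2(\mathcal{D},s)\mid S=s]$ and $\frac{1}{n}\sum_i A_i\hat{\Xi}_1^2(\mathcal{D}_i,S_i)\convP\sum_s p(s)\pi(s)\mathbb{E}[\Xi_1^2\mid S=s]=\mathbb{E}[\pi(S_i)\Xi_1^2(\mathcal{D}_i,S_i)]=\sigma_1^2$. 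The $0$-component is symmetric. For the $2$-component, $\hat{\Xi}_2(s)\convP\mathbb{E}[W-\tau D(1)\mid S=s]-\mathbb{E}[Z-\tau D(0)\mid S=s]=\mathbb{E}[W-Z\mid S=s]-\tau\,\mathbb{E}[D(1)-D(0)\mid S=s]$, whose unconditional mean equals $G-\tau H=0$ since $\tau=G/H$; therefore $\frac{1}{n}\sum_i\hat{\Xi}_2^2(S_i)=\sum_s\frac{n(s)}{n}\hat{\Xi}_2^2(s)\convP\sum_s p(s)\big(\mathbb{E}[W-Z\mid S=s]-\tau\,\mathbb{E}[D(1)-D(0)\mid S=s]\big)^2=\mathbb{E}\,\Xi_2^2(S_i)=\sigma_2^2$.

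\emph{Main obstacle.} The delicate part is Step 1: pushing the estimation errors $\Delta^{b}$, $\hat{\pi}-\pi$, $\hat{\tau}-\tau$ through \emph{quadratic} forms uniformly over strata while Assumption \ref{ass:Delta}(ii) supplies only an averaged $L^2$ bound on $\Delta^{b}$ (no uniform bound). This is resolved by isolating the pure-quadratic $\Delta^{b}$-pieces, to which Assumption \ref{ass:Delta}(ii) applies directly, from the cross-pieces, which are handled by Cauchy--Schwarz against $L^2$-bounded factors, and by using $\min_s\pi(s)>c$ to keep $1/\hat{\pi}(s)$ bounded with probability approaching one.
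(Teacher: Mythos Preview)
Your proposal is correct and follows essentially the same approach as the paper's proof: replace $(\hat{\mu}^{b},\hat{\pi},\hat{\tau})$ by $(\overline{\mu}^{b},\pi,\tau)$ inside $\tilde{\Xi}_{1}$ at an $o_p(1)$ cost in the cell-wise $L^2$ norm (the paper does this via an explicit intermediate quantity $\breve{\Xi}_{1}$ that isolates the $\hat{\pi},\hat{\tau}$ errors from the $\Delta^{b}$ errors, but your direct Cauchy--Schwarz treatment achieves the same), then observe that the cell-demeaned population version equals $\Xi_{1}$ up to a vanishing centering error and apply the within-stratum i.i.d.\ coupling and LLN. Your handling of $\hat{\Xi}_{2}$, including the observation that the uncentered limit already has mean zero because $G-\tau H=0$, matches the paper exactly.
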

		
		\begin{proof}
			To derive the limit of $\frac{1}{n}\sum_{i=1}^nA_i \hat{\Xi}_1^2(\mathcal{D}_i,S_i)$, we first define
			\begin{align*}
				\tilde{\Xi}_1^*(\mathcal{D}_i,s) & = \left[\left(1- \frac{1}{\pi(s)} \right)\overline{\mu}^Y(1,s,X_i) - \overline{\mu}^Y(0,s,X_i) + \frac{Y_i}{\pi(s)}\right] \\
				& - \tau \left[\left(1- \frac{1}{\pi(s)} \right)\overline{\mu}^D(1,s,X_i) - \overline{\mu}^D(0,s,X_i) + \frac{D_i}{\pi(s)}\right] \quad \text{and}\\
				\breve{\Xi}_1(\mathcal{D}_i,s) & = \left[\left(1- \frac{1}{\hat{\pi}(s)} \right)\overline{\mu}^Y(1,s,X_i) - \overline{\mu}^Y(0,s,X_i) + \frac{Y_i}{\hat{\pi}(s)}\right] \\
				& - \hat{\tau} \left[\left(1- \frac{1}{\hat{\pi}(s)} \right)\overline{\mu}^D(1,s,X_i) - \overline{\mu}^D(0,s,X_i) + \frac{D_i}{\hat{\pi}(s)}\right]
			\end{align*}
			Then, we have
			\begin{align*}
				& \left[\frac{1}{n_1(s)}\sum_{i \in I_1(s)}(\tilde{\Xi}_1^*(\mathcal{D}_i,s) - \tilde{\Xi}_1(\mathcal{D}_i,s))^2 \right]^{1/2} \\
				& \leq \left[\frac{1}{n_1(s)}\sum_{i \in I_1(s)}(\tilde{\Xi}_1^*(\mathcal{D}_i,s) - \breve{\Xi}_1(\mathcal{D}_i,s))^2 \right]^{1/2} + \left[\frac{1}{n_1(s)}\sum_{i \in I_1(s)}(\tilde{\Xi}_1(\mathcal{D}_i,s) - \breve{\Xi}_1(\mathcal{D}_i,s))^2 \right]^{1/2}  \\
				& \leq \frac{|\hat{\pi}(s) - \pi(s)|}{\hat{\pi}(s)\pi(s)} \biggl\{ \sbr[3]{\frac{1}{n_1(s)}\sum_{i \in I_1(s)} \overline{\mu}^{Y,2}(1,s,X_i) }^{1/2} + \sbr[3]{\frac{1}{n_1(s)}\sum_{i \in I_1(s)} W_i^2}^{1/2}\biggr\} \\
				& \quad + \left(|\hat{\tau} -\tau|+ \frac{|\tau\hat{\pi}(s) - \hat{\tau}\pi(s)|}{\hat{\pi}(s)\pi(s)} \right) \biggl\{\sbr[3]{\frac{1}{n_1(s)}\sum_{i \in I_1(s)} \overline{\mu}^{D,2}(1,s,X_i)}^{1/2} +  \sbr[3]{\frac{1}{n_1(s)}\sum_{i \in I_1(s)} D_i^2(1) }^{1/2}\biggr\} \\
				&\quad+ |\hat{\tau}-\tau|\sbr[3]{\frac{1}{n_1(s)}\sum_{i \in I_1(s)} \overline{\mu}^{D,2}(0,s,X_i)}^{1/2}\\
				& \quad+ \left(\frac{1}{\hat{\pi}(s)}-1\right)  \biggl\{\sbr[3]{\frac{1}{n_1(s)}\sum_{i \in I_1(s)} \Delta^{Y,2}(1,s,X_i)}^{1/2}  +  |\hat{\tau}|\sbr[3]{\frac{1}{n_1(s)}\sum_{i \in I_1(s)} \Delta^{D,2}(1,s,X_i)}^{1/2} \biggr\} \\
				& \quad+ \sbr[3]{\frac{1}{n_1(s)}\sum_{i \in I_1(s)} \Delta^{Y,2}(0,s,X_i) }^{1/2} + |\hat{\tau}|\sbr[3]{\frac{1}{n_1(s)}\sum_{i \in I_1(s)} \Delta^{D,2}(0,s,X_i)}^{1/2} = o_p(1),
			\end{align*}
			where the second inequality holds by the triangle inequality and the fact that when $i \in I_{1}(s)$, $A_i=1$, $Y_i = W_i$, and $D_i = D_i(1)$, and the last equality is due to Assumption \ref{ass:Delta}(ii) and the facts that $\hat{\pi}(s) \convP \pi(s)$ and $\hat{\tau} \convP \tau$. This further implies
			\begin{align*}
				\frac{1}{n_1(s)}\sum_{i \in I_1(s)}(\tilde{\Xi}_1^*(\mathcal{D}_i,s) - \tilde{\Xi}_1(\mathcal{D}_i,s)) \convP 0,
			\end{align*}
			by the Cauchy-Schwarz inequality and thus,
			\begin{align*}
				\left[\frac{1}{n_1(s)}\sum_{i \in I_1(s)}\hat{\Xi}_1^2(\mathcal{D}_i,s) \right]^{1/2} \leq  \left[\frac{1}{n_1(s)}\sum_{i \in I_1(s)}\left(\tilde{\Xi}_1^*(\mathcal{D}_i,s) - \frac{1}{n_1}\sum_{i \in I_1(s)}\tilde{\Xi}_1^*(\mathcal{D}_i,s)\right)^2 \right]^{1/2} + o_p(1).
			\end{align*}
			Next, following the same argument in the proof of Lemma  \ref{lem:clt}, we have
			\begin{align*}
				\frac{1}{n_1(s)}\sum_{i \in I_1(s)}\tilde{\Xi}_1^*(\mathcal{D}_i,s) & \stackrel{d}{=} \frac{1}{n_1(s)}\sum_{i = N(s)+1}^{N(s)+n_1(s)}\biggl\{\left[\left(1- \frac{1}{\pi(s)} \right)\overline{\mu}^Y(1,s,X_i^s) - \overline{\mu}^Y(0,s,X_i^s) + \frac{W_i^s}{\pi(s)}\right] \\
				& \qquad - \tau \left[\left(1- \frac{1}{\pi(s)} \right)\overline{\mu}^D(1,s,X_i^s) - \overline{\mu}^D(0,s,X_i^s) + \frac{D_i^s(1)}{\pi(s)}\right] \biggr\} \\
				& \convP \mathbb{E}\biggl\{\left[\left(1- \frac{1}{\pi(S_i)} \right)\overline{\mu}^Y(1,S_i,X_i) - \overline{\mu}^Y(0,S_i,X_i) + \frac{W_i}{\pi(S_i)}\right] \\
				& \qquad - \tau \left[\left(1- \frac{1}{\pi(S_i)} \right)\overline{\mu}^D(1,S_i,X_i) - \overline{\mu}^D(0,S_i,X_i) + \frac{D_i(1)}{\pi(S_i)}\right] |S_i=s\biggr\},
			\end{align*}
			This implies
			\begin{align*}
				& \left[\frac{1}{n_1(s)}\sum_{i \in I_1(s)}\left(\tilde{\Xi}_1^*(\mathcal{D}_i,s) - \frac{1}{n_1}\sum_{i \in I_1(s)}\tilde{\Xi}_1^*(\mathcal{D}_i,s)\right)^2 \right]^{1/2} \\
				& = \biggl[\frac{1}{n_1(s)}\sum_{i \in I_1(s)}\biggl(\tilde{\Xi}_1^*(\mathcal{D}_i,s) -  \mathbb{E}\biggl\{\left[\left(1- \frac{1}{\pi(S_i)} \right)\overline{\mu}^Y(1,S_i,X_i) - \overline{\mu}^Y(0,S_i,X_i) + \frac{W_i}{\pi(S_i)}\right] \\
				& \qquad - \tau \left[\left(1- \frac{1}{\pi(S_i)} \right)\overline{\mu}^D(1,S_i,X_i) - \overline{\mu}^D(0,S_i,X_i) + \frac{D_i(1)}{\pi(S_i)}\right] |S_i=s\biggr\} \biggr)^2 \biggr]^{1/2} + o_p(1) \\
				& = \biggl[\frac{1}{n_1(s)}\sum_{i \in I_1(s)}\biggl(\left[\left(1- \frac{1}{\pi(S_i)} \right)\tilde{\mu}^Y(1,S_i,X_i) - \tilde{\mu}^Y(0,S_i,X_i) + \frac{\tilde{W}_i}{\pi(S_i)}\right] \\
				& \qquad - \tau \left[\left(1- \frac{1}{\pi(S_i)} \right)\tilde{\mu}^D(1,S_i,X_i) - \tilde{\mu}^D(0,S_i,X_i) + \frac{\tilde{D}_i(1)}{\pi(S_i)}\right] \biggr)^2 \biggr]^{1/2} + o_p(1).
			\end{align*}
			Last, following the same argument in the proof of Lemma  \ref{lem:clt}, we have
			\begin{align*}
				& \biggl[\frac{1}{n_1(s)}\sum_{i \in I_1(s)}\biggl(\left[\left(1- \frac{1}{\pi(S_i)} \right)\tilde{\mu}^Y(1,S_i,X_i) - \tilde{\mu}^Y(0,S_i,X_i) + \frac{\tilde{W}_i}{\pi(S_i)}\right] \\
				& - \tau \left[\left(1- \frac{1}{\pi(S_i)} \right)\tilde{\mu}^D(1,S_i,X_i) - \tilde{\mu}^D(0,S_i,X_i) + \frac{\tilde{D}_i(1)}{\pi(S_i)}\right] \biggr)^2 \biggr]^{1/2}  \\
				& \stackrel{d}{=}  \biggl[\frac{1}{n_1(s)}\sum_{i =N(s)+1}^{N(s)+n_1(s)}\biggl(\left[\left(1- \frac{1}{\pi(s)} \right)\tilde{\mu}^Y(1,s,X_i^s) - \tilde{\mu}^Y(0,s,X_i^s) + \frac{\tilde{W}_i^s}{\pi(s)}\right] \\
				& - \tau \left[\left(1- \frac{1}{\pi(s)} \right)\tilde{\mu}^D(1,s,X_i^s) - \tilde{\mu}^D(0,s,X_i^s) + \frac{\tilde{D}_i^s(1)}{\pi(s)}\right] \biggr)^2 \biggr]^{1/2} \\
				& \convP \left[\mathbb{E}(\Xi_1^2(\mathcal{D}_i,S_i)|S_i=s)\right]^{1/2},
			\end{align*}
			where $\tilde{W}_i^s = W_i^s - \mathbb{E}(W_i|S_i=s)$ and $\tilde{D}_i^s(1) = D_i^s(1) - \mathbb{E}(D_i(1)|S_i=s)$ and the last convergence is due to the fact that conditionally on $S^{(n)},A^{(n)}$, $\{X_i^s, \tilde{W}_i^s, \tilde{D}_i^s(1)\}_{i \in I_1(s)}$ is a sequence of i.i.d. random variables so that the standard LLN is applicable. Combining all the results above, we have shown that
			\begin{align*}
				\frac{1}{n_1(s)}\sum_{i\in I_1(s)} \hat{\Xi}_1^2(\mathcal{D}_i,S_i)&\xrightarrow{p} \mathbb{E}(\Xi_1^2(\mathcal{D}_i,S_i)|S_i=s)\\
				\frac{1}{n}\sum_{i=1}^n A_i \hat{\Xi}_1^2(\mathcal{D}_i,S_i) & = \sum_{s \in S} \frac{n_1(s)}{n}  \del[3]{\frac{1}{n_1(s)}\sum_{i \in I_1(s)}\hat{\Xi}_1^2(\mathcal{D}_i,S_i)}\\
				%\frac{1}{n}\sum_{i=1}^n A_i \hat{\Xi}_1^2(\mathcal{D}_i,S_i) & = \sum_{s \in S} \frac{n_1(s)}{n}  \biggl[\frac{1}{n_1(s)}\sum_{i \in I_1(s)}\biggl(\left[\left(1- \frac{1}{\pi(S_i)} \right)\tilde{\mu}^Y(1,S_i,X_i) - \tilde{\mu}^Y(0,S_i,X_i) + \frac{\tilde{W}_i}{\pi(S_i)}\right] \\
				%& - \tau \left[\left(1- \frac{1}{\pi(S_i)} \right)\tilde{\mu}^D(1,S_i,X_i) - \tilde{\mu}^D(0,S_i,X_i) + \frac{\tilde{D}_i(1)}{\pi(S_i)}\right] \biggr)^2 \biggr]  \\
				& \convP \sum_{s\in \mathcal{S}} p(s)\pi(s)      \mathbb{E}(\Xi_1^2(\mathcal{D}_i,S_i)|S_i=s) = \mathbb{E}\sbr[2]{\pi(S_i)\mathbb{E}(\Xi_1^2(\mathcal{D}_i,S_i)|S_i)}  = \sigma_1^2.
			\end{align*}
			For the same reason, we can show that
			\begin{align*}
				\frac{1}{n}\sum_{i=1}^n (1-A_i)\hat{\Xi}_0^2(\mathcal{D}_i,S_i) \convP \sigma_0^2.
			\end{align*}
			Last, by the similar argument, we have
			\begin{align*}
				\frac{1}{n}\sum_{i=1}^n \hat{\Xi}_2^2(S_i) & = \sum_{s \in \mathcal{S}} \frac{n(s)}{n} \hat{\Xi}_2^2(s)\\
				& = \sum_{s \in \mathcal{S}} \frac{n(s)}{n} ( \mathbb{E}(W_i - \tau D_i(1)|S_i=s) - \mathbb{E}(Z_i - \tau D_i(0)|S_i=s))^2 + o_p(1)\\
				& = \sum_{s \in \mathcal{S}} \frac{n(s)}{n}\Xi_2^2(s) + o_p(1)\\
				& \convP \sum_{s \in \mathcal{S}} p(s)\Xi_2^2(s) = \mathbb{E}\Xi_2^2(S_i) = \sigma_2^2.
			\end{align*}
		\end{proof}
		
		\section{An Additional Simulation}
		\label{sec:additional simulation}
		
		In this section, we use an additional simulation to demonstrate that when probabilities of treatment assignment $\{\pi(s)\}$ are heterogeneous across strata, the TSLS estimator could be inconsistent. The data generating process we consider here, denoted DGP(iv), is almost the same as DGP(i) in Section \ref{sec:sim}; the only difference is in $Y_i(a)$:
		\begin{align*}
			Y_i(1) = &2+S_i^2+0.7X_{1,i}^2+X_{2,i}+4Z_i+\varepsilon_{1,i}\\
			Y_i(0) = &1+0.7X_{1,i}^2+X_{2,i}+4Z_i+\varepsilon_{2,i}.
		\end{align*}
		The rationale for specifying this DGP is to allow a difference between the probabilistic limit of $\hat{\tau}_{TSLS}$ and $\tau$. We consider randomization schemes SRS and SBR with $(\pi(1),\pi(2),\pi(3),\pi(4))=(0.2,0.2,0.2,0.5)$. We do not consider randomization scheme WEI or BCD because for these two, $\pi(s)=0.5$ for all $s\in \mathcal{S}$. The rest of the simulation setting is the same as DGP(i) in Section \ref{sec:sim}. Table \ref{tab:Simulation extra} presents the empirical sizes. We see that all estimators, except the TSLS estimator, have the empirical sizes converging to 0.05 as sample size increases.
		
		\begin{table}[pth]
			\caption{{\protect\small NA, TSLS, L, S, NL, F, NP, SNP, R stand for the unadjusted estimator, TSLS estimator, optimally linearly adjusted estimator, \cite{anseletal2018}'s S estimator with $X_i$ as regressor, nonlinearly (logistic) adjusted estiamtor, further efficiency improving estimator, nonparametrically adjusted estimator, \cite{anseletal2018}'s S estimator with $\mathring{\Psi}_{i,n}$ defined in (\ref{r1}) as regressor, and estimator with regularized adjustments, respectively. }}%
			\label{tab:Simulation extra}%
			\centering
			\smallskip\begin{tabularx}{\linewidth}{@{\extracolsep{\fill}}lcccccc}
				\toprule
				& \multicolumn{2}{c}{$n = 400$} & \multicolumn{2}{c}{$n = 800$} & \multicolumn{2}{c}{$n = 1200$}\\ \cmidrule{2-3}\cmidrule{4-5}\cmidrule{6-7}
				Methods  & \multicolumn{1}{c}{SRS} & \multicolumn{1}{c}{SBR} & \multicolumn{1}{c}{SRS} &  \multicolumn{1}{c}{SBR}& \multicolumn{1}{c}{SRS} &  \multicolumn{1}{c}{SBR} \\
				\midrule
				\multicolumn{7}{c}{DGP(iv)} \\
				\multicolumn{7}{l}{\textit{Size}} \\
				\quad NA & 0.043 & 0.036 & 0.048 & 0.047  &  0.044 & 0.050  \\
				\quad TSLS & 0.069 & 0.072 & 0.114 & 0.113  &  0.149 & 0.142 \\
				\quad L &  0.064 & 0.060 & 0.056 & 0.055  &  0.053 & 0.054 \\
				\quad S & 0.064 & 0.060 & 0.056 & 0.055  &  0.053 & 0.054 \\
				\quad NL & 0.060 & 0.058 & 0.056 & 0.054  &  0.053 & 0.054\\
				\quad F & 0.083 & 0.077 & 0.064 & 0.062  &  0.058 & 0.057\\
				\quad NP & 0.201 & 0.174 & 0.107 & 0.096  &  0.080 & 0.078\\
				\quad SNP & 0.201 & 0.190 & 0.102 & 0.099  &  0.076 & 0.077  \\
				\quad R & 0.078 & 0.079 & 0.063 & 0.061  &  0.060 & 0.061 \\
				\bottomrule
		\end{tabularx}\end{table}
		
		\bibliographystyle{chicago}
		\bibliography{BCAR}

\bibliographystyle{chicago}
\bibliography{BCAR}

\end{document}